\newif\ifjournal
\journalfalse

\ifjournal
	\documentclass[smallextended]{svjour3} 
\else 
	\documentclass[aps,pra,10pt,twocolumn,floatfix,nofootinbib]{revtex4-1}
\fi

\usepackage{amsmath}
\usepackage{amssymb}
\usepackage{graphicx}
\usepackage{amsfonts}
\usepackage{dutchcal}
\usepackage{braket}
\usepackage{enumitem}

\usepackage{tikz}
\usetikzlibrary{calc}
\usepackage{calculator}
\usepackage{standalone}

\numberwithin{equation}{section}

\ifjournal
	\spnewtheorem{assump}{Assumption}{\bf}{\it}
	
	\spnewtheorem{prop}[equation]{Proposition}{\bf}{\rm}
	\spnewtheorem{thrm}[equation]{Theorem}{\bf}{\it}

	\newenvironment{rationale}{\emph{Rationale}.}{\hfill\(\qed\)}
	\newenvironment{justification}{\emph{Justification}.}{\hfill\(\qed\)}
	\renewenvironment{proof}{\emph{Proof}.}{\hfill\(\qed\)}
\else

	\usepackage{amsthm}

	\newtheorem{assump}{Assumption}
	
	\newtheorem{thrm}[equation]{Theorem}

	\theoremstyle{definition}
	\newtheorem{prop}[equation]{Proposition}

	\newenvironment{rationale}{\emph{Rationale}.}{\qed}
	\newenvironment{justification}{\emph{Justification}.}{\qed}
	\renewenvironment{proof}{\emph{Proof}.}{\qed}

\fi

\newcommand{\id}{\textrm{id}}

\newcommand{\journal}[1]{\ifjournal#1\fi}
\newcommand{\arxiv}[1]{\ifjournal\else#1\fi}

\begin{document}

\title{From physical assumptions to classical and quantum \\ Hamiltonian and Lagrangian particle mechanics}
\author{Gabriele Carcassi, Christine A. Aidala, David J. Baker, Lydia Bieri}

\ifjournal
	\institute{G. Carcassi, C. A. Aidala \at
		Physics Department, University of Michigan, Ann Arbor, MI 48109 \\
		\email{carcassi@umich.edu}
		\and
		D. J. Baker \at
		Philosophy Department, University of Michigan, Ann Arbor, MI 48109
		\and
		L. Bieri \at
		Mathematics Department, University of Michigan, Ann Arbor, MI 48109 \\
	}
\else
	\affiliation{University of Michigan, Ann Arbor, MI 48109}
\fi

\ifjournal
	\titlerunning{From physical assumptions to classical and quantum particle mechanics}
\fi

\date{\today}

\journal{\maketitle}
	
\begin{abstract}
The aim of this work is to show that particle mechanics, both classical and quantum, Hamiltonian and Lagrangian, can be derived from few simple physical assumptions. Assuming deterministic and reversible time evolution will give us a dynamical system whose set of states forms a topological space and whose law of evolution is a self-homeomorphism. Assuming the system is infinitesimally reducible---specifying the state and the dynamics of the whole system is equivalent to giving the state and the dynamics of its infinitesimal parts---will give us a classical Hamiltonian system. Assuming the system is irreducible---specifying the state and the dynamics of the whole system tells us nothing about the state and the dynamics of its substructure---will give us a quantum Hamiltonian system. Assuming kinematic equivalence, that studying trajectories is equivalent to studying state evolution, will give us Lagrangian mechanics and limit the form of the Hamiltonian/Lagrangian to the one with scalar and vector potential forces.
\keywords{Hamiltonian mechanics \and Lagrangian mechanics \and Quantum mechanics \and Fundamental assumptions}
\end{abstract}

\arxiv{\maketitle}

\section{Introduction}

The past century has seen a resounding success in the use of mathematics in fundamental physics. In many cases, mathematical ideas were what allowed progress in theoretical physics and ultimately led to experimental discoveries. While these ideas are indeed useful---anything that can advance our understanding is welcome---they cannot be the whole story.

The issue is that mathematical structures are not enough to characterize what physical system is being studied. For example, the same mathematical framework for linear circuits can be used to describe electric, hydraulic, thermal or mechanical systems~\cite{Roland,Barron,Borutzky}. The mathematical treatment is the same because it captures aspects of the description that are common to all cases; that indeed is its power. But for the very same reason, we cannot infer what system is being described if all we are given is the mathematical model. In the jargon of linguistics/computer science/philosophy~\cite{Freidin,Friedman}: math captures the syntax (the relationships between the objects) but does not capture the semantics (the meaning of each object).

The real problem begins when the mathematical description essentially becomes the foundation of a physical theory. While some branches of physics, such as Newtonian mechanics, thermodynamics and special relativity, are founded on physical laws or assumptions, others, such as Hamiltonian, Lagrangian and quantum mechanics, start by setting their mathematical structure. Nothing tells us why there should be  a Lagrangian function whose path integral is minimized during the motion, or when a state is described by conjugate pairs $(q^i, p_i)$ instead of a vector in a complex inner product space. Those are taken as given.

This means that if we have a system in front of us, we do not have a rigorous way to conceptually establish whether the system is Hamiltonian or not, whether it is classical or quantum. We have some heuristics we can apply, but, at the end the day, a system is Hamiltonian because it follows the Hamiltonian framework.\footnote{The lack of a properly defined, and uniquely accepted, semantics is of course very evident in quantum mechanics with its numerous interpretations.}

Unsatisfied with this situation, we started creating a sort of math-to-physics dictionary so that each mathematical concept could be clearly associated to a crisp physical concept. As we pushed further back toward the basic mathematical definitions, we ended up with physical concepts of a more general nature and found that, in some cases, a single physical idea could have more than one mathematical consequence. That is: by putting physics at the center, we could achieve a unification of ideas that was not evident before. The unexpected success of the approach led us to think that, perhaps, the picture we were developing was a better framework to understand and link different areas of basic science and mathematics. This work is the systematized result of this endeavor.

The main goal is to show that it is possible to derive classical/quantum Hamiltonian/Lagrangian particle mechanics starting from few physical assumptions. We do \emph{not} claim that we have perfectly achieved this result, with no room for improvement. We \emph{do} claim that the whole architectural framework is solid enough to be convinced that not only is this indeed possible, but it allows a more unified picture that facilitates intuitive connections within and across different areas of physics and mathematics. It enables a consistent narrative across several domains of knowledge.

Rederiving classical, quantum and (to some extent) relativistic mechanics within the same work is therefore crucial to show that the concepts we develop are truly general, and they can be applied to many different cases. If we could derive classical Hamiltonian mechanics on assumptions that are inapplicable to the quantum case, what good would it do? Quantum systems are Hamiltonian systems, so clearly we wouldn't have understood what makes a system Hamiltonian. This breadth of applicability is what makes the work valuable.

As a brief overview, there are four assumptions we will consider. Assuming deterministic and reversible evolution gives us a dynamical system: states and a law of evolution. Additionally assuming infinitesimal reducibility, studying the whole system is equivalent to studying its infinitesimal parts, gives us Hamiltonian mechanics. Further assuming kinematic equivalence, trajectories in space tell us everything about the state of the system, gives us Lagrangian mechanics and constrains the motion to massive particles under scalar/vector potential forces. Relativistic motion arises when studying time dependent laws of evolution with no further assumptions. Alternatively, as the second assumption we can instead take irreducibility, studying the whole system tells us nothing about its parts, which leads to quantum (Hamiltonian) mechanics.

As a more detailed overview: in section \ref{sec:fundamental_model} we describe in general what it means to study a physical system and how the assumption of determinism and reversibility plays into it. We develop an abstract conceptual model that we use when discussing the later assumptions.

In section \ref{sec:dynamical_systems} we show how the physical concept of physical distinguishability leads to the mathematical concept of topological spaces. Deterministic and reversible evolution is an invertible continuous map (i.e. a homeomorphism) as it needs to preserve what is physically distinguishable.

In section \ref{sec:reducibility} we introduce the notion of a classical material, one for which we assume infinitesimal reducibility. The state of an arbitrary amount of material is a distribution over the states of the infinitesimal parts, the classical particles. As such a distribution must be invariant under the arbitrary choice of units, we see that for each variable $q^i$ that defines a unit there is a corresponding variable $k^i$ defined on the inverse of that unit. Classical particle states can be identified by conjugate pairs of state variables $(q^i, p_i)$ (i.e. a point in the cotangent bundle $T^*\mathcal{Q}$) and the number of possible initial conditions for each pair is given by $dq^i \wedge dp_i$ (i.e. the canonical symplectic form $\omega$). During evolution, the distribution (and marginal distributions) must be mapped point by point which leads to Hamiltonian mechanics (i.e. a symplectomorphism on $T^*\mathcal{Q}$).

In section \ref{sec:relativity} we show that a relativistic version of Hamilton's equations arises naturally when we consider time dependent laws of evolution for distributions of classical material. This also gives the notion of deterministic and reversible evolution a characterization that is independent of any transformation that mixes time and state variables $(q^i, p_i)$. Moreover it gives us a classical equivalent for anti-particle states.

In section \ref{sec:Lagrangian} we introduce the assumption of kinematic equivalence, studying the motion in space is equivalent to studying the evolution of the state. Under that assumption there is a link between the state variables $(q^i, p_i)$ and the kinematic variables $(x^i, u^i)$ (i.e. position and velocity). This link allows us to express the number of possible initial conditions for a degree of freedom (d.o.f.) in terms of $dx^i$ and $du^i$, leading to the metric tensor $g$. Such an expression also allows us to constrain the Hamiltonian to the one for massive particles under scalar/vector potential forces and an equivalent Lagrangian formulation.

In section \ref{sec:quantum_prelude} we introduce some ideas in classical mechanics to more gently transition to quantum mechanics. We see how distributions over phase space already have features that are typically associated with quantum systems and we will show how to construct a classical uncertainty principle that helps to build intuition used in the later section.

In section \ref{sec:irreducibility} we introduce the notion of an irreducible material, one for which studying the whole distribution does not tell us anything about the motion of its infinitesimal parts, the fragments. A unit amount of this material will correspond to a quantum particle. As this means that the states of such a material are invariant under permutations of the configurations of the fragments, we show that the state space of an irreducible material is a complex vector space. These states can still be compared to each other in a way that gives the state space an inner product. Such comparisons will be preserved by deterministic and reversible evolution which leads to Schroedinger's equation (i.e. a unitary transformation).

In section \ref{sec:discussion} we present a few higher level remarks about the work.

\section{Organization and style}

The overall layout of this work is a balance between the need to show that the derivation follows, which requires starting from basic principles and proceeding using formal arguments, and the need to give an intuitive understanding, which often requires starting from specific cases and then generalizing.

The deductive aspect is formally organized into the following:
\begin{description}
  \item[Assumptions] these characterize the physical system we are studying and constitute the premise of our discussion. A \textbf{rationale} follows each assumption, which uses physical and sometimes philosophical arguments to motivate why (or not) such an assumption makes sense in a particular case.
  
  \item[Propositions] these capture the properties of our physical objects into mathematical language. A \textbf{justification} follows each proposition to show the necessity of such a characterization. While it is a formal argument, it is not a strictly mathematical proof as it contains physical arguments.
  
  \item[Theorems] these are pure mathematical statements that are used to simplify propositions. A mathematical \textbf{proof} follows each theorem. No mathematical breakthrough should be expected as we mostly use well known results from different fields. Theorems are grouped in the appendix.
\end{description}

As these sections are clearly demarcated, they allow us to make sure that each justification only references past propositions and avoid circular arguments. It also makes it easier to read the work at the desired level of detail. Interspersed between each step of the more formal treatment, there are numerous discussions that conceptually explain motivations and results in a more approachable way. These often provide more concrete and intuitive examples which may, in some cases, foreshadow later material. As both the formal and more intuitive parts need to cover the same material, this will inevitably lead to some repetition.

The work uses ideas and tools from a broad range of mathematical and scientific subjects: topology~\cite{LeeTM}, measure theory~\cite{Halmos}, differential geometry (symplectic and Riemannian)~\cite{Lee}, group theory and vector spaces~\cite{Lang,Young}, statistics~\cite{Grimmet}, information theory~\cite{Pierce}, Hamiltonian mechanics, Lagrangian mechanics~\cite{classical_dynamics}, quantum mechanics~\cite{Landau,Weinberg}, thermodynamics~\cite{Van Ness}, special relativity~\cite{French} and so on.\footnote{The references given provide the reader a list of resources consistent with the development of this work.} Using ideas from all these areas is necessary and valuable, but it does create a number of problems.

The first one is terminology and prerequisite knowledge. Since we touch numerous areas, we cannot expect all readers to be equally comfortable with all the technical terms. Yet we cannot introduce all of them either. The compromise we adopted is to keep the informal discussion more accessible while keeping the formal derivation more precise, without needing more than the basic definitions and few major results within each specific area. This may leave an expert in a particular field not fully satisfied but, as stated in the introduction, our goal is to lay out the general picture.

The second problem is notation. As each field has developed its own set of conventions, the common usage is now an empty set. We ended up with a compromise that hopefully feels natural enough to a broad audience of physicists, mathematical physicists and philosophers of physics. Within formal parts we may use notation and terminology more specific to the appropriate area of math.

We have tested the prose on a number of colleagues and students in physics, philosophy and mathematics to confirm its accessibility at different levels and believe we have struck a reasonable compromise.

\section{On studying a physical system}
\label{sec:fundamental_model}


Our first task is to develop a conceptual model that applies to all realms of physics we'll be considering: classical, statistical and, later, quantum mechanics. We will take the standard picture of system plus environment and extend it to differentiate between the state of the system (i.e. the aspects under study) from the unstated part of the system (i.e. the aspects missing from our description).

We will assume that the state evolves according to a deterministic and reversible law (i.e. for each present state there is one and only one future state). While the unstated part does not influence the state evolution, we'll see how it constrains what states are available and what type of description can or cannot be given to the system.

\subsection{States and their evolution}

We start by fixing a \emph{physical system}, meaning something we can interact with and perform measurements on (e.g. a planet, a fluid, ...). We call \emph{environment} everything else. We set what particular aspect we want to study (e.g. the motion around a star, the flow in a pipe, ...). We call \emph{state} a physically distinguishable configuration of the aspect under study at a particular time (e.g. position/momentum of center of mass, velocity field, ...). Since the state does not, in general, exhaust the description of the system, a part remains \emph{unstated}, and as such we'll call it, for lack of a better word (e.g. the chemical composition, the motion of each of its molecules, ...). Note that the environment plays an essential role here as it's what allows us to define two states as physically distinguishable: we can find an external process (i.e. part of the environment) whose outcome changes depending on the different state. As such processes are what we can use to perform measurements, we consider the experimental apparatus (and us performing measurements) independent of the system, part of the environment.\footnote{This is true even in the case of general relativity, where we can imagine multiple researchers on small spaceships collecting data without greatly influencing the motion of stars and planets. The case where the physical system is the whole universe and there is no environment presents practical problems  and conceptual challenges that the current physical theories do not seem to be equipped to address, and therefore will be absent from our discussion. For example, what physical device can we use to store and process the state of the whole universe to make predictions and compare? How do we define physically distinguishable? Do the physical laws determine which measurements we are going to make and does that limit what is actually distinguishable?}
 
In this context, we call the evolution of the system \emph{deterministic} if the state at a given time uniquely identifies states at future times, and \emph{reversible} if it uniquely identifies states at past times. While this is a common enough definition, we need to be clear how this applies to the unstated part. Note that the concept of determinism outlined here is context dependent because the state itself represents only the part of the system that we choose to (or can) describe. In this sense, the unstated part is \emph{always} non-deterministic (and non-reversible) in the sense that the state of the system does not determine its evolution. For example, suppose we define the state as position and momentum of the center of mass of a cannonball. Suppose that the evolution is deterministic on that state. What does it tell us about the cannonball temperature, or about the motion of each of its atoms? Nothing. In this sense, the unstated part is non-deterministic and non-reversible. Could we extend the state and the laws of evolution to account for temperature? Yes, but that would be a different evolution defined on a different state. Does it mean the unstated part is always evolving randomly? Not at all. The temperature may remain constant throughout the motion of the cannonball. Yet we wouldn't know, since we are not studying it: the evolution is deterministic and/or reversible only as far as the state is concerned. With this in mind, we will restrict ourselves to the cases where the following is valid:

\begin{assump}[Determinism and reversibility]\label{ass:determinism}
The state of the physical system under study undergoes deterministic and reversible evolution.
\end{assump}

\begin{rationale}
As it is an assumption, we first need to discuss when it is valid. More specifically, we need to understand that the non-deterministic/non-reversible evolution of the unstated part plays as much of a fundamental role as the deterministic/reversible evolution of the state. In fact, the non-deterministic part contributes in determining what states are available to the system.

Suppose we study the motion of a cannonball; its state under gravitational and (inertial) inertial forces will be properly described by the position and momentum of the center of mass. While light and air molecules may scatter off its surface unpredictably, its trajectory is not greatly affected as it is a massive rigid body. Suppose we study the motion of a small particle, small enough that the random scattering does influence the trajectory and it undergoes Brownian motion: its state will be a probability distribution for position and momentum of the center of mass. Gravitational and inertial forces have not changed, yet the states have changed from ``pure" to statistical ensembles. In other words, the set of states must be closed under both the deterministic evolution of the state and the non-deterministic evolution of the unstated part. If the Brownian motion is not negligible, we do not end in a well defined position/momentum pair, even if we start from one.

A similar more drastic effect: consider a book and its motion under gravitation and inertial forces, its state being the position and momentum of the center of mass. As we increase the temperature of the air around the book, its motion remains unaffected until, at some point, the book burns. Clearly, the non-deterministic evolution has pushed one of the states outside the set of states, to the point that the system is no longer recognizable.

As we have hinted, sometimes the state is identified by a distribution (either statistical or actual). Even in this case, the state can be deterministic and reversible. That is, given the distribution at one time we can determine the distribution at future times. The shape and the parameters of the distribution can be deterministic, even if the evolution of the parts are not as they fall within the unstated part. Note that we \emph{cannot} assume trajectories and states are always defined for the unstated part, as this includes also the unknown unknowns. We will return to this aspect when discussing quantum systems.

It should also be clear that what constitutes state and unstated part does not depend only on the system under study, but also on the processes we are considering. In some circumstances, the chemical composition of a fluid may be relevant, in others it may not. The choices of environment, state and unstated part are not independent from each other. By choosing a particular set of states, we are not only saying that the state evolution is well approximated by a deterministic/reversible map from initial to final state, but we are also saying that the non-deterministic/non-reversible evolution of the unstated part does not change the nature of the system, and processes that do not satisfy these conditions are not under consideration.

As with all assumptions, we should also ask whether it is necessary. That is, could we define a set of physically distinguishable states and yet have no deterministic and reversible processes defined on them? The claim is that this assumption is indeed needed, as without it we cannot properly define states or write useful physics laws. We can provide different arguments that point in the same direction.

First, to be able to identify the system, we must be able to tell it apart from anything else. Intuitively, we can distinguish between two chairs because we can move the first to another room and sit on it without having touched the second. We can manipulate the state of the first system without affecting the second, and vice-versa. So, to identify a system it has to be sufficiently isolated from everything else. This means that the system future and past states are with good approximation determined only by its own state: the state undergoes deterministic and reversible evolution.

Second, the aim of physics is to write laws that can be used to make predictions that can be validated experimentally. If I drop an anvil from a tower, it will accelerate at $9.81$~m/s$^2$; if I want the anvil to reach the ground at $x$~m/s I have to drop it from $y$~m. To the extent that we want to make predictions in time, we need to have a correspondence between initial and final states.

Third, operationally we must reliably prepare and measure states. That is, we need a process for which the input settings of our preparing device determine the outgoing state of the system; and a process for which the incoming state of the system can be reconstructed by the output of the measuring device. That is, our system must, at least in some cases, be able to participate in a deterministic and reversible process with the preparing and measuring device. Without it we wouldn't be able to calibrate our experimental apparatus.

Fourth, to be able to ascribe a property to a system we need to claim that, at least for a finite interval of time, the system either held or did not hold such property. That is, there is a deterministic and reversible process for that finite period of time for such property.

This link between state definition and deterministic processes should not be too surprising as the state, in the context of control theory and dynamical systems theory, is often defined as \emph{the set of variables needed to determine the future evolution of the system}~\cite{Katok,Sontang}. As we saw before, this applies also to statistical processes: the distribution (the ensemble) as a whole can indeed be calculated, measured and prepared. We can also describe the evolution of each element of the distribution provided that: we have a way to isolate it and study it under deterministic and reversible motion (so that we can define microstates); the non-deterministic motion does not alter the system (the set of microstates is preserved by the evolution).

As with many assumptions, we should stress that it's an idealization: it can never be completely achieved in practice. A system can be prepared or measured up to a certain level of precision. Perfect determinism and isolation of a system is impossible both practically (e.g. black-body radiation, gravity, ...) and conceptually (e.g. if the system is perfectly isolated, we cannot interact with it: how can it be physically distinguished?). It's a simplifying assumption that can only be taken if the environment and the internal dynamics of the system interact in such a way that they little affect and are little affected by the aspect we are studying. As we saw before, for example, assuming that the state consists of the position and momentum of the center of mass requires assuming that the Brownian motion of the body is negligible.

Yet, this is a fundamental assumption in the sense that it is needed. If a particular set of states does not satisfy deterministic and reversible evolution under certain conditions, what we do is to keep at it until we find a set that does. That is, we work to restore the assumption. Finding new sets of states with new laws of evolution is, in fact, what leads to new physics. We therefore call \emph{fundamental model of physics} the triad of state, unstated part and environment, with the assumption that the state may undergo deterministic and reversible evolution, and the unstated part undergoes non-deterministic non-reversible evolution that does not alter the set of states.
\end{rationale}

\section{States and state space}
\label{sec:dynamical_systems}



We now proceed to characterize states and physical distinguishability in more precise terms so that we can capture their description mathematically. What we'll see is that the outcomes of all physical processes that can be used to gain information about the system, that can be used to perform a measurement, induce a topology on the set of states that is at least Hausdorff. That is, physical distinguishability is mathematically captured by topological distinguishability. Deterministic and reversible evolution will then preserve the topology, and they will be mathematically captured by self-homeomorphisms.

We'll focus on state spaces that can be described by a set of independent state variables, either discrete or continuous, and see what can be said in general on the evolution of state variables.

\subsection{States and topology}

As the term ``measurement" has become particularly loaded, let's first characterize what we mean by physical distinguishability in our context.

Consider the motion of a cannonball under inertial and gravitational forces. Light will scatter off of it; as it lands, the ground will be deformed and the impact will make the temperature rise slightly. Those external physical processes, which happen no matter what we do, can be used to distinguish the motion of the cannonball as their outcomes are correlated. Therefore we can learn the position by looking at the reflected light, and learn the final kinetic energy by looking at the deformation of the ground. In principle, any external process that has a correlation with the states under study can be used to perform a measurement, and any measurement is based on such a process. That is, for us a measurement is simply a physical process that we can use to distinguish states. Setting up an experiment means choosing a particular process with desired outcomes and forcing the system under study to interact with it one or more times. After that, there is no special role played by the ``observer" in making outcomes come about.

This external process may be quite complicated: when a particle enters a calorimeter, a shower of particles is produced, photons are captured and are directed to photomultipliers, a current is read out, the current is then digitized, and so on. Some processes interfere with the system, they affect its dynamics, and others are destructive, the system no longer can be described by the original set of states. A tracking chamber is an example of the first (the magnetic field curves the motion of a charged particle); burning a substance to determine its caloric content is an example of the latter. Therefore intimate knowledge of the process is always needed to ensure that one makes the proper link between outcomes and the \emph{original} states, and properly accounts for systematic uncertainties that would skew that link.

Repeatability is also fundamental. First, to make sure the process is indeed correlated to the states. Second, because ``a single take does not a measurement make." One has to gather enough statistics. Note that the number of takes influences the outcomes: with greater statistics the precision and number of distinguishable cases increases. Therefore the processes, as we defined them, may require repeated interactions with similarly prepared states. They may even be a combination of different kinds of interactions that, taken all together, create a set of distinguishable outcomes. But in the end, however complicated it is, the conceptual model remains the same: each process has a set of possible outcomes, and each possible outcome will be associated with a set of states consistent with that outcome. For example, if the cannonball deformed the ground by this amount then its kinetic energy at impact was within this range; if the electron follows a certain path, then its state is among the ones that have spin $+1/2$.

Note that the outcome of the process (e.g. the electron followed a particular path) is conceptually distinct from the set of states compatible with it (the states associated with the path) and the way we label them (spin=$+1/2$). Yet,  we'll conflate the concepts as this makes the language closer to experimental physics (``the outcome of the measurement is spin=$+1/2$"). Context will allow us to recover the distinction.

However precise our measurements are, we can only gather a finite amount of statistics and each outcome is expressed by a finite set of digits; therefore, the set of outcomes is countable.\footnote{The information provided by the process as measured by Shannon's entropy~\cite{Shannon,Jaynes} is finite.} Note that different outcomes for the same process can overlap. For example, $4.12 \pm 0.05$ cm and $4.13 \pm 0.05$ cm are both legitimate possible outputs of the same measurement device. But since all states must be distinguishable, given two arbitrary states there must be a process precise enough to tell them apart. That is, the potential outcomes associated with the two states do not overlap. For example, $4.12 \pm 0.0005$ cm and $4.13 \pm 0.0005$ cm do not overlap anymore.

We can also conceptually combine two different processes into a single one.\footnote{That is, take the logical \textsc{and} between outcomes of different processes.} That is, having a way to measure quantity $x$ and a way to measure quantity $y$ gives us a way to measure the combination $(x,y)$. For ensembles, if we can measure the marginal distribution $\rho_x(x)$ and the marginal distribution $\rho_y(y)$, we know the joint distribution $\rho(x,y)$ has to be compatible with both. Note, though, that this does not provide a way to fully measure $\rho(x,y)$ as we know nothing about the correlation between the two variables.\footnote{The quantum case is similar: we measure marginal distributions and rule out states that are incompatible with them.} Formally, the states compatible with the outcomes of the combined process will be the intersections of the states compatible with each pair of outcomes of the original processes.

We can also coarsen a single process.\footnote{That is, take the logical \textsc{or} between outcomes of the same process.} That is, having a way to measure $(x,y)$ gives us a way to measure $x$ alone (or any $f(x,y)$). Formally, the outcomes of the coarsened process are given by performing the union of some outcomes of the original process.

This model maps very naturally to a topological space. The states are elements of a set and the physical outcomes provide a topology on that set. The fact that two elements of the set can be distinguished requires the space to be Hausdorff.

\begin{prop}\label{prop:state_topology}
The state space $\mathcal{S}$ of a physical system is a Hausdorff topological space.
\end{prop}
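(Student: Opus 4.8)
The plan is to build the topology directly from the physical processes described above and then read off the Hausdorff property from distinguishability. First I would fix the candidate collection $\mathcal{B}$ of subsets of $\mathcal{S}$: for every process and every one of its (countably many) outcomes, take the set of states compatible with that outcome. The operational facts already established are precisely the axioms a basis must satisfy. Coverage holds because every state is distinguishable, so each state lies in at least one outcome-set. The basis intersection condition follows from the ability to combine processes (the logical \textsc{and}): the intersection of an outcome-set of one process with an outcome-set of another is itself an outcome-set of the combined process, so $\mathcal{B}$ is closed under finite intersection and is therefore a basis rather than merely a subbasis.

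Next I would declare the open sets $\tau$ to be the arbitrary unions of elements of $\mathcal{B}$, and verify the three topology axioms. The empty set is the empty union and $\mathcal{S}$ is the union of all basis elements (coverage), so both are open; arbitrary unions are open by construction; finite intersections are open because $\mathcal{B}$ is closed under intersection and intersection distributes over union. The coarsening operation (the logical \textsc{or}) supplies the physical reading of unions, though one should flag that a single process only justifies \emph{countable} unions, so the passage to arbitrary unions is a mathematical idealization layered on top of the operational content.

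With the topology in hand, Hausdorff separation is immediate. Given distinct states $s_1 \neq s_2$, distinguishability guarantees a single process precise enough to tell them apart, i.e.\ with outcomes whose compatible state-sets $U_1 \ni s_1$ and $U_2 \ni s_2$ do not overlap. Since $U_1$ and $U_2$ are outcome-sets of one process they are basis elements, hence open, and $U_1 \cap U_2 = \emptyset$ furnishes the required disjoint neighborhoods.

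I expect the main obstacle to be conceptual rather than computational: arguing that the generated collection $\tau$ is the faithful mathematical object. The physically given operations only guarantee closure under finite intersection and countable union, whereas a topology demands arbitrary unions, so the work is to convince the reader that enlarging to arbitrary unions introduces no sets that conflict with distinguishability, and that nothing in the physics forces a stronger separation axiom than $T_2$ — so that Hausdorff, and neither a coarser ($T_0$, $T_1$) nor a finer condition, is the correct translation of physical distinguishability.
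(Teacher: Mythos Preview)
Your approach is correct and close in spirit to the paper's, but the key structural move differs. The paper does \emph{not} treat the outcome-sets as a basis that generates a topology; it declares the collection $\mathsf{T}$ of all outcome-sets to \emph{be} the topology outright. It argues closure under (pairwise) intersection via the logical \textsc{and} of two processes, closure under (pairwise) union via the logical \textsc{or} within a single process, and names $\mathcal{S}$ and $\varnothing$ as the outcomes ``the system exists'' and ``the system doesn't exist.'' The Hausdorff step is identical to yours.

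Your basis-then-generate construction is mathematically cleaner: you only need closure of $\mathcal{B}$ under finite intersection, and you get arbitrary unions for free by fiat. The paper's direct approach, by contrast, only exhibits closure under \emph{finite} union and never addresses the passage to arbitrary unions---exactly the gap you flag as a conceptual obstacle. So your version is arguably more honest about where the idealization enters, at the cost of one extra layer (basis $\to$ topology) that the paper bypasses. What the paper's route buys is a tighter semantic claim: every open set \emph{is} a physical outcome, not merely a union of outcomes, which matters for the later propositions (e.g.\ \ref{prop:homeomorphism}) where ``open set'' and ``outcome of a process'' are used interchangeably.
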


\begin{justification}
We claim $\mathcal{S}$ is a set. Each state is well defined as it is physically distinguishable. The collection of all possible states forms a set.

We claim $\mathcal{S}$ has a topology $\mathsf{T}$. Consider the set of all possible physical outcomes associated with all physical processes. Each possible outcome is associated with a set of states that are compatible with that outcome. Let $\mathsf{T}$ be the set of all sets associated with all physical outcomes. $\mathcal{S} \in \mathsf{T}$ and is associated with the outcome ``the system exists." $\varnothing \in \mathsf{T}$ and is associated with the outcome ``the system doesn't exist." Let $V_1, V_2 \in \mathsf{T}$. Then, by definition, there exists a process $P_1$ that admits $V_1$ as an outcome and a process $P_2$ that admits $V_2$ as an outcome. Consider the process $P$ that combines the outcomes of $P_1$ and $P_2$ with a logical \textsc{and}. This always exists physically as we can prepare the same state multiple times and let it interact with each process separately. $P$ will have a possible outcome $V$ corresponding to the case where $P_1$ gave outcome $V_1$ and $P_2$ gave outcome $V_2$. The states compatible with $V$ must be in both $V_1$ and $V_2$, that is $V = V_1 \cap V_2$. Therefore $\mathsf{T}$ is closed under intersection. Let $V_1, V_2 \in \mathsf{T}$. If they are physically distinguishable, then there exists a physical process $P$ that admits both as outcomes. Given $P$, we can always construct the process $P_0$ that combines $V_1, V_2$ with a logical \textsc{or} into a single outcome $V$, by ``forgetting" which of the two was given. The states compatible with $V$ must be in either $V_1$ or $V_2$, that is $V = V_1 \cup V_2$. Therefore $\mathsf{T}$ is closed under union. $\mathsf{T}$ is a topology by definition.

We claim that $\mathsf{T}$ is Hausdorff. Let $\mathcal{s_1}, \mathcal{s_2} \in \mathcal{S}$. As states are physically distinguishable, there must exist a physical process with two possible outcomes $V_1, V_2 \in \mathsf{T}$ for which $s_1 \in V_1, s_2 \in V_2, V_1 \cap V_2 = \varnothing$. $\mathcal{S}$ is Hausdorff by definition.
\end{justification}

Note that the arguments that led to the topological space had nothing to do with states per se, just that they are physically distinguishable. States, though, are not the only objects with that property. In fact, any element of a set of physical objects (e.g. forces, physical properties such as mass or charge, time) needs to be physically distinguishable to be well defined. We can generalize the above justification: any set of physically distinguishable elements is a topological space.

\begin{prop}\label{prop:topology}
	Any set $\mathcal{S}$ of physically distinguishable elements is a Hausdorff topological space.
\end{prop}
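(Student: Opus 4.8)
The plan is to recognize that the justification already given for Proposition~\ref{prop:state_topology} nowhere used that its elements were \emph{states}: every step was phrased in terms of physical processes, their outcomes, and the sets of objects compatible with those outcomes. So my strategy is to transfer that argument verbatim, replacing ``state'' by ``physically distinguishable element,'' and to verify at each step that distinguishability alone suffices. I would reprove the same three claims in the same order: $\mathcal{S}$ is a set, $\mathcal{S}$ carries a topology $\mathsf{T}$, and $\mathsf{T}$ is Hausdorff.

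Set-hood follows because each element is well defined exactly by being distinguishable from all others, so the collection of all of them forms a set. For the topology I would again take $\mathsf{T}$ to be the family of subsets of $\mathcal{S}$ arising as the compatible set of some outcome of some physical process. The trivial outcomes ``the object exists'' and ``the object does not exist'' yield $\mathcal{S}, \varnothing \in \mathsf{T}$; combining two processes by a logical \textsc{and} realizes the intersection $V_1 \cap V_2$ as an outcome, and coarsening a process by a logical \textsc{or} realizes the union $V_1 \cup V_2$. The Hausdorff property is then immediate from the defining hypothesis: given distinct $s_1, s_2$, physical distinguishability hands us a process with disjoint outcomes $V_1 \ni s_1$ and $V_2 \ni s_2$.

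The step I expect to require the most care -- and the only place this is not a pure copy of the earlier argument -- is the union axiom. The process-combination argument directly yields only \emph{finite} logical \textsc{or}'s, whereas a topology must be closed under \emph{arbitrary} unions. I would close this gap by regarding the family of compatible sets as a subbasis and declaring the operative topology to be the one it generates; the point-separation established above is preserved under passing to this (finer) topology, so the Hausdorff conclusion is unaffected. This is exactly the implicit convention already relied upon in Proposition~\ref{prop:state_topology}, and making it explicit is what turns the ``same argument as before'' into an honest proof of the more general statement.
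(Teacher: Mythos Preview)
Your proposal is correct and matches the paper's approach exactly: the paper's justification is literally ``Same justification as in \ref{prop:state_topology} with `state' replaced by `element' of the set $\mathcal{S}$,'' which is precisely what you do. Your extra paragraph about the arbitrary-union axiom and passing to the generated topology is a legitimate technical refinement that the paper (both here and in \ref{prop:state_topology}) leaves implicit, so if anything you are being more careful than the original.
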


\begin{justification}
	Same justification as in \ref{prop:state_topology} with ``state" replaced by ``element" of the set $\mathcal{S}$.
\end{justification}

With our state space defined, deterministic and reversible evolution corresponds to a bijective map between initial and final states. But that's not enough. A physical process that can distinguish final states can be used, together with the evolution, to distinguish initial states. We prepare the initial state, let it evolve deterministically, measure the final state and use reversibility to convert the measure to one on the initial state. Therefore, not only the evolution is a bijection, but it also maps outcomes to outcomes. That is: the topology is mapped and preserved by the deterministic and reversible evolution because physical distinguishability must remain unchanged. The evolution is then a self-homeomorphism on the state space.

\begin{prop}\label{prop:homeomorphism}
A deterministic and reversible evolution map is a self-\journal{\break}homeomorphism on the state space $\mathcal{T}_{\Delta t}:\mathcal{S} \rightarrow \mathcal{S}$.
\end{prop}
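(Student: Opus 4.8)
The plan is to verify the three defining properties of a self-homeomorphism in turn: that $\mathcal{T}_{\Delta t}$ is a bijection, that it is continuous, and that its inverse is continuous. The bijection is immediate from Assumption \ref{ass:determinism}: determinism supplies a well-defined forward map $s \mapsto \mathcal{T}_{\Delta t}(s)$ (each present state has exactly one future state $\Delta t$ later), while reversibility supplies a well-defined backward map (each present state has exactly one past state). I would argue that these two maps are mutual inverses---injectivity of the forward map follows because two distinct initial states sharing a common image would contradict the uniqueness of the past, and surjectivity follows because reversibility assigns a past state to every element of $\mathcal{S}$---so that $\mathcal{T}_{\Delta t}$ is a bijection of $\mathcal{S}$ onto itself.

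For continuity I would work directly with the topology $\mathsf{T}$ constructed in Proposition \ref{prop:state_topology}, whose open sets are exactly the state-sets compatible with physical outcomes. Take an open set $V \in \mathsf{T}$, realized as an outcome of some process $P$ that distinguishes final states. The key construction is the composite process ``let the state evolve for $\Delta t$, then apply $P$'': since evolution happens physically in any case, this is a legitimate process acting on the initial state, and its outcome corresponding to $V$ is compatible with precisely the initial states $\mathcal{T}_{\Delta t}^{-1}(V)$. Hence $\mathcal{T}_{\Delta t}^{-1}(V)$ is itself an outcome-set, so $\mathcal{T}_{\Delta t}^{-1}(V) \in \mathsf{T}$, giving continuity of $\mathcal{T}_{\Delta t}$.

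Continuity of the inverse is the symmetric statement that $\mathcal{T}_{\Delta t}$ is an open map, i.e. $\mathcal{T}_{\Delta t}(V) \in \mathsf{T}$ for every open $V$. Here I would run the same argument backward, invoking reversibility: given a process distinguishing initial states, the composite ``measure, then use reversibility to assign the state $\Delta t$ earlier'' is a process on the final states whose outcome compatible with $V$ picks out exactly $\mathcal{T}_{\Delta t}(V)$. This realizes $\mathcal{T}_{\Delta t}(V)$ as a physical outcome-set, hence open, and the two symmetric statements together give that $\mathcal{T}_{\Delta t}$ is a self-homeomorphism.

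The main obstacle is conceptual rather than computational, and lies in the continuity steps: I must justify that ``evolve then measure'' (and its reverse) genuinely counts as one of the physical processes whose outcomes generate $\mathsf{T}$. This rests on the composability of processes already used in Proposition \ref{prop:state_topology} together with Assumption \ref{ass:determinism}, which guarantees that composing the evolution with a measurement neither destroys the state-distinguishing correlations nor introduces ambiguity about the preimage. Once that physical premise is granted, the topological conclusion is automatic, so the weight of the argument falls on making this composition precise rather than on any delicate point-set topology.
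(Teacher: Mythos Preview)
Your proposal is correct and follows essentially the same route as the paper: bijection from determinism plus reversibility, and continuity via the composite process ``evolve then measure'' so that preimages of outcome-sets are again outcome-sets. You are in fact slightly more careful than the paper on one point: the paper concludes ``self-homeomorphism as it is a continuous bijection'' without spelling out continuity of the inverse, leaving it implicit in the symmetry between $\mathcal{T}_{\Delta t}$ and $\mathcal{T}_{-\Delta t}$, whereas you run the symmetric ``reverse-evolve then measure'' argument explicitly to get openness.
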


\begin{justification}
We claim that $\mathcal{T}_{\Delta t}$ exists. The system is deterministic: given an initial state $\mathcal{s} \in \mathcal{S}$ there exists a well defined final state $\mathcal{T}_{\Delta t}(\mathcal{s})=\hat{\mathcal{s}} \in \mathcal{S}$.

We claim $\mathcal{T}_{\Delta t}$ is continuous. Let $U \subseteq \mathcal{S}$ represent an outcome of a process $P$ on the final states. $U$ is an open set in the (final) state space topology by definition. Consider the process $P_0$ that first evolves the initial states and then distinguishes the final state with $P$. $P_0$ is a process that distinguishes initial states. The set of initial states compatible with $U$ are $\mathcal{T}_{\Delta t}^{-1}(U)$. $\mathcal{T}_{\Delta t}^{-1}(U)$ is an open set in the (initial) state space topology by definition. $\mathcal{T}_{\Delta t}$ is a continuous map.

We claim that $\mathcal{T}_{\Delta t}$ is a bijection. The system is reversible: there exists a map $\mathcal{T}_{-\Delta t}:\mathcal{S} \rightarrow \mathcal{S}$ that returns the initial state given the final state. $\mathcal{T}_{-\Delta t} \circ \mathcal{T}_{\Delta t} = \mathcal{T}_{\Delta t} \circ \mathcal{T}_{-\Delta t} = \id_{\mathcal{S}}$ as mapping forward and then backward or backwards and then forward must return the original element. $\mathcal{T}_{\Delta t}$ admits $\mathcal{T}_{-\Delta t}$ as an inverse. $\mathcal{T}_{\Delta t}$ is a bijection.

We claim that $\mathcal{T}_{\Delta t}$ is a self-homeomorphism as it is a continuous bijection.
\end{justification}

Again, we note that the arguments that lead to continuity had nothing to do with states per se, just that the relationship is between physically distinguishable objects. We can then generalize the above justification.

\begin{prop}\label{prop:continuity}
	A map $f:\mathcal{S_1} \rightarrow \mathcal{S_2}$ between two sets of physically distinguishable elements $\mathcal{S_1}$ and $\mathcal{S_2}$ is a continuous map.
\end{prop}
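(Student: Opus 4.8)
The plan is to mirror the continuity argument already given in the justification of Proposition~\ref{prop:homeomorphism}, now with the physical evolution map replaced by the arbitrary physical map $f$. First I would invoke Proposition~\ref{prop:topology} to equip both $\mathcal{S_1}$ and $\mathcal{S_2}$ with their Hausdorff topologies, each generated by the sets of elements compatible with the outcomes of physical processes. Continuity of $f$ then amounts to showing that $f^{-1}(U)$ is open in $\mathcal{S_1}$ for every open $U \subseteq \mathcal{S_2}$, so the whole argument reduces to checking that preimages of outcome sets are themselves outcome sets.

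Next I would take an arbitrary open $U \subseteq \mathcal{S_2}$ and, by the construction of the topology in Proposition~\ref{prop:topology}, identify it with a possible outcome of some physical process $P$ acting on the elements of $\mathcal{S_2}$. The key step is to build a composite process $P_0$ on $\mathcal{S_1}$: given an element $s_1 \in \mathcal{S_1}$, we first realize the physical correspondence that sends $s_1$ to $f(s_1) \in \mathcal{S_2}$, and then let $P$ act. This $P_0$ is a process distinguishing elements of $\mathcal{S_1}$, and its outcome associated with $U$ consists of exactly those $s_1$ with $f(s_1) \in U$, i.e.\ $f^{-1}(U)$. By the definition of the topology on $\mathcal{S_1}$ this set is open, and since $U$ was arbitrary, $f$ is continuous.

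The main obstacle is conceptual rather than computational: we must be entitled to compose $f$ with the measuring process $P$. In the evolution case of Proposition~\ref{prop:homeomorphism} this was immediate, because $\mathcal{T}_{\Delta t}$ is itself a physical process one can run before measuring. Here the force of the hypothesis ``map between sets of physically distinguishable elements'' is precisely that $f$ encodes a genuine physical relationship between physical objects, so that determining $f(s_1)$ is operationally the same as determining a feature of $s_1$. If $f$ were merely an abstract set-theoretic function with no physical correspondence behind it, the composite $P_0$ need not exist and the argument would break down. Making this reading of the hypothesis explicit is the one point I would be careful to state clearly; once granted, the remainder is a direct transcription of the earlier continuity proof.
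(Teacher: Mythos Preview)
Your proposal is correct and is essentially the same approach as the paper: the paper's justification literally reads ``Same justification for continuity as in \ref{prop:homeomorphism} with `initial states' and `final states' replaced by `elements' of $\mathcal{S_1}$ and $\mathcal{S_2}$ respectively,'' which is exactly the transcription you carry out. Your explicit discussion of why one is entitled to compose $f$ with the measuring process $P$ is a welcome addition that the paper leaves implicit in the phrase ``map between physically distinguishable elements.''
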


\begin{justification}
	Same justification for continuity as in \ref{prop:homeomorphism} with ``initial states" and ``final states" replaced by ``elements" of $\mathcal{S_1}$ and $\mathcal{S_2}$ respectively.
\end{justification}

The generality of this result explains why in physics one always assumes functions to be ``well behaved." As the result was derived from our notion of physical distinguishability, this is not a matter of practical convenience. Suppose we were able to prepare a force field that was zero everywhere in space except at a single point. This gives us a way to tag a specific point. But it also allows us to create an outcome compatible with only a single point. A finite precision measurement of the force provides us infinite precision of space, which we ruled out.\footnote{This assumes that we are able to position the probe perfectly, which we could do if we were able to manipulate forces at that precision.} This is what the math is telling us, that if we claim that outcomes can't distinguish isolated points (i.e. standard topology on $\mathbb{R}^n$) then neither can maps. It is physical consistency that limits us to continuous functions.

\begin{figure}
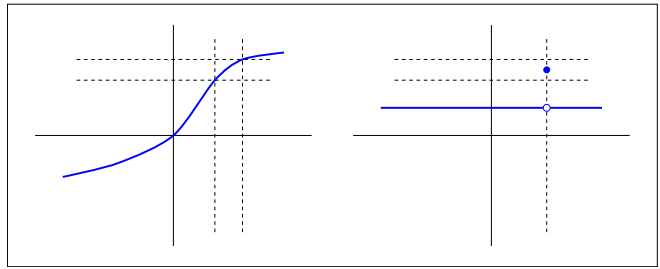

	\includestandalone[width=\columnwidth]{pictures/continuousMap}
	\caption{A continuous function always maps finite precision knowledge in the vertical axis (i.e. an open set) to finite precision knowledge in the horizontal axis. A discontinuous function may map to infinite precision (i.e. a closed set). If infinite precision identification is ruled out (i.e. use of standard topology) then discontinuous functions must also be ruled out.}
	\label{fig:continuity}
\end{figure}

While functions with few discontinuities are useful and used in physics and engineering, they are employed for idealized cases  (e.g.~a signal change is fast enough, a charge distribution is small enough) that are often treated as a special case (e.g.~propagation across material discontinuities). While this may be obvious and intuitive to the physicist, it may be troubling to the mathematician as the proper use of many mathematical techniques requires the inclusion of discontinuous functions. This is less of a problem than it would seem at first. Once we made sure that the objects and their relationships are physically meaningful, we can extend our mathematical spaces for the purpose of math computations. Our physically meaningful continuous function can be expanded into a sum of discontinuous functions. One just has to be mindful of the extension and be wary that mathematical results that depend on such extension may or may not be physically meaningful.

\subsection{Manifolds and labeling states}

To identify and name states one uses a set of quantities, typically numbers. For example, the orbital of an electron in a hydrogen atom is identified by the quantum numbers $n$, $l$, $m$ and $s$. We call each of these quantities \emph{state variables}. We call a \emph{possibility} a possible value that can be taken by a state variable. We call a state variable \emph{discrete} or \emph{continuous} if the possibilities are integer or real numbers respectively. The topology used reflects the notion that integers can be measured perfectly (i.e. discrete topology) while real numbers can be measured up to a finite precision (i.e. standard topology on $\mathbb{R}$). From now on, we are going to consider state spaces whose states can be identified, at least within a region, by a finite set of discrete and continuous state variables (i.e. the state space is locally isomorphic to $\bigcup\limits_{1 \leq i \leq n} \mathbb{R}^{m_i}$).

We purposely use the term state variables instead of coordinates (even though that's what they are mathematically) as it would create confusion with space-time coordinates. We also avoid the term observable or measurable, as not all state variables may be directly physically tangible (e.g. conjugate momentum in a gauge theory). The only requirement for state variables is that they identify states. This means the possible values for state variables are physically distinguishable and are defined at equal time (since states are defined and mapped at a particular time).\footnote{This is the main reason that a quantity like velocity is not a suitable state variable, as it is  defined over an interval of time. Therefore velocity is always physically well defined but is not a state variable in general, while conjugate momentum is always a state variable but is not physically well defined by itself (it requires the vector potential to be specified as well). As we'll see later, when there exists a one to one map between velocity and conjugate momentum they can both be physically well defined state variables.}

\begin{prop}\label{prop:state_variable}
	A \emph{state variable} is a continuous map $q : U \rightarrow \mathbb{L}$ where $U \subseteq \mathcal{S}$ and $\mathbb{L}$ is the space for the possible values. If $\mathbb{L}\cong \mathbb{Z}$ the variable is said \emph{discrete}. If $\mathbb{L}\cong \mathbb{R}$ the variable is said \emph{continuous}.
\end{prop}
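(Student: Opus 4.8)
The plan is to treat this in the paper's own idiom as a \emph{justification}: the statement packages the intuitive notion of ``a quantity used to name states'' into mathematical language, so I would argue that the only faithful way to capture such a quantity is as a continuous map into a value space, and that the discrete/continuous split follows from how the values can be measured. The entire argument should reduce to identifying the two relevant sets of physically distinguishable elements and invoking the general results already proved in this section.

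First I would fix the two objects involved. The domain is the collection $U$ of states on which the variable is defined; since these are states, $U \subseteq \mathcal{S}$ and its elements are physically distinguishable. The codomain $\mathbb{L}$ is the set of \emph{possibilities}---the values the variable can assume. A value is only meaningful if it can be told apart from the others (otherwise it could not serve to label distinct states), so the possibilities are themselves physically distinguishable elements. By Proposition \ref{prop:topology} this already makes $\mathbb{L}$ a Hausdorff topological space, exactly as the same argument made $\mathcal{S}$ one.

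Continuity then comes for free. A state variable is, by its defining role, a map $q : U \to \mathbb{L}$ between two sets of physically distinguishable elements, so Proposition \ref{prop:continuity} applies verbatim and delivers continuity with no separate work. The one subtlety I would flag is the restriction to a subset $U$ rather than all of $\mathcal{S}$: a single variable need not separate every pair of states---it only labels them within a region, like a chart on a manifold---so its natural domain is a (typically open) subset, and it is on that subset that the map-between-distinguishable-sets argument is run.

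The step I expect to be the real work is the dichotomy $\mathbb{L}\cong\mathbb{Z}$ versus $\mathbb{L}\cong\mathbb{R}$, because here I cannot simply quote an earlier proposition; I must connect the \emph{topology} on the possibilities to the physics of measuring them. The line I would take mirrors the one already used for outcomes: integer-valued quantities can be resolved exactly, so their value space carries the discrete topology and is modeled on $\mathbb{Z}$, whereas real-valued quantities can only be pinned down to finite precision, so their value space carries the standard topology and is modeled on $\mathbb{R}$. The main obstacle is presentational rather than technical---making clear that these two are the intended building blocks for the announced local model $\bigcup_i \mathbb{R}^{m_i}$, consistent with the finiteness-of-information argument used earlier to make the outcome set countable, rather than asserting an exhaustive classification of all one-dimensional value spaces. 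I would therefore state them as the two \emph{admissible} kinds of state variable, not as a theorem ruling out every alternative.
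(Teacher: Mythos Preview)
Your core approach matches the paper's exactly: invoke Proposition~\ref{prop:topology} to make $\mathbb{L}$ a (Hausdorff) topological space because possibilities are physically distinguishable, then invoke Proposition~\ref{prop:continuity} to make $q$ continuous because it maps between two sets of physically distinguishable elements. That is the entirety of the paper's justification.

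Where you diverge is in treating the $\mathbb{L}\cong\mathbb{Z}$ versus $\mathbb{L}\cong\mathbb{R}$ clause as ``the real work.'' In the paper this is not justified at all inside the proposition: it is a pure terminological definition (\emph{if} $\mathbb{L}\cong\mathbb{Z}$ we \emph{call} it discrete, \emph{if} $\mathbb{L}\cong\mathbb{R}$ we \emph{call} it continuous). The physical motivation you supply---integers measurable exactly hence discrete topology, reals only to finite precision hence standard topology---does appear in the paper, but in the informal discussion \emph{preceding} the proposition, not in the justification. Your instinct that this is not an exhaustive classification is correct, and the paper agrees by simply not claiming it is one. So your proposal is sound but over-engineered on that point; the extra paragraph on the dichotomy and on the restriction to $U$ is commentary the paper relegates to surrounding prose rather than formal justification.
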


\begin{justification}
	We claim $\mathbb{L}$ is a topological space. $\mathbb{L}$ is a set of physically distinguishable possibilities. $\mathbb{L}$ is a topological space because of \ref{prop:topology}.
	
	We claim $q$ is a continuous map. $q$ is a map between two sets of physically distinguishable elements. $q$ is continuous because of \ref{prop:continuity}.
\end{justification}

When combining multiple state variables, it is important to understand how they relate to each other. Consider the orbital of an electron in a hydrogen atom, which is identified by the quantum numbers $n$, $l$, $m$ and $s$. For each combination of $n$, $l$ and $m$, the spin $s$ can have two values. The choice of spin is independent from the rest. The choice of $l$, though, depends on the choice of $n$: for $n=1$ only $l=0$ is available; for $n=2$ we can choose $l=0$ or $l=1$. The choices are not independent. That is: two or more variables are independent if there always exists a state for any possible combination, if the total number of states is the product of the possibilities of each variable. We call \emph{state vector} a collection of independent state variables that fully identify a state.

\begin{prop}\label{prop:independent_state_variables}
	Two state variables $q_1 : U \rightarrow \mathbb{L}_1$ and $q_2 : U \rightarrow \mathbb{L}_2$ are said \emph{independent} if $\exists \mathcal{s}$ such that $q_1(\mathcal{s})=l_1, q_2(\mathcal{s})=l_2 \forall l_1 \in q_1(U), l_2 \in q_2(U)$.
\end{prop}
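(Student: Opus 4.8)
The plan is to justify this characterization the same way the earlier propositions are justified: by exhibiting that the stated existence condition is exactly the formalization of the physical notion of independence introduced in the surrounding prose, namely that ``there always exists a state for any possible combination'' of values, and that this single condition also reproduces the equivalent statement that the total number of realizable states is the product of the possibilities of each variable. I would also note at the outset that the intended logical reading is ``for every pair of values there exists a state realizing it,'' i.e.\ the quantifier order is $\forall l_1 \forall l_2 \exists s$, since no single state could simultaneously equal every value.

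First I would identify $q_1(U)$ and $q_2(U)$ with the sets of possibilities (in the sense of Proposition \ref{prop:state_variable}) actually realized by $q_1$ and $q_2$ on the region $U$; these are the values each state variable can take on its own. Next I would consider the joint assignment $s \mapsto (q_1(s), q_2(s))$, whose image necessarily lies inside the Cartesian product $q_1(U) \times q_2(U)$. The existence clause of the proposition asserts that \emph{every} pair $(l_1, l_2)$ in this product is attained by some state $s$, which is equivalent to saying that the image of the joint assignment is the entire product $q_1(U) \times q_2(U)$, with no combination of values ruled out. This is the clean mathematical content I would adopt as the primitive.

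Then I would argue the equivalence with the counting statement in the prose. Asserting that every combination is realizable is the same as asserting that the set of jointly realizable values equals the full product, so the number of jointly realizable configurations is the product of the numbers for each variable separately. This directly reproduces the hydrogen-atom examples: spin is independent of $(n,l,m)$ because each of its two values occurs with every choice of the others, whereas $l$ is not independent of $n$, since the combination $(n,l)=(1,1)$ is never realized and the image is therefore a proper subset of $\{1,2,\dots\} \times \{0,1,\dots\}$.

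The main subtlety will be reconciling the two physical phrasings rather than any genuine mathematical difficulty. The existence-for-all-combinations condition is manifestly the cleaner of the two, so I would present the product-of-possibilities statement as its immediate consequence rather than as an independent requirement. For continuous variables one must read ``number of possibilities'' cardinally, or measure-theoretically as is done later in the text, but the logical content---that the realizable set is precisely the Cartesian product of the individual ranges---is unchanged, so the single existence condition suffices to capture independence uniformly across the discrete and continuous cases.
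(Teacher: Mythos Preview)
Your commentary is sensible, but note that in the paper this proposition is a bare definition: there is no \textbf{Justification} block following it at all, unlike the surrounding propositions. The paper simply records the formal condition and moves on, relying on the preceding prose (the hydrogen example and the ``product of possibilities'' remark) as motivation rather than as something to be derived. So there is nothing to compare your argument against; you have supplied a justification where the paper offers none.

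That said, your instinct to flag the quantifier order as $\forall l_1\,\forall l_2\,\exists \mathcal{s}$ and to identify the condition with surjectivity of the joint map onto $q_1(U)\times q_2(U)$ is exactly the right reading, and your connection of this to the product-of-possibilities phrasing is the intended one. If anything, your write-up would make a reasonable \emph{rationale} paragraph the paper could have included but chose not to.
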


In general, the entire state space may not be identified by a predetermined set of independent state variables. Consider the state of a pool table, determined by the number of balls together with position and momentum of the center of mass: the number of state variables is not fixed as it depends on the number of balls. But deterministic and reversible evolution cannot take us from a different number of continuous independent state variables (i.e. there is no homeomorphism between $\mathbb{R}^n$ and $\mathbb{R}^m$). That is the number of balls cannot change under deterministic and reversible evolution. Therefore we can restrict ourselves to the case where the number of continuous independent state variables is constant without loss of generality. This means that, at least locally, the state space is always homeomorphic to $\mathbb{R}^n$, and is therefore a manifold.

Discrete variables do not present such problems, as any number of them can be flattened out in a single one (i.e. $\mathbb{Z}^n$ is homeomorphic to $\mathbb{Z}$). They will determine the number of connected components of the state space. Once we introduce a continuous parameter for time evolution, though, these become irrelevant as continuous time evolution requires continuous trajectories that cannot move states across disconnected components. This justifies the special interest in path connected manifolds, as this is where trajectories for deterministic and reversible continuous evolution live.

\begin{prop}\label{prop:manifold}
	Let $\mathcal{s} \in \mathcal{S}$ a state within a state space. The set of states $\mathcal{S}'\subseteq\mathcal{S}$ potentially reachable from $\mathcal{s}$ by deterministic and reversible continuous evolution is a path connected manifold of dimension equal to the number of independent continuous state variables necessary to identify it.
\end{prop}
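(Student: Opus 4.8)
The plan is to identify $\mathcal{S}'$ with the path-component of $\mathcal{s}$ and then verify the three claims of the statement — path-connectedness, the locally Euclidean structure, and the constant dimension — in turn, leaning on the facts already established for the ambient state space. First I would pin down the meaning of ``potentially reachable'': a state $\mathcal{s}'$ lies in $\mathcal{S}'$ exactly when some deterministic and reversible continuous evolution carries $\mathcal{s}$ to $\mathcal{s}'$. Since each evolution map $\mathcal{T}_{\Delta t}$ is a self-homeomorphism (Proposition \ref{prop:homeomorphism}), the orbit $t \mapsto \mathcal{T}_t(\mathcal{s})$ is a continuous trajectory, so reachability is connectedness by continuous paths; under the paper's standing view that continuous evolution realizes motion within a path-component, $\mathcal{S}'$ is precisely the path-component of $\mathcal{s}$ in $\mathcal{S}$.

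Path-connectedness is then immediate from this identification: given $\mathcal{s}_1, \mathcal{s}_2 \in \mathcal{S}'$, I would concatenate the trajectory running from $\mathcal{s}_1$ back to $\mathcal{s}$ (available by reversibility) with the trajectory from $\mathcal{s}$ to $\mathcal{s}_2$, producing a continuous path joining them inside $\mathcal{S}'$. The Hausdorff property requires no new work: $\mathcal{S}'$ carries the subspace topology from $\mathcal{S}$, which is Hausdorff by Proposition \ref{prop:state_topology}, and the Hausdorff property passes to subspaces.

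The substantive step is the manifold structure together with the dimension count. By the standing assumption recorded just before the proposition, $\mathcal{S}$ is locally homeomorphic to $\bigcup_{1 \le i \le n} \mathbb{R}^{m_i}$, a disjoint union of Euclidean pieces of possibly differing dimensions, so $\mathcal{S}$ itself need not be a manifold. The key is that a continuous path cannot cross between pieces of different dimension: because there is no homeomorphism between $\mathbb{R}^{m}$ and $\mathbb{R}^{m'}$ for $m \neq m'$ (invariance of domain), the assignment of the local continuous dimension to each point is a continuous map into the discrete integers, hence locally constant, hence constant on the path-connected set $\mathcal{S}'$. Consequently every point of $\mathcal{S}'$ has a neighborhood homeomorphic to a single $\mathbb{R}^{n}$, with $n$ the number of independent continuous state variables needed to identify the states; noting in addition that a path-component is open in a locally Euclidean space, $\mathcal{S}'$ is open in $\mathcal{S}$ and therefore inherits the local charts. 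Combined with the Hausdorff property, this exhibits $\mathcal{S}'$ as a topological manifold of dimension $n$.

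The main obstacle I anticipate is exactly this dimension-constancy argument, and in particular making ``local continuous dimension'' a well-defined quantity at points where Euclidean pieces of different dimension might abut in the gluing of charts. Phrasing the argument as the continuity of a $\mathbb{Z}$-valued dimension function, rather than by a direct case analysis of how a path might approach such a junction, lets invariance of domain supply all the needed rigidity in one stroke and sidesteps the messy geometry of the gluing. The discrete state variables contribute nothing to the dimension: by continuity of the evolution they are frozen, so they merely label which connected component — and hence which $\mathcal{S}'$ — one is in.
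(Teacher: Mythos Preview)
Your argument is correct but takes a different route from the paper's justification. The paper does not invoke invariance of domain at all: instead it fixes a chart $(q_1,\dots,q_n):U\to\mathbb{R}^n$ around the base point $\mathcal{s}$ and then \emph{transports} that chart to every $\hat{\mathcal{s}}\in\mathcal{S}'$ by the evolution homeomorphism $\mathcal{T}_{\Delta t}$ that witnesses reachability, so that $\mathcal{T}_{\Delta t}(U)$ is a neighborhood of $\hat{\mathcal{s}}$ homeomorphic to $\mathbb{R}^n$; path-connectedness is then obtained by noting that the trajectory $t\mapsto\mathcal{T}_t(\mathcal{s})$ is continuous by Proposition~\ref{prop:continuity}. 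This buys the paper a lighter argument---no invariance of domain, and the dimension is pinned down as the dimension at $\mathcal{s}$ directly by the physical datum (the evolution map) rather than by a locally-constant-function trick. Your approach, by contrast, treats $\mathcal{S}'$ more abstractly as a path-component of a locally Euclidean space and lets invariance of domain do the heavy lifting; this is cleaner topology but imports a nontrivial theorem the paper avoids, and your identification of $\mathcal{S}'$ with the full path-component is slightly stronger than what is needed or established (you only use, and only get for free, that $\mathcal{S}'$ is path-connected and sits inside a single path-component).
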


\begin{justification}
	We claim $\mathcal{S}'$ is a manifold. Let $\mathcal{s} \in \mathcal{S}$. Let $n$ be the number of independent continuous variables needed to identify $\mathcal{s}$. There exists a neighborhood $U$ around $\mathcal{s}$ where we have $(q_1,...,q_n):U\rightarrow \mathbb{R}^n$. This map is a bijection as $\mathcal{s}$ is identified by those variables. $\mathcal{S}$ is homeomorphic to $\mathbb{R}^n$ around  $\mathcal{s}$. Let $\mathcal{S}'$ be the set of all states potentially reachable by deterministic and reversible evolution from $\mathcal{s}$. Deterministic and reversible evolution is a homeomorphism between initial and final states. $\forall \hat{\mathcal{s}} \in \mathcal{S}'$ there exist a map $\mathcal{T}_{\Delta t}:\mathcal{S} \rightarrow \mathcal{S}$ such that $\hat{\mathcal{s}} =\mathcal{T}_{\Delta t}(\mathcal{s})$. $\hat{\mathcal{s}} \in \mathcal{T}_{\Delta t}(U)$ and $\mathcal{T}_{\Delta t}(U)$ is isomorphic to $\mathbb{R}^n$. $\mathcal{S}'$ equipped with the subspace topology is a topological space everywhere homeomorphic to $\mathbb{R}^n$. $\mathcal{S}'$ is a manifold of dimension $n$.
	
	We claim $\mathcal{S}'$ is path connected. Let $\hat{\mathcal{s}} \in \mathcal{S}'$. There exist map $\lambda : [t_0,t_1] \rightarrow \mathcal{S}'$ where $\lambda(t_0)=\mathcal{s}$, $\lambda(t_1)=\hat{\mathcal{s}}$ and $t_0$ and $t_1$ are the initial and final time respectively. $\lambda$ is a map between two physically distinguishable quantities, time and states. $\lambda$ is continuous by \ref{prop:continuity}. All $\hat{\mathcal{s}} \in \mathcal{S}'$ are path connected to $\mathcal{s}$. $\mathcal{S}'$ is path connected.
\end{justification}

\subsection{Evolution of state variables}

To study time evolution, we need to describe how state variables change under deterministic and reversible evolution. There are two ways to do it, and we'll need both. The first approach is to \emph{evolve} the state variables from the initial value $q^i(\mathcal{s})$ to the final value $q^i(\hat{\mathcal{s}})$. The result is a trajectory $q^i(t)=q^i(\lambda(t))$ which is especially useful when state variables correspond to physically meaningful quantities. For example, we track how the temperature or the pressure of an ideal gas changes. Evolved state variables, however, make it hard to find relationships that are invariant and common to all deterministic and reversible processes.

The second approach is to transport the state variables. The idea is to keep the connection to the initial state by labeling the future state by the original value of $q^i$, instead of by the future values. For example, the evolved state variable will not tell us the current value of pressure, but the one for the initial conditions. So we introduce a new set of variables for which $\hat{q}^i(\hat{\mathcal{s}})=q^i(\mathcal{s})$, which we can always do as the evolution is deterministic and reversible. The hypersurfaces at constant $\hat{q}^i$ are the evolved hypersurfaces at constant of $q^i$ (e.g. all the states that started with a particular value for pressure) which allows us to study how groups of states evolve in time. Given that the value of the transported state variables does not change during evolution, and that it is unique for each initial state, transported state variables also provide a way to label the trajectories themselves.

Evolved state variables are useful to write equations of motion, study how physical quantities change, form a physical picture of what happens. Transported state variables are useful to write invariants, study state space trajectories, form a geometric picture for the state space.\footnote{One should not confuse evolved/transported state variable with active/passive transformations or with Schroedinger/Heisenberg pictures. In those cases, the choice is between changing the state or the coordinates/observables. In our case, the state is always changing. The choice is between tracking the change with different values of the same state variable or different state variables that give the same value.}

\begin{prop}\label{prop:evolved_transported_variable}
Let $q^i$ be a set of state variables and $\mathcal{T}_{\Delta t}$ a deterministic and reversible evolution map. The evolved state variables are given by $q^i \circ \mathcal{T}_{\Delta t}$. The transported state variables are given by $q^i \circ \mathcal{T}_{-\Delta t}$.
\end{prop}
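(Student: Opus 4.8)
The plan is to read off both claims directly from the definitions of the evolved and transported state variables given in the preceding discussion, invoking only Proposition \ref{prop:homeomorphism} (that $\mathcal{T}_{\Delta t}$ is a bijection whose inverse is $\mathcal{T}_{-\Delta t}$) together with the fact that a composite of continuous maps is continuous, so that the resulting objects remain bona fide state variables in the sense of Proposition \ref{prop:state_variable}.

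First I would handle the \emph{evolved} variable. By definition it is the function that assigns to an initial state $\mathcal{s}$ the value of $q^i$ read at the final state $\hat{\mathcal{s}}$. Since the evolution sends $\mathcal{s}$ to $\hat{\mathcal{s}} = \mathcal{T}_{\Delta t}(\mathcal{s})$, evaluating gives $q^i(\hat{\mathcal{s}}) = q^i(\mathcal{T}_{\Delta t}(\mathcal{s})) = (q^i \circ \mathcal{T}_{\Delta t})(\mathcal{s})$. As this holds for every initial state in the relevant domain, the evolved variable is $q^i \circ \mathcal{T}_{\Delta t}$.

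Next I would treat the \emph{transported} variable $\hat{q}^i$, defined by the requirement $\hat{q}^i(\hat{\mathcal{s}}) = q^i(\mathcal{s})$, i.e. the future state is to carry the original value. The key step is to invert the evolution: by Proposition \ref{prop:homeomorphism} the inverse of $\mathcal{T}_{\Delta t}$ is $\mathcal{T}_{-\Delta t}$, so $\mathcal{s} = \mathcal{T}_{-\Delta t}(\hat{\mathcal{s}})$. Substituting, $\hat{q}^i(\hat{\mathcal{s}}) = q^i(\mathcal{T}_{-\Delta t}(\hat{\mathcal{s}})) = (q^i \circ \mathcal{T}_{-\Delta t})(\hat{\mathcal{s}})$ for every final state, hence $\hat{q}^i = q^i \circ \mathcal{T}_{-\Delta t}$. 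I would then note that each composition is continuous, being a composite of the continuous $q^i$ with the homeomorphism $\mathcal{T}_{\pm\Delta t}$, which confirms that both objects are themselves state variables.

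The argument is essentially definitional, so there is no substantive analytic obstacle; the only thing requiring care is the bookkeeping of the direction of composition. The natural pitfall is confusing which map---$\mathcal{T}_{\Delta t}$ or its inverse---belongs with which notion. The resolving intuition is that the evolved variable \emph{reports} the future value at the present label (pushing the state forward before reading $q^i$), whereas the transported variable \emph{relabels} the future state by its past value (pulling the state back before reading $q^i$); it is precisely reversibility, guaranteeing that $\mathcal{T}_{-\Delta t}$ exists, that makes the transported construction well defined at all.
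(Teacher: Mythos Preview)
Your proposal is correct and matches the paper's treatment: the paper in fact states this proposition without any accompanying justification, treating it as an immediate reading of the definitions of evolved and transported variables given in the preceding paragraph. Your write-up simply makes explicit the definitional unpacking (and the continuity check) that the paper leaves implicit.
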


One reason that transported variables will be useful is that, during deterministic and reversible evolution, we need to make sure that the number of initial and final possibilities across independent state variables remains the same. This is more easily done using transported variables as the actual values do not change. In the discrete case, this is easy to show.

\begin{prop}\label{prop:discrete_measure}
	Let $U \subseteq \mathcal{S}$ a set of states fully identified by a set of $n$ independent discrete state variables $q^i$. Let $\Delta q^i \equiv q^i(U)$ the range of possibilities of each variable. Then $\#(\Delta q^i)=\#(\Delta \hat{q}^i) \; \forall i$ and $\#(U)=\prod\limits_{i=1}^{n}\#(\Delta q^i)=\prod\limits_{i=1}^{n}\#(\Delta \hat{q}^i)=\#(\hat{U})$ where $\hat{U}=\mathcal{T}_{\Delta t}(U)$ and $\#$ denotes the number of elements in the given set.
\end{prop}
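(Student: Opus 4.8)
The plan is to reduce everything to a single observation: deterministic and reversible evolution is a bijection (indeed a homeomorphism, by \ref{prop:homeomorphism}), and the transported variables are defined precisely so as to carry admissible values forward unchanged. First I would introduce the joint labeling map $\Phi \equiv (q^1,\dots,q^n) : U \rightarrow \Delta q^1 \times \cdots \times \Delta q^n$. Because the $q^i$ fully identify the states of $U$, distinct states receive distinct labels and $\Phi$ is injective; because the $q^i$ are independent in the sense of \ref{prop:independent_state_variables} applied jointly (every tuple of admissible values is realized by some state), $\Phi$ is surjective. Hence $\Phi$ is a bijection and $\#(U)=\prod_{i=1}^{n}\#(\Delta q^i)$, which is the first product identity.

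For the transported variables I would unpack the definition from \ref{prop:evolved_transported_variable}, namely $\hat{q}^i = q^i \circ \mathcal{T}_{-\Delta t}$, and evaluate its range on $\hat U = \mathcal{T}_{\Delta t}(U)$:
\begin{equation*}
\Delta \hat{q}^i \equiv \hat{q}^i(\hat U) = q^i\bigl(\mathcal{T}_{-\Delta t}(\mathcal{T}_{\Delta t}(U))\bigr) = q^i(U) = \Delta q^i .
\end{equation*}
So the transported range is not merely equinumerous with the original range but is literally the same set, which immediately yields $\#(\Delta q^i)=\#(\Delta \hat{q}^i)$ for every $i$. This is exactly the payoff of transporting rather than evolving: admissible values are relabeled onto future states without changing, so no genuine counting argument is needed at this step.

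It then remains to connect $\#(U)$ to $\#(\hat U)$ and to obtain the second product identity. Since $\mathcal{T}_{\Delta t}$ is a bijection of $\mathcal{S}$ onto itself, its restriction $U \rightarrow \hat U$ is a bijection, so $\#(\hat U)=\#(U)$. Equivalently, the composite $(\hat{q}^1,\dots,\hat{q}^n) = \Phi \circ \mathcal{T}_{-\Delta t}$ is a bijection from $\hat U$ onto $\Delta \hat q^1 \times \cdots \times \Delta \hat q^n$, giving $\#(\hat U)=\prod_{i=1}^{n}\#(\Delta \hat q^i)$ directly. Chaining the three ingredients---$\#(U)=\prod_i\#(\Delta q^i)$, $\#(\Delta q^i)=\#(\Delta \hat q^i)$, and $\#(U)=\#(\hat U)$---closes the statement.

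I expect the only genuine subtlety, rather than a true obstacle, to be the notion of independence: \ref{prop:independent_state_variables} is phrased for a pair of variables, whereas the product $\prod_i \#(\Delta q^i)$ requires the joint version, so I would make this strengthening explicit and appeal to the surrounding prose that extends ``independent" to ``two or more" variables in exactly this way. A secondary point of care is that discrete variables may range over all of $\mathbb{Z}$, so $\#$ must be read as a cardinality and the product as a cardinal product; since every map in sight is a bijection, all the identities hold at the level of cardinals with no additional work.
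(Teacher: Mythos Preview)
Your proposal is correct and takes essentially the same approach as the paper: both arguments use the defining relation $\hat{q}^i(\hat{\mathcal{s}})=q^i(\mathcal{s})$ of transported variables to conclude that $\Delta\hat{q}^i$ is literally the same set as $\Delta q^i$, and then invoke independence of the $q^i$ to get the Cartesian-product count. Your version is simply more explicit (spelling out the bijection $\Phi$ and its injectivity/surjectivity, and flagging the pairwise-to-joint extension of \ref{prop:independent_state_variables}), but the underlying idea is identical.
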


\begin{justification}
	We claim $\#(\Delta q^i)=\#(\Delta \hat{q}^i)$. For each $\mathcal{s} \in U$, let $\hat{\mathcal{s}}=\mathcal{T}_{\Delta t}(\mathcal{s})$. We have $\hat{q}^i(\hat{\mathcal{s}}) = q^i(\mathcal{s})$. Therefore $\Delta q^i = q^i(U) = \hat{q}^i(\hat{U})=\Delta \hat{q}^i$. $\#(\Delta q^i)=\#(\Delta \hat{q}^i)$.
	
	We claim $\#(U)=\prod\limits_{i=1}^{n}\#(\Delta q^i)$. As $q^i$ are independent variables, the states are the Cartesian product of the possibilities of each variable.
\end{justification}

What happens is that, because the variables are independent, the total number of states is the product of the number of possibilities for each variable. If $\Delta q^1$ and $\Delta q^2$ are ranges of possibilities for two independent variables, the total number of possibilities is $\#(\Delta q^1) \#(\Delta q^2)$. By construction, the set $\Delta q^i$ of possibilities for each independent variable is the same as the set $\Delta \hat{q}^i$ of possibilities for the transported variable. Also, the transported variables remain independent. Therefore the relationship $\#(\Delta \hat{q}^1) \#(\Delta \hat{q}^2) = \#(\Delta q^1) \#(\Delta q^2)$ is valid throughout the evolution.

We'll see that very similar relationships are what define Hamiltonian and Lagrangian mechanics. But recovering them in the continuous case is not as straightforward. Consider the map $q'=aq$ with $0<a<1$. At first glance, it's a bijective continuous map so we may think it can represent a deterministic and reversible evolution. Yet, $\Delta q \equiv [-b, b] \supset \Delta q'$: a set of states is mapped to a proper subset (i.e. to fewer states) which does not make sense for a reversible process. In the limit where we apply the map an infinite amount of times, any value of $q$ will be brought infinitely close to $0$, which also does not sound like a reversible process.

The issue is that while deterministic evolution will map states to states, not all bijective state to state maps can be considered deterministic and reversible.\footnote{In fact, we have already seen that the map must at least be continuous.} This is a crucial point where this work is different from more usual treatments of dynamical systems. If one focuses only on the states themselves and counts them, it reaches the conclusion that a range of $1$m contains the same number of initial conditions of $1$Km: they have the same infinite number of elements. This conclusion has a number of problems: it does not match our physical intuition; it's not consistent with all types of topology; it does not explain why the laws of evolution are very often, not only continuous, but differentiable. In short: a bijection does not capture the entirety of the physical description.

If we work with a discrete topology then, by all means, the way to measure the size of a set is to count the elements in a set. But if we use the standard topology on $\mathbb{R}$, isolated points have measure zero (i.e. we can't physically distinguish them): what defines the number of initial conditions is the range of the state variables. As we'll see, the geometrical structures in Hamiltonian/Lagrangian classical/quantum mechanics are there to allow us to properly count states and, more specifically, possibilities along independent state variables. If we think that counting states is equivalent to counting elements in the set, we miss the whole point of that geometrical structure.

What makes the matter even more confusing is that the single physical idea of counting states and possibilities along independent state variables does not map to a single mathematical concept. Sometimes we need a measure, sometimes we need a symplectic form and sometimes we need a metric tensor. It will depend, case by case, on the state space of the system under study. But, however defined mathematically, deterministic and reversible motion must preserve that count. This is the crucial insight that gives physical meaning to those structures, that justifies the use of differential equations for state evolution, that tells us why symplectic and Riemannian geometry are so important in physics. Properly characterizing this count will be one of the key goals of the following sections.

\section{Composite systems and reducibility}
\label{sec:reducibility}


Given that scientific reductionism (i.e. the idea of reducing physical systems and interactions to the sum of their constituent parts in order to make them easier to study) is at the heart of fundamental physics~\cite{Ruse}, we now explore how to characterize a system in terms of its components. That is, we want to study the relationship between the state space of a composite system and the state space of its parts. In general, this is quite a complicated thing to do, which requires intimate knowledge of the system at hand. So we simplify our problem and study a material made of infinitesimal homogeneous parts. What we'll find is that under the additional assumption that the material is infinitesimally reducible (i.e. its state is equivalent to the states of its infinitesimal parts) and that each part undergoes deterministic and reversible evolution, the motion is suitably described by the standard framework of classical Hamiltonian mechanics.

\subsection{Homogeneous decomposable systems and vector spaces}
The notion that a system is decomposable means the states are equipped with a rule of composition that allows one to write $\mathcal{c}=\mathcal{c}_1+\mathcal{c}_2$: the composite system is the sum of its components. For example, the state of a ball is equal to the state of its top and bottom parts.

The notion that the system is homogeneous means that the states of the composite and of each part are not unrelated: they are all made of the same material.\footnote{Whether a system is homogeneous depends on context (e.g. air can be thought as homogeneous if the mixture of gases does not change in space or in time due to phase transitions or chemical processes) and one must check that such a property is maintained by time evolution.} In fact, the state space $\mathcal{C}$ of all systems composed of such material will include the state of the system as well as the states of  its parts (i.e. $\mathcal{c}, \mathcal{c}_1, \mathcal{c}_2 \in \mathcal{C}$). Since combining any two systems made of a homogeneous material will always give us a system made of the same homogeneous material (e.g. combining elements made of water gives us another element made of water), the state space $\mathcal{C}$ is closed under composition.

As we study the system under evolution, we'll also want to study state changes. For example, if a stable mixture of gas expands, we'll want to know the difference between the initial and final distributions. That is: $\delta\mathcal{c}=\hat{\mathcal{c}}-\mathcal{c}$. Note that such a state difference may not describe a physical state: it may remove some material from one location to add it somewhere else. Yet, these state changes are still physically distinguishable objects that provide a configuration for the same homogeneous material, therefore we extend the state space $\mathcal{C}$ to include state differences.\footnote{Note that one can characterize a state difference without knowing the states themselves. If we take some amount of material out of a container we know how the state changed, yet we may not know the amount of material before or after. In certain cases one may be interested in measuring the asymmetry of a particular state without measuring (or without being able to measure) the full state itself.} All combined, this gives us the structure of an abelian group.

\begin{prop}\label{prop:abelian_group}
The state space $\mathcal{C}$ for a decomposable homogeneous material is an additive abelian (i.e. commutative) group.
\end{prop}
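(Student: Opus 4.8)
The plan is to verify the five abelian-group axioms one at a time, reading each off a physical property of composing the material, and then to handle the one genuinely delicate point separately. The binary operation is the composition $+$ already introduced, $\mathcal{c} = \mathcal{c}_1 + \mathcal{c}_2$, and closure is immediate from homogeneity: combining two configurations of the material yields a configuration of the same material, so the result again lies in $\mathcal{C}$. For commutativity I would argue that the composite obtained by joining a first part to a second is the same physical configuration as joining them in the opposite order, so $\mathcal{c}_1 + \mathcal{c}_2 = \mathcal{c}_2 + \mathcal{c}_1$; associativity follows from the same reasoning applied to three parts, since the bracketing merely records an order of assembly that the final composite cannot remember, giving $(\mathcal{c}_1 + \mathcal{c}_2) + \mathcal{c}_3 = \mathcal{c}_1 + (\mathcal{c}_2 + \mathcal{c}_3)$.

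The identity element I would take to be the empty configuration (no material), since adjoining nothing to a configuration leaves it unchanged, so that $\mathcal{c} + 0 = \mathcal{c}$. At this point the genuine physical states together with $+$ already form a commutative monoid. The only missing ingredient is inverses, and here I would use exactly the extension that the surrounding text motivates: each state difference $\delta\mathcal{c} = \hat{\mathcal{c}} - \mathcal{c}$ is a physically distinguishable configuration of the same material, and in particular $0 - \mathcal{c}$ supplies an additive inverse $-\mathcal{c}$ with $\mathcal{c} + (-\mathcal{c}) = 0$. Enlarging $\mathcal{C}$ to contain these differences is what promotes the monoid to a group.

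The step I expect to be the real obstacle is making this passage from monoid to group precise, rather than any single axiom check, because the individual states need not have inverses among themselves. Formally this is the Grothendieck completion: from the commutative monoid of genuine states one builds the group of formal differences $\mathcal{c}_1 - \mathcal{c}_2$, with two such differences identified when they agree after adding a common configuration. To know that this construction does not discard information---that the original states embed faithfully rather than collapse---I would verify cancellativity: if $\mathcal{a} + \mathcal{c} = \mathcal{b} + \mathcal{c}$ then $\mathcal{a} = \mathcal{b}$, which physically says that adding the same amount of material to two configurations can yield equal composites only when the configurations were already equal. Granting cancellativity, the monoid injects into its group of differences, the inherited $+$ on differences is well defined, and commutativity and associativity transport from the monoid to the completion, so the extended $\mathcal{C}$ is an additive abelian group, as claimed.
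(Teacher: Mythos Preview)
Your proposal is correct and follows essentially the same architecture as the paper: first establish that the genuine states form a commutative monoid under composition (closure from homogeneity, commutativity and associativity because the order of assembly is physically irrelevant, identity given by the empty state), then extend by state differences to supply inverses.

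The one place you go further than the paper is in the passage from monoid to group. The paper simply stipulates ``We require $\mathcal{C}$ to include state changes'' and declares that an inverse $-\mathcal{c}$ exists, treating the extension as a physical given rather than a construction. You instead name the construction explicitly as the Grothendieck completion and identify cancellativity as the hypothesis needed for the original monoid to embed faithfully. This is more careful mathematically and isolates exactly the physical assumption being made (that adding the same material to two distinct configurations cannot produce the same composite), whereas the paper's justification leaves that point implicit. The paper, for its part, pauses to note that state changes remain physically distinguishable and hence $\mathcal{C}$ retains its Hausdorff topology---a point you omit but which is orthogonal to the group axioms themselves.
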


\begin{justification}
We claim $\mathcal{C}$ is an additive monoid. There exists a law of composition $+ : \mathcal{C} \times \mathcal{C} \rightarrow \mathcal{C}$ that takes two states and returns one that is the physical composition of the two. The domain and codomain match because the material is homogeneous. The law is commutative $\mathcal{c}_1 +\mathcal{c}_2 = \mathcal{c}_2+\mathcal{c}_1$ and associative $(\mathcal{c}_1 + \mathcal{c}_2) + \mathcal{c}_3 = \mathcal{c}_1 + (\mathcal{c}_2 + \mathcal{c}_3)$, as it does not matter in what order we physically compose the components. There exists a unique zero element $\mathcal{c} + 0 = \mathcal{c}$ and it represents the physically empty state (i.e. no amount of material). $\mathcal{C}$ is an additive monoid by definition.

We claim $\mathcal{C}$ is an additive group. We require $\mathcal{C}$ to include state changes. A change of a physically distinguishable object is itself physically distinguishable therefore $\mathcal{C}$ is still a Hausdorff topological manifold as per \ref{prop:topology}. There exists an inverse $- : \mathcal{C} \rightarrow \mathcal{C}$ such that $\mathcal{c} + ( - \mathcal{c}) = 0 \forall \mathcal{c} \in \mathcal{C}$. $\mathcal{C}$ is an additive group by definition. 
\end{justification}

\subsection{Classical material and distributions}
We also want to be able to express the state of the composite system in terms of the state of each part. For example, given the state of a fluid we'll want to know how the material is distributed in space. This in general will depend on how much the state of the composite system ``knows" about its parts. For example, the position and orientation of an ideal rigid body is enough to define where all its constituents are, while the volume/pressure/temperature of an ideal gas is not enough to determine the position and momentum of all its constituents. Therefore we need to characterize the system further.

We will call a \emph{particle} the smallest amount of a material to which we can assign an independent state. For example, a photon is a particle of light (it is the smallest amount of light we can describe), an infinitesimal amount of water is treated classically as a point particle (the smallest amount for a continuous fluid).\footnote{We use this definition as it allows us to retroactively talk on somewhat equal grounds about classical particles and quantum particles, and underscore how physically (and mathematically) they are very similar yet very different objects.}

We will call a \emph{classical material} one that is homogeneous and infinitesimally reducible. That is, we can keep decomposing the system indefinitely into smaller and smaller parts, each with a well defined state. In this case, the particles are the infinitesimal parts given by the limit of this process of recursive reduction. Giving the state of the whole system is equivalent to giving the states of all particles. Given the state space $\mathcal{S}$ for the infinitesimal parts, which we assume consistently with \ref{prop:manifold} to be a manifold, a composite state $\mathcal{c} \in \mathcal{C}$ will tell us the amount of material for each possible particle state. That is, each state is fully identified by a function $\rho_\mathcal{c}: \mathcal{S} \to \mathbb{R}$ that returns the amount of material in the composite state $\mathcal{c}$ that is prepared in the particle state $\mathcal{s}$. For example, given a certain configuration of gas, we can tell the density of material that is at a specific point in space with a specific value of momentum.

The notion that the parts are infinitesimal requires the co-domain of $\rho_\mathcal{c}$ to be a real number, as opposed to an integer. The state space $\mathcal{S}$ could, instead, be a discrete set. For example, for a system composed of different tanks connected by pipes, the state of the composite system could be the overall distribution of water among the tanks. The amount in each tank is continuous (as we assume the water to be infinitesimally divisible) yet there are only a finite number of tanks the water can be placed into. As it links two physically distinguishable objects, $\rho_\mathcal{c}$ is a continuous function as discussed in \ref{prop:continuity}.

As we combine the states of different parts, we sum the distributions over particle states $\rho_{\mathcal{c}}(\mathcal{s})=\rho_{\mathcal{c}_1}(\mathcal{s})+\rho_{\mathcal{c}_2}(\mathcal{s})$: the amount of material in the composite state is the sum of the amount of material of the parts. We can also increase or decrease the amount of material by a constant factor $\rho_{\mathcal{c}_1}(\mathcal{s})=a\rho_{\mathcal{c}_2}(\mathcal{s})$. This gives us the structure of a real vector space that is isomorphic to a subspace of continuous functions.

\begin{prop}\label{prop:real_vector_space}
The state space $\mathcal{C}$ for a classical (i.e. homogeneous infinitesimally reducible) material is a vector space over $\mathbb{R}$ isomorphic to a subspace of the space of continuous functions. That is $\mathcal{C} \cong G \subseteq C(\mathcal{S}) \equiv \{\rho:\mathcal{S} \rightarrow \mathbb{R} \; | \; \rho$ is continuous$\}$, where $\mathcal{S}$ is the state space of an infinitesimal amount of material.
\end{prop}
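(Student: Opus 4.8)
The plan is to build the isomorphism $\mathcal{C} \cong G \subseteq C(\mathcal{S})$ in stages, upgrading the abelian group structure from Proposition \ref{prop:abelian_group} to a full real vector space and then realizing it concretely as a space of continuous functions. First I would fix the assignment $\mathcal{c} \mapsto \rho_{\mathcal{c}}$, where $\rho_{\mathcal{c}}:\mathcal{S}\to\mathbb{R}$ returns the amount of material in state $\mathcal{c}$ prepared in the particle state $\mathcal{s}$; this map is well defined precisely because the material is infinitesimally reducible, so giving $\mathcal{c}$ is equivalent to giving the amounts at every particle state. I would invoke \ref{prop:continuity} to conclude each $\rho_{\mathcal{c}}$ is continuous (it links the two physically distinguishable sets $\mathcal{S}$ and $\mathbb{R}$), placing every $\rho_{\mathcal{c}}$ in $C(\mathcal{S})$.

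Next I would verify that this assignment respects the operations. The composition rule $\rho_{\mathcal{c}_1+\mathcal{c}_2}=\rho_{\mathcal{c}_1}+\rho_{\mathcal{c}_2}$ follows from the physical fact that the amount of material at a given particle state in the combined system is the sum of the amounts in each part; the inverse $-\mathcal{c}$ from \ref{prop:abelian_group} corresponds to $-\rho_{\mathcal{c}}$, and the empty state to the zero function. This shows the map is a homomorphism of abelian groups. To obtain a real vector space I would introduce scalar multiplication: rescaling the amount of material by a factor $a\in\mathbb{R}$ gives a state $a\mathcal{c}$ with $\rho_{a\mathcal{c}}=a\,\rho_{\mathcal{c}}$, and I would check the vector space axioms (distributivity, compatibility of scalar multiplication, unit scalar) all hold because they hold pointwise in $C(\mathcal{S})$ and the map intertwines the operations. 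Since the image is closed under addition and real scalar multiplication, it is a linear subspace $G\subseteq C(\mathcal{S})$.

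Finally I would argue the map $\mathcal{c}\mapsto\rho_{\mathcal{c}}$ is a bijection onto $G$. Injectivity is the heart of the infinitesimal-reducibility assumption: two composite states with the same amount of material at every particle state are physically indistinguishable, hence equal, so $\rho_{\mathcal{c}_1}=\rho_{\mathcal{c}_2}$ forces $\mathcal{c}_1=\mathcal{c}_2$. Surjectivity onto $G$ is immediate once $G$ is defined as the image. Combining linearity with bijectivity yields the isomorphism $\mathcal{C}\cong G$.

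The main obstacle I anticipate is conceptual rather than computational: justifying that the representation $\mathcal{c}\mapsto\rho_{\mathcal{c}}$ is faithful (injective) and that it exhausts no more than a \emph{subspace} of $C(\mathcal{S})$. Not every continuous function is a physically realizable distribution of material---for instance integrability or nonnegativity constraints on physical states mean $G$ is a proper subspace in general---so the statement correctly claims only $G\subseteq C(\mathcal{S})$ rather than equality. The delicate point is that extending $\mathcal{C}$ to include state \emph{differences} (as in \ref{prop:abelian_group}) is exactly what allows $\rho_{\mathcal{c}}$ to take negative values and makes the collection of realized distributions closed under the subtraction needed for a genuine vector space rather than merely a cone; I would lean on that extension to secure linearity over all of $\mathbb{R}$.
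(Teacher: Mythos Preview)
Your proposal is correct and follows essentially the same route as the paper: invoke the abelian group structure from \ref{prop:abelian_group}, introduce scalar multiplication via rescaling the amount of material, and exhibit the injective linear map $\mathcal{c}\mapsto\rho_{\mathcal{c}}$ into $C(\mathcal{S})$ using \ref{prop:continuity} for continuity and infinitesimal reducibility for faithfulness. The only presentational difference is that the paper first packages the rescaling transformations as a field $T\cong\mathbb{R}$ and checks the module axioms intrinsically on $\mathcal{C}$ before constructing the map, whereas you lean on the embedding into $C(\mathcal{S})$ to verify the axioms; if you take that shortcut, be sure to establish injectivity \emph{before} using pointwise equalities in $C(\mathcal{S})$ to deduce identities in $\mathcal{C}$.
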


\begin{justification}
We claim $\mathcal{C}$ is an abelian group. $\mathcal{C}$ is the state space for a decomposable homogeneous material and is therefore an abelian group  by \ref{prop:abelian_group}.

We claim $T$, the set of transformations that increase or decrease the amount of material in the system by a constant factor, is a field\footnote{Here field is intended in the abstract algebraic sense (a nonzero commutative division ring) which has no relationship to the field in the physics or differential geometry sense (a physical quantity/tensor with a value for each point in space).} isomorphic to $\mathbb{R}$. Consider $\tau: \mathbb{R} \rightarrow T$ the mapping between a number and the transformation that increases or decreases the amount of material by that factor. This transformation exists: the system is infinitesimally decomposable and the amount can be changed continuously. Define on $T$ an addition $+: T \times T \rightarrow T$ and a multiplication $*: T \times T \rightarrow T$ such that $\tau(a) + \tau(b) = \tau(a+b)$ and $\tau(a) * \tau(b) = \tau(a*b)$, $a,b \in \mathbb{R}$, so that the sum and product of the transformation is equal to the sum and product of their respective factors. $\tau$ is an isomorphism between $T$ and $\mathbb{R}$ as fields.

We claim $\mathcal{C}$ is a vector space over $\mathbb{R}$. The abelian group $\mathcal{C}$ can be extended with the operations defined by $T$, as each element $\tau \in T$ is a map $\tau : \mathcal{C} \rightarrow \mathcal{C}$. The map has the following properties: $(\tau_1 + \tau_2) \, \mathcal{c} = \tau_1 \mathcal{c} + \tau_2 \mathcal{c} \; \forall \tau_1, \tau_2 \in T$ and $\mathcal{c} \in \mathcal{C}$, increasing the amount of material by the sum of two constant factors is the same as combining the separate increases, and $\tau \, (\mathcal{c}_1 + \mathcal{c}_2) = \tau \mathcal{c}_1 + \tau \mathcal{c}_2\; \forall \tau \in T$ and $\mathcal{c}_1, \mathcal{c}_2 \in \mathcal{C}$, increasing the amount of the total system is the same as the combination of the increased parts. $\mathcal{C}$ is a module over $T$, which is a field and isomorphic to $\mathbb{R}$. $\mathcal{C}$ is (isomorphic to) a real vector space.

We claim $\mathcal{C}$ is isomorphic to a subspace of $C(\mathcal{S})$ as a vector space over $\mathbb{R}$. $\forall \mathcal{c} \in \mathcal{C} \; \exists ! \rho_{\mathcal{c}}:\mathcal{S} \rightarrow \mathbb{R}$ returning the amount of material for each state $\mathcal{s} \in \mathcal{S}$. $\rho_{\mathcal{c}} \in C(\mathcal{S})$ by \ref{prop:continuity}. Let $\varrho : \mathcal{C} \rightarrow C(\mathcal{S})$ such that $\varrho(\mathcal{c}) \mapsto \rho_\mathcal{c}$. 
As the system is reducible, two distinct composite states must represent different distributions. $\forall \mathcal{c_1}, \mathcal{c_2} \in \mathcal{C}, \mathcal{c_1} \neq \mathcal{c_2} \implies \varrho(\mathcal{c_1}) \neq \varrho(\mathcal{c_2})$. $\varrho$ is injective. $\varrho$ is a bijection between $\mathcal{C}$ and $\varrho(\mathcal{C})$. Let $\mathcal{c}=\mathcal{c}_1+\mathcal{c}_2$, then $\varrho(\mathcal{c})=\varrho(\mathcal{c}_1)+\varrho(\mathcal{c}_2)$ as the amount of material of the composition of two states is the sum of the individual amounts. Let $\mathcal{c}_1=\tau(a)\mathcal{c}_2$, then $\varrho(\mathcal{c}_1)=a \varrho(\mathcal{c}_2)$ as $\tau(a)$ increases the amount of material by the factor $a$. $\varrho$ is a homomorphism as a vector space. $\mathcal{C}$ is isomorphic to $\varrho(\mathcal{C}) \subseteq C(\mathcal{S})$.
\end{justification}

\subsection{Integration and measure}

When the topology of $\mathcal{S}$ is not discrete, the distribution $\rho_\mathcal{c}$ is a density, which is not something we directly measure. What we do measure experimentally are finite amounts of material. For example, the amount of fluid in a particular volume within a particular range of momentum specified in some units (e.g. moles, kg, ...). This means that given a composite state $\mathcal{c}$ and a set $U \subseteq \mathcal{S}$ of particle states compatible with an outcome of a process, we must be able to tell the amount of material that we will find associated with that outcome. That is for each open set $U \subseteq \mathcal{S}$ there will be a functional $\Lambda_U : \mathcal{C} \rightarrow \mathbb{R}$.

As we combine parts, or increase the amount of material in each part, the total amount of material found will have to be consistent with those operations. That is: $\Lambda_U(a_1 \mathcal{c}_1 + a_2 \mathcal{c}_2) = a_1 \Lambda_U(\mathcal{c}_1) + a_2 \Lambda_U(\mathcal{c}_2)$. For a proper state, one that is not the difference of two states, we will expect a positive amount of material. So, if a density $\rho_\mathcal{c}$ is positive everywhere, then the total amount is also positive. This makes $\Lambda_U$ a positive linear functional. As the amount of material we measure is always finite, the functional applied to any composite state $\mathcal{c}$ over any set $U$ will be finite.

The ability to associate finite amounts of material with sets of states allows us to define finite regions of $\mathcal{S}$ and compare them. Consider a region of position and momentum in phase space. If we are able to spread a finite amount of material into a non-infinitesimal uniform distribution, then we know we have a finite region. And the density will give us an indication of how big the region is: if we double the region, the density will halve. In other words: because we are describing densities over $\mathcal{S}$, we are able to give a unique measure\footnote{Here measure is intended in the mathematical sense (a real valued function of a $\sigma$-algebra) which is distinct from other connotations in physics.} $\mu$ that allows us to count the number of states in each set of $\mathcal{S}$. In terms of said measure, positive linear functionals become integrals, $\Lambda_U (\mathcal{c}) = \int_U \rho_{\mathcal{c}} d \mu$.

This intuitive picture is formalized mathematically by the Riesz representation theorem for linear functionals~\cite{Halmos}, which gives $\mathcal{S}$ the structure of a measure space, with a suitable Borel $\sigma$-algebra and measure $\mu$.

\begin{prop}\label{prop:integration}
	The state space $\mathcal{S}$ for the particles of a classical material is endowed with a Borel measure $\mu$. The state space $\mathcal{C}$ for a classical material is isomorphic to a subspace of the space of Lebesgue integrable functions. That is $\mathcal{C} \cong G \subseteq L^1(\mathcal{S}, \mu) = \{ \rho : \mathcal{S} \rightarrow \mathbb{R} \; | \; \int_{\mathcal{S}} |\rho| d\mu < \infty \}$.
\end{prop}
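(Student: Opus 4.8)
The plan is to realize the ``amount of material'' as a positive linear functional and then invoke the Riesz representation theorem to manufacture the measure $\mu$, exactly as foreshadowed in the discussion preceding the statement. By \ref{prop:real_vector_space} we already know $\mathcal{C} \cong G \subseteq C(\mathcal{S})$ via the density assignment $\mathcal{c} \mapsto \rho_\mathcal{c}$, so the remaining work is to produce $\mu$ and then to upgrade ``continuous'' to ``integrable.''

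First I would record that $\mathcal{S}$ meets the hypotheses of the Riesz--Markov theorem: it is Hausdorff by \ref{prop:topology} and, by \ref{prop:manifold}, locally homeomorphic to $\mathbb{R}^n$, hence locally compact. Next I would package the physical ``total amount of material'' map $\mathcal{c} \mapsto \Lambda_{\mathcal{S}}(\mathcal{c})$ as a functional $\tilde{\Lambda}$ acting on densities by setting $\tilde{\Lambda}(\rho_\mathcal{c}) = \Lambda_{\mathcal{S}}(\mathcal{c})$, using the isomorphism $\varrho$. The three inputs to Riesz come straight from the preamble: linearity is the relation $\Lambda_{\mathcal{S}}(a_1\mathcal{c}_1 + a_2\mathcal{c}_2) = a_1\Lambda_{\mathcal{S}}(\mathcal{c}_1) + a_2\Lambda_{\mathcal{S}}(\mathcal{c}_2)$; positivity is that a nonnegative density yields a nonnegative amount; and finiteness is that any amount of material measured over any region is finite. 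Applying the theorem then yields a unique Radon (hence Borel) measure $\mu$ on $\mathcal{S}$ with $\tilde{\Lambda}(\rho) = \int_{\mathcal{S}} \rho \, d\mu$, and more generally $\Lambda_U(\mathcal{c}) = \int_U \rho_\mathcal{c}\, d\mu$ for every open $U$.

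With $\mu$ in hand the integrability claim is immediate: since the total amount of material in any proper state is finite, $\int_{\mathcal{S}} |\rho_\mathcal{c}|\, d\mu < \infty$, so each $\rho_\mathcal{c} \in L^1(\mathcal{S},\mu)$, and the vector-space isomorphism $\varrho$ of \ref{prop:real_vector_space} then identifies $\mathcal{C}$ with the subspace $G = \varrho(\mathcal{C}) \subseteq L^1(\mathcal{S},\mu)$. Signed states (the state differences, which need not be positive) are handled by the linearity of $\Lambda_{\mathcal{S}}$, or equivalently by splitting $\rho_\mathcal{c}$ into its positive and negative parts before integrating, so the functional need only be verified positive on genuine states.

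The main obstacle I anticipate is a domain-matching issue rather than anything in the measure-theoretic machinery: Riesz--Markov needs a positive linear functional defined on $C_c(\mathcal{S})$, the continuous functions of compact support, whereas $\tilde{\Lambda}$ is a priori only defined on the physical density space $G$. I would need to argue that $G$ contains (or that $\tilde{\Lambda}$ extends continuously to) enough compactly supported functions to pin down $\mu$ uniquely---physically, that one can prepare finite amounts of material localized in arbitrarily small regions, which is precisely the ``finite region, finite amount'' prescription described just before the statement. Some care is also needed to confirm that the measure produced is genuinely one global $\mu$ rather than a region-dependent family $\mu_U$; this follows from the additivity of amounts over disjoint regions together with the uniqueness clause in Riesz.
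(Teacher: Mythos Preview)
Your plan matches the paper's at the conceptual level---positive linear functional plus Riesz---but the paper resolves the domain-matching obstacle you flag by a different device. Rather than trying to extend a single global functional $\tilde{\Lambda}$ to $C_c(\mathcal{S})$, the paper works with the entire \emph{family} $\{\Lambda_U\}_{U\subseteq\mathcal{S}}$ indexed by regions, establishes the inclusion--exclusion identity $\Lambda_{U_1}+\Lambda_{U_2}=\Lambda_{U_1\cup U_2}+\Lambda_{U_1\cap U_2}$ and the normalization $\Lambda_U=\Lambda_{\mathrm{int}(U)}$, and then invokes a custom ``Extended Riesz theorem'' (Theorem~\ref{extended_riesz_theorem} in the appendix) that patches the local Riesz measures on compact neighborhoods into a single global $\mu$. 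This buys the paper exactly what you worried about: it never needs the physical density space $G$ to contain $C_c(\mathcal{S})$, which matters since the paper later argues that compact support is too restrictive (Gaussians are physical). Your proposed fix---arguing that enough compactly supported densities exist to pin down $\mu$---would likely also work, but runs against the grain of that later discussion.

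One smaller divergence: for the $L^1$ conclusion the paper does not appeal directly to finiteness of the total amount, but to finiteness of $|\Lambda_U(\mathcal{c})|$ over \emph{every} region $U$, and then invokes a second helper lemma (Theorem~\ref{everywhere_integrable_is_lebesgue_integrable}) showing $\{\rho:|\int_U\rho\,d\mu|<\infty\ \forall U\}=L^1(\mathcal{S},\mu)$. Your positive/negative-part split is essentially the content of that lemma, so this is a packaging difference rather than a gap.
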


\begin{justification}
	We claim there exists a positive linear functional $\Lambda_U : \mathcal{C} \rightarrow \mathbb{R}$ for each $U \subseteq \mathcal{S}$ such that $\Lambda_U = \Lambda_{\mathrm{int}(U)}$ where $\mathrm{int}(U)$ is the interior of $U$. Let $\Lambda_{U} : \mathcal{C} \rightarrow \mathbb{R}$ be the functional that returns the amount of material in $\mathcal{c}$ compatible with the outcome associated with the open set $U \subseteq \mathcal{S}$. $\Lambda_U$ is well-defined: $U$ is associated with a physically distinguishable outcome. $\Lambda_U$ is linear: $\Lambda_U(a_1 \mathcal{c}_1 + a_2 \mathcal{c}_2) = a_1 \Lambda_U(\mathcal{c}_1) + a_2 \Lambda_U(\mathcal{c}_2)$ as it has to be consistent with the operations of composing states and increasing/decreasing the amount of material by a factor. $\Lambda_U$ is positive: if the value of the distribution for each particle state is positive then the total amount of material is positive. Let $U \subseteq \mathcal{S}$ not necessarily open. Define $\Lambda_U$ as $\Lambda_{\mathrm{int}(U)}$.
	
	We claim that $|\Lambda_{U}(\mathcal{c})| < \infty \; \forall \mathcal{c} \in \mathcal{C} \; \forall U \subseteq \mathcal{S}$. Let $U \subseteq \mathcal{S}$ be an open set of particle states associated with an outcome. Let $\mathcal{c} \in \mathcal{C}$ be a composite state. The amount of material of $\mathcal{c}$ associated with the outcome $U$ must be finite, as physically we always work with finite quantities. $|\Lambda_{U}(\mathcal{c})| < \infty$. Let $U \subseteq \mathcal{S}$ not necessarily open. $|\Lambda_{U}(\mathcal{c})| = |\Lambda_{\mathrm{int}(U)}(\mathcal{c})| < \infty$.
	
	We claim that $\Lambda_{U_1} + \Lambda_{U_2} = \Lambda_{U_1 \cup U_2} + \Lambda_{U_1 \cap U_2}$ $\forall $ $U_1, U_2 \in \mathcal{S}$. Let $U_1, U_2 \in \mathcal{S}$ be open sets of particle states associated with two outcomes. Suppose $U_1 \cap U_2 = 0$, $\Lambda_{U_1 \cup U_2} = \Lambda_{U_1} + \Lambda_{U_2}$ as the material found in $U_1 \cup U_2$ must be either in $U_1$ or $U_2$. Suppose $U_1 \cap U_2 \neq 0$, $\Lambda_{U_1 \cup U_2} = \Lambda_{U_1} + \Lambda_{U_2} - \Lambda_{U_1 \cap U_2}$ as the sum of the material associated with each outcome will double count the intersection. Let $U_1, U_2 \in \mathcal{S}$ not necessarily open. $\Lambda_{U_1} + \Lambda_{U_2} = \Lambda_{\mathrm{int}(U_1)} + \Lambda_{\mathrm{int}(U_2)} = \Lambda_{\mathrm{int}(U_1) \cup \mathrm{int}(U_2)} + \Lambda_{\mathrm{int}(U_1) \cap \mathrm{int}(U_2)} = \Lambda_{U_1 \cup U_2} + \Lambda_{U_1 \cap U_2}$
	
	We claim that $\mathcal{S}$ is endowed with a unique Borel measure $\mu$ such that $\Lambda_U (\mathcal{c}) = \int_U \rho_{\mathcal{c}} d \mu$.  $\mathcal{S}$ is locally compact as it is a manifold. $\mathcal{C} \cong G \subseteq C(\mathcal{S})$ therefore $\Lambda_U(\mathcal{c}) \cong \Lambda_U(\rho_\mathcal{c})$. $\Lambda = \{\Lambda_U : C(\mathcal{S}) \rightarrow \mathbb{R}\}_{U \subseteq \mathcal{S}}$ is a family of positive linear functionals such that $\forall U \subseteq \mathcal{S} \; \Lambda_U = \Lambda_{\mathrm{int}(U)}$ and $\Lambda_{U_1} + \Lambda_{U_2} = \Lambda_{U_1 \cup U_2} + \Lambda_{U_1 \cap U_2} \; \forall U_1, U_2 \subseteq \mathcal{S}$. By the extension \ref{extended_riesz_theorem} of the Riesz representation theorem for linear functionals 
	there exists a unique Borel measure $\mu$ such that $\Lambda_U (\mathcal{c}) = \int_{U} \rho_\mathcal{c} d\mu$.

	We claim $\mathcal{C}$ is isomorphic to a subspace of $L^1(\mathcal{S}, \mu)$ as a vector space over $\mathbb{R}$. $\forall \mathcal{c} \in \mathcal{C} \exists ! \rho_{\mathcal{c}}:\mathcal{S} \rightarrow \mathbb{R}$ returning the amount of material for each state $\mathcal{s} \in \mathcal{S}$. As $|\Lambda_{U}(\mathcal{c})| < \infty \; \forall \mathcal{c} \in \mathcal{C} \; \forall U \subseteq \mathcal{S}$, then $\rho_{\mathcal{c}} \in \mathcal{L}(S,\mu) \equiv \{ \rho : S \rightarrow \mathbb{R} \; | \;\; |\int_{U} \rho d\mu| < \infty \; \forall U \subseteq S\}$. $\mathcal{L}(S,\mu) = L^1(\mathcal{S}, \mu)$ by \ref{everywhere_integrable_is_lebesgue_integrable}. Let $\varrho : \mathcal{C} \rightarrow L^1(\mathcal{S}, \mu)$ such that $\varrho(\mathcal{c}) \mapsto \rho_\mathcal{c}$. $\varrho$ is a homomorphism, as justified in \ref{prop:real_vector_space}. $\mathcal{C}$ is isomorphic to $\varrho(\mathcal{C}) \subseteq L^1(\mathcal{S}, \mu)$ as a vector space.
\end{justification}

\subsection{Invariant densities and differentiability}

On a state space with countable elements, using a single state variable, we have $\Lambda_U (\mathcal{c}) = \sum \limits_{q=a}^b \rho_\mathcal{c}(q)$. We would expect the expression to become $\Lambda_U (\mathcal{c}) = \int_a^b \rho_\mathcal{c} (q) dq$ for states identified by a continuous state variable. By changing state variable, though, we see that this does not work in general: $\rho_\mathcal{c}(\hat{q})= \rho_\mathcal{c}(q) dq/d\hat{q}$. For example, if we changed units from $g/m$ to $g/Km$, the value of the density would increase by a factor of $1000$. This makes $\rho_\mathcal{c}$ a function not just of the state $\mathcal{s}$, but also of the state variables we are using. This is inconsistent with our previous definition. Moreover if the transformation is not differentiable, the density is not even well-defined. This tells us that $\mathcal{S}$ cannot be any manifold: it has to be one that allows state-variable-independent densities.

The first order of business is to guarantee that the density remains defined under an arbitrary change of state variables. This means the Jacobian of the transformation must exist. For the Jacobian to exist the transformation between state variables has to be differentiable. Therefore a manifold that guarantees densities to be well defined is one that guarantees that state variable transformations are differentiable: a differentiable manifold. $\mathcal{S}$ has a differential structure in the sense that the distributions and the volume element $d\mu$ are only defined on a set of state variables that are linked by differentiable transformations. In the same way that discontinuous changes of state variable can be detected because they tamper with physical distinguishability (i.e. the topology), non differentiable changes of state variable can be detected because they tamper with our ability to define state-variable-independent densities and count states.

Moreover, the density itself has to be differentiable. The distribution $\rho_\mathcal{c}$ is a real valued function of the state. As such, at least locally, we can use it as a state variable $q^1=\rho_\mathcal{c}$: we can identify particle states by the density of the material associated with them. This is more physically meaningful than one may first suspect, since placing physical markers (i.e. placing material at particular states) is a common way to define the references for a coordinate system. In a coordinate system where $q^1=\rho_\mathcal{c}$, the function is clearly differentiable as $\partial_1 \rho_\mathcal{c} = 1$ and $\partial_i \rho_\mathcal{c} = 0$ for $i \neq 1$. As we change state variables, the transformation is differentiable, therefore all the derivatives $\partial_{i'} q^1 = \partial_{i'} \rho_\mathcal{c}$ exist. The density itself is differentiable.

\begin{prop}\label{prop:differentiable_manifold}
	The state space $\mathcal{S}$ for the particles of a  classical material is a differentiable manifold. The state space $\mathcal{C}$ for a classical material is isomorphic to the space of Lebesgue integrable differentiable functions. That is $\mathcal{C} \cong C^1(\mathcal{S}) \cap L^1(\mathcal{S}, \mu)$, where $C^1(\mathcal{S}) \equiv \{\rho:\mathcal{S} \rightarrow \mathbb{R} \; | \; \rho$ is differentiable$\}$.
\end{prop}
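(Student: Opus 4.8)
The plan is to refine the structures already in hand rather than build from scratch. By \ref{prop:manifold} the space $\mathcal{S}$ is already a topological manifold, and by \ref{prop:integration} each composite state is represented by an integrable density $\rho_{\mathcal{c}} \in L^1(\mathcal{S},\mu)$. Two things then remain: upgrading the topological manifold to a \emph{differentiable} one, and showing that every density is itself differentiable, so that the representing map lands in $C^1(\mathcal{S}) \cap L^1(\mathcal{S},\mu)$.

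First I would fix the differentiable structure on $\mathcal{S}$. The starting observation is the transformation law for densities under a change of state variable, $\rho_{\mathcal{c}}(\hat q) = \rho_{\mathcal{c}}(q)\,|\partial q/\partial \hat q|$: for the represented amount of material to be invariant, the Jacobian of any admissible change of coordinates must exist. The admissible state variables are therefore not arbitrary continuous charts but precisely those whose transition maps are differentiable, so that $d\mu$ and the densities keep the same physical meaning in every chart. Declaring this family of charts the atlas turns the topological manifold of \ref{prop:manifold} into a differentiable manifold; non-differentiable changes of state variable are excluded on exactly the footing that non-continuous ones were excluded via \ref{prop:continuity}, since they would destroy our ability to count states.

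Next I would show each $\rho_{\mathcal{c}}$ is differentiable. The key device is to use the density as a coordinate: since placing material at prescribed states is a legitimate way to mark references, $q^1 = \rho_{\mathcal{c}}$ can be taken as one of the admissible state variables on a region where it is non-degenerate, completing it to a chart. In that chart the density is trivially $C^1$, with $\partial_1 \rho_{\mathcal{c}} = 1$ and $\partial_i \rho_{\mathcal{c}} = 0$ for $i \neq 1$. Because the atlas was built so that all transitions are differentiable, $\partial_{i'} \rho_{\mathcal{c}} = \partial_{i'} q^1$ exists in every other chart, so $\rho_{\mathcal{c}} \in C^1(\mathcal{S})$. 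Combining this with \ref{prop:integration} places the image of $\varrho$ inside $C^1(\mathcal{S}) \cap L^1(\mathcal{S},\mu)$; for the stated isomorphism I would then argue surjectivity, namely that any differentiable, integrable (and possibly signed) function is a physically admissible distribution—or state difference—of the homogeneous material and hence equals some $\rho_{\mathcal{c}}$. Linearity and injectivity of $\varrho$ carry over verbatim from \ref{prop:real_vector_space} and \ref{prop:integration}.

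I expect the genuine obstacle to be the density-as-coordinate step. Using $\rho_{\mathcal{c}}$ as a chart requires it to be locally non-constant with non-vanishing differential, and the argument must be patched at critical points, where $\rho_{\mathcal{c}}$ is locally constant (hence trivially smooth) or degenerate. The clean way to close this is to note that $\mathcal{C}$ contains enough densities that at any state one can select $n$ of them, $\rho_{\mathcal{c}_1},\dots,\rho_{\mathcal{c}_n}$, with independent differentials forming a chart, so that every density is differentiable with respect to the structure they generate; verifying that such a spanning family always exists—rather than simply assuming it—is the step I would treat most carefully.
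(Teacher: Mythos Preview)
Your proposal follows essentially the paper's route: the Jacobian argument for the differentiable atlas, the density-as-coordinate trick for differentiability of each $\rho_{\mathcal{c}}$, and the surjectivity claim for the isomorphism. You also correctly isolate the real obstacle, the degeneracy of $\rho_{\mathcal{c}}$ as a coordinate.

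Where the paper differs is in how it closes that gap. Your patch asks for $n$ densities with ``independent differentials'' at each point to build a full chart; but speaking of independent differentials already presupposes the differentiable structure you are trying to establish, so some care is needed to make that non-circular. The paper instead uses a linear-combination trick that needs only \emph{one} auxiliary density: on any region, pick a $\rho_{\mathcal{c}_0}$ that is strictly monotonic there (level sets are connected hypersurfaces, a topological condition), so it serves as a coordinate and is differentiable by your argument. Then for the target density set $\delta\rho=\rho_{\mathcal{c}_0}-k\,\rho_{\mathcal{c}}$ with $k>0$ small enough that $\delta\rho$ is still strictly monotonic; now $\rho_{\mathcal{c}}=(\rho_{\mathcal{c}_0}-\delta\rho)/k$ is a linear combination of two functions already shown differentiable. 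This sidesteps the need to manufacture a full $n$-tuple of independent densities and handles critical points without a separate case analysis.

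For surjectivity the paper's argument is also slightly sharper than ``any such function is physically admissible'': it observes that if $\rho\neq\rho_{\mathcal{c}}$ in $C^1\cap L^1$, there is an open set $U$ with $\int_U\rho\,d\mu\neq\int_U\rho_{\mathcal{c}}\,d\mu$, which is a physically distinguishable outcome, forcing a distinct state $\mathcal{c}_1$ with $\varrho(\mathcal{c}_1)=\rho$.
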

\begin{justification}
	We claim $\mathcal{S}$ is a differentiable manifold. Let $\rho_\mathcal{c} : \mathcal{S} \rightarrow \mathbb{R}$ be the distribution associated with a composite state $\mathcal{c} \in \mathcal{C}$. Let $q^i$ and $\hat{q}^j$ be two sets of independent state variables that fully identify states within $U\subseteq \mathcal{S}$. Let $\rho_\mathcal{c}(q^i)$ and $\rho_\mathcal{c}(\hat{q}^j)$ be the expression in local coordinates of the distribution.  $\rho_\mathcal{c}(\hat{q}^j)=\rho_\mathcal{c}(q^i) | \partial_j q^i |$ as $\rho_\mathcal{c}$ transforms as a density. $\rho_\mathcal{c}(\mathcal{s}) = \rho_\mathcal{c}(q^i(\mathcal{s})) = \rho_\mathcal{c}(\hat{q}^j(\mathcal{s}))$. The Jacobian $| d q' / d q |$ exists and is non-zero. The map between any two charts is differentiable. $\mathcal{S}$ is a differentiable manifold.
	
	We claim that the distributions associated with composite states are differentiable in the regions where they are strictly monotonic. Let $\rho_\mathcal{c}$ be the distribution associated with state $\mathcal{c}$. Let $\rho_\mathcal{c}$ be strictly monotonic in a region $U\subseteq \mathcal{S}$. Then its fibers in $U$, the inverse images of the values, are connected hypersurfaces. We can construct a chart such that $q^1=\rho_\mathcal{c}$. The distribution is differentiable over $U$ for that variable: $\partial_{i} \rho_\mathcal{c} = \partial_{i} q^1 = \delta_i^1$ where $\delta_i^j$ is the Kronecker delta. Let $q^{j}$ be another set of state variables. $q^1(q^{j})$ is differentiable as the manifold is differentiable. $\rho_\mathcal{c}(q^{j})=q^1(q^{j})$ is differentiable over $U$.
	
	We claim that the distributions associated with composite states are differentiable. Let $\rho_\mathcal{c}$ be the distribution associated with state $\mathcal{c}$. Let $\rho_{\mathcal{c}_0}$ be the distribution associated with state $\mathcal{c}_0$. Let $\rho_{\mathcal{c}_0}$ be strictly monotonic in a region $U\subseteq \mathcal{S}$. Let $\delta \rho = \rho_{\mathcal{c}_0} - k \rho_{\mathcal{c}}$ where $k \in \mathbb{R}$ and $k>0$. Let $k$ be sufficiently small so that $\delta \rho$ is strictly monotonic over $U$. $\rho_{\mathcal{c}_0}$ and $\delta \rho$ are differentiable over $U$ as they are strictly monotonic over $U$. $\rho_{\mathcal{c}}$ is a linear combination of differentiable functions over any arbitrary region $U$. $\rho_{\mathcal{c}}$ is differentiable.
	
	We claim $\mathcal{C} \cong C^1(\mathcal{S}) \cap L^1(\mathcal{S}, \mu)$. Let $\rho_\mathcal{c}$ be the distribution associated with a state $\mathcal{c}$. Let $\rho \in C(\mathcal{S}) \cap  L^1(\mathcal{S}, \mu)$ be a Lebesgue integrable continuous function. If $\rho_\mathcal{c} \neq \rho$ then $\exists U \in \mathcal{S}$ such that $\int_{U} \rho_\mathcal{c} d \mu \neq \int_{U} \rho d \mu$. Both integrals are finite as $\rho_\mathcal{c}, \rho \in L^1(\mathcal{S}, \mu)$. Both integrals are expressible with the same state variables as all elements are defined on the same differential structure. Therefore there exists an outcome for $\mathcal{S}$ that can physically distinguish the two distributions. There must be a state $\mathcal{c}_1 \in \mathcal{C} \; | \; \varrho(\mathcal{c}_1)=\rho$. $\varrho : \mathcal{C} \rightarrow C(\mathcal{S}) \cap  L^1(\mathcal{S}, \mu)$ is surjective. $\varrho$ is a homomorphism, as justified in \ref{prop:real_vector_space}. $\varrho$ is an isomorphism between $\mathcal{C}$ and $C(\mathcal{S}) \cap  L^1(\mathcal{S}, \mu)$ as vector spaces.
\end{justification}

As we identified the space of Lebesgue integrable differentiable functions $C^1(\mathcal{S}) \cap  L^1(\mathcal{S}, \mu)$ as the space of distributions that are physically meaningful, we can better understand why other commonly used function spaces do not fit the bill. Some are not restrictive enough. The set of Lebesgue integrable functions $L^1(\mathcal{S})$ includes discontinuous functions that for \ref{prop:continuity} are unphysical. The space of continuous functions $C(\mathcal{S})$, the space of continuous functions that vanish at infinity $C_0(\mathcal{S})$ and the space of differentiable functions $C^1(\mathcal{S})$ include functions whose integral is infinite, which would represent an infinite amount of material. Some definitions are too restrictive. Requiring compact support (i.e. the function is different from zero only on a finite region) would exclude distributions, such as Gaussians, that span over the whole range of states. Such distributions are physical: if we take a finite volume of an ideal gas at equilibrium, the momentum distribution spans over the whole range. Schwartz space $S(\mathcal{S})$ excludes functions that are not infinitely smooth which we don't have a general physical justification for.\footnote{One can, though, make the argument that given any distribution $\rho$ one can find an infinitely smooth $\rho_{sm}$ such that the difference in description is small compared to the error already introduced by assuming the system to be homogeneous and infinitely reducible. The Whitney approximation theorem~\cite{Lee} makes this mathematically well defined.}

Another consideration is that while the norm associated to $L^1(\mathcal{S})$ is\journal{\break} $\int_{\mathcal{S}} |\rho| d\mu$, the expression $\int_{\mathcal{S}} \rho d\mu$ may be more physically meaningful in some cases. For proper states, the two are the same and they represent the total amount of material. For state changes they differ. The second represents the amount of material added (or taken away if negative) by the state change. For example, if $\delta \rho$ is the change due to deterministic and reversible evolution, the amount of material does not change and therefore $\int_{\mathcal{S}} \delta \rho \, d\mu = 0$. The first norm would represent the total amount of material that is changing (i.e. being added and being removed), therefore $\int_{\mathcal{S}} | \delta \rho | \, d\mu = 0$ means no material is moving, no change is occurring.

Note that neither of those expressions matches the norm given by the inner product defined as $\langle f, g \rangle = \int_{\mathcal{S}} fg \, d\mu$. Moreover $\mathcal{C}$, since it excludes discontinuous functions, can never be a complete normed space: it will not include the limit for all Cauchy sequences; it cannot be a Banach or a Hilbert space. Such a construction, though, is still useful. Consider the following expressions:
\begin{align*}
\Lambda_{\mathcal{S}} (\mathcal{c}) &= \int_\mathcal{S} \rho_{\mathcal{c}} d \mu = \langle 1 , \rho_{\mathcal{c}} \rangle \\
\Lambda_U (\mathcal{c}) &= \int_U \rho_{\mathcal{c}} d \mu = \langle 1_U , \rho_{\mathcal{c}} \rangle \\
\rho_\mathcal{c}(q_0,p_0) &= \int_\mathcal{S} \delta_{q_0,p_0} \cdot \rho_{\mathcal{c}} d \mu = \langle \delta_{q_0,p_0} , \rho_{\mathcal{c}} \rangle \\
p_{total}(\mathcal{c}) &= \int_\mathcal{S} p \cdot \rho_{\mathcal{c}} d \mu = \langle p , \rho_{\mathcal{c}} \rangle \\
q_{avg}(\mathcal{c}) &= \frac{\int_\mathcal{S} q \cdot \rho_{\mathcal{c}} d \mu}{\int_\mathcal{S} \rho_{\mathcal{c}} d \mu} = \frac{\langle q , \rho_{\mathcal{c}} \rangle}{\langle 1 , \rho_{\mathcal{c}} \rangle}
\end{align*}
For each state, they represent respectively the total amount of material, the material within $U$, the density at $(q_0, p_0)$, the total momentum and the average of a state variable. As they are linear functionals of $\mathcal{C}$,  they can be expressed using their dual vector. But these vectors ($1_U$, $p$, ...) are \emph{not} elements of $\mathcal{C}$. They do not represent states, they are not continuous and integrable, and there is no single general physical meaning for all of them. Therefore, as we mentioned before, while we can extend the function space for convenience, computation purposes or to study a limit case, we need to be mindful of the extension and carefully check that the mathematical results are physically meaningful case by case.

\subsection{Invariant densities and cotangent bundle}

Now that we are guaranteed that the density is properly defined for all state variables, we have to make sure it is invariant. The distribution should only depend on the states, and not the particular choice of state variables to label them.

If $\rho_{\mathcal{c}}$ is to remain invariant under state variable changes, not only does the Jacobian have to exist but it must be unitary. This means we cannot simply change one state variable as we please. If we change one at least another has to change in some coordinated way such that the Jacobian of the total transformation is unitary. This  means that we cannot change physical units of all the variables as we like: they are part of a unit system.

The simplest case is when a single variable is enough to define our units, and therefore tell how all the others must change. Suppose that our state space $\mathcal{S}$ is identified by $n$ independent state variables $\{q, k^i\}$ and a change of units for the first, that is $\hat{q} = \hat{q}(q)$, determines the change for all others. The Jacobian matrix is the block matrix:
\begin{align*}
J =  \left[
\begin{array}{cc}
d_q\hat{q} & 0 \\
\partial_q \hat{k}^j & \partial_i \hat{k}^j \\
\end{array}
\right] 
\end{align*}
The Jacobian has to be unitary, therefore all elements $\partial_i \hat{k}^j$ must be well defined and $|\partial_i \hat{k}^j| = 1/d_q\hat{q}$. As this can only happen if $\partial_i \hat{k}^j$ is a $1\times1$ matrix, there can only be one $k$. This gives $\partial_{k} \hat{k} = d_{\hat{q}} q$ and $\hat{k} = d_{\hat{q}} q \, k + f(q)$. As this is a change of units, we expect the zero value of $k$ to remain constant: $\hat{k}(q, 0) = 0 = f(q)$. Therefore $\hat{k} = d_{\hat{q}} q \, k$ transforms as the component of a covariant vector. $\mathcal{S}$ is isomorphic to $T^*\mathcal{Q}$, the space of co-vectors at a point, where $\mathcal{Q}$ is the manifold that defines our unit.

Physically, this means that the state variable $k$, which is the classical analogue of the wave number, is expressed with the inverse of the unit used for $q$. If $q$ is in meters, $k$ is in inverse meters. Consider now the area $dq \wedge dk$ of an infinitesimal rectangular region. This quantity is dimensionless and therefore invariant. The total number of states will be proportional to it, as  doubling the range of $dq$ or $dk$ will give us double the number of possible states. We have $d\mu = \hbar dq \wedge dk$, where $\hbar$ is the proportionality constant that will depend on the unit chosen to count the possibilities of the pair $(q, k)$. That is, a unitary range of possibilities for $q$ and $k$ will give us $\hbar$ possibilities.\footnote{The actual value and physical dimensions of $\hbar$ are determined by the system of units, and should not be taken to describe some intrinsic physical property. Only dimensionless relationships to other physical constants, such as the fine structure constant $\alpha = e^2/\hbar c$, can possess that trait. That is why one can choose ``natural units" for which $\hbar=1$. In SI units the relationship $\hbar dk = dp = m du$ between kinetic and conjugate momentum, derived later, sets the relationship between units.} The state count in a finite (i.e. compact) region $U \subseteq \mathcal{S}$ will be given by $\mu(U) = \int_U \hbar dq \wedge dk$: we are finally able to write the volume of integration in terms of state variables.

We can generalize to the case where $\mathcal{Q}$ has more than one dimension. As we must be able to change one unit at a time, independently from the other, to each $q^i$ will correspond a $k_i$ that uses the inverse of the corresponding units. We have $\mathcal{S}\cong T^*\mathcal{Q}$ and $d\mu = \hbar^n dq^n \wedge dk_n$. But it's not just the volume, the state count, that is preserved as we change state variables. Each pair $(q^i, k_i)$ is expressed by an independent unit with the count of possibilities given by $\hbar dq^i \wedge dk_i$. We call such a pair a degree of freedom. As they are independent the total state count is the product of the possibilities of each d.o.f.: $d\mu = \hbar^n dq^n \wedge dk_n = \prod \limits_{i=1}^n \hbar dq^i \wedge dk_i$. In other words: independent d.o.f. are orthogonal surfaces in $\mathcal{S}$. The possibility count of an arbitrary degree of freedom, then, will be $\sum \limits_{i=1}^n \hbar dq^i \wedge dk_i$, the sum of the projections over the $n$ orthogonal and independent d.o.f. defined by $(q^i, k_i)$.

The possibility count for each d.o.f. (i.e. the wedge product within a d.o.f.) and the orthogonality of different d.o.f. (i.e. the scalar product across d.o.f.) must be the same regardless of the choice of state variables. We can express both requirements mathematically in a compact way. We first define conjugate momentum as $p_i=\hbar k_i$ and unified state variables as $\xi^a\equiv \{q^i, p_i\}$. Then we consider the canonical symplectic two-form $\omega =\Sigma \, dq^i \wedge dp_i$ given by the following components:
\begin{align*}
\omega_{ab} =  \left[
\begin{array}{cc}
0 & 1 \\
-1 & 0 \\
\end{array}
\right] \otimes I_n =
\left[
\begin{array}{cc}
0 & I_n \\
-I_n & 0 \\
\end{array}
\right] \\
\end{align*}
It returns the wedge product within a d.o.f. and the scalar product across. Requiring the invariance of this metric under state variable changes assures us the count of states and possibilities is well defined.

\begin{figure}
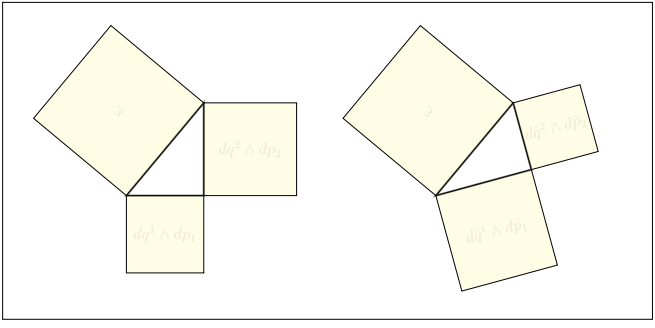

	\includestandalone[width=\columnwidth]{pictures/independentDof}
	\caption{The symplectic form $\omega$ and its components along two independent degrees of freedom. As we change state variables, the new d.o.f. are also independent and therefore orthogonal to each other. Note: this is a 2D conceptualization of 4D space.}
	\label{fig:independent_dof}
\end{figure}

This gives us insight on the physical meaning of the geometrical structure $T^*\mathcal{Q}$. The canonical one-form $\theta=k dq$ represents the geometrical object we associate with each particle state. The symplectic form $\omega$ is the bilinear that quantifies the possibilities described by two given state variables. Note, instead, that the relationship $\omega = - \hbar d\theta$, while mathematically true, has no clear physical meaning as we have not defined what the exterior derivative on a state actually represents.

We can capture the above discussion by stating that the state space $\mathcal{S}$ is the symplectic manifold $(T^*\mathcal{Q}, \omega)$, where $\mathcal{Q}$ is the manifold that defines the unit system. This is the only manifold that allows us to define state-variable-invariant densities. The symplectic form allows us to count possibilities on an arbitrary d.o.f. 

\begin{prop}\label{prop:symplectic_manifold}
The state space $\mathcal{S}$ for the particles of a classical material is a symplectic manifold formed by a cotangent bundle $T^*\mathcal{Q}$ equipped with the canonical two-form $\omega$.
\end{prop}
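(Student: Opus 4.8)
The plan is to collect the arguments developed in the preceding prose into a formal justification, since the proposition essentially crystallizes that discussion. The starting point is \ref{prop:differentiable_manifold}, which already gives us that $\mathcal{S}$ is a differentiable manifold on which state-variable-independent densities must be well defined. The core task then splits into two parts: first establish that density invariance forces the cotangent bundle structure $\mathcal{S} \cong T^*\mathcal{Q}$, then show that the induced state-counting requirements are captured precisely by the canonical symplectic form $\omega$.

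First I would handle the cotangent bundle structure. The requirement is that $\rho_{\mathcal{c}}$ be invariant, not merely well defined, under admissible changes of state variables; this forces the Jacobian of any such change to have unit-modulus determinant. Taking a set of variables $q^i$ that fixes the unit system, I would analyze the block form of the Jacobian (as in the prose) to show that for each $q^i$ exactly one conjugate variable $k_i$ is admissible, transforming as a covector component (in one dimension, $\hat{k} = d_{\hat{q}} q \, k$). Demanding that the zero value of $k_i$ be preserved under a change of units---as befits a genuine unit change---eliminates the inhomogeneous term $f(q)$, yielding precisely the covariant transformation law. This identifies $\mathcal{S}$ with $T^*\mathcal{Q}$, where $\mathcal{Q}$ is the base manifold coordinatized by the $q^i$.

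Second I would construct the symplectic form. Setting $p_i = \hbar k_i$, the wedge $dq^i \wedge dp_i$ is dimensionless and hence invariant under unit changes, while $d\mu = \hbar^n dq^n \wedge dk_n$ counts states. The two physical requirements---that the possibility count within each degree of freedom and the orthogonality (independence) across distinct degrees of freedom both be preserved---are encoded jointly by the canonical two-form $\omega = \sum dq^i \wedge dp_i$. I would verify that $\omega$ is non-degenerate and closed, the defining properties of a symplectic form, and note that requiring its invariance under state-variable changes is exactly the statement that the state count is well defined.

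The hard part will be making the dimension-matching step airtight: showing that $\partial_i \hat{k}^j$ must be a $1 \times 1$ block, so that there is exactly one conjugate $k_i$ per $q^i$, rather than merely arguing that this is consistent. A related subtlety is that the derivation is essentially local (it concerns charts and Jacobians), whereas the proposition asserts a global cotangent bundle; bridging that gap cleanly would require invoking the differentiable structure from \ref{prop:differentiable_manifold} together with the physical demand that a single unit system apply globally. Since this is a justification rather than a theorem, I would lean on these physical arguments---unit independence and state counting---rather than attempting a fully coordinate-free symplectic derivation.
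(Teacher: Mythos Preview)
Your proposal is correct and follows essentially the same route as the paper's justification: unit-invariance of the density forces a unitary Jacobian, the block analysis pairs each $q^i$ with a single conjugate $k_i$ transforming covariantly (hence $\mathcal{S}\cong T^*\mathcal{Q}$), and the canonical two-form $\omega=\sum dq^i\wedge dp_i$ is then the unique invariant encoding the possibility count within and independence across degrees of freedom. The paper organizes it slightly differently---treating the one-dimensional case first and then generalizing---but the substance is identical, and you have correctly flagged the two genuinely delicate points (the $1\times1$ block argument and the local-to-global passage) that the paper's justification also leaves at the level of physical argument rather than strict proof.
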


\begin{justification}
	We claim the simplest (i.e. lowest dimension) state space $\mathcal{S}$ that allows state-variable-invariant densities is the cotangent bundle $T^*\mathcal{Q}$ of a single dimensional manifold $\mathcal{Q}$. Consider a set of state variables for $U \in \mathcal{S}$: the simplest unit system is one defined by a single state variable. Let $q$ be the state variable that defines the unit system. Let $\mathcal{Q}$ be the manifold charted by $q$. A diffeomorphism $\hat{q}=\hat{q}(q)$ fully defines a change of units. Let $(q,k^i)$ with $i=1...n-1$ be a set of state variables. The Jacobian determinant $|J| = d_q\hat{q} |\partial_i\hat{k}^j|=1$ as densities must be invariant. $|\partial_i \hat{k}^j| = d_{\hat{q}}q$. $\partial_i \hat{k}^j$ is one dimensional as its components must be fully specified by the previous relationship. Let $k=k^1$. $\partial_k \hat{k} = d_{\hat{q}}q$. $\hat{k}(q,0) = 0$ as a change in units does not change the $0$ value. $\hat{k} = d_{\hat{q}}q k$. $k$ changes as the component of a co-vector. Each state is identified by a point and a co-vector in $\mathcal{Q}$. The simplest state space $\mathcal{S}$ is isomorphic to $T^*\mathcal{Q}$ where $Q$ is the one dimensional manifold that defines the units.
	
	We claim the simplest state space $\mathcal{S}$ that allows state-variable-invariant densities is the symplectic manifold $(T^*\mathcal{Q}, \omega)$ where $T^*\mathcal{Q}$ is the cotangent bundle of a one dimensional differentiable manifold $\mathcal{Q}$ and $\omega$ is the canonical two-form. Express integration in coordinates. $\int_{\mathcal{S}} \rho_\mathcal{c} d\mu \propto \int_{\mathcal{S}} \rho_\mathcal{c}(q, k) dq \wedge dk$. $d\mu = \hbar dq \wedge dk = dq \wedge dp = \omega$ where $\hbar$ is a constant, $p\equiv \hbar k$  and $\omega$ is the canonical two form. $\omega$ is invariant under state variable change. $\omega$ is the symplectic form for $\mathcal{S}=T^*\mathcal{Q}$. $\mathcal{S} = (T^*\mathcal{Q}, \omega)$ is a symplectic manifold.
	
	We claim the state space $\mathcal{S}$ for the particles of a homogeneous classical material is a symplectic manifold $(T^*\mathcal{Q}, \omega)$ where $T^*\mathcal{Q}$ is the cotangent bundle of an $n$-dimensional differentiable manifold $\mathcal{Q}$ and $\omega$ is the canonical two-form. Let $q^i$ be a set of n continuous independent state variables that define the units necessary to describe a state in $U \subseteq \mathcal{S}$. Let $\mathcal{Q}$ be the manifold charted by $q^i$. Locally $\mathcal{Q} \cong \prod \mathcal{Q}^i \cong \mathbb{R}^n$. Changing units of one state variable must not change the units of the other as they are independent. State-variable-distribution can  be defined separately on each degree of freedom. For each $\mathcal{Q}^i$ we have a covector $k_i(q) dq^i \in T^*\mathcal{Q}^i$. Locally $\mathcal{S} \cong \prod T^*\mathcal{Q}^i \cong T^* \prod \mathcal{Q}^i \cong T^* \mathcal{Q}$. Integration must also be defined on an independent d.o.f. There must exist a non-degenerate two-form $\omega$ such that $\int_{U \subset \mathcal{S}} \rho_\mathcal{c} \omega$ where $U$ is any two dimensional surface in $\mathcal{S}$. $\omega$ has to be form invariant under state variable change. The canonical two-form is the only such form. $\omega = \sum dq^i \wedge dp_i$. $\mathcal{S} = (T^*\mathcal{Q}, \omega)$ is a symplectic manifold.
\end{justification}

It should be clear by now that the case of discrete topology is qualitatively different from the standard topology for $\mathbb{R}^n$. The notion that the continuous case is a limit of the discrete case, that it's ``like the discrete but with more points", leads in this case to erroneous intuition. The key question is: can we physically distinguish an isolated state? Can we have an outcome associated with only one element? The answer is no with the standard topology on $\mathbb{R}^n$. The consequence is that when we define the measure $\mu$ for the count of continuous states, we assign a finite value to a finite region, and we assign zero measure to a single state. That is what gave us integration, densities and, ultimately, conjugate variables. If we were to use the discrete topology on $\mathbb{R}^n$, if we were to assume we can identify single states as we do for countable states, then we would assign measure one to each state and infinity to a finite region. We would not have integration, just a simple sum. Our distribution would not be a density and there would be no justification for conjugate variables. A finite distribution could only distribute a finite amount of material in a finite number of states.

In other words: this is not a case where a difference at small scales leads to a small difference at large scales. The two cases are radically different. We can tell them apart. We should take the use of densities and conjugate quantities as evidence that quantities like space and time are not discrete in the sense that the processes we use to distinguish those quantities cannot identify single instances. The quantum case does not change this, as the use of densities (in the form of the wave function) and conjugate quantities is even more prominent.\footnote{If one wants to posit that space is fundamentally discrete at scales where both classical and quantum mechanics cease to be valid, the better strategy would be to work with a discrete version of $T^*\mathcal{Q}$ and not $\mathcal{Q}$ by itself. Though it is not clear how a discrete topology would become non-discrete in the limit, at least the measure may be workable.}

It should also be noted that classical particles, under this light, cannot be considered point-like. As they are the limit of infinitesimal subdivision, their spatial extent becomes infinitesimal but not zero.\footnote{This picture is also compatible with general relativity, unlike point-like particles. These would not follow geodesics as their infinite mass density would significantly affect the gravitational field.} But suppose particles were truly point-like. Then distinguishing particles would mean distinguishing points. We are back to the idea of a discrete topology on $\mathbb{R}^n$. In that case we would not have conjugate momentum, no $T^*\mathcal{Q}$, just the coordinates of the point in $\mathcal{Q}$. This would actually be more self-consistent: why wouldn't a point be enough to define the state of a point-like particle? As before, we should take the use of conjugate quantities as evidence that classical particles are really the limit of an infinitesimal subdivision.

Finally, we should note that we are in a position similar to the one discussed in \ref{prop:discrete_measure} for discrete states. We have a topological space plus a measure that allows to count states and a symplectic form that allows to count possibilities, both expressible in terms of state variables. We recovered this structure starting from the idea of an infinitesimally reducible system. That led to a state space $\mathcal{S}$ for the infinitesimal parts, on which we must be able to define state-variable-independent distributions, which in turn gave us degrees of freedom made by pairs of conjugate variables and the symplectic form typical of classical phase space. In short: being able to measure the amount of material is what allows us to count states.

\subsection{Infinitesimal reducibility}
\label{subsec:infinitesimal_reducibility}

Now that we have fully characterized what we mean by a classical material, we can stipulate the following:

\begin{assump}[Infinitesimal reducibility]\label{ass:infinitesimal_reducibility}
	The system under study is composed of an infinitesimally reducible homogeneous material and each part undergoes deterministic and reversible evolution.
\end{assump}

\begin{rationale}
	The idea is that time evolution specifies a map for the state space $\mathcal{S}$ of each infinitesimal part. Knowing how the parts evolve tells us how a composite state $\mathcal{c}$ evolves as well.
	
	Consistent with what we said in assumption \ref{ass:determinism}, if we defined a state for each particle, then a deterministic and reversible evolution on that state must exist. Yet, the idea that we can assign states to infinitesimal parts should be considered only a \emph{simplifying} assumption. The obvious reason is that we know this does not work in practice: as we keep decomposing the material we end up with molecules, atoms and then subatomic particles. But it is instructive to understand when and how the assumption breaks down at a more conceptual level.
	
	The first problem is methodological. To be able to talk about the states of a part we need a physical process that is able to distinguish between them. For a billiard ball we can imagine marking one spot with a red marker. This allows us to track it as the ball moves or collides with other balls. For an electron, instead, we do not have a way to mark a piece. In fact when two electrons scatter we can't even tell which is which, let alone what portion went where. The classical assumption may not hold because we do not have suitable physical processes at our disposal.
	
    Even if we are able to track parts, the assumption requires them to be infinitesimal, the limit of a process of infinite recursive subdivision. The best we can do experimentally is to confirm that the assumption holds up to the smallest precision available. Therefore even in the best of cases it cannot be considered an experimentally validated assumption but a reasonable default assumption (e.g. ``it worked so far").
	
	The second, more conceptual, problem is that the idealizations that allowed us to define states in assumption \ref{ass:determinism} may not hold as the parts get smaller. Recall the cannonball whose motion is sufficiently unaffected by the photons that scatter off its surface. As we consider smaller and smaller parts, at some point we will find an amount of material that is affected by the interaction with the air or photons scattering off of it. At that point the parts are no longer sufficiently isolated to define an independent state, their evolution depends on the environment. Therefore the deterministic evolution of the whole cannot be reduced to the deterministic evolution of its parts.
	
	Another issue is that defining a state requires some kind of asymmetry between system under study and environment. The future state of the system does not depend on the state of the environment, so that we can define an independent state, yet the future state of the environment is affected by the state of the system, so that we can have external processes that allow us to define physical distinguishability. If we try to assume that both the system and environment are really made of the same classical material, then the claim to that asymmetry is lost: the infinitesimal parts of both system and environment must affect one another in the same way.
	
	Another issue arises if we include the measuring device in our description. Ideally, we'd like to require the following:
	\begin{itemize}[noitemsep]
		\item The system under study is deterministic and reversible
		\item The measuring device ascertains the state of the system
		\item System and measuring device, as a whole, is deterministic and reversible
	\end{itemize}
	Unfortunately, these three requirements together are inconsistent. Say $a \in A$ is the initial state of our system and $b \in B$ the initial state of our measuring device: the final state $\hat{a}=\hat{a}(a)$ because the system is deterministic; $a=f(\hat{b})$ because we are able to know the initial state of the system by looking at the final state of the measuring device; $b=b(\hat{a},\hat{b})$ because the system as a whole is reversible. Since $\hat{a}$ is determined by $a$, and $a$ can be known from $\hat{b}$, $b = b(\hat{a}(f(\hat{b})), \hat{b}) = b(\hat{b})$ can be determined from just $\hat{b}$. But since $a=f(\hat{b})$, the whole past state could be reconstructed just by looking at the future state of the measuring device. But $(\hat{b}) \rightarrow (a, b)$ cannot be an injection, therefore we arrive at a contradiction.
	
	Of those three requirements we can only pick two. If we want determinism and reversibility for the combined system, either both systems are not independently deterministic or the second is not a measuring device. In other words: we can't expect to have deterministic and reversible evolution at all levels of aggregation and have the parts interact in any physically meaningful way.
	
	The same problem of physically meaningless interaction surfaces when we consider the states of infinitesimal parts. Under our classical assumption, if we were to take the state of an infinitesimal part to really be its full description, with no unstated part, then each piece would evolve independently, oblivious to the other parts. Each particle would essentially reside in its own separate physical universe, as it cannot physically distinguish anything else. This is not physically meaningful.
	
	If we assumed the system is deterministic and reversible only as a whole, then each part could evolve depending on the states of the other parts. This would seem much better, as the state of the pieces is still exhaustive yet they are physically connected to each other. But this does not actually solve the problem of independent oblivious components. First, for this to work, one would have to specify how the states of the parts were defined since their evolution is individually no longer deterministic (the future of each part depends on the state of other parts). But ignoring that issue, the bigger problem is that we can always locally separate the evolution into independent degrees of freedom. For example, the position and momentum of two particles may affect each other during the evolution, but the average and difference in position and momentum may evolve independently.\footnote{Also note that any Hamiltonian is locally isomorphic to a free particle~\cite{Linares}} We can always find such a local decomposition, and the easiest way to see that is in terms of the transported variables: they retain the original value; they clearly evolve independently. And since the pieces are infinitesimal, a local decomposition is all that is needed. While such description may be cumbersome to achieve in practice, conceptually it is still possible. So, even if each particle evolution depends on the state of the others, the system is described by degrees of freedom that evolve independently, oblivious to each other.	
	
	The moral is that the classical idea of being able to assign to all systems a state which represents their full description does not work. The state is only what we can describe and it can't be the full description. As the division between system and environment is subjective, each system must be able to function as both. Therefore it will have a part whose evolution depends only on the system itself, the state, and a part whose evolution depends on other systems, the unstated part. While they may not be the same in all circumstances, they must exist. The state is what gives the identity to the system, the part we can study and describe. The unstated part is what allows continuous interaction between the system and the environment; it's what allows us to study and describe the state. As we'll see later, it is precisely this problem that quantum mechanics conceptually addresses better than classical mechanics.
	
	One final problem is with time itself: an infinitesimally reducible system undergoing deterministic and reversible evolution cannot tell time. If it did, we would be able to find a quantity that changes through time but is invariant under state variable changes. The problem is that deterministic and reversible time evolution is equivalent to a state variable change. In fact, at a fixed time, we can choose to describe the system with the transported state variables of any past or future times. So, all quantities that are invariant under state variable changes are also invariant under time evolution. The somewhat ironic result is that while time is essential for the very definition of deterministic and reversible evolution, time itself has to be defined by some other type of process.\footnote{For example, if the phase-space volume occupied by a distribution increased, it would give us a coordinate-invariant quantity that changes in time. This, however, is not possible under Hamiltonian evolution.}
	
	In light of what we discussed, we cannot take the classical assumption to be fundamental, in the sense that we cannot take it to strictly apply to all of the universe. While ultimately flawed, the  assumption can be considered valid for a great number of macroscopic systems, and is also useful as a default assumption of sorts. Understanding its shortcomings will help us later to see how the quantum assumption solves, at least partially, some of these issues.
	
	As a final note: we caution against automatically thinking that the classical assumption is valid for all macroscopic systems. It is conceptually possible to have a macroscopic system where a clear independent state cannot be assigned to each part, in which case the assumption would not hold.\footnote{For example, this is the case in Bose-Einstein condensates~\cite{BEC}.}
\end{rationale}

\subsection{Hamiltonian mechanics}

We are finally ready to write the equations of motion. As per \ref{prop:homeomorphism} our evolution is at least a self-homeomorphism $f:\mathcal{S} \leftrightarrow \mathcal{S}$. But this is not enough.

The evolution must map the distribution point-wise so that the value associated at the initial state is the same as the value at the final state. All the material that starts in $\mathcal{s}$ has to end up in $\hat{\mathcal{s}}$. In math terms $\rho_{\hat{\mathcal{c}}} (\hat{\mathcal{s}}) = \rho_\mathcal{c}(\mathcal{s})$.

In the same way we expect the total amount of material to be conserved. If $U \in \mathcal{S}$ is a set of initial particle states and $\Lambda_U(\mathcal{c})$ is the amount of material associated with that set, then we expect it to be equal to the amount of material $\Lambda_{\hat{U}}(\hat{\mathcal{c}})$ associated with the set of final states $\hat{U} \in \mathcal{S}$.

But probably the easiest way to look at it is that initial and final sets of states have to possess the same count of states and possibilities as defined by the symplectic form $\omega$. Therefore the area within a degree of freedom, which represents the number of possibilities in said d.o.f., needs to be mapped to an equal area within the transported d.o.f. (i.e. the d.o.f. defined by the transported state variables). Independent d.o.f. must remain independent, and therefore transported orthogonal d.o.f. remain orthogonal. This means that the product of possibilities of independent d.o.f. is also conserved. These statements are the physical justification of Gromov's non-squeezing theorem~\cite{Gromov,deGosson,Stewart} and Liouville's theorem. And they give intuitive insight on the geometry of Hamiltonian systems.

Mathematically, under the classical Hamiltonian assumption,  deterministic and reversible evolution is a self-symplectomorphism (or self-isometry or canonical transformation depending on your math training). That is, it does not just preserve the topology but also the symplectic form $\omega$.

\begin{prop}\label{prop:symplectomorphism}
	A deterministic and reversible evolution map for the particles of a classical material is a self-symplectomorphism. That is: $\mathcal{T}_{\Delta t}: T^*\mathcal{Q} \rightarrow T^*\mathcal{Q}$ and $\mathcal{T}_{\Delta t}^*\omega = \omega$ where $\mathcal{T}_{\Delta t}^*$ is the pullback of $\mathcal{T}_{\Delta t}$.
\end{prop}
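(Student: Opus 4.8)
The plan is to combine the two structural facts already in hand: that $\mathcal{T}_{\Delta t}$ is a self-homeomorphism of the state space (Proposition \ref{prop:homeomorphism}) and that this state space is the symplectic manifold $(T^*\mathcal{Q}, \omega)$ (Proposition \ref{prop:symplectic_manifold}). Since $\mathcal{S} \cong T^*\mathcal{Q}$, the first fact immediately yields a map $\mathcal{T}_{\Delta t}: T^*\mathcal{Q} \to T^*\mathcal{Q}$. The real content is the identity $\mathcal{T}_{\Delta t}^*\omega = \omega$, and everything turns on the physical principle that deterministic and reversible evolution cannot alter the count of states and possibilities encoded by $\omega$.

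First I would upgrade the map from a homeomorphism to a diffeomorphism, so that the pullback $\mathcal{T}_{\Delta t}^*\omega$ is even defined. The evolution carries distributions point-wise, $\rho_{\hat{\mathcal{c}}}(\hat{\mathcal{s}}) = \rho_\mathcal{c}(\mathcal{s})$, so evolved distributions are pushforwards $\rho_\mathcal{c} \circ \mathcal{T}_{\Delta t}^{-1}$. Because every admissible distribution is differentiable (Proposition \ref{prop:differentiable_manifold}) and this must hold for arbitrary initial states after evolution, the transition map is forced to preserve the differentiable structure; hence $\mathcal{T}_{\Delta t}$ is a diffeomorphism and $\mathcal{T}_{\Delta t}^*\omega$ is a well-defined smooth two-form.

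The core step is an integral-to-pointwise argument. By Proposition \ref{prop:symplectic_manifold}, for any two-dimensional surface $U \subseteq \mathcal{S}$ the possibility count carried by $U$ is $\int_U \omega$. Deterministic and reversible evolution must preserve this count: within each degree of freedom the number of possibilities is conserved, in direct analogy with the discrete conservation of possibilities per variable (Proposition \ref{prop:discrete_measure}), while independent degrees of freedom stay independent so the orthogonality recorded by $\omega$ is untouched. Hence $\int_U \omega = \int_{\mathcal{T}_{\Delta t}(U)} \omega$ for every such $U$. The change-of-variables formula gives $\int_{\mathcal{T}_{\Delta t}(U)} \omega = \int_U \mathcal{T}_{\Delta t}^*\omega$, so $\int_U (\mathcal{T}_{\Delta t}^*\omega - \omega) = 0$ for all two-surfaces $U$. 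A smooth two-form whose integral vanishes over every two-surface is identically zero, giving $\mathcal{T}_{\Delta t}^*\omega = \omega$.

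The main obstacle is justifying the per-degree-of-freedom conservation rather than merely the conservation of total phase-space volume. Conservation of the total amount of material gives at once $\mu(U) = \mu(\mathcal{T}_{\Delta t}(U))$, i.e.\ preservation of the top form $\omega^n$ (Liouville's theorem); but a volume-preserving diffeomorphism need not be a symplectomorphism. The stronger claim that $\omega$ itself is preserved demands that the count within each two-dimensional degree of freedom, and the mutual orthogonality across degrees of freedom, each be invariant. This is precisely where the independence of degrees of freedom---the feature distinguishing the symplectic structure from a bare volume form---must be invoked, and it is the physically substantive part of the argument rather than a routine computation.
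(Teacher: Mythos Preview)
Your proposal is correct and follows essentially the same route as the paper's justification: invoke Proposition~\ref{prop:homeomorphism} and Proposition~\ref{prop:symplectic_manifold} for the setup, upgrade to a diffeomorphism via the requirement that evolved distributions remain well-defined, and then argue that $\omega$ is preserved because it encodes the possibility count that deterministic and reversible evolution must conserve. The paper is terser---it simply asserts the invariance of $\omega$ from the physical principle---whereas you spell out the integral-to-pointwise argument and explicitly flag the per-degree-of-freedom versus total-volume subtlety, which the paper addresses only in the surrounding discussion rather than in the justification itself.
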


\begin{justification}
	We claim $\mathcal{T}_{\Delta t}$ is a self-homeomorphism on $T^*\mathcal{Q}$. The state space for the particles of a classical material is $T^*\mathcal{Q}$ by \ref{prop:symplectic_manifold}. $\mathcal{T}_{\Delta t}$ is a deterministic and reversible evolution map and by \ref{prop:homeomorphism} is a self-homeomorphism.
	
	We claim $\mathcal{T}_{\Delta t}$ is a symplectomorphism on $(T^*\mathcal{Q}, \omega)$. The distribution on final states must be defined  therefore the Jacobian for $\mathcal{T}_{\Delta t}$ exists and is non-zero: $\mathcal{T}_{\Delta t}$ is a diffeomorphism. A deterministic and reversible process conserves the number of states and possibilities. $\omega$ is the two-form that returns the count of possibilities. $\omega$ is invariant under deterministic and reversible evolution. $\mathcal{T}_{\Delta t}$ is a symplectomorphism by definition.
\end{justification}

If we assume continuous time evolution we have the following:

\begin{prop}\label{prop:hamiltons_equations}
	A continuous deterministic and reversible evolution for the particles of a classical material admits a Hamiltonian $H \in C^2(T^*\mathcal{Q}, \mathbb{R})$ that allows us to write the laws of evolution as
\begin{align*}
d_{t}q^i &= \partial_{p_i} H \\
d_{t}p_i &= - \partial_{q^i} H
\end{align*}
\end{prop}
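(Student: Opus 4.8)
The plan is to upgrade the per-step result of \ref{prop:symplectomorphism}---that each $\mathcal{T}_{\Delta t}$ is a symplectomorphism---into a statement about the infinitesimal generator of \emph{continuous} evolution. Because evolving over $\Delta t_1$ and then over $\Delta t_2$ is the same as evolving over $\Delta t_1 + \Delta t_2$, the family $\{\mathcal{T}_{\Delta t}\}$ is a one-parameter group, i.e.\ a flow on $T^*\mathcal{Q}$. Continuity of $\Delta t \mapsto \mathcal{T}_{\Delta t}(\mathcal{s})$ (by \ref{prop:continuity}, since time and states are both physically distinguishable) together with the differentiable structure from \ref{prop:differentiable_manifold} lets me define the generating vector field $X$ as the derivative of the flow at $\Delta t = 0$; its integral curves are exactly the state trajectories $\lambda(t)$, so $X = d_t q^i\,\partial_{q^i} + d_t p_i\,\partial_{p_i}$ in canonical coordinates.

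The heart of the argument is to convert the symplectomorphism condition $\mathcal{T}_{\Delta t}^*\omega = \omega$ into a condition on $X$. Differentiating this identity at $\Delta t = 0$ gives $\mathcal{L}_X\,\omega = 0$: the generator is a symplectic vector field. By Cartan's formula $\mathcal{L}_X\,\omega = d(\iota_X\omega) + \iota_X(d\omega)$, and since the canonical form is closed ($d\omega = 0$), we obtain $d(\iota_X\omega)=0$, so the one-form $\iota_X\omega$ is closed. By the Poincar\'e lemma it is locally exact, so there is a locally defined function $H$ with $\iota_X\omega = dH$; this is the Hamiltonian. (Globally, existence is obstructed by $H^1_{dR}(T^*\mathcal{Q}) = H^1_{dR}(\mathcal{Q})$, but for writing the equations of motion the local statement suffices.)

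I would then read off the equations in canonical coordinates $\xi^a = \{q^i, p_i\}$. With $\omega = \sum dq^i \wedge dp_i$, a direct contraction gives $\iota_X\omega = \sum(d_t q^i\, dp_i - d_t p_i\, dq^i)$. Matching this against $dH = \partial_{q^i}H\,dq^i + \partial_{p_i}H\,dp_i$ coefficient by coefficient yields precisely $d_t q^i = \partial_{p_i}H$ and $d_t p_i = -\partial_{q^i}H$, which are Hamilton's equations.

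For the regularity claim, the $C^2$ hypothesis on $H$ is exactly what makes $X = X_H$ a $C^1$ vector field, so that by the standard ODE existence-and-uniqueness theorem the flow $\mathcal{T}_{\Delta t}$ exists, is unique, and is a diffeomorphism---consistent with determinism and reversibility. I expect the \textbf{main obstacle} to be the differentiability bookkeeping: justifying that the continuous flow is smooth enough in both $\Delta t$ and the state variables to extract a well-defined generator and to legitimately differentiate $\mathcal{T}_{\Delta t}^*\omega = \omega$. The algebraic coordinate computation and the Poincar\'e lemma step are routine; the care lies in matching the analytic hypotheses (continuity plus the differentiable structure of \ref{prop:differentiable_manifold}) to the smoothness required for a genuine $C^2$ Hamiltonian.
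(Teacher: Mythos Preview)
Your proposal is correct and follows essentially the same approach as the paper: both derive from the symplectomorphism condition that the generating vector field's contraction with $\omega$ is closed, then invoke exactness to obtain $H$, and finally read off Hamilton's equations in canonical coordinates. The only difference is presentational: you use the intrinsic machinery (Lie derivative, Cartan's formula, Poincar\'e lemma), whereas the paper packages the same computation in a coordinate-based auxiliary theorem (\ref{symplectomorphism_generator}) that expands the pullback condition to first order in $dt$ and obtains the curl-free condition $\partial_a S_b - \partial_b S_a = 0$ directly---which is of course exactly $d(\iota_X\omega)=0$ written out in components.
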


\begin{justification}
We claim the vector field $S \in T\mathcal{S}$ for the infinitesimal displacement $S^a = d\xi^a/dt$ admits a potential $H \in C^2(T^*\mathcal{Q}, \mathbb{R})$ such that $S^{a} \omega_{ab} = \partial_{b}H$. The state space $\mathcal{S}$ for the particles of a classical material is a symplectic manifold by \ref{prop:symplectic_manifold}. The map for infinitesimal time evolution $\mathcal{T}_{dt}$ is an infinitesimal self-symplectomorphism by \ref{prop:symplectomorphism}. By \ref{symplectomorphism_generator} the infinitesimal displacement $S$ admits a potential $H \in C^2(T^*\mathcal{Q}, \mathbb{R})$ such that $S^{a} \omega_{ab} = \partial_{b}H$

We claim the state variables evolve according to Hamilton's equations. $S^{a} \omega_{ab} = d_t\xi^a \omega_{ab} = \partial_{b}H$. For $b=\{1,...,n\}$ we have $d_tp_i (-1) = \partial_{q_i} H$. For $b=\{n+1,...,2n\}$ we have $d_tq^i (+1) = \partial_{p_i} H$.
\end{justification}

We recognize the familiar set of Hamilton's equations. They are the set of equations that describe the deterministic and reversible motion of the infinitesimal parts of an infinitesimally reducible homogeneous material. Note that the argument goes the other way as well. A system governed by Hamilton's equations is deterministic and reversible: the equation of motion given by $H$ are differentiable since $H$ is twice differentiable. Therefore the equations are Lipschitz continuous and, by the Picard-Lindel\"{o}f theorem, a unique solution exists~\cite{Grant}.

In other words: the forces that conserve energy (i.e. the value of a suitable Hamiltonian) are exactly the ones that provide deterministic and reversible motion. The challenge in their derivation mainly lies in the necessary use of different branches of mathematics (e.g. topology, measure theory, differential geometry, symplectic geometry). The conceptual meaning, on the other hand, is hopefully straightforward.

It's important to realize that, during the derivation, multiple mathematical features (e.g. invariant densities, cotangent bundle for phase space, symplectomorphism) were justified by the same physical assumption. The math itself gives us no indication that the different features stem from the same source; therefore, the math itself does not give us a conceptually unified picture. This is one of the reasons we believe that starting from the physical description is objectively better if we are to come to a better understanding of our physical theories.

Note that we could have taken different approaches. For example, we could have appealed to information theory and required our invariant distributions to preserve Shannon's information entropy~\cite{Shannon}, as no information should be gained or lost during a deterministic and reversible process. Or we could have appealed to statistical mechanics and required that the determinant of the covariance matrix be conserved, as a deterministic and reversible process should be defined at the same level of precision. With suitable treatment of independent d.o.f. both these approaches recover Hamiltonian mechanics as well. While the full treatment is outside the scope of this work, we want to underline that the definitions used in this work go a long way to building bridges at the core of different mathematical and scientific branches.

\section{Time dependent evolution}
\label{sec:relativity}

We now generalize our discussion to include time dependent dynamics. This is needed when the evolution map is not the same at all instants or when state variables depend on time (e.g. changing to a moving frame).

To do this we will redefine the particle state space to include the temporal degree of freedom. We'll find that the dynamics is described by relativistic Hamiltonian mechanics in the extended phase space. We'll also find that particle states divide into standard and anti-states depending whether time is aligned or anti-aligned with the evolution parameter. Note that no extra assumption is needed, which makes relativistic mechanics simply the correct time dependent description of the deterministic and reversible evolution of an infinitesimally reducible system.

\subsection{Time changes}

The discussion so far has been limited to the case where both the state variables and the evolution map never change throughout the evolution. This is too restrictive as there are very reasonable situations in which this does not hold.

The first obvious case is that the map may be time dependent. This does not affect our definition of determinism and reversibility as we still can tell final state from initial state and vice-versa. But at this point we have no way to specify a time dependent evolution: the Hamiltonian $H: T^*\mathcal{Q} \rightarrow \mathbb{R}$ as we derived it is just a function of the state.

The second case is when we perform a transformation for the time parameter $\hat{t}=\hat{t}(t)$. The equations of motion transform and we have $d_{t}\hat{t} \, d_{\hat{t}}q^i = \partial_{p_i} H$. $q^i: T^*\mathcal{Q} \rightarrow \mathbb{R}$ and $H: T^*\mathcal{Q} \rightarrow \mathbb{R}$ are real functions of the state space and cannot be redefined to include $d_{t}\hat{t}$. Physically the transformation is introducing fictitious forces that are not conservative, so they cannot be expressed by a Hamiltonian. But this means that we have an ill defined mathematical framework as the notion of determinism and reversibility is not defined in a way that is independent of time transformations. We need a framework that is capable of handling the fictitious forces as well.

The third case is when we perform a state variable transformation $\hat{q}=\hat{q}(q,t)$ that is time dependent (e.g. $\hat{q}=q+vt$). Our composite state distribution $\rho(q)$ becomes $\rho(\hat{q},t)$: it is no longer defined at equal time. This means that our measure $\mu$ needs to be modified to define integrals over the time variable.

As we can see, the framework we have is ill suited to handle these cases. Therefore we cannot simply stick time in the Hamiltonian and expect everything to work out. But we can't ignore the problem either as the three cases outlined are common situations to study. So we need to go back to our definitions and amend them properly.

\subsection{Complete state space}

The first thing to do is to amend our definition of state space to include all states at all times. It may seem like we are extending our definition of state and state space but, at a closer look, we are not. Since we defined state as a physical configuration at a particular time, the set of all states (i.e. the state space) is more accurately defined as the set of all configurations at all times. In the previous sections we tacitly simplified the problem by ignoring time, which made it easier to study the time independent case. This approach is common in physics and engineering, which is why it fits the standard names, but this is also the source of the above problems.

We'll therefore call \emph{complete state space} of particle states the set of all configurations at all times and we'll indicate it by $\bar{\mathcal{S}}$. In that space, a state $\mathcal{s} \in \bar{\mathcal{S}}$ will be given by $(q^i, p_i, t)$, making $\bar{\mathcal{S}}$ a $2n+1$ dimensional manifold. A region at equal time will be a hypersurface $\mathcal{S}_{t=t_0} \subset \bar{\mathcal{S}}$ that includes all possible configurations at a particular time. Such hypersurface will need to cut across all evolutions once and only once, therefore not all functions $t : \bar{\mathcal{S}} \rightarrow \mathbb{R}$ are suitable time variables. A time evolution is a line  $\bar{\mathcal{s}} \subset \bar{\mathcal{S}}$ that passes once through each equal time surface. Specifying an evolution map means giving a set of lines such that each state is traversed once and only once.

\begin{prop}\label{prop:complete_particle_state_space}
	The complete state space $\bar{\mathcal{S}}$ for the particles of a classical material is a $2n+1$ dimensional differentiable manifold. The state space $\mathcal{S}_{t=t_0}$ at a particular time is a hypersurface of $\bar{\mathcal{S}}$.
\end{prop}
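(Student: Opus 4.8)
The plan is to build $\bar{\mathcal{S}}$ out of the fixed-time state space of \ref{prop:symplectic_manifold} by adjoining time as one further continuous state variable, and then to recover the manifold structure from the same machinery already used for $\mathcal{S}$. First I would observe that time is a physically distinguishable quantity: two instants can be told apart by any process whose outcome correlates with the moment at which it occurs, so by \ref{prop:topology} the set of instants is a Hausdorff topological space, and being labelled by a real-valued continuous variable it is locally homeomorphic to $\mathbb{R}$. Since a complete state is a configuration \emph{at a particular time}, each element of $\bar{\mathcal{S}}$ is identified by the $2n$ phase-space variables $(q^i,p_i)$ of \ref{prop:symplectic_manifold} together with the single continuous variable $t$. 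These $2n+1$ variables are independent in the sense of \ref{prop:independent_state_variables}: for every instant $t_0$ the full fixed-time slice is available, so every combination of values is realized.

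Next I would apply the manifold argument of \ref{prop:manifold} and \ref{prop:differentiable_manifold} essentially verbatim to the enlarged list of variables. Locally the chart $(q^i,p_i,t) : U \rightarrow \mathbb{R}^{2n+1}$ is a bijection onto an open set and hence a homeomorphism, while invariance of densities forces the overlap maps to be differentiable; therefore $\bar{\mathcal{S}}$ is a differentiable manifold of dimension $2n+1$. Conceptually this is the statement that $\bar{\mathcal{S}}$ fibers over the time axis with fixed-time fiber $T^*\mathcal{Q}$, so that locally $\bar{\mathcal{S}} \cong T^*\mathcal{Q} \times \mathbb{R}$.

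For the second claim I would realize $\mathcal{S}_{t=t_0}$ as the level set $t^{-1}(t_0) \subseteq \bar{\mathcal{S}}$ of the time function. A suitable time variable, as the surrounding text requires, is one that cuts each evolution exactly once; formally this means $t$ is a submersion whose differential never vanishes, so that every $t_0$ is a regular value. By the regular value theorem (a regular level set of a submersion is an embedded submanifold of codimension one), $\mathcal{S}_{t=t_0}$ is a hypersurface of dimension $2n$, which is precisely the fixed-time state space $T^*\mathcal{Q}$ recovered in \ref{prop:symplectic_manifold}.

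The main obstacle I anticipate is not the dimension count, which is essentially definitional, but justifying the differentiable structure once time is allowed to mix with the state variables. The motivating examples---a moving-frame change $\hat q = q + vt$, or a reparametrization $\hat t = \hat t(t)$---involve coordinate changes that are genuinely functions of all $2n+1$ variables, so the density-invariance argument of \ref{prop:differentiable_manifold}, originally run at equal time, must be shown to survive these time-dependent transformations. The cleanest route is to require admissible time variables to be submersions, so that the equal-time foliation is smooth, and to let the differentiability of the full overlap maps---rather than the separate differentiability of the spatial part alone---furnish the $C^1$ atlas on $\bar{\mathcal{S}}$.
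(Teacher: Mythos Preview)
Your proposal is correct and follows essentially the same structure as the paper: local product $T^*\mathcal{Q}\times\mathbb{R}$ for the $2n+1$ dimensional manifold claim, and the regular value theorem applied to the time function for the hypersurface claim. The paper phrases the latter as ``$t$ is monotonic, hence has no critical points, hence $t_0$ is a regular value,'' which is exactly your submersion argument in different words.

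The one substantive difference is how the \emph{differentiable} structure is secured. You propose to rerun the density-invariance argument of \ref{prop:differentiable_manifold} on the enlarged $(q^i,p_i,t)$ chart and correctly flag this as the main obstacle, since that argument was carried out at equal time. The paper avoids this detour entirely: it simply invokes \ref{prop:symplectomorphism}, which already established that time evolution is a diffeomorphism, so derivatives between state variables and the time variable automatically exist. This is cleaner and sidesteps the circularity you anticipated; you might adopt it in place of your last paragraph.
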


\begin{justification}
	We claim $\bar{\mathcal{S}}$ is a $2n+1$ dimensional differentiable manifold. Locally there exists $U\subseteq \bar{\mathcal{S}}$ such that $U \cong T^*\mathcal{Q}\times \mathbb{R}$ as a state $\mathcal{s} \in U$ exists for each configuration and each time instant. Locally $\bar{\mathcal{S}} \supseteq U \cong\mathbb{R}^{2n+1}$. $\bar{\mathcal{S}}$ is a $2n+1$ dimensional manifold. $T^*\mathcal{Q}$ is a differentiable manifold. Time evolution is a diffeomorphism by \ref{prop:symplectomorphism} therefore derivatives between state variables and time variables always exist. $\bar{\mathcal{S}}$ is a differentiable manifold.
	
	We claim $\mathcal{S}_{t=t_0}$ is a hypersurface of $\bar{\mathcal{S}}$. Let $t : \bar{\mathcal{S}} \rightarrow \mathbb{R}$ be a time variable. $\mathcal{S}_{t=t_0} = \{ \mathcal{s} \in \bar{\mathcal{S}} \; | \; t(\mathcal{s}) = t_0 \}$ is a level set. $t$ is monotonic as it is a time variable. $t$ has no critical points. $t_0$ is a regular value. $\mathcal{S}_{t=t_0}$ is an embedded submanifold of co-dimension $1$. $\mathcal{S}_{t=t_0}$ is a hypersurface.
\end{justification}

For the composite state we need to take a different approach. A distribution of material over states at equal time $\rho_\mathcal{c} : \mathcal{S}_{t=t_0} \rightarrow \mathbb{R}$ will depend on the choice of time variable, therefore the complete state space will not be useful as it will not be time invariant. What we are really interested in is the \emph{state evolution space} of a classical material, which we'll indicate by $\bar{\mathcal{C}}$. To each state evolution $\bar{\mathcal{c}} \in \bar{\mathcal{C}}$ will correspond a distribution $\rho_{\bar{\mathcal{c}}} : \bar{\mathcal{S}} \rightarrow \mathbb{R}$ over all particle states at all times. This distribution will be differentiable: it's differentiable over all equal time hypersurfaces and time evolution is differentiable. The distribution, though, is not integrable over the complete state space $\bar{\mathcal{S}}$. Such integration will operate over time as well, giving the amount of material multiplied by time. If time is taken to be infinite the value is not finite. The distribution, instead, must be integrable over all equal time hypersurfaces: no matter what time variable we choose, the amount of material found at a particular time has to be finite.

\begin{prop}\label{prop:state_evolution_space}
	The state evolution space $\bar{\mathcal{C}}$ for a classical material is isomorphic to the space of differentiable functions that are Lebesgue integrable on any equal time hypersurface. That is $\bar{\mathcal{C}} \cong  C^1(\bar{\mathcal{S}})\cap L^1_t(\bar{\mathcal{S}}, \mu)$ where $L^1_t(\bar{\mathcal{S}}, \mu)=\{{\rho: \bar{\mathcal{S}} \rightarrow \mathbb{R}} \; | \; {\int_{\mathcal{S}_{t=t_0}} |\rho| d\mu < \infty} \; \forall \, t, t_0 \}$.
\end{prop}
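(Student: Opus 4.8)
The plan is to reuse the isomorphism machinery built for the static case in \ref{prop:real_vector_space}, \ref{prop:integration} and \ref{prop:differentiable_manifold}, transplanting each step from $\mathcal{S}$ to the complete state space $\bar{\mathcal{S}}$. I would define $\varrho : \bar{\mathcal{C}} \rightarrow C^1(\bar{\mathcal{S}}) \cap L^1_t(\bar{\mathcal{S}}, \mu)$ by $\varrho(\bar{\mathcal{c}}) = \rho_{\bar{\mathcal{c}}}$ and then establish, in turn, that the image lands in the stated space, that $\varrho$ is a bijection, and that it is a homomorphism of real vector spaces.

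For membership in $C^1(\bar{\mathcal{S}})$ I would argue differentiability along two transverse directions. Restricted to any equal-time hypersurface $\mathcal{S}_{t=t_0}$, which is a genuine $2n$-dimensional submanifold by \ref{prop:complete_particle_state_space}, the distribution reduces to a distribution over particle states at fixed time and is differentiable by \ref{prop:differentiable_manifold}. In the direction transverse to the slices, differentiability follows because time evolution is a diffeomorphism by \ref{prop:symplectomorphism}, so the derivatives of $\rho_{\bar{\mathcal{c}}}$ along the evolution lines exist; combining tangential and transverse derivatives gives $\rho_{\bar{\mathcal{c}}} \in C^1(\bar{\mathcal{S}})$.

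For membership in $L^1_t(\bar{\mathcal{S}}, \mu)$ the essential observation is that integrability can only be demanded slice by slice. Integrating over all of $\bar{\mathcal{S}}$ would integrate over time as well and return the amount of material multiplied by the (possibly infinite) elapsed time, whereas on a single slice the physical requirement that a finite amount of material be found at any instant yields $\int_{\mathcal{S}_{t=t_0}} |\rho_{\bar{\mathcal{c}}}| \, d\mu < \infty$ through \ref{prop:integration}. Demanding this for every admissible time variable $t$ and every value $t_0$ is precisely the defining condition of $L^1_t$. Injectivity, surjectivity and the homomorphism property I would then import almost verbatim from \ref{prop:real_vector_space} and \ref{prop:differentiable_manifold}: reducibility forces distinct evolutions to distinct distributions, any admissible function is physically distinguishable and hence realized by some evolution, and composition and rescaling of material map to addition and scalar multiplication.

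The main obstacle I expect is showing that the $L^1_t$ condition is well-posed across different choices of time variable. In contrast to the static case, distinct time functions slice $\bar{\mathcal{S}}$ into genuinely different families of hypersurfaces, and the measure $\mu$ on each slice must be the symplectic volume induced by $\omega$. I would lean on the fact that evolution is a symplectomorphism (\ref{prop:symplectomorphism}), so that the total material is conserved from slice to slice and finiteness on one admissible slicing propagates to all others; the delicate point is to phrase this propagation---essentially Liouville's theorem---without tacitly reintroducing the time-independence that this section is designed to relax.
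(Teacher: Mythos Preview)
Your proposal is correct and follows essentially the same route as the paper: membership in $C^1(\bar{\mathcal{S}})$ is obtained by combining differentiability on each equal-time slice (from \ref{prop:differentiable_manifold}) with differentiability in the transverse direction (time evolution being a diffeomorphism), membership in $L^1_t$ comes from applying the slice-wise integrability of \ref{prop:integration}/\ref{prop:differentiable_manifold}, and bijectivity plus the homomorphism property are imported from \ref{prop:real_vector_space} and \ref{prop:differentiable_manifold}. The paper's justification is terser and does not engage with your last-paragraph worry about compatibility across different time slicings: since $L^1_t$ is \emph{defined} by quantifying over all admissible time variables and values, the paper simply applies the earlier propositions to each slice independently, so no Liouville-type propagation argument is actually needed.
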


\begin{justification}
	We claim $\bar{\mathcal{C}} \cong  G \subseteq C^1(\bar{\mathcal{S}})\cap L^1_t(\bar{\mathcal{S}}, \mu)$. Let $\rho_{\bar{\mathcal{c}}} : \bar{\mathcal{S}} \rightarrow \mathbb{R}$ be the distribution of material associated with an evolution $\bar{\mathcal{c}} \in \bar{\mathcal{C}}$. Let $t : \bar{\mathcal{S}} \rightarrow \mathbb{R}$ be a time variable. 
	The restriction of $\rho_{\bar{\mathcal{c}}}$ over an arbitrary level set $\mathcal{S}_{t=t_0}$ is continuous and integrable by \ref{prop:differentiable_manifold}. $\rho_{\bar{\mathcal{c}}} \in L^1_t(\bar{\mathcal{S}}, \mu)$. Time evolution is a diffeomorphism by \ref{prop:hamiltons_equations}. $\rho_{\bar{\mathcal{c}}}$ is differentiable along the time variable as well as the state variables. $\rho_{\bar{\mathcal{c}}} \in C^1(\bar{\mathcal{S}})$.
	
	We claim $\bar{\mathcal{C}} \cong  C^1(\bar{\mathcal{S}})\cap L^1_t(\bar{\mathcal{S}}, \mu)$. As in \ref{prop:differentiable_manifold}, if two distributions are different there exists an outcome that can tell them apart, therefore they represent two distinct physical evolutions.
\end{justification}

Before proceeding, it's useful to get a better intuition for the geometry of these spaces. Note that the transported state variables are perfect to label the particle state evolution $\bar{\mathcal{s}} \subset \bar{\mathcal{S}}$: as they do not change during the evolution we can take them to label not just the values at a particular time, but the evolutions themselves. The distribution is transported deterministically and reversibly over those lines, so the value of the distribution will be constant along each particle state evolution $\bar{\mathcal{s}}$. To identify the full evolution, then, we just need to specify the distribution only on a region that cuts across all particle evolutions, i.e. a hypersurface at constant time.

\subsection{Relativistic cotangent bundle}

We now want to extend the symplectic form so we can count states and possibilities over the complete state space $\bar{\mathcal{S}}$. The first thing to do is to complete our $n$-dimensional $\mathcal{Q}$, the manifold that defines our units, to include the time variable $t$. This, again, highlights our previous oversight: we didn't include time within the quantities needed to specify a physical configuration at a particular time. We call $\mathcal{M}$ such a space which is a manifold of dimension $n+1$: state variables and the time variable are independent (i.e. any combination of state and time variables is valid).

As we said before, the time variable is not simply another state variable: it doesn't identify extra configurations and it's the variable on which we have defined deterministic and reversible evolution. When changing variables, we still need to make sure that deterministic and reversible motion can be defined on the new time variable. Transformations like $\hat{t}=q, \; \hat{q}=t$ (exchanging time with another state variable) or $\hat{t}=\sqrt{t^2 + q^2}, \; \hat{q}=\tan^{-1}(t/q)$ (polar coordinates between time and state variable) clearly do not guarantee deterministic and reversible motion over the new time variable. To be meaningful, the change of time variable must preserve time ordering therefore the transformation has to be strictly monotonic between the two time variables. On the other hand, the change of state variables must preserve the ability to uniquely identify states at each instant in time. In this sense, time is not just like another state variable.\footnote{In relativity, for example, it is common to use foliations and the $3+1$ formalism to recover the special character of time when performing calculations or to gain better physical insight~\cite{Gourgoulhon}.}

As in the time independent case, we are interested in invariant densities which are defined on co-vectors. The co-vector $k_i dq^i$ is completed with a time component $\bar{\omega} dt$.\footnote{$\bar{\omega}$ is the classical analogue of the wave frequency. We use $\bar{\omega}$ to distinguish from the symplectic form $\omega$.} As each state variable $q^i$ has a conjugate quantity $p_i\equiv \hbar k$, the time variable $t$ will have a conjugate quantity $E\equiv\hbar\bar{\omega}$. We call the combination of $(t, E)$ the temporal degree of freedom. As time is special, so is the temporal degree of freedom, which is treated differently when defining a distribution $\rho_\mathcal{c}$ and therefore by the measure $\mu$ and the symplectic form $\omega$.

Most of all, the temporal degree of freedom is not an independent degree of freedom because $E$ cannot be an independent variable. It cannot add any configurations, as those are defined by $T^*\mathcal{Q}$ only, and cannot add any time instants, as those are defined by $t$ alone. Therefore there must exist a constraint such that, locally, $E=E(t,q^i,p_i)$. The complete state space $\bar{\mathcal{S}}$ is a hypersurface of $T^*\mathcal{M}$.

As the temporal degree of freedom $(t, E)$ is not independent of the other d.o.f.~$(q^i, p_i)$; it is not orthogonal to them in $T^*\mathcal{M}$. States are defined on the plane where $( q, p )$ (maximally) change. This is not the plane of constant $( t, E )$ (they are not orthogonal) where the area given by $dq \wedge dp$ is defined. It is the plane perpendicular to constant $( q, p )$. And the plane of constant $( q, p )$ is where the area given by $dt \wedge dE$ is defined. That is: the plane where we can properly count states and define our symplectic form $\omega$ is perpendicular to $dt \wedge dE$. Intuitively, states are defined independently of time, therefore they are defined on a surface perpendicular to the temporal d.o.f.

We have a right triangle-like relationship between the plane where $\omega$ is defined and its projections on the planes defined by each d.o.f., similar to the multiple d.o.f.:
\begin{align*}
\textrm{multiple d.o.f.} \;\;\; &dq^1 \wedge dp_1 + dq^2 \wedge dp_2 = \omega \\
\textrm{temporal d.o.f.} \;\;\; &dt \wedge dE + \omega = dq \wedge dp
\end{align*}
But in the previous case, the right angle was between the two independent d.o.f. In this case, the right angle is between $\omega$ and the plane of constant $(q, p)$ where $dt \wedge dE$ is defined. We rewrite it as $dq \wedge dp - dt \wedge dE = \omega$. This corresponds to the Minkowski product across d.o.f. and the vector product within. In terms of unified state variables $\xi^a\equiv \{t, q^i, E, p_i\}$ we have:

\begin{align*}
\omega_{ab} =  \left[
\begin{array}{cc}
0 & 1 \\
-1 & 0 \\
\end{array}
\right] \otimes \left[
\begin{array}{cc}
-1 & 0 \\
0 & I_n \\
\end{array}
\right] =
\left[
\begin{array}{cccc}
0 & 0 & -1 & 0 \\
0 & 0 & 0 & I_n \\
1 & 0 & 0 & 0 \\
0 & -I_n & 0& 0 \\
\end{array}
\right] \\
\end{align*}

\begin{figure}
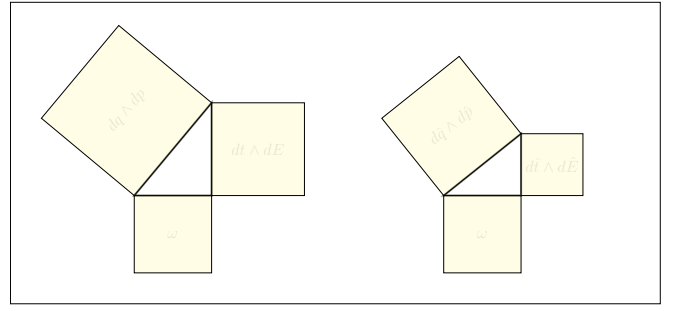

	\includestandalone[width=\columnwidth]{pictures/temporalDof}
	\caption{The symplectic form $\omega$ and its components along temporal and standard degrees of freedom. The two d.o.f. are not independent and therefore not orthogonal. The temporal d.o.f. is orthogonal to the symplectic form as states are defined (and counted) independently of time. Compare with FIG.~\ref{fig:independent_dof}. Note: this is a 2D conceptualization of 4D space.}
	\label{fig:temporal_dof}
\end{figure}

The symplectic form $\omega$ is still the canonical two-form expressed with the mathematically non-canonical, but physically meaningful, variable $E$. This allows us to better understand its physical meaning. $\omega$ allows us to count the possibilities within a degree of freedom, adjusting it on $T^*\mathcal{M}$ to avoid the counting problems introduced by time variable changes.

The symplectic form $\omega$ also allows us to distinguish temporal from standard d.o.f. Given a two-dimensional surface $U \subset T^*\mathcal{M}$, how can we tell if it should be charted by a $(t,E)$ or whether we can define a distribution on it? Consider $\int_U \omega$. The result will be positive if the integration is over a standard degree of freedom, and it will be negative if the integration is over the temporal degree of freedom. If the contribution in every subregion of $U$ is positive, then the whole $U$ is always oriented along a standard degree of freedom.

The invariance of $\omega$ also prevents mixing between standard and temporal d.o.f. The Euclidean signature across standard d.o.f. allows us to move ranges of possibilities from one d.o.f. to the other while conserving the total: the generators of rotation are in fact divergence free. The Minkowski signature, instead, only allows hyperbolic rotation (i.e. Lorentz boosts) across standard and temporal d.o.f., whose generators are curl free. Therefore, as required, we cannot transform a distribution over a standard d.o.f. into a distribution over a temporal one.

We can capture the above discussion by stating that the complete state space $\bar{\mathcal{S}}$ is a hypersurface of the symplectic manifold $(T^*\mathcal{M}, \omega)$, where $\omega \equiv \sum dq^i \wedge dp_i - dt \wedge dE$. And if we set $q^\alpha = (t, q^i)$ and $p_\alpha = (-E, p_i)$, we can write $\omega \equiv \sum dq^\alpha \wedge dp_\alpha$. This is the manifold that allows us to define time-and-state-variable-invariant densities. The symplectic form $\omega$ allows us to measure the count of possibilities as in the time independent case.

\begin{prop}\label{prop:relativistic_symplectic_manifold}
	The complete state space $\bar{\mathcal{S}}$ for the particles of a classical material is a hypersurface of the symplectic manifold formed by a cotangent bundle $T^*\mathcal{M}$ equipped with the canonical two-form $\omega = \sum dq^i \wedge dp_i - dt \wedge dE = \sum dq^\alpha \wedge dp_\alpha$.
\end{prop}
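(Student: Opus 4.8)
The plan is to mirror the three-part structure used in the justification of \ref{prop:symplectic_manifold}: first build the base manifold $T^*\mathcal{M}$, then fix the two-form $\omega$ on it, and finally realize $\bar{\mathcal{S}}$ as a codimension-one submanifold. First I would establish $\mathcal{M}$: taking the $n$-dimensional $\mathcal{Q}$ that defines the spatial units and adjoining the time variable $t$ gives an $(n+1)$-dimensional manifold, since any combination of the $q^i$ with $t$ is an admissible labeling of a configuration-at-a-time. Invariant densities are again built on co-vectors, so the co-vector $k_i\,dq^i$ acquires a time component $\bar{\omega}\,dt$; setting $p_i \equiv \hbar k_i$ and $E \equiv \hbar\bar{\omega}$ the fiber coordinates become $(p_i, E)$ and the total space is $T^*\mathcal{M}$, of dimension $2(n+1) = 2n+2$. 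This matches the reading that the cotangent fiber now records both the spatial wave-numbers and the classical frequency.

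Next I would pin down $\omega$. The mathematical content is that, after the relabeling $q^\alpha = (t, q^i)$ and $p_\alpha = (-E, p_i)$, the expression $\sum dq^i \wedge dp_i - dt \wedge dE$ is literally the canonical two-form $\sum dq^\alpha \wedge dp_\alpha$ on $T^*\mathcal{M}$; the minus sign in front of $dt \wedge dE$ is just the bookkeeping that identifies the physically meaningful energy $E$ with $-p_0$. So the nontrivial part is not the algebra but the physical justification of the signature: I would argue, as in the surrounding discussion, that states are counted on the surface where $(q,p)$ maximally vary, which is perpendicular to the plane of constant $(q,p)$ on which $dt \wedge dE$ lives, yielding the right-triangle relation $dq \wedge dp - dt \wedge dE = \omega$ and hence the Minkowski signature across the temporal degree of freedom. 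Form-invariance of this canonical two-form under admissible variable changes then certifies that the count of possibilities is well defined, exactly as in \ref{prop:symplectic_manifold}.

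Finally I would realize $\bar{\mathcal{S}}$ as a hypersurface. The key observation is that $E$ cannot be an independent state variable: it adds no new configurations, which are already exhausted by $T^*\mathcal{Q}$, and no new time instants, which are labeled by $t$ alone. Hence there is a constraint, locally $E = E(t, q^i, p_i)$, and $\bar{\mathcal{S}}$ is the corresponding level set inside $T^*\mathcal{M}$. I would then check, as in the hypersurface argument of \ref{prop:complete_particle_state_space}, that this level set is regular—the constraint is solvable for $E$, so writing it as $E - f(t,q^i,p_i)=0$ its differential carries $dE$ with coefficient $1$ and never vanishes—so that $\bar{\mathcal{S}}$ is an embedded submanifold of codimension one, of dimension $2n+1$, matching \ref{prop:complete_particle_state_space}.

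The main obstacle I anticipate is the second step: making the Minkowski signature a \emph{derivation} rather than a restatement. The algebra collapses to the canonical form once $p_0 = -E$ is accepted, so the real work is the physical argument that the temporal degree of freedom must be non-orthogonal to the spatial ones and must enter with the opposite sign—that states are defined on the surface orthogonal to the temporal d.o.f.—rather than treated as just another independent conjugate pair. Tied to this is the need to argue that the constraint $E = E(t,q^i,p_i)$ genuinely defines a regular hypersurface rather than something singular, which rests on $t$ remaining a strictly monotonic time variable after the change of coordinates.
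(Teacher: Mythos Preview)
Your proposal is correct and follows essentially the same approach as the paper's justification: build $T^*\mathcal{M}$ by adjoining $t$ to $\mathcal{Q}$, fix $\omega$ as the canonical two-form with the minus sign coming from the temporal d.o.f.\ being non-independent, and then realize $\bar{\mathcal{S}}$ as a codimension-one submanifold. The only minor difference is in the hypersurface step: the paper argues via the projection $(q^i,p_i,t,E)\mapsto(q^i,p_i,t)$ and differentiability of densities to show the inclusion $\bar{\mathcal{S}}\hookrightarrow T^*\mathcal{M}$ is a smooth embedding, whereas you use the constraint $E-f(t,q^i,p_i)=0$ as a regular level set; both are standard and equivalent here.
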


\begin{justification}
	We claim the space of physical objects that allows time-and-state-variable-invariant densities is a symplectic manifold $(T^*\mathcal{M}, \omega)$ where $\omega = \sum dq^i \wedge dp_i - dt \wedge dE$. Let $\mathcal{M}$ be the manifold that defines the unit system including time. Locally $\mathcal{M} \cong \mathcal{Q} \times \mathbb{R}$ as time and state variables are independent quantities. As in \ref{prop:symplectic_manifold}, invariant densities are defined on the symplectic manifold $(T^*\mathcal{M}, \omega)$. Let $\bar{\omega}$ be the co-vector component associated with $t$. Let $E\equiv\hbar \bar{\omega}$. Let $\omega$ be the symplectic form. The components of $\omega$ across different d.o.f. must be zero as they are not invariant under change of units. $\omega=\sum dq^i \wedge dp_i \pm dt \wedge dE$. The $+$ case is excluded as the temporal degree is not an independent d.o.f.: $\omega = \sum dq^i \wedge dp_i - dt \wedge dE$.
	
	We claim $\bar{\mathcal{S}}$ is a hypersurface of $(T^*\mathcal{M}, \omega)$. $\bar{\mathcal{S}} \subseteq (T^*\mathcal{M}, \omega)$ as states are physical objects that allow time-and-state-variable-invariant densities. Consider the map $f : T^*\mathcal{M} \rightarrow \bar{\mathcal{S}} \; | \; (q^i, p_i, t, E) \mapsto (q^i, p_i, t)$. $f$ is an open map. The topology of $\bar{\mathcal{S}}$ is the subspace topology. Let $\rho : T^*\mathcal{M} \rightarrow \mathbb{R}$ be an invariant density. $\rho$ is differentiable by \ref{prop:differentiable_manifold}. Its restriction on $\bar{\mathcal{S}}$ is also an invariant distribution and therefore differentiable. The inclusion map $\bar{\mathcal{S}} \hookrightarrow T^*\mathcal{M}$ is a smooth embedding. $\bar{\mathcal{S}}$ is an embedded submanifold of co-dimension $1$. $\bar{\mathcal{S}}$ is a hypersurface.
\end{justification}

\subsection{Relativistic Hamiltonian mechanics}

As we are using time as a variable to label states, we will use a different quantity as the parameter for the evolution. The trajectory of a particle in the extended phase space will be given by the evolved variables $\xi^a(s)$ in terms of a parameter $s$. As before, deterministic and reversible evolution will preserve $\omega$, as the possibility count on each independent d.o.f. will be conserved. Mathematically, deterministic and reversible evolution is a self-symplectomorphism.

\begin{prop}\label{prop:relativistic_symplectomorphism}
	A time-dependent deterministic and reversible evolution map for the particles of a classical material is a self-symplectomorphism on $T^*\mathcal{M}$. That is: $\mathcal{T}_{\Delta s}: T^*\mathcal{M} \rightarrow T^*\mathcal{M}$ such that $\mathcal{T}_{\Delta s}^*\omega = \omega$ where $\mathcal{T}_{\Delta s}^*$ is the pullback of $\mathcal{T}_{\Delta s}$.
\end{prop}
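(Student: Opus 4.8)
The plan is to mirror the three-step justification of Proposition \ref{prop:symplectomorphism}, now transported to the extended phase space $T^*\mathcal{M}$ with the evolution parameter $s$ in place of time $t$. Since the physical content---a deterministic and reversible process that conserves the count of states and possibilities---is unchanged, the architecture of the argument carries over almost verbatim; the work lies in checking that each step survives the shift from $T^*\mathcal{Q}$ to $T^*\mathcal{M}$ and the introduction of the temporal degree of freedom.

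First I would establish that $\mathcal{T}_{\Delta s}$ is a self-homeomorphism. The evolution parameterized by $s$ remains deterministic and reversible, hence a continuous bijection that preserves physical distinguishability; by Proposition \ref{prop:homeomorphism} it is a self-homeomorphism. The one thing to note is that $s$, not $t$, is now the evolution parameter, but the homeomorphism argument depends only on distinguishability being preserved and is insensitive to which parameter labels the flow. Second, I would argue it is a diffeomorphism: by Proposition \ref{prop:relativistic_symplectic_manifold} invariant densities are defined on the differentiable structure of $T^*\mathcal{M}$, so requiring the distribution to remain defined on final states forces the Jacobian to exist and be non-zero, exactly as in the time-independent case. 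The decisive step is then showing $\mathcal{T}_{\Delta s}^*\omega = \omega$: by Proposition \ref{prop:relativistic_symplectic_manifold} the form $\omega = \sum dq^\alpha \wedge dp_\alpha$ is precisely the two-form that counts possibilities within each degree of freedom on $T^*\mathcal{M}$, including the temporal one, and since deterministic and reversible evolution conserves that count on every d.o.f., $\omega$ is invariant under the flow and $\mathcal{T}_{\Delta s}$ is a symplectomorphism by definition.

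The main obstacle I expect is the temporal degree of freedom. Unlike the independent pairs $(q^i, p_i)$, the pair $(t, E)$ is constrained by $E = E(t, q^i, p_i)$, so that the trajectories actually live on the hypersurface $\bar{\mathcal{S}} \subset T^*\mathcal{M}$ of Proposition \ref{prop:relativistic_symplectic_manifold} rather than on the full bundle. The statement, however, asserts preservation of $\omega$ on all of $T^*\mathcal{M}$. I would therefore need to argue that the evolution extends off the constraint surface in a way that still preserves $\omega$---most naturally by taking the count-preservation argument to apply degree-of-freedom by degree-of-freedom, with the Minkowski sign of the temporal term ensuring that no possibilities leak between standard and temporal d.o.f. (as already noted through the divergence-free versus curl-free character of the allowed generators). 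Verifying that this local count-preservation on each d.o.f. genuinely globalizes to $\mathcal{T}_{\Delta s}^*\omega = \omega$ across the extended bundle, and not merely on $\bar{\mathcal{S}}$, is the part that requires the most care.
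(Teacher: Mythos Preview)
Your proposal is correct and follows essentially the same three-step argument as the paper's justification: self-homeomorphism via Proposition~\ref{prop:homeomorphism}, diffeomorphism via the existence of the Jacobian needed for invariant densities, and symplectomorphism via conservation of the possibility count encoded in $\omega$. The paper handles the hypersurface issue you flag with a single clause---``We can extend the map over $T^*\mathcal{M}$''---so your identification of that as the step requiring the most care is apt, and your treatment is if anything more scrupulous than the original.
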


\begin{justification}
	We claim $\mathcal{T}_{\Delta s}$ is a self-homeomorphism on $T^*\mathcal{M}$. The complete state space for particles of a classical material is $\bar{\mathcal{S}} \subset T^*\mathcal{M}$ by \ref{prop:relativistic_symplectic_manifold}. $\mathcal{T}_{\Delta s}$ is a deterministic and reversible evolution map and by \ref{prop:homeomorphism} is a self-homeomorphism. We can extend the map over $T^*\mathcal{M}$.
	
	We claim $\mathcal{T}_{\Delta s}$ is a symplectomorphism on $(T^*\mathcal{M}, \omega)$. The distribution on final states must be defined. The Jacobian for $\mathcal{T}_{\Delta s}$ exists and is non-zero. $\mathcal{T}_{\Delta s}$ is a diffeomorphism. A deterministic and reversible process conserves the number of states and possibilities. $\omega$ is the two-form that returns the count of possibilities. $\omega$ is invariant under deterministic and reversible evolution. $\mathcal{T}_{\Delta s}$ is a symplectomorphism by definition.
\end{justification}

If we assume continuous time evolution we have the following:

\begin{prop}\label{prop:relativistic_hamiltons_equations}
	A continuous time-dependent deterministic and reversible process for the particles of a classical material admits an invariant Hamiltonian $\mathcal{H}: T^*\mathcal{M} \rightarrow \mathbb{R}$ that allows us to write the laws of evolution as
	\begin{align*}
	d_{s}t &= - \partial_{E} \mathcal{H} \\
	d_{s}E &= \partial_{t} \mathcal{H} \\
	d_{s}q^i &= \partial_{p_i} \mathcal{H} \\
	d_{s}p_i &= - \partial_{q^i} \mathcal{H}
	\end{align*}
\end{prop}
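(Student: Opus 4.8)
The plan is to reuse, almost verbatim, the three-step template that produced the time-independent Hamilton's equations in \ref{prop:hamiltons_equations}, carrying it out on the extended symplectic manifold $(T^*\mathcal{M}, \omega)$ in place of $T^*\mathcal{Q}$. First I would invoke \ref{prop:relativistic_symplectic_manifold} to place the complete state space inside $(T^*\mathcal{M}, \omega)$ with $\omega = \sum dq^i \wedge dp_i - dt \wedge dE$, and \ref{prop:relativistic_symplectomorphism} to assert that the infinitesimal evolution map $\mathcal{T}_{ds}$ is an infinitesimal self-symplectomorphism. The continuity hypothesis then lets me speak of the infinitesimal displacement vector field $S^a = d_s\xi^a$ in the unified variables $\xi^a = \{t, q^i, E, p_i\}$, exactly as $d_t\xi^a$ appeared in the non-relativistic case.

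The second step is to apply Theorem \ref{symplectomorphism_generator} without modification: an infinitesimal symplectomorphism is generated by a potential, so there exists $\mathcal{H} \in C^2(T^*\mathcal{M}, \mathbb{R})$ with $S^a \omega_{ab} = \partial_b \mathcal{H}$. The one thing I would check is that the hypotheses of that theorem are untouched by the Minkowski (rather than Euclidean) signature of the second tensor factor of $\omega$. Since the theorem requires only that $\omega$ be a closed, non-degenerate two-form, and the relativistic $\omega$ of \ref{prop:relativistic_symplectic_manifold} is still canonical and non-degenerate, the generator construction goes through unchanged; the signature enters only through the explicit component matrix.

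The third step is a pure index computation. Substituting the explicit $\omega_{ab}$ from \ref{prop:relativistic_symplectic_manifold} into $S^a \omega_{ab} = \partial_b \mathcal{H}$ and reading off one equation for each value of $b$ gives: $b = t$ yields $d_s E = \partial_t \mathcal{H}$; $b = E$ yields $-d_s t = \partial_E \mathcal{H}$; $b = q^i$ yields $-d_s p_i = \partial_{q^i}\mathcal{H}$; and $b = p_i$ yields $d_s q^i = \partial_{p_i}\mathcal{H}$. Rearranging produces the four stated equations. This reproduces the bookkeeping at the close of \ref{prop:hamiltons_equations}, the only new feature being the extra minus sign carried by the temporal block $\mathrm{diag}(-1, I_n)$, which is precisely what flips the signs of the $t$ and $E$ equations relative to the spatial ones.

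I expect the substance to lie not in the calculation but in existence and in the word \emph{invariant}. Global (as opposed to merely local) existence of $\mathcal{H}$ is the cohomological issue already absorbed into \ref{symplectomorphism_generator}, so I would lean on that theorem rather than reprove it, just as the non-relativistic proposition does. To justify \emph{invariant}, I would observe that $d_s \mathcal{H} = S^a \partial_a \mathcal{H} = S^a S^b \omega_{ba} = 0$ by antisymmetry of $\omega$, so $\mathcal{H}$ is conserved along the evolution and the hypersurface $\bar{\mathcal{S}}$ of \ref{prop:relativistic_symplectic_manifold} is a level set of $\mathcal{H}$. Unlike the time-dependent $H$ of \ref{prop:hamiltons_equations}, this $\mathcal{H}$ is a genuine function on all of $T^*\mathcal{M}$ that is insensitive to the choice of time variable, which is exactly the property the statement names.
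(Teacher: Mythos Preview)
Your proposal is correct and follows essentially the same route as the paper: invoke \ref{prop:relativistic_symplectomorphism} to get that $\mathcal{T}_{ds}$ is an infinitesimal symplectomorphism, apply Theorem \ref{symplectomorphism_generator} to obtain the potential $\mathcal{H}$ with $S^a\omega_{ab}=\partial_b\mathcal{H}$, and then read off the four equations component by component from the explicit $\omega_{ab}$. Your additional remarks---checking that the Minkowski block does not obstruct Theorem \ref{symplectomorphism_generator}, and the observation $d_s\mathcal{H}=S^aS^b\omega_{ba}=0$---go slightly beyond what the paper records at this point (the paper defers the meaning of ``invariant'' and the level-set identification to \ref{prop:form_of_invariant_hamiltonian}), but they are correct and harmless additions.
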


\begin{justification}
We claim the vector field $S \in T(T^*\mathcal{M})$ for the infinitesimal displacement $S^a = d_s\xi^a$ admits a potential $\mathcal{H}$ such that $S^{a} \omega_{ab} = \partial_{b}\mathcal{H}$. The map for infinitesimal evolution $\mathcal{T}_{ds}$ is an infinitesimal self-symplectomorphism by \ref{prop:relativistic_symplectomorphism}. By \ref{symplectomorphism_generator} the infinitesimal displacement $S$ admits a potential $\mathcal{H}$ such that $S^{a} \omega_{ab} = \partial_{b}\mathcal{H}$

We claim the state variables evolve according to the extended Hamilton equations. $S^{a} \omega_{ab} = d_s\xi^a \omega_{ab} = \partial_{b}\mathcal{H}$. For $a = 0$ we have $d_s t \, (-1) = \partial_{E} \mathcal{H}$. For  $a=\{1,...,n\}$ we have $d_s q^i \, (+1) = \partial_{p^i} \mathcal{H}$. For $a=n+1$ we have $d_s E \, (+1) = \partial_{t} \mathcal{H}$. For $a=\{n+2,...,2n + 1\}$ we have $d_s p_i \, (-1) = \partial_{q^i} \mathcal{H}$.
\end{justification}

We recognize the equations as those for Hamiltonian mechanics on the extended phase space~\cite{Synge,Lanczos,Struckmeier}.

The trajectory in time $t(s)$ is of particular importance. Since the evolution is deterministic and reversible, for each value of $s$ we need to have one and only one value of $t$. Therefore $t(s)$ is invertible, strictly monotonic and $d_{s}t$ along a trajectory cannot change sign. This means there are two classes of states: the ones where the parametrization $s$ is aligned with time $t$, which we call standard states, and the ones where the parametrization $s$ is anti-aligned with time $t$, which we call anti-states. Let us call $\lambda : T^*\mathcal{M} \rightarrow \mathbb{R}$ the function $\lambda (\xi^a) \mapsto d_s t |_{\xi^a}$ that returns the change of $t$ along $s$. $\lambda > 0$ for all standard states while $\lambda < 0$ for all anti-states.  Note that since the parametrization is conventional and can be changed to $s'=-s$, what we call standard and anti-states is also conventional. What is physical and not conventional, though, is that standard and anti-states cannot be connected by deterministic and reversible evolution.\footnote{This represents the classical analogue for quantum anti-particle states.}

\begin{prop}\label{prop:antistates}
	Let $\mathcal{T}_{\Delta s}: T^*\mathcal{M} \rightarrow T^*\mathcal{M}$ be the time dependent deterministic and reversible evolution map for the particles of a classical material. The map partitions the extended state space into standard states, those connected by a trajectory where $d_{s}t>0$, and anti-states, those connected by a trajectory where $d_{s}t<0$.
\end{prop}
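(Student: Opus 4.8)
The plan is to reduce everything to the behaviour of the single function $\lambda(\xi^a) \equiv d_{s}t|_{\xi^a}$ that the statement already isolates, and to show that its sign is a trajectory invariant which never vanishes. First I would note that $\lambda$ is well defined and continuous: by determinism the flow line through each point of $T^*\mathcal{M}$ is unique, so $d_{s}t$ at that point is unambiguous, and by \ref{prop:relativistic_hamiltons_equations} we have the explicit expression $\lambda = -\partial_{E}\mathcal{H}$, which is continuous since $\mathcal{H}$ is differentiable. I would then define the two candidate classes as the level sets of the sign of $\lambda$: standard states $\{\xi^a : \lambda(\xi^a) > 0\}$ and anti-states $\{\xi^a : \lambda(\xi^a) < 0\}$, and verify that these are disjoint, cover $T^*\mathcal{M}$, and are each a union of complete trajectories.

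The core step is to show that $t$ is strictly monotonic along every trajectory. Fixing a trajectory $\xi^a(s)$, I would consider $t(s) = t(\xi^a(s))$. Determinism and reversibility make $s \mapsto \xi^a(s)$ injective, and by \ref{prop:complete_particle_state_space} each equal-time hypersurface $\mathcal{S}_{t=t_0}$ is a genuine hypersurface crossed once by each evolution; hence $s \mapsto t(s)$ is a bijection onto its image. Being also continuous (the flow is a diffeomorphism by \ref{prop:relativistic_symplectomorphism}, and $t$ is continuous by \ref{prop:continuity}), it is strictly monotonic. Consequently $\lambda = d_{s}t$ cannot change sign along the trajectory, so the two classes above are each a union of whole trajectories and the evolution map $\mathcal{T}_{\Delta s}$ carries standard states to standard states and anti-states to anti-states, never connecting the two. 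Transitivity, symmetry (reversibility) and reflexivity of ``lying on a common trajectory'' make this a legitimate partition of the orbits, which the sign of $\lambda$ then groups into exactly two superclasses.

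What remains, and what I expect to be the main obstacle, is ruling out $\lambda = 0$, so that the two classes genuinely cover all of $T^*\mathcal{M}$ rather than leaving a degenerate boundary set. Strict monotonicity of $t(s)$ is not by itself enough: a strictly increasing trajectory could in principle have an isolated critical point (as $t=s^3$ does), at which $\lambda$ would vanish. To close this I would invoke transversality: a genuine time variable has nonvanishing differential $dt$ (no critical points, by \ref{prop:complete_particle_state_space}), and for a flow line to meet each leaf of the time foliation exactly once it must be nowhere tangent to those leaves, i.e. $d_{s}t \neq 0$ everywhere. With $\lambda$ continuous and nowhere zero, $T^*\mathcal{M} = \{\lambda>0\} \cup \{\lambda<0\}$ is the asserted partition into standard states and anti-states. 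A fallback, should one prefer not to assume pointwise transversality, is to classify by the sign of $t(s)-t(s_0)$ along a trajectory, which is well defined from strict monotonicity alone and yields the same two classes.
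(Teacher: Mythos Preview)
Your proposal is correct and follows essentially the same approach as the paper: establish strict monotonicity of $t(s)$ from determinism and reversibility, then define the two classes by the sign of $d_{s}t$ and observe that trajectories cannot cross between them. If anything, you are more careful than the paper's justification, which does not explicitly address the $\lambda=0$ case that you handle via transversality.
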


\begin{justification}
	We claim $t(s)$ is strictly monotonic. The motion is deterministic and reversible. At each value of $t$ there must be only one possible state: if there is more than one state the motion is non-deterministic (one state is not enough to identify more than one state), if there is no state the motion is non-reversible (no state cannot identify a state). $t(s)$ is invertible. $t(s)$ is strictly monotonic.
	
	We claim $d_{s}t$ partitions the space. Let $\mathcal{s}_1, \mathcal{s}_2 \in T^*\mathcal{M}$ be two physical states connected by deterministic and reversible evolution. $\mathrm{sign}(d_{s}t(\mathcal{s}_1)) = \mathrm{sign}(d_{s}t(\mathcal{s}_2))$. Let $U \equiv \{\mathcal{s} \in T^*\mathcal{M} \; | \; d_{s}t(\mathcal{s}) > 0 \}$ and $V \equiv \{\mathcal{s} \in T^*\mathcal{M} \; | \; d_{s}t(\mathcal{s}) < 0 \}$. $U \cap V = \varnothing$. Let $\gamma : [s_0, s_1] \rightarrow T^*\mathcal{M}$ the trajectory given by deterministic and reversible evolution, where $[s_0, s_1]$ is the range of the parametrization. Either $\gamma([s_0, s_1]) \subseteq U$ or $\gamma([s_0, s_1]) \subseteq V$.
\end{justification}

With the invariant Hamiltonian we are able to write the equations of motion for any choice of time and state variables. For example: $d_t q^i = d_t s \, d_s q^i = d_s q^i / d_s t = - \partial_{p_i} \mathcal{H} / \partial_{E} \mathcal{H}$. These equations must be the same as the ones given by a standard Hamiltonian in those coordinates. That is: $d_t q^i = \partial_{p_i} H = - \partial_{p_i} \mathcal{H} / \partial_{E} \mathcal{H}$. The partial derivatives of $H$ and $\mathcal{H}$ are related, and therefore the functions themselves are related. Working through the math, we find that the most general relationship is of the form: $\mathcal{H} = \lambda(t,q^i,E,p_i) \, (H(t,q^i,p_i) - E - \mathcal{h}(t))$ where $\lambda = d_s t$, $\mathcal{h}(t)$ is an arbitrary function and, most importantly, $E = H(t,q^i,p_i) + \mathcal{h}(t)$.

The arbitrary function $\mathcal{h}(t)$ has no physical consequence: it is just a constant added to the Hamiltonian. We can either set it to zero or simply redefine the Hamiltonian to include it. Therefore we are left with $E = H(t,q^i,p_i)$. In other words: $E$ is the value of the Hamiltonian, the energy of the system. We have found the constraint that identifies the hypersuface of $T^*\mathcal{M}$ introduced in \ref{prop:relativistic_symplectic_manifold}.

Note that $\mathcal{H}=0$ for all states $\mathcal{s} \in \bar{\mathcal{S}} \subset T^* \mathcal{M}$, since $\mathcal{H}=\lambda(H-E)$ and $E = H$. For all states we also have $\lambda \neq 0$. Therefore $\mathcal{H}=0$ only because $E = H$. We can extend $\mathcal{H}$ on all of $T^* \mathcal{M}$ such that $\mathcal{H} \neq 0$ outside of the complete state space $\bar{\mathcal{S}}$. This way we can directly use the constraint $\mathcal{H}=0$ to identify $\bar{\mathcal{S}}$ as a subspace of $T^* \mathcal{M}$.

This way $\mathcal{H}$ gives both the constraint $\mathcal{H}=0$ to identify states and the map $d_s \xi^a=\omega^{ab} \partial_b \mathcal{H}$ to identify their evolution.

\begin{prop}\label{prop:form_of_invariant_hamiltonian}
	The form of the invariant Hamiltonian is given by $\mathcal{H} = \lambda(t,q^i,E,p_i) \,  (H(t,q^i,p_i) - E)$ where $H(t,q^i,p_i)$ is the standard Hamiltonian at each instant $t$, the conjugate time variable $E = H(t,q^i,p_i)$ represents the value of the Hamiltonian, and  $\lambda : \bar{\mathcal{S}} \rightarrow \mathbb{R}$ where $\lambda(\xi^a) = d_s t$. $\bar{\mathcal{S}}$ is the level set for $\mathcal{H} = 0$.
\end{prop}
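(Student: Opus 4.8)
The strategy is to pin down $\mathcal{H}$ by requiring that the extended flow of Proposition~\ref{prop:relativistic_hamiltons_equations}, once its curves are reparametrized by $t$ instead of $s$, collapse onto the ordinary Hamilton equations of Proposition~\ref{prop:hamiltons_equations}; everything need only be imposed on the physical surface where the flow lives. First I would reparametrize. Along any trajectory $d_{t}q^i = d_{s}q^i/d_{s}t = \partial_{p_i}\mathcal{H}/(-\partial_{E}\mathcal{H})$ and $d_{t}p_i = d_{s}p_i/d_{s}t = \partial_{q^i}\mathcal{H}/\partial_{E}\mathcal{H}$. Writing $\lambda \equiv d_{s}t = -\partial_{E}\mathcal{H}$ and equating these with $d_{t}q^i = \partial_{p_i}H$ and $d_{t}p_i = -\partial_{q^i}H$ gives the relations $\partial_{p_i}\mathcal{H} = \lambda\,\partial_{p_i}H$ and $\partial_{q^i}\mathcal{H} = \lambda\,\partial_{q^i}H$. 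Together with $\partial_{E}\mathcal{H} = -\lambda$, these state that the gradient of $\mathcal{H}$ in the $(q^i,p_i,E)$ directions equals $\lambda$ times the gradient of $H(t,q^i,p_i) - E$.

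I would then integrate this first-order system at fixed $t$. Since the full $(q^i,p_i,E)$-gradient of $\mathcal{H}$ is proportional to that of $H - E$, which is nonvanishing ($\partial_{E}(H-E) = -1$), the two functions are functionally dependent, so the general solution is $\mathcal{H} = F(H - E,\,t)$ for an arbitrary smooth $F$, with $\lambda = \partial_{u}F$ (derivative in the first slot). The level set $\mathcal{H} = 0$ then reads $H - E = \mathcal{h}(t)$, where $\mathcal{h}(t)$ is the zero locus of $F$ at each time; this is the constraint that singles out $\bar{\mathcal{S}}$ inside $T^*\mathcal{M}$. Linearizing $F$ about this surface gives exactly $\mathcal{H} = \lambda\,(H - E - \mathcal{h}(t))$, and since only the behavior near $\mathcal{H}=0$ is physical, the higher-order part of $F$ is just the conventional extension off $\bar{\mathcal{S}}$ and carries no content.

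Two loose ends remain. For the energy equation I would verify that $d_{s}E = \partial_{t}\mathcal{H}$ reparametrizes to $d_{t}E = \partial_{t}H - \mathcal{h}'(t)$, which is precisely $d_{t}(H - \mathcal{h})$ evaluated on-shell using $dH/dt = \partial_{t}H$; differentiating $F(\mathcal{h}(t),t)=0$ supplies the identity $\partial_{t}F = -\lambda\,\mathcal{h}'$ that makes this automatic, so no further constraint on $\mathcal{H}$ appears. Finally, because adding a function of time to $H$ leaves Hamilton's equations untouched, $\mathcal{h}(t)$ is a pure shift of the energy origin: absorbing it into $H$ leaves $\mathcal{H} = \lambda\,(H - E)$ with the constraint $E = H$, which identifies $E$ with the value of the Hamiltonian, and $\lambda = -\partial_{E}\mathcal{H} = d_{s}t$ as required; the normalization $\bar{\mathcal{S}} = \{\mathcal{H}=0\}$ is consistent because $\mathcal{H}$ is conserved along its own flow ($d_{s}\mathcal{H} = \omega^{ab}\partial_{a}\mathcal{H}\,\partial_{b}\mathcal{H} = 0$), hence constant on the flow-invariant physical surface. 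I expect the main obstacle to be the integration step: establishing rigorously that the matching relations force the functional dependence $\mathcal{H} = F(H-E,t)$ (equivalently, that $\mathcal{H}$ and $H-E$ share the same regular level sets, so that $\mathcal{H} = \lambda\,(H-E-\mathcal{h})$), while keeping careful track that the relations are imposed only on $\bar{\mathcal{S}}$, with $\lambda$ and the off-surface extension absorbing all remaining freedom.
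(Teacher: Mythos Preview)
Your proposal is correct and reaches the same conclusion as the paper, but the integration step is handled differently. The paper does not use the functional-dependence argument: instead it writes the tautological decomposition $\mathcal{H}=\lambda\,(H-E+\mathcal{h})$ with $\mathcal{h}(t,q^i,E,p_i)\equiv \mathcal{H}/\lambda-H+E$ (valid since $\lambda\neq 0$ on physical states), plugs into the matching relations, and then specializes to the parametrization $t=s$ so that $\lambda$ is constant and all $\partial\lambda$ terms drop; this forces $\partial_{q^i}\mathcal{h}=\partial_{p_i}\mathcal{h}=\partial_E\mathcal{h}=0$ one at a time, leaving $\mathcal{h}=\mathcal{h}(t)$, which is absorbed into $H$. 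The $E=H$ step is then obtained by requiring the $\lambda$-equation $\lambda=-\partial_E\mathcal{H}=-\partial_E\lambda\,(H-E)+\lambda$ to hold for \emph{all} parametrizations, hence all $\lambda$, which forces $H-E=0$.

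Your route replaces this special-case computation by the geometric observation that $\nabla_{(q,p,E)}\mathcal{H}\parallel\nabla_{(q,p,E)}(H-E)$ with the latter nowhere vanishing, so $\mathcal{H}=F(H-E,t)$; the zero locus picks out $H-E=\mathcal{h}(t)$ and linearization gives the stated form. This is cleaner and makes the freedom in the off-surface extension manifest, whereas the paper's argument is more elementary and sidesteps the regularity needed for the level-set lemma by exploiting the availability of the $t=s$ gauge. The concern you flag---that the matching relations are strictly imposed only on $\bar{\mathcal{S}}$---applies equally to the paper's version, which implicitly treats $\mathcal{H}$ as extended smoothly off the constraint surface.
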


\begin{justification}
	We claim the form of the extended Hamiltonian is $\mathcal{H} = \lambda(t,E,q^i,p_i) \, (H(t,q^i,p_i) - E)$. Let $\mathcal{H}$ be the invariant Hamiltonian and $H$ be the Hamiltonian for a particular choice of time and state variables. We have $d_s q^i = \partial_{p_i} \mathcal{H} = d_s t \, d_t q^i = - \lambda \, \partial_{p_i} H$ where $\lambda : T^*\mathcal{M} \rightarrow \mathbb{R}$ such that $\lambda \equiv d_s t$. $d_s p_i = - \partial_{q^i} \mathcal{H} = d_s t \, d_t p_i = \lambda \, \partial_{q^i} H$. As $\lambda \neq 0$ for all physical states, we can set $\mathcal{H} = \lambda(t,q^i,E,p_i) \, (H(t,q^i,p_i) - E + \mathcal{h}(t,q^i,E,p_i))$ without loss of generality. We have $\partial_{q^i} \lambda \, (H - E + \mathcal{h}) + \lambda \, (\partial_{q^i} H + \partial_{q^i} \mathcal{h}) = \lambda \, \partial_{q^i} H$. This holds for all choices of $t$. In particular for $t=s$, $\lambda \, (\partial_{q^i} H + \partial_{q^i} \mathcal{h}) = \lambda \, \partial_{q^i} H$. $\partial_{q^i} \mathcal{h} = 0$. $\mathcal{h}$ is not a function of $q^i$. Repeat the same procedure for $p_i$. $\mathcal{h}$ is not a function of $p_i$. $\lambda = d_s t = - \partial_E \mathcal{H} = - \partial_{E} \lambda (H - E + \mathcal{h}) - \lambda \, (-1 + \partial_{E} \mathcal{h})$. In particular for $t=s$, $\lambda = - \lambda (-1 + \partial_{E} \mathcal{h})$. $\partial_{E} \mathcal{h} = 0$. $\mathcal{h}$ is not a function of $E$. $\mathcal{h}$ is only a function of $t$. We can redefine $H = H + \mathcal{h}$ as the result is still a function of $q^i, p_i , t$ with the same equations of motion. $\mathcal{H}=\lambda (H - E)$.
	
	We claim $E = H(t,q^i,p_i)$ for physical states. $\lambda = d_s t = - \partial_E \mathcal{H} =\journal{\break} - \partial_{E} \lambda (H - E) - \lambda (-1)$ regardless of the choice of t and therefore for all $\lambda$. $H - E = 0$. 
	
	We claim $\bar{\mathcal{S}}$ is the level set for $\mathcal{H} = 0$. $\mathcal{H} = \lambda (H - E) = 0$ over $\bar{\mathcal{S}}$. We can extend $\mathcal{H}$ over $T^* \mathcal{M}$ such that $\mathcal{H} \neq 0$ over $T^*\mathcal{M} \,\backslash\, \bar{\mathcal{S}}$. $\bar{\mathcal{S}}$ is the level set for $\mathcal{H} = 0$.
\end{justification}

As an example, the Hamiltonian and invariant Hamiltonian for a relativistic free particle are:
\begin{equation}\label{free_hamiltonians}
\begin{aligned}
H &= c \sqrt{m^2 c^2 + p^i p_i} \\
\mathcal{H} &= \frac{1}{2m} ( p^i p_i - (E/c)^2 + m^2c^2) \\
\ifjournal
 &= \frac{1}{2mc^2} (c \sqrt{m^2 c^2 + p^i p_i} + E) (c \sqrt{m^2 c^2 + p^i p_i} - E) \\
\else
 &= \frac{1}{2mc^2} (c \sqrt{m^2 c^2 + p^i p_i} + E)\\ &(c \sqrt{m^2 c^2 + p^i p_i} - E) \\
\fi
\lambda &= \frac{1}{2mc^2} (c \sqrt{m^2 c^2 + p^i p_i} + E)
\end{aligned}
\end{equation}
Therefore we can see that the invariant Hamiltonian has the form $\mathcal{H}=\lambda (H - E)$. In the next section we'll derive the above expression and look more closely at the case of particles under potential forces.

Now that we have characterized the motion of the particles, we should also characterize the evolution of composite states.

Suppose we have a distribution of material $\rho_{\bar{\mathcal{c}}}(s)$ associated with an evolution $\bar{\mathcal{c}}$. Its support, the region where the density is non-zero, is within $\bar{\mathcal{S}}$. Therefore  $\mathcal{H} \rho_{\bar{\mathcal{c}}}(s) = 0$ since $\mathcal{H} = 0$ in $\bar{\mathcal{S}}$ where $\rho_{\bar{\mathcal{c}}}(s) \neq 0$. As the density is transported over each trajectory, its value does not change over $s$. Therefore we have $d_s \rho_{\bar{\mathcal{c}}}= d_s \xi^a \partial_a \rho_{\bar{\mathcal{c}}} = \omega^{ab} \partial_b \mathcal{H} \partial_a \rho_{\bar{\mathcal{c}}} = \{ \rho_{\bar{\mathcal{c}}}, \mathcal{H} \} = 0$ where $\{f , g\} = \partial_a f \omega^{ab} \partial_b g$ is the Poisson bracket.

\begin{prop}\label{prop:relativistic_state_evolution}
	Let $\rho_{\bar{\mathcal{c}}}$ be the distribution of material associated with the evolution $\bar{\mathcal{c}}$. $\mathcal{H} \rho_{\bar{\mathcal{c}}} = 0$ and $d_s \rho_{\bar{\mathcal{c}}} = \{ \rho_{\bar{\mathcal{c}}}, \mathcal{H} \} = 0$.
\end{prop}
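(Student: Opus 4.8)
The plan is to split the statement into its two assertions and treat each using facts already established about the complete state space $\bar{\mathcal{S}}$, the invariant Hamiltonian $\mathcal{H}$, and the transport of the density along trajectories. Both reduce to short arguments once those propositions are invoked, so the work is mainly in assembling them in the right order and checking one consistency point.

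For the first assertion, $\mathcal{H}\rho_{\bar{\mathcal{c}}} = 0$, I would argue pointwise on $T^*\mathcal{M}$. By \ref{prop:state_evolution_space} the density $\rho_{\bar{\mathcal{c}}}$ is supported within $\bar{\mathcal{S}}$, and by \ref{prop:form_of_invariant_hamiltonian} the extended Hamiltonian is arranged so that $\bar{\mathcal{S}}$ is exactly the level set $\mathcal{H}=0$, with $\mathcal{H}\neq 0$ on $T^*\mathcal{M}\,\backslash\,\bar{\mathcal{S}}$. Hence at every point either $\mathcal{H}=0$ (inside $\bar{\mathcal{S}}$) or $\rho_{\bar{\mathcal{c}}}=0$ (outside $\bar{\mathcal{S}}$), so the product vanishes identically.

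For the second assertion I would first note that the density is transported: as stated following \ref{prop:state_evolution_space}, deterministic and reversible evolution carries the value of the distribution unchanged along each particle trajectory $\bar{\mathcal{s}}$, so the total derivative along the evolution parameter is $d_s \rho_{\bar{\mathcal{c}}} = 0$. To recast this as a Poisson bracket I would apply the chain rule $d_s \rho_{\bar{\mathcal{c}}} = (d_s \xi^a)\,\partial_a \rho_{\bar{\mathcal{c}}}$ and substitute the flow equation $d_s \xi^a = \omega^{ab}\partial_b \mathcal{H}$ coming from the generator result underlying \ref{prop:relativistic_hamiltons_equations}. Using the definition $\{f,g\} = \partial_a f\,\omega^{ab}\,\partial_b g$ then gives $d_s \rho_{\bar{\mathcal{c}}} = \partial_a \rho_{\bar{\mathcal{c}}}\,\omega^{ab}\,\partial_b \mathcal{H} = \{\rho_{\bar{\mathcal{c}}}, \mathcal{H}\}$, which is zero by the transport argument.

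The calculation is routine; the point worth checking is the consistency between $\rho_{\bar{\mathcal{c}}}$ living on the codimension-one hypersurface $\bar{\mathcal{S}}$ and the bracket being computed with the full symplectic structure of $T^*\mathcal{M}$. This is not a genuine obstacle, because the Hamiltonian flow preserves $\bar{\mathcal{S}}$: one has $d_s \mathcal{H} = \{\mathcal{H}, \mathcal{H}\} = \partial_a \mathcal{H}\,\omega^{ab}\,\partial_b \mathcal{H} = 0$ by antisymmetry of $\omega^{ab}$, so $\mathcal{H}$ is conserved and trajectories stay on the level set, which makes $d_s\xi^a$ tangent to $\bar{\mathcal{S}}$ and the directional derivative $d_s\rho_{\bar{\mathcal{c}}}$ well-defined along the support. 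The one place where real care is needed is justifying the transport property itself, which rests on the point-wise mapping of the distribution under deterministic and reversible evolution established in the Hamiltonian section; everything else is an unwinding of definitions.
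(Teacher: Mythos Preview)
Your proposal is correct and follows essentially the same route as the paper: the pointwise support argument for $\mathcal{H}\rho_{\bar{\mathcal{c}}}=0$ and the chain-rule rewriting of $d_s\rho_{\bar{\mathcal{c}}}$ as $\{\rho_{\bar{\mathcal{c}}},\mathcal{H}\}$ followed by the transport observation are exactly what the paper does. Your extra remark that $\{\mathcal{H},\mathcal{H}\}=0$ keeps the flow tangent to $\bar{\mathcal{S}}$ is a welcome consistency check that the paper leaves implicit.
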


\begin{justification}
	We claim $\mathcal{H} \rho_{\bar{\mathcal{c}}} = 0 \; \forall \bar{\mathcal{c}} \in \bar{\mathcal{C}}$. Let $\rho_{\bar{\mathcal{c}}} : T^*\mathcal{M} \rightarrow \mathbb{R}$ be the distribution associated with an evolution $\bar{\mathcal{c}}$. $supp(\rho_{\bar{\mathcal{c}}}) \subseteq \bar{\mathcal{S}}$ as the distribution is defined on the complete state space of the particles. $supp(\mathcal{H}) = T^*\mathcal{M} \,\backslash\, \bar{\mathcal{S}}$ by \ref{prop:form_of_invariant_hamiltonian}. $supp(\mathcal{H} \rho_\mathcal{c}) = supp(\mathcal{H}) \cap supp(\rho_\mathcal{c}) = \varnothing$.
	
	We claim $d_s \rho_{\bar{\mathcal{c}}} = \{ \rho_{\bar{\mathcal{c}}}, \mathcal{H} \} = 0$. $d_s \rho_{\bar{\mathcal{c}}}= \partial_a \rho_{\bar{\mathcal{c}}} d_s \xi^a = \partial_a \rho_{\bar{\mathcal{c}}} \omega^{ab} \partial_b \mathcal{H} = \{ \rho_{\bar{\mathcal{c}}}, \mathcal{H} \}$. Let $\gamma : \mathbb{R} \rightarrow T^*\mathcal{M}$ be a parametrization for a particle evolution $\bar{\mathcal{s}}$. Then $\rho_{\bar{\mathcal{c}}}(\lambda(s)) = \rho_{\bar{\mathcal{c}}}(\lambda(s + \Delta s))$ as the material does not change over the particle evolution. $d_s \rho_{\bar{\mathcal{c}}} = 0$.
\end{justification}

As we have derived relativistic motion in an unusual way, a few comments are in order.

As mentioned at the beginning of the section, no new assumption was required. Relativistic Hamiltonian mechanics is simply the correct way of handling time dependent motion. As such, there was no mention of the speed of light $c$, $\mathcal{M}$ being a pseudo-Riemannian manifold or its metric $g$. These will be introduced in the next section with the kinematic equivalence assumption. In light of this, it is probably better to consider relativistic mechanics as a feature of the geometry of the complete state space more than the geometry of space-time. Also note that this feature arises from the use of the standard topology on $\mathbb{R}^n$ and the need for invariant densities. It would not have arisen with the use of a discrete topology.

We have also seen that anti-particle states, generally associated with quantum field theory, are actually a feature of relativistic Hamiltonian mechanics. This is usually missed because in classical mechanics only the standard Hamiltonian is typically used. Note how the invariant Hamiltonian in \ref{free_hamiltonians} is quadratic in $E$, which can have both positive and negative solutions given the constraint $\mathcal{H}=0$.

Another important feature is that this formulation of relativistic Hamiltonian mechanics is formally very close to the quantum case. Note how $\mathcal{H} \rho = 0$ and $\mathcal{H} \ket{\psi} = 0$ are formally equivalent. If we apply to the invariant Hamiltonian in \ref{free_hamiltonians} the usual formal classical/quantum substitution, we have the Klein-Gordon equation using space-like convention rescaled by a factor of $2m$. As we'll see later, to this parallel in the mathematical formalism corresponds a parallel in the physical description.

As a final thought, we note how the physics maps less elegantly to the math in the time dependent case. The function space for $\bar{\mathcal{C}}$ is not simply the Lebesgue integrable differentiable functions, but those that are integrable on particular hypersurfaces. The complete state space $\bar{\mathcal{S}}$ is a hypersurface of $T^*\mathcal{M}$. The invariant Hamiltonian is needed to identify that subspace. This may be a hint that a better formulation is lurking, one where we consider the particle evolutions $\bar{\mathcal{s}}$ as the primary objects and label those (instead of the states). The distribution of material would simply be a distribution over the particle evolutions, a Lebesgue integrable differentiable function. Yet, the need to be able to describe the evolution in terms of physically meaningful state variables will most likely reimpose the extra structure.

\section{Kinematics and minimal materials}
\label{sec:Lagrangian}

We will now focus our attention on the simplest possible classical material: one that is fully described by its motion in space, i.e.~the internal dynamics can be ignored. What we'll find is that the dynamics of its infinitesimal parts follows classical Lagrangian mechanics. Furthermore, the Hamiltonian of the system is constrained to the one of a free particle under scalar and vector potential forces.

\subsection{Kinematic equivalence}

The link between kinematics, the study of motion through quantities like velocity and acceleration, and dynamics, the study of state evolution and its causes, is crucial in physics. In Newtonian physics, $\vec{F}=m\vec{a}$ provides such a link. In Hamiltonian (and Lagrangian) mechanics, the link is implicitly given by the choice of the Hamiltonian (or Lagrangian).

Our aim is to make that link more conceptually crisp and explicit with the following assumption:

\begin{assump}[Kinematic equivalence]\label{ass:kinematic_equivalence}
	For the system under investigation, studying its kinematics  (i.e. trajectories in space) is equivalent to studying its dynamics  (i.e. states and their evolution).
\end{assump}

\begin{rationale}
	The idea is that the system internal dynamics has no effect on the overall motion and therefore can be disregarded. Consider the motion of our cannonball under gravitational and inertial forces: its temperature or precise chemical composition does not affect the overall trajectory. We can assume kinematic equivalence. Suppose we now add magnetic forces to our consideration. The magnetization of the cannonball may now affect the motion and therefore we cannot always reconstruct the full state by simply looking at the trajectory. We cannot assume kinematic equivalence.
	
	Kinematic equivalence can't therefore be assumed for all systems. Yet, the assumption is fundamental because the particles of any material, during their evolution, will \emph{at least} describe a trajectory in space and what we learn about the link between kinematic and dynamical quantities applies to all materials, even the ones where the kinematic assumption does not hold.
	
	The first natural question to ask is: why does state evolution of a particle of a classical material always describe a trajectory in space? It's because spatial extent is the one way in which all materials can be decomposed. As we divide materials into smaller pieces, physical location is one state variable we can assign to the particles of all materials. If we assume kinematic equivalence, the kinematic variables will be enough to identify a particle state. In such a case, we will call the material \emph{minimal} to highlight the idea that the state is described by the smallest possible number of d.o.f.
	
	When the system is not minimal, the state of the particles will be described by some additional independent state variables. The overall dynamics will be the spatial d.o.f. coupled with the internal d.o.f. Therefore by studying a minimal material, we are also studying a component of the dynamics that is present in all materials. In other words: any material is an extension of a minimal material.
	
	It should also be noted that, once we reduce particles to be infinitesimal at a point, the rest of the description must be a local object invariant under coordinate transformations. This means that the remaining internal state must be identified by scalar, vector or tensor quantities. This may be used in the future to justify the extension of this work to field theories. 
\end{rationale}

\subsection{Space-time manifold and Lagrangian mechanics}

As we need to describe trajectories, we introduce $\mathcal{M}$ as the $n+1$ dimensional manifold that defines physical space and time. A point in $\mathcal{M}$ will be identified by $x^\alpha = (t, x^i)$, its time and space coordinates. A trajectory will be given by a set of functions $x^\alpha(s)$ where $s$ is an arbitrary parametrization of the curve. What we need to understand is the link between the complete state space of the particles of a classical material and the space of all the possible trajectories under the kinematic assumption.

We first note that the system of units for the trajectories is fully defined by $\mathcal{M}$. Any trajectory can be specified once a set of coordinates $x^\alpha$ is chosen. Under the kinematic assumption, these units must also be sufficient to describe the dynamics. Therefore $\mathcal{M}$ has conceptually two roles: it is the space where the motion unfolds and it is the space on which we define the system of units for our states. This already tells us that the state space for the particles of the material is a hypersurface of $T^*\mathcal{M}$.

Under kinematic equivalence, given a trajectory we must identify a state and vice-versa. There must be a homeomorphism between the complete state space and the space of possible trajectories. We already know what the complete state space is. We just need to find a way to label the trajectories so that we can express the correspondence.

Because of the double role of $\mathcal{M}$, we can set half of our state variables to be equal to the space-time variables: $q^\alpha = (t, q^i) = x^\alpha$. This already tells us that the trajectory is differentiable since $q^\alpha(s)$ is differentiable. The four-velocity $u^\alpha = d_s x^\alpha$ is therefore well defined and it can be used to identify trajectories. The magnitude of the four-velocity can be rescaled by changing the evolution parameter, $u^\alpha = d_s x^\alpha = d_s \hat{s} \, d_{\hat{s}} x^\alpha = d_s \hat{s} \, \hat{u}^\alpha$, and is therefore not physical. Only the direction is physical so the four-velocity gives us another $n$ independent variables to identify trajectories. The combined position and velocity, the kinematic variables, identify $2n + 1$ trajectories using a hypersurface of the tangent bundle $T\mathcal{M}$. This is exactly the dimensionality of the complete state space $\bar{\mathcal{S}} \subset T^*\mathcal{M}$. We cannot add more curves, as a homeomorphism preserves dimensionality, therefore we identified the suitable space for the trajectories of particles of a classical material under kinematic equivalence.

Being able to identify particles through either state or kinematic variables is not enough: we also need to be able to express a composite state as the distribution of particle states over position and velocity. As such, the differentiable distribution over state variables must become a differentiable distribution over kinematic variables. This tells us the transformation is differentiable.

Mathematically, the state space $\bar{\mathcal{S}} \subset T^*\mathcal{M}$ is diffeomorphic to a hypersurface of the tangent bundle $T\mathcal{M}$ of the space-time manifold.

\begin{prop}\label{prop:tangent_bundle}
	Let $\mathcal{M}$ be the $n+1$ dimensional space-time manifold over which trajectories are defined. The complete state space $\bar{\mathcal{S}} \subset T^*\mathcal{M}$ for the particles of a minimal classical material is diffeomorphic to a hypersurface of the tangent bundle $T\mathcal{M}$.
\end{prop}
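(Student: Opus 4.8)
The plan is to exhibit an explicit correspondence between $\bar{\mathcal{S}}$ and a distinguished hypersurface of $T\mathcal{M}$ and then upgrade that correspondence to a diffeomorphism using the density machinery already in place. First I would recall from \ref{prop:relativistic_symplectic_manifold} that $\bar{\mathcal{S}}$ is a codimension-one embedded submanifold of $T^*\mathcal{M}$, hence a $(2n+1)$-dimensional differentiable manifold, and that the $(n+1)$-dimensional manifold $\mathcal{M}$ simultaneously defines the units for the states and hosts the trajectories. Exploiting this dual role, I would set half of the state variables equal to the space-time coordinates, $q^\alpha = x^\alpha$, so that every state already carries a well-defined base point $x^\alpha \in \mathcal{M}$.

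Next I would build the trajectory side. Since $q^\alpha(s) = x^\alpha(s)$ is differentiable, the four-velocity $u^\alpha = d_s x^\alpha$ exists along every trajectory. Reparametrizing $s \mapsto \hat{s}(s)$ rescales $u^\alpha$ by $d_s\hat{s}$ without altering the underlying curve, so only the direction of $u^\alpha$ is physical; fixing a normalization cuts out a hypersurface $\Sigma \subset T\mathcal{M}$ on which each trajectory is represented exactly once. Counting, $\Sigma$ carries the $n+1$ positional coordinates together with $n$ independent directional coordinates, giving dimension $2n+1$, which matches $\bar{\mathcal{S}}$.

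The correspondence itself then comes from kinematic equivalence, \ref{ass:kinematic_equivalence}: to each state there corresponds one trajectory and conversely, yielding a bijection $\Phi : \bar{\mathcal{S}} \to \Sigma$. Since states and normalized trajectories are both sets of physically distinguishable objects, \ref{prop:continuity} makes both $\Phi$ and $\Phi^{-1}$ continuous, so $\Phi$ is a homeomorphism. To promote this to a diffeomorphism I would invoke the requirement that a composite state, which by \ref{prop:differentiable_manifold} is a differentiable density over the state variables, must be re-expressible as a differentiable density over the kinematic variables $(x^\alpha, u^\alpha)$. A density can be transported between two coordinate descriptions only when the change of variables admits a nonvanishing Jacobian, and this forces $\Phi$ to be differentiable with differentiable inverse.

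The main obstacle I anticipate is not the dimension count but pinning down the trajectory space $\Sigma$ cleanly: making precise the quotient by reparametrization so that ``direction of $u^\alpha$'' yields exactly $n$ coordinates and a genuine embedded hypersurface of $T\mathcal{M}$, rather than an ill-defined or singular quotient. The delicate point is that the normalization selecting $\Sigma$ must not secretly rely on the metric $g$, which is only introduced afterward, so $\Sigma$ has to be characterized through reparametrization invariance alone; and one must verify that the candidate $\Phi$ is consistent with the already-established embedding $\bar{\mathcal{S}} \hookrightarrow T^*\mathcal{M}$, i.e. that identifying $q^\alpha$ with $x^\alpha$ is compatible with the cotangent structure on which $\bar{\mathcal{S}}$ was built.
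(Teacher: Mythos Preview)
Your proposal is correct and follows essentially the same route as the paper: identify $q^\alpha$ with $x^\alpha$, use reparametrization to reduce the velocity data to $n$ independent directions and hence a $(2n+1)$-dimensional hypersurface of $T\mathcal{M}$, obtain the bijection from kinematic equivalence, upgrade to a homeomorphism via \ref{prop:continuity}, and then to a diffeomorphism via the density argument of \ref{prop:differentiable_manifold}. The one step the paper makes explicit that you leave implicit is a short contradiction argument showing that $u^\alpha$ must genuinely enter the state identification (if it did not, $q^\alpha$ alone would determine the state, contradicting the known dimension $2n+1$); your worry about pinning down $\Sigma$ without the metric is legitimate, but the paper does not resolve it here either---the normalization is fixed only later, when the natural parametrization is introduced.
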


\begin{justification}
	We claim the complete state space is a hypersurface of the cotangent bundle of the space-time manifold, $\bar{\mathcal{S}} \subset T^*\mathcal{M}$. Let $\bar{\mathcal{s}} \in \bar{\mathcal{S}}$ be the evolution of a particle for a minimal classical material. Let $f(\bar{\mathcal{s}}) \mapsto x^\alpha(s)$ be the function that returns the space-time trajectory associated with the evolution. $f$ is injective as the material is minimal. $\bar{\mathcal{s}}$ can be identified by a trajectory $x^\alpha(s)$ which can be fully described by coordinates defined on $\mathcal{M}$. Let $\mathcal{s} \in \bar{\mathcal{S}}$ be the particle state for a minimal classical material. Let $f(\bar{\mathcal{s}}, t) \mapsto \mathcal{s}$ the function that returns the state at a particular time $t$ along an evolution $\bar{\mathcal{s}}$. Such function is a bijection as the evolution is deterministic and reversible. $\mathcal{s}$ can be identified by an evolution $\bar{\mathcal{s}}$ and an instant $t$, both fully described by coordinates defined on $\mathcal{M}$. $\mathcal{M}$ defines the unit system. $\bar{\mathcal{S}} \subset T^*\mathcal{M}$ by \ref{prop:relativistic_symplectic_manifold}.
	
	We claim there exists a homeomorphism between $\bar{\mathcal{S}} \subset T^*\mathcal{M}$ and a hypersurface of $T\mathcal{M}$. Let $\mathcal{s} \in \bar{\mathcal{S}}$ be the particle state for a minimal classical material. Let $x^\alpha(s)$ be the evolution associated with that state. Let $t(\mathcal{s})$ be the instant of time at which $\mathcal{s}$ is defined. $q^\alpha(\mathcal{s}) \equiv x^\alpha(s(t(\mathcal{s})))$ are state variables as they are real function of the state. $q^\alpha$ is differentiable therefore $d_s x^\alpha$ exists. Suppose $d_s x^\alpha$ is not part of the state. Then $d_s x^\alpha = d_s q^\alpha$ must be determined by the law of evolution. $x^\alpha(s)$ is determined by the law of evolution therefore $q^\alpha$ are the only state variables needed from the trajectory. As the material is minimal, $q^\alpha$ determines the full state $\mathcal{s}$. This is a contradiction: $\mathcal{s}$ is not identified by just $q^\alpha$. Therefore $u^\alpha=d_s x^\alpha$ is part of the state. $p_\alpha = p_\alpha(x^\alpha, u^\alpha)$. The map $(x^\alpha, u^\alpha) \mapsto (q^\alpha, p_\alpha)$ is continuous by \ref{prop:continuity} and is bijective as the material is minimal. It is a homemorphism between $T^*\mathcal{M}$ and $T\mathcal{M}$. $\bar{\mathcal{S}}$ is a hypersurface of $T^*\mathcal{M}$ therefore its image is a hypersurface of $T\mathcal{M}$.
	
	We claim $(x^\alpha, u^\alpha) \mapsto (q^\alpha, p_\alpha)$ is a diffeomorphism. Let $\bar{\mathcal{c}}$ be the evolution of a minimal classical material. Let $\rho_{\bar{\mathcal{c}}} : \bar{\mathcal{S}} \rightarrow \mathbb{R}$ be the distribution associated with $\bar{\mathcal{c}}$. $\rho_{\bar{\mathcal{c}}}(\mathcal{s}(x^\alpha, u^\alpha))$ as the material is minimal. $\rho_{\bar{\mathcal{c}}}(x^\alpha, u^\alpha)$ is differentiable by the same logic expressed in \ref{prop:differentiable_manifold}. $(x^\alpha, u^\alpha) \mapsto (q^\alpha, p_\alpha)$ is a diffeomorphism as it is a homeomorphism that maps a differentiable function to a differentiable function.
\end{justification}

This relationship allows us to write the kinematic variables in terms of the state variables. In particular we have $u^\alpha=u^\alpha(q^\alpha, p_\alpha)$. As the relationship is invertible, these functions must be monotonic in $p_\alpha$. Since $\partial_{p_\alpha} \mathcal{H}=d_s q^\alpha= u^\alpha$, $\mathcal{H}$ is convex in $p_\alpha$. Under such conditions we can define a Legendre transformation $\mathcal{L}=u^\alpha p_\alpha - \mathcal{H}$. Such transformation is the Lagrangian of the system and the motion is given by the Euler-Lagrange equation. For the time independent case, we define $L=v^i p_i - H$ where $v^i=d_t x^i$.

\begin{prop}\label{prop:lagrangian}
	Let $\mathcal{H}$ be an invariant Hamiltonian defined on $T^*\mathcal{M}$. Under the kinematic assumption, we can define the Legendre transform $\mathcal{L}=u^\alpha p_\alpha - \mathcal{H}$ which we call the \emph{extended Lagrangian}.
\end{prop}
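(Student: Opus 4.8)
The plan is to verify the one analytic prerequisite that makes the Legendre transform well-defined: that, at fixed $q^\alpha$, the map sending the conjugate momenta $p_\alpha$ to the four-velocities $u^\alpha$ is invertible. Once this is in hand, the transform $\mathcal{L} = u^\alpha p_\alpha - \mathcal{H}$ is defined in the usual way, with $p_\alpha$ eliminated everywhere in favor of $u^\alpha$, and the claim follows at once.

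First I would invoke the relativistic Hamilton equations of \ref{prop:relativistic_hamiltons_equations} to identify the four-velocity with the momentum-gradient of the invariant Hamiltonian, $u^\alpha = d_s q^\alpha = \partial_{p_\alpha}\mathcal{H}$. Next I would bring in the diffeomorphism of \ref{prop:tangent_bundle} between $\bar{\mathcal{S}} \subset T^*\mathcal{M}$ and a hypersurface of $T\mathcal{M}$; restricting to fixed $q^\alpha = x^\alpha$, this says precisely that $p_\alpha \mapsto u^\alpha$ is a smooth bijection onto its image, hence a fiberwise-invertible relation between momenta and velocities. Applying the inverse function theorem, its Jacobian $\partial_{p_\beta} u^\alpha = \partial_{p_\beta}\partial_{p_\alpha}\mathcal{H}$, i.e.\ the Hessian of $\mathcal{H}$ in the momenta, is nonsingular. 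This nonsingularity is exactly what is needed for the fiberwise Legendre transform to exist and be an involution, so I would then write $\mathcal{L} = u^\alpha p_\alpha - \mathcal{H}$ and use the invertibility of $p_\alpha \mapsto u^\alpha$ to regard $\mathcal{L}$ as a function of $(q^\alpha, u^\alpha)$. One caveat to flag explicitly is that the informal discussion calls this condition convexity, whereas across the temporal direction the momentum-Hessian carries Minkowski rather than positive-definite signature, as the free-particle $\mathcal{H}$ of \ref{free_hamiltonians} already exhibits; the correct statement is nondegeneracy, which is all the transform requires.

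The step I expect to be the genuine obstacle is not the convexity-to-Legendre implication but the reparametrization freedom carried by $u^\alpha$. Rescaling the evolution parameter rescales $u^\alpha = d_s q^\alpha$, so only its direction is physical and \ref{prop:tangent_bundle} is a statement about hypersurfaces, not the full bundles. A reparametrization-invariant $\mathcal{H}$ of the form $\mathcal{H} = \lambda(H - E)$ from \ref{prop:form_of_invariant_hamiltonian} vanishes on the physical states and is homogeneous in a way that would make a literal fiber Hessian degenerate along the reparametrization direction. The careful part of the argument is therefore to read the Legendre transform using the extension of $\mathcal{H}$ off the constraint surface $\mathcal{H} = 0$, where the extended function is genuinely nonzero and its momentum-Hessian nondegenerate, while keeping track of the single lost degree of freedom so the transform is applied on the physically relevant hypersurface. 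Once that one constrained direction is accounted for, the remainder is routine.
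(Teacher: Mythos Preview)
Your approach is essentially the paper's: invoke \ref{prop:tangent_bundle} for the fiberwise invertibility of $p_\alpha \mapsto u^\alpha$, combine with $u^\alpha = \partial_{p_\alpha}\mathcal{H}$ from the relativistic Hamilton equations, and conclude the Legendre transform is well-defined. The paper phrases the middle step as ``$u^\alpha$ monotonic in $p_\alpha$, hence $\mathcal{H}$ convex,'' where you instead run the inverse function theorem to get a nonsingular momentum-Hessian; your version is the more accurate one, since as you correctly note the Hessian for the free-particle $\mathcal{H}$ in \ref{free_hamiltonians} has Lorentzian signature and is not literally convex. The paper simply uses ``convex'' loosely and does not raise the reparametrization-degeneracy issue you flag at the end; that concern is legitimate but the paper's justification does not address it, so you are already being more careful than the original.
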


\begin{justification}
	We claim $\mathcal{H}$ is convex in $p_\alpha$. By \ref{prop:tangent_bundle} we can write the velocity $u^\alpha=u^\alpha(q^\alpha, p_\alpha)$ as a differentiable function of the state variables. At fixed $q^\alpha$, $u^\alpha$ must be invertible. The components $u^\alpha$ are monotonic in $p_\alpha$. $\partial_{p_\alpha} \mathcal{H}=d_s q^\alpha= u^\alpha$. The derivatives of $\mathcal{H}$ in $p_\alpha$ are monotonic in $p_\alpha$. $\mathcal{H}$ is convex in $p_\alpha$.
	
	We claim there exists an extended Lagrangian $\mathcal{L}=u^\alpha p_\alpha - \mathcal{H}$. $\mathcal{H}$ is convex in $p_\alpha$. It admits a Legendre transform. The extended Lagrangian $\mathcal{L}$ is such transform.
\end{justification}

Not only must we be able to express distributions in terms of both state and kinematic variables, but integration has to carry over as well. To do that, we need to be able to count trajectories in the same way we are able to count states. Just as the symplectic form $\omega$ gives us the count of possibilities for a d.o.f. in terms of $dq^\alpha$ and $dp_\alpha$, the ranges of the state variables, there should be a bilinear form $g$ that gives the the count of possibilities in terms of $dx^\alpha$ and $du^\alpha$, the ranges of the kinematic variables. Under the kinematic assumption, these counts must be consistent with each other. That is: $\omega(dq^\alpha,dp_\alpha) \propto g(dx^\alpha,du^\alpha)$.

Since both $dx^\alpha$ and $du^\alpha$ are four vector components, $g$ is a rank two tensor on $\mathcal{M}$. To each direction in space-time $dx^\alpha$ is associated a direction in the tangent space of velocities $du^\alpha$ with which it forms a d.o.f. Across d.o.f., that is $\alpha \neq \beta$, we have $\omega(dq^\alpha, dp_\beta) = 0 =\omega(dq^\beta, dp_\alpha)$. Therefore if $dx^\alpha$ and $du^\beta$ do not form a d.o.f., $g(dx^\alpha, du^\beta) = 0 = g(dx^\beta, du^\alpha)$: $g$ is reflexive. We also know that $g$ cannot be alternating, that is $g(dx^\alpha, du^\alpha) \neq 0$: a choice of units for $x^\alpha$ is a choice of units for $u^\alpha$ and therefore they cannot belong to different d.o.f. This means $g$ is a symmetric non-degenerate tensor. We recognize it as the metric tensor and therefore $\mathcal{M}$ is a pseudo-Riemannian manifold.

Since $g$ is symmetric, it can be diagonalized: $\omega(dq^\alpha, dp_\beta)= dq^i \omega_{q^ip_i} dp_i + dt \omega_{tE} dE \propto g(dx^\alpha, du^\beta) = dx^i g_{ii} du^i + dt g_{00} du^0$. As $\omega_{q^ip_i} > 0$ we must have $g_{ii} > 0$. As $\omega_{tE} < 0$ we must have $g_{00} < 0$. We can further change the units along the spatial directions so that $g_{ii}=1$: if we use the same units for the spatial coordinates, each direction will contribute the same number of possibilities. Since time must use a different unit, we set $g_{00}=-c^2$ where $c^2$ is the constant that allows one to convert the possibility count over the temporal d.o.f. into the possibility count of a spatial d.o.f. Therefore, locally $g(dx^i, dx^i) = (dx^i)^2 - c^2 dt^2$ is Minkowskian.

\begin{prop}\label{prop:riemannian_manifold}
	The space-time manifold $\mathcal{M}$ is a pseudo-Riemannian manifold $(\mathcal{M}, g)$. $g$ has signature $(n,1)$ and locally is $g = (dx^i)^2 - c^2 dt^2$.
\end{prop}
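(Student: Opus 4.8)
The plan is to build $g$ directly from the requirement that the count of possibilities be expressible equivalently in kinematic and in state variables, and then to read off its algebraic type from the already-known structure of $\omega$. First I would invoke the diffeomorphism of \ref{prop:tangent_bundle} to transport the possibility count: since $\omega$ returns the count of possibilities in terms of $dq^\alpha$ and $dp_\alpha$, and since $(x^\alpha, u^\alpha) \mapsto (q^\alpha, p_\alpha)$ is a diffeomorphism, there must exist a bilinear form $g$ on $T\mathcal{M}$ with $\omega(dq^\alpha, dp_\alpha) \propto g(dx^\alpha, du^\alpha)$. Because $dx^\alpha$ and $du^\alpha$ are both components of tangent vectors on $\mathcal{M}$, $g$ sends two such vectors to a number and is therefore a rank-two covariant tensor.

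The core of the argument is to show $g$ is symmetric and non-degenerate. I would first establish reflexivity: across distinct degrees of freedom ($\alpha \neq \beta$) the symplectic form has vanishing components, $\omega(dq^\alpha, dp_\beta) = 0 = \omega(dq^\beta, dp_\alpha)$, so the proportionality forces $g(dx^\alpha, du^\beta) = 0 \Leftrightarrow g(dx^\beta, du^\alpha) = 0$; that is, $g(v,w)=0 \Leftrightarrow g(w,v)=0$. Next I would rule out that $g$ is alternating: a choice of units for $x^\alpha$ fixes the units for its associated velocity $u^\alpha$, so $x^\alpha$ and $u^\alpha$ always belong to the same d.o.f. and hence $g(dx^\alpha, du^\alpha) \neq 0$. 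The main obstacle is precisely this step, because it leans on the structural fact that a non-degenerate reflexive bilinear form must be \emph{either} symmetric \emph{or} alternating; having excluded the alternating case, I conclude $g$ is symmetric. Non-degeneracy is inherited from that of $\omega$ through the proportionality and the diffeomorphism. Thus $g$ is a symmetric non-degenerate rank-two tensor, i.e. a metric, and $(\mathcal{M}, g)$ is pseudo-Riemannian.

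Finally I would pin down the signature and normal form. Because $g$ is symmetric it can be diagonalized, and matching the diagonal entries against the known signs of the symplectic components---$\omega_{q^i p_i} > 0$ for each spatial pair and $\omega_{tE} < 0$ for the temporal pair, as recorded in \ref{prop:relativistic_symplectic_manifold}---forces $g_{ii} > 0$ for the $n$ spatial directions and $g_{00} < 0$ for the single temporal direction, giving signature $(n,1)$. Rescaling the spatial units so that each spatial direction contributes the same count fixes $g_{ii} = 1$, and absorbing into a constant $c^2$ the conversion between the temporal and spatial possibility counts fixes $g_{00} = -c^2$. Hence locally $g = (dx^i)^2 - c^2 dt^2$, the Minkowski form, which completes the proof.
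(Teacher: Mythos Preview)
Your proposal is correct and follows essentially the same route as the paper's own justification: introduce $g$ via the proportionality $g(dx^\alpha,du^\alpha)\propto\omega(dq^\alpha,dp_\alpha)$, establish reflexivity from the vanishing cross-components of $\omega$, exclude the alternating case because $x^\alpha$ and $u^\alpha$ share units and hence lie in the same d.o.f., conclude symmetry, and then read the signature and local Minkowski form from the signs $\omega_{q^ip_i}>0$, $\omega_{tE}<0$. Your version is, if anything, slightly more explicit in naming the reflexive-implies-symmetric-or-alternating dichotomy and in citing \ref{prop:tangent_bundle} and \ref{prop:relativistic_symplectic_manifold} for the inputs.
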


\begin{justification}
	We claim $(\mathcal{M}, g)$ is a pseudo-Riemannian manifold where $g$ is a metric tensor induced by $\omega$ under kinematic equivalence. Let $(dx^\alpha, du^\alpha)$ be an infinitesimal range of possibilities for the kinematic variables of the particles of a minimal classical material. There exists a map $g : T\mathcal{M} \times T\mathcal{M} \rightarrow \mathbb{R}$ such that $g(dx^\alpha, du^\alpha) \propto \omega(dq^\alpha, dp_\alpha)$. The map $g$ is linear, since a linear combination of infinitesimal ranges leads to the linear combination of possibilities. $g$ is a rank two tensor. Suppose $g(dx^\alpha, du^\beta) = 0$. $g(dx^\alpha, du^\beta) \propto \omega(dq^\alpha, dp_\beta) = 0 = \omega(dq^\beta, dp_\alpha) \propto g(dx^\beta, du^\alpha) = 0$. $g$ is reflexive. Position and velocity along the same direction use related units. They cannot form independent d.o.f. $g(dx^\alpha, du^\alpha) \neq 0$. $g$ is not alternating. $g$ is symmetric. $g$ is a symmetric rank two tensor. $(\mathcal{M}, g)$ is a pseudo-Riemannian manifold.
	
	We claim $g$ is locally Minkowskian. Let $x^\alpha$ be a choice of coordinate that diagonalizes $g$. $x^\alpha$ exists as $g$ is symmetric. $\omega(dq^\alpha, dp_\beta) = dq^i \omega_{q^ip_i} dp_i + dt \omega_{tE} dE \propto g(dx^\alpha, du^\beta) \propto dq^i g_{ii} du^i + dt g_{00} du^0$. $dq^\alpha = dx^\alpha$. We require $g(dx^\alpha, du^\beta)>0$ when the arguments are positive. Therefore $g_{ii} > 0$ and $g_{00} < 0$. $g$ has signature $(n, 1)$. Choosing the same unit for all $dx^i$ and a temporal unit for $dt$ we can choose coordinates such that locally $g = (dx^i)^2 - c^2 dt^2$. $g$ is locally Minkowskian.
\end{justification}

The metric tensor $g$ and the constant $c$ found here are indeed the ones used in the context of special relativity. Yet, they have acquired a different meaning. $g$ does not measure space-time distances: it counts possibilities of a d.o.f., like $\omega$. Directions in space are orthogonal not because the angle between them is $90^\circ$, but because they identify independent d.o.f. That is, the choice of units along one direction is independent from the choice of units along orthogonal directions. Space and time, instead, are not independent d.o.f. and the geometry between them is different. In other words: $g$ inherits its signature from $\omega$. The geometry of space-time is the geometry of the ranges for possible states.

This also allows us to give $c$ a more insightful physical meaning. It is not a velocity: the metric is defined at each point without motion. It is the constant of proportionality that maps a range of initial conditions from the time variable $t$ to the space variable $x^i$. For example, once the SI is adopted, $1$km would correspond to a thousand times more possible states than $1$m, and, in the same way, $1$sec would correspond to $299,792,458$ more possible states than $1$m. Therefore when we write $x^\alpha = (ct, x^i)$ or $p_\alpha = (- E/c, p_i)$ we are really expressing everything in quantities that are proportional to the count of states and possibilities. These coordinates make the bookkeeping of number of states much easier which, as we saw, is the only thing that Hamiltonian and Lagrangian mechanics are doing.

This also allows us to understand heuristically why the speed of deterministic and reversible motion cannot be greater than $c$. Consider the trajectory of a point particle. Over an interval $dt$ it will travel $dx$ in space. Since the motion is deterministic and reversible, and therefore continuous, it will travel through all the states defined in the interval $dx$. It will also travel through every moment in time, which corresponds to $cdt$ possibilities. As it moves, time will always change but not necessarily space. The number of states traveled in space cannot exceed the number of states traveled in time. Therefore we have:
\begin{align*}
\frac{|n_x(dx)|}{|n_t(dt)|} &=  \frac{|dx|}{|c dt|} \leq 1 \\
\frac{|dx|}{|dt|} & \leq c \\
\end{align*}
Note, though, that this constraint applies only to deterministic and reversible motion. It was under that premise that we were able to show that the trajectories are continuous, that we could define $\omega$ and therefore $g$. Non-local quantum effects, such as the ones featured in EPR~\cite{EPR} experiments, are not examples of deterministic and reversible evolution. Therefore they do not have anything to violate: the maximum speed does not apply to their ``motion."\footnote{The very notion of velocity becomes ill defined in such cases precisely because we do not have a unique well-defined evolution.} On the other hand, only deterministic and reversible motion can carry information and can establish before/after relationships, therefore everything remains consistent.

\subsection{Massive particles under scalar and vector potential forces}

The link between $\omega$ and $g$ allows us to constrain the possible diffeomorphisms between state variables and kinematic variables. If we express the possibilities in terms of both state variables and kinematic variables we have: $dq^\alpha \wedge dp_\alpha \propto dx^\alpha du_\alpha$. Since  $p_\alpha=p_\alpha(x, u)$, we have $dp_\alpha = m du_\alpha$, where $m$ is the constant of proportionality. We recognize $m$ as the inertial mass. Integrating the expression, we have $p_\alpha = m u_\alpha + A_\alpha(x)$ where $A_\alpha(x)$ is an arbitrary vector field.\footnote{Note that $A_\alpha(x)$ corresponds to the product the vector potential and the charge. This simplifies the notation and avoids having to find a symbol for the charge itself.} This is the most general link between velocity and conjugate momentum, which is the sum of kinetic momentum and a scalar potential.

This constrains the form of the extended Hamiltonian. $\frac{1}{m}g^{\alpha\beta}(p_\beta-A_\beta) = u^\alpha = d_s x^\alpha = d_s q^\alpha = \partial_{p_\alpha} \mathcal{H}$. Integrating we have $\mathcal{H}=\frac{1}{2m}(p_\alpha-A_\alpha)g^{\alpha\beta}(p_\beta-A_\beta)+V(x)$ where $V(x)$ is an arbitrary function. We can further constrain the Hamiltonian by choosing the natural parameterization for $s$ (also called unit speed~\cite{Lee} or arc length parametrization) for which the modulus of the velocity is constant. We have $u^\alpha u_\alpha = - c^2 = \frac{1}{m^2}(p^\alpha-A^\alpha)(p_\alpha-A_\alpha)$. $\mathcal{H}=\frac{1}{2m}((p^\alpha-A^\alpha)(p_\alpha-A_\alpha) + m^2 c^2)$. Applying the same procedure in the time independent case yields the Hamiltonian $H=\frac{1}{2m}(p^i-A^i)(p_i-A_i)+V$.

We recognize the Hamiltonian for massive particles under scalar and vector potential forces.

\begin{prop}\label{prop:kinetic_hamiltonian}
The extended Hamiltonian for the particles of a classical material under the kinematic assumption is $\mathcal{H}=\frac{1}{2m}((p_\alpha-A_\alpha)g^{\alpha\beta}(p_\beta-A_\beta) + m^2 c^2)$.
\end{prop}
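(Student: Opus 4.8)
The plan is to leverage the diffeomorphism between state and kinematic variables established in \ref{prop:tangent_bundle} together with the proportionality $\omega \propto g$ from \ref{prop:riemannian_manifold}, and then integrate Hamilton's equations to reconstruct $\mathcal{H}$ up to an additive function, which I then pin down using the normalization of the velocity and the constraint $\mathcal{H}=0$ on the physical surface. First I would recall that by \ref{prop:tangent_bundle} we may take $q^\alpha = x^\alpha$, so that $dq^\alpha = dx^\alpha$, and that the possibility count within each degree of freedom is identical whether expressed through $(dq^\alpha, dp_\alpha)$ or through $(dx^\alpha, du^\alpha)$, i.e.~$dq^\alpha \wedge dp_\alpha \propto dx^\alpha \wedge du_\alpha$. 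Since the ``position'' factors already agree, matching the wedge products forces the ``momentum'' factors to be proportional through a single scalar, $dp_\alpha = m\, du_\alpha$, which I identify as the inertial mass. Integrating over $u$ then yields $p_\alpha = m u_\alpha + A_\alpha(x)$, the integration term being an arbitrary covector field $A_\alpha$ depending on position alone, and raising indices gives $u^\alpha = \tfrac{1}{m} g^{\alpha\beta}(p_\beta - A_\beta)$.

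Next I would substitute this into the relativistic Hamilton equation $\partial_{p_\alpha}\mathcal{H} = d_s q^\alpha = u^\alpha$ from \ref{prop:relativistic_hamiltons_equations}, obtaining $\partial_{p_\alpha}\mathcal{H} = \tfrac{1}{m}g^{\alpha\beta}(p_\beta - A_\beta)$. Integrating in $p_\alpha$ recovers $\mathcal{H} = \tfrac{1}{2m}(p_\alpha - A_\alpha)g^{\alpha\beta}(p_\beta - A_\beta) + V(x)$, where $V(x)$ is an integration function that can depend on $x$ only, since the integration was carried out over the momenta. At this stage the quadratic-in-momentum structure and the minimal coupling through $A_\alpha$ are already fixed; only the scalar $V(x)$ remains free.

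To eliminate $V(x)$, I would adopt the unit-speed (arc-length) parametrization, for which the velocity has constant norm $u^\alpha u_\alpha = -c^2$ along every physical trajectory; substituting $u_\alpha = \tfrac{1}{m}(p_\alpha - A_\alpha)$ this becomes $(p^\alpha - A^\alpha)(p_\alpha - A_\alpha) = -m^2 c^2$ on $\bar{\mathcal{S}}$. Imposing $\mathcal{H}=0$ on $\bar{\mathcal{S}}$, as required by \ref{prop:form_of_invariant_hamiltonian}, then forces $V(x) = \tfrac{1}{2}mc^2$, a constant, and absorbing it produces the claimed form $\mathcal{H} = \tfrac{1}{2m}\bigl((p_\alpha - A_\alpha)g^{\alpha\beta}(p_\beta - A_\beta) + m^2 c^2\bigr)$. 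The step I expect to be the main obstacle is the very first matching argument: one must show that the factor relating $dp_\alpha$ to $du_\alpha$ is a genuine scalar constant, independent both of the index $\alpha$ and of the point of $\mathcal{M}$, rather than a direction- or position-dependent quantity, and that the subsequent integrations introduce only the advertised dependences (so that $A_\alpha$ is free of $u$ and $V$ is free of $p$). Because $g$ is symmetric while $\omega$ is antisymmetric, the index bookkeeping between $du^\alpha$ and $du_\alpha$ also requires care; I would control it by working in the $g$-diagonalizing coordinates of \ref{prop:riemannian_manifold}, where the correspondence between $\omega$ and $g$ holds term by term.
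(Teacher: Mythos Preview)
Your proposal is correct and follows essentially the same route as the paper's own justification: derive $p_\alpha = m u_\alpha + A_\alpha(x)$ from the $\omega \propto g$ matching, integrate $\partial_{p_\alpha}\mathcal{H} = u^\alpha$ to obtain the quadratic form plus an arbitrary $V(x)$, and then fix $V = \tfrac{1}{2}mc^2$ via the unit-speed normalization together with $\mathcal{H}=0$ from \ref{prop:form_of_invariant_hamiltonian}. The concern you flag about $m$ being a genuine scalar constant is legitimate and is treated no more carefully in the paper than in your sketch.
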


\begin{justification}
	We claim the conjugate momentum expressed as a function of the kinematic variables is $p_\alpha = m u_\alpha + A_\alpha(x)$ where $A_\alpha \in \mathfrak{X}(\mathcal{M})$ is a differentiable vector field. $\omega(dq^\alpha, dp_\beta) = dq^\alpha \wedge dp_\alpha \propto g(dx^\alpha, du^\beta) = dx^\alpha du_\alpha$. Let $m$ be the proportionality constant. $dq^\alpha \wedge dp_\alpha = dx^\alpha \wedge \partial_{u_\beta} p_\alpha du_\beta = m dx^\alpha du_\alpha$. $\partial_{u_\beta} p_\alpha = m \delta_\alpha^\beta$. After integration $p_\alpha = m u_\alpha + A_\alpha(x)$ where $A_\alpha$ is an arbitrary vector field.
	
	We claim the Hamiltonian is $\mathcal{H}=\frac{1}{2m}((p_\alpha-A_\alpha)g^{\alpha\beta}(p_\beta-A_\beta) + m^2 c^2)$.  $\frac{1}{m}g^{\alpha\beta}(p_\beta-A_\beta) = u^\alpha = d_s x^\alpha = d_s q^\alpha = \partial_{p_\alpha} \mathcal{H}$. After integration $\mathcal{H}=\frac{1}{2m}(p_\alpha-A_\alpha)g^{\alpha\beta}(p_\beta-A_\beta)+V(x)$ where $V$ is an arbitrary function. Let $s$ be the natural parametrization such that $u^\alpha u_\alpha = - c^2$. $V= \frac{1}{2}m c^2$ as $\mathcal{H}=0$ by \ref{prop:form_of_invariant_hamiltonian}. $\mathcal{H}=\frac{1}{2m}((p_\alpha-A_\alpha)g^{\alpha\beta}(p_\beta-A_\beta) + m^2 c^2)$.
\end{justification}

The above discussion gives the inertial mass $m$ a new physical meaning: it is the conversion constant that tells us how to convert a range of velocity to a range of conjugate momentum with the same number of possibilities. In other words: the more massive the particles, the more states per unit velocity. Intuitively, a more massive body is harder to accelerate because it has to go through more states for the same change in velocity. Also note that the Hamiltonian excludes massless particles because the kinematic assumption does that implicitly. A massless particle always has the same velocity, $c$, therefore the trajectory is not enough to tell us the dynamics. The motion of the photon is not enough to tell us its energy or its momentum. Therefore it makes sense that, by assuming kinematic equivalence we derived an equation valid only for massive particles.

We have seen that the inertial mass $m$, the invariant speed $c$ and the reduced Planck constant $\hbar$ all serve similar purposes: convert one quantity to another preserving the number of possibilities. $dp=\hbar dk$ converts between ranges of wave number and conjugate momentum. $dx = c dt$ converts between ranges of time and space. $dp = m du$ converts between ranges of velocity and conjugate momentum. They allow us to keep proper bookkeeping when mapping sets of states. Using natural units means measuring all these quantities through the possibility count. There is a difference among them, though: while $c$ and $\hbar$ are simply determined by the unit system, $m$ also depends on the type of material. That is, different homogeneous materials can have different inertial mass. The reason is that the choice of units for $x^\alpha$ already determines the others ($x^\alpha \rightarrow u^\alpha$ and $x^\alpha \rightarrow q^\alpha \rightarrow k_\alpha \rightarrow p_\alpha$) therefore the link between units of $u^\alpha$ and $p_\alpha$ is not a new unit conversion. In other words: $m$ is not just another arbitrary constant set by the choice of units.

Another important thing to note is that the relationship between conjugate momentum and velocity, $p_\alpha = m u_\alpha + A_\alpha(x)$, includes an arbitrary function. As it is arbitrary, we cannot give $p$ a well defined physical meaning in terms of the space-time trajectory. Suppose, in fact, we changed $A$ and $p$ while leaving $u$ unchanged, the physical description in terms of trajectories remains the same.

We should also ask whether any choice of $A$ actually represents a different case leading to different trajectories. The simplest and more insightful way to study this problem is to use the kinematic quantities $x^\alpha$ and $mu_\alpha$ as state variables. They are not going to be conjugates of each other but they do capture the entire physical description. Any change in $A$ that does not affect the relationship between $x^\alpha$ and $mu_\alpha$ does not change the physics. Expressing the Hamiltonian and the Poisson brackets in terms of the kinematic variables we have:
\begin{align*}
\mathcal{H}&=\frac{1}{2}mu_\alpha u^\alpha + \frac{1}{2} m c^2 \\
\{x^\alpha, x^\beta\} &= 0 \\
\{x^\alpha, mu_\beta\} &= \delta^\alpha_\beta \\
\{mu_\alpha, mu_\beta\} &= \{p_\alpha, - A_\beta\} + \{-A_\alpha, p_\beta\} \\
&= \partial_\alpha A_\beta - \partial_\beta A_\alpha = F_{\alpha\beta}
\end{align*}
Note how the constraint $\mathcal{H}=0$ simply sets the rest kinetic energy. Note how the Poisson bracket between kinetic momentum is the force tensor. The weaker the force, the more $(x^i, mu_i)$ become independent d.o.f. The force tensor, then, tells how much $(x^i, mu_i)$ fail to be independent d.o.f. With these expressions, the Poisson bracket between any function of position and kinetic momentum can now be calculated. For example:
\begin{align*}
d_s x^\alpha &= \{x^\alpha, \mathcal{H}\} = u^\alpha \\
d_s (mu^\alpha) &= \{mu^\alpha, \mathcal{H}\} = F^{\alpha\beta} u_\beta
\end{align*}
where we recognize the covariant form of the Lorentz force.

Any change of $A$ that leaves $F_{\alpha\beta}$ unchanged does not affect the evolution of any function of position and velocity. Such transformation is what we call a gauge transformation. Note the same group of transformations is found by fixing $A$ and changing $p$ such that the laws of motion don't change. In other words: $A$ and $p$ are defined relatively to each other. $A$ defines the $0$ for $p$ and vice-versa. The gauge group expresses the arbitrariness of such a definition.

The extended Hamiltonian we found is quadratic in energy, therefore negative energy states are possible. These correspond to the particle anti-states mentioned in \ref{prop:antistates}. As we fixed the natural parametrization, $u^\alpha u_\alpha = - c^2$, there are two cases: the parametrization is aligned with time, therefore $s=\tau$ where $\tau$ is proper time, or the parametrization is anti-aligned with time, therefore $s=-\tau$. The trajectory velocity $u^\alpha= \pm d_\tau x^\alpha$ is plus or minus the physical velocity depending on whether it's a standard particle state or an anti-state. Expanding the equation for the force, we have:
\begin{align*}
\pm d_\tau (\pm m d_\tau x^\alpha) &= m d_\tau^2 x^\alpha = \pm  F^\alpha_\beta d_\tau x^\beta
\end{align*}
Therefore the force acts on anti-particle states in the opposite way: they act as if they have opposite charge. Note that nothing physical is actually going backwards in time, just the parametrization.

In this setting, charge conjugation $C$ is simply changing the parametrization from $s$ to $-s$. This will permute particle standard and anti-states as they follow the trajectories in the opposite direction. But if we apply parity $P$ and time reversal $T$, changing $x^\alpha$ to $-x^\alpha$, we would have the same net effect: the parametrization stays the same, but it now moves backwards in space and time. As they have the same net effect, if we apply all three we go back where we started. This gives us a very intuitive classical particle mechanics analogue for the $CPT$ theorem~\cite{Weinberg,ClCPT}.

Again we want to stress just how the math takes a very simple and physically meaningful form when studying the motion in phase space using position and kinematic momentum. It is most expressive because, by using the kinematic variables of a deterministic and reversible motion of an infinitesimal amount of material, it makes use of all the assumptions about the system we are studying.

To sum up, the kinematic assumption has more inherent consequences than one would expect at first. It means the particles are identified by position and velocity and therefore the system of equations is second order. It links the space of initial conditions on $T\mathcal{M}$ with the space of states on $T^*\mathcal{M}$. Such a link allows phase space symplectic form $\omega$ to induce the space-time metric tensor $g$. It allows us to define Lagrangians. It constrains the motion to massive particles under scalar and vector potential forces. Moreover, it allowed us to reunderstand basic concepts, like the invariant speed $c$, the inertial mass $m$ and the force tensor $F_{\alpha\beta}$ in terms of state counting and d.o.f. independence. Finally, it allowed us to construct classical equivalents for anti-particles and the $CPT$ theorem.

We now have all the building blocks of classical mechanics, the consequences of three seemingly simple assumptions. We have more insight into why these blocks fit together the way they do. But we have also placed ourselves in a very good position to move forward to quantum systems by using the new physical insights we have gained.

\section{Prelude to quantum}
\label{sec:quantum_prelude}

In section \ref{sec:reducibility} we made the key assumption of infinitesimal reducibility which led to classical Hamiltonian mechanics. We are now going back to that point, that is we set aside the kinematic assumption and time dependent evolution, and take a different path. But before introducing assumption \ref{ass:irreducibility} and deriving quantum mechanics as its consequence, we first give a semi-classical account of the conceptual picture that emerges. This allows us to build the intuition more gradually and to show that classical and quantum mechanics are not as different as one may first imagine.

We will see that, as a first approximation, quantum particles are more suitably described by classical composite states (i.e. distributions in phase space) like any other object made of a finite amount of material. We'll also see that classical distributions already possess qualitative properties that are typically associated with quantum systems, including a classical analogue of the uncertainty principle.

\subsection{The case against point particles}

As we defined them in previous sections, classical particles are the infinitesimal limit of subsequent subdivisions and therefore should not be understood as point-like objects of finite mass. It was their being infinitesimal and spread around a small neighborhood that gave us conjugate pairs of state variables and Hamiltonian evolution. Any finite amount of classical material, therefore, is described by a distribution. If we take a quantum particle, say an electron, it has a finite amount of charge/mass/energy. Therefore, to a first approximation, it must be described by a distribution as well.

The idea that an electron is not a point particle may, to some, seem preposterous at first. But we contend that it is no more preposterous than to think it point-like, described by a probability distribution that self-interacts. In fact, the idea of an electron being a distribution of material is already conceptually more compatible with two features of quantum mechanics.

The first one is that quantum particles are not associated with a well defined value for position and momentum. If we misidentify the electron with a classical particle (i.e. the infinitesimal part of a classical material), this would indeed be worthy of note and puzzlement. But if the electron is a distribution, there is no issue because a phase space distribution does not have a single value for position or momentum. The earth is, so to speak, in both North America and Europe at the same time. In fact, based on what we saw before, any object of finite mass is a distribution over a region of phase space, and discontinuous distributions are ruled out by \ref{prop:continuity}. An electron is an object of finite mass: why should it be any different? In fact, if one claimed an electron to be point-like, he would be claiming an electron is fundamentally different than all other physical objects of finite mass/energy (which are the only ones we experimentally work with). Shouldn't \emph{he} have the burden of proof?

The other feature is particle-wave duality. If the spread of the distribution is narrow enough for the problem at hand, the motion can be described by just the position and momentum of the center of mass, making it behave like a classical particle. If the spread is not negligible (e.g. the object encounters obstacles or slits of a smaller scale) then we have to use the full distribution. Therefore, even in this classical approximation, the electron would have this dual nature depending on the problem.

While this picture is quite reasonable, some people may reject it because they believe that a fundamental object must have no size, it must be a point. But that does not follow: lack of spatial extent and lack of substructure are independent concepts.

In fact, we can conceptually construct an infinitesimal object that is not elementary. Suppose we have dots moving on the surface of a sphere. Their motion can be expressed in terms of their polar and azimuthal angles. If we shrink the sphere, the dynamics and its description stays the same. We can collapse the whole sphere to a point: it is now a point-like object that is not elementary, as there is motion detectable along different angles of incidence.

Whether a physical system has a distinguishable substructure depends on whether there exists a physical process that can interact with one of its parts independently of the others. If we are always forced to interact with the whole system, then we can't find any discernible part. The fact that the system is extended in space is a separate property, which does not necessarily imply the existence of such a process.

If we abandon the idea of point-like objects of finite mass, we avoid these problems and get a picture more consistent across the different realms of physics. And we can also see that classical mechanics already has some features that are commonly ascribed to quantum mechanics.

\subsection{Classical uncertainty principle}

Suppose we have our classical electron described by a finite amount of classical material distributed over a single d.o.f. $\int \rho_\mathcal{c} dq \wedge dp = 1$ where $\rho_\mathcal{c}$ is the normalized distribution. In this context, we will call a \emph{fragment} an infinitesimal part of our classical electron. The Shannon entropy~\cite{Shannon} associated with the distribution is finite as well. $I(\rho_\mathcal{c}) = - \int \rho_\mathcal{c} \log (\rho_\mathcal{c})  dq \wedge dp = I_0$. Intuitively, this value can be thought of as the amount of information required to identify a fragment within the distribution.\footnote{To be precise, Shannon's differential entropy is actually the number of bits required to identify an element of the given distribution relative to one of a uniform distribution of unitary range on all axes.}

As the state evolves into its final state, the amount of material remains the same: $\int \rho_{\hat{\mathcal{c}}} dq \wedge dp = 1$. The final entropy will also remain the same: $I(\rho_{\hat{\mathcal{c}}}) = I_0$. Mathematically, because $-\rho_\mathcal{c} \log (\rho_\mathcal{c})$, like the distribution itself, is just another integrable function, its integral is conserved by Hamiltonian evolution. Physically, the entropy remains unchanged because the amount of information required to identify a fragment given the distribution has to remain the same under deterministic and reversible evolution. Therefore, in the same way that each fragment is constrained to move through states at equal energy, the composite state is constrained to move through distributions at equal entropy.

We now ask the following question: once we fix the entropy, what is the relationship between the spread of the marginal distributions along $q$ and $p$? Using Lagrange multipliers, we fix the total amount of material to $1$ and the total entropy $I_0$ and find the distribution that minimizes the product of the variance $\sigma_q^2 \sigma_p^2 \equiv \int (q-\mu_q)^2 \rho_{\mathcal{c}} \, dqdp \int (p-\mu_p)^2 \rho_{\mathcal{c}} \, dqdp$.
\begin{align*}
L = &\int (q-\mu_q)^2 \rho_{\mathcal{c}} \, dqdp \int (p-\mu_p)^2 \rho_{\mathcal{c}} \, dqdp \\
\journal{ 
	&+ \lambda_1(\int \rho_{\mathcal{c}} dqdp - 1) + \lambda_2(- \int \rho_{\mathcal{c}} \ln \rho_{\mathcal{c}} \, dqdp - I_0)\\ }
\arxiv{ 
	&+ \lambda_1(\int \rho_{\mathcal{c}} dqdp - 1) \\ &+ \lambda_2(- \int \rho_{\mathcal{c}} \ln \rho_{\mathcal{c}} \, dqdp - I_0)\\ }
\delta L = &\int \delta \rho_{\mathcal{c}} [(q-\mu_q)^2 \sigma_p^2 + \sigma_q^2 (p-\mu_p)^2 + \\ &\lambda_1 - \lambda_2 \ln \rho_{\mathcal{c}} - \lambda_2 ] dqdp = 0 \\
\lambda_2 \ln \rho_{\mathcal{c}} = &\lambda_1 - \lambda_2 + (q-\mu_q)^2 \sigma_p^2 + \sigma_q^2 (p-\mu_p)^2 \\
\rho_{\mathcal{c}} = &e^{\frac{\lambda_1 - \lambda_2}{\lambda_2}}e^{\frac{(q-\mu_q)^2 \sigma_p^2}{\lambda_2}}e^{\frac{\sigma_q^2 (p-\mu_p)^2}{\lambda_2}}\\
\end{align*}
We solve the multipliers and have:
\begin{align*}
\rho_{\mathcal{c}} = &\frac{1}{ 2 \pi \sigma_q \sigma_p} e^{-\frac{(q-\mu_q)^2}{2\sigma_q^2}} e^{-\frac{(p-\mu_p)^2}{2\sigma_p^2}} \\
I_0 = &\ln (2\pi\sigma_q\sigma_p) + 1
\end{align*}
We find that the distribution that minimizes the spread is the product of two independent Gaussians. Recall that in quantum mechanics, the Gaussian wave packet is the state that minimizes uncertainty: this is the classical analogue.

As the distribution evolves, the entropy is conserved and therefore the product $\sigma_q^2 \sigma_p^2$ can never be less than the one given by the Gaussian distribution of the same entropy. We have:
\begin{align*}
\sigma_q\sigma_p \geq \exp (I_0 - 1) / 2 \pi 
\end{align*}
This is strikingly similar to the Heisenberg uncertainty principle, except that the value depends on the initial entropy $I_0$.

Suppose, though, that while the evolution of the distribution of our electron as a whole is deterministc and reversible, each fragment is undergoing non-deterministic motion. Suppose $\rho$ is a distribution at equilibrium of such motion. While each fragment moves randomly, the overall distribution $\rho$ stays the same: the non-deterministic motion is effectively just reshuffling the fragments within the same distribution. The value of the entropy of the distribution, and therefore its spread, is set by the non-deterministic motion of the fragments: the stronger it is, the greater the entropy.

Suppose now our classical electron undergoes quasi-static evolution. At each instant we can assume $\rho$ to be at equilibrium. At each instant the fragments are reshuffled. At each instant the entropy of the distribution is the same one associated with the random motion. The overall evolution is still Hamiltonian, but the trajectories predicted by the framework are now just the average motion of the fragments. 

If we assumed that any classical electron were affected by the same source of randomness, all fragments would exhibit a random walk of the same magnitude as defined by the information entropy. The spread of all distributions for the material would satisfy the same inequality. This gives us the classical uncertainty principle: $\sigma_q\sigma_p \geq \exp (I_\varnothing - 1) / 2 \pi $ where $I_\varnothing$ is the entropy associated with the source.

Such an assumption is not far fetched. The source of entropy could be identified in the environment (i.e. a minimum interaction that cannot be avoided as perfect isolation cannot be achieved) or the internal self interaction of the fragments (i.e. the evolution of the unstated part). Whatever it may be, if it is characterized by an entropy comparable to the one of the distribution (i.e. if we are trying to localize the electron better than the source can allow) the classical assumption breaks down as the random motion of each fragment is not negligible.

The generalization to multiple d.o.f. is outside of the scope of this work. Note, though, that this result is related to the Gromov non-squeezing theorem and the symplectic camel, which are already generalized to multiple d.o.f. It is the link with information theory that gives us further physical insight and is of particular interest given the strong connection between thermodynamics and Shannon's entropy.

This discussion shows, once again, how we can already find in classical mechanics some of the ingredients that are typically assumed to be characteristics of quantum mechanics. It has its own flavor of anti-particles, $CPT$ theorem and uncertainty principle. These should really be considered general features of deterministic and reversible evolution of distributions (i.e. Hamiltonian mechanics). Quantum mechanics just inherits these traits.

The discussion also gives us an intuitive physical picture that we will develop further in the next section: a quantum particle is a well determined distribution of not well determined fragments. The distribution as a whole can be studied and evolves deterministically while the fragments do not.

\section{Irreducible systems}
\label{sec:irreducibility}

In this section we explore a different relationship between the whole and its parts. We will consider an irreducible material, one where the state and the evolution of the whole does not tell us anything about the state and the evolution of the parts. The material is still conceptually infinitesimally decomposable, made of infinitesimal parts which we call fragments. But each fragment cannot be studied independently and therefore cannot be assigned a state: the internal dynamics is undetermined. We'll find that the evolution of an irreducible material is suitably described by the standard framework of quantum mechanics.

\subsection{Quantized material}

In section \ref{sec:reducibility} we introduced the idea of a classical material, one that is homogeneous and infinitesimally reducible, which represented a way to relate the state of the whole system to the state of its parts. In this section we explore an alternative case.

We will call a \emph{quantized material} a homogeneous material that is finitely reducible. That is, as we decompose the system we can only assign a state to finite amounts of material, which we call \emph{quantum particles}. As the material is homogeneous, all its particles carry the same amount of material and are described by the same state space. Each quantum particle is \textit{irreducible}: its state does not tell us anything about the state of its parts. While this idea should be familiar enough, it is best to iron out some details to avoid possible misconceptions.

The first possible source of confusion is between \textit{divisibility}, the idea that an object can be divided into parts, and \textit{reducibility}, the idea that the state of the whole is equivalent to the state of the parts. More precisely, a system is divisible if there is a time evolution $\mathcal{T}_{\Delta t}: \mathcal{S} \rightarrow \mathcal{S}_1 \times \mathcal{S}_2$ that takes the state of a system and returns the state of two separate isolated systems. It is reducible if $\mathcal{S} = \mathcal{S}_1 \times \mathcal{S}_2$, the state of the system is the state of the two subsystems. Most of the time these two concepts overlap, but they are separate as we can see in the following examples.

Consider a planarian flatworm. If we cut it into multiple pieces, each part will grow into a new worm. Yet, the state of each worm is not the state of multiple worms put together. The planarian is divisible into multiple worms but it is not reducible to multiple worms. Consider a magnet. Its state is described, to a first approximation, by the position of the north and south poles. Yet, we cannot divide the magnet into a north and south pole. The magnet is reducible to the poles but is not divisible into the poles.

Consider a muon. It will decay into an electron and two neutrinos. But the state of a muon is not the state of an electron and two neutrinos. The muon is divisible into the three particles but it is not reducible to them. Consider a proton. Its state is given by the state of its three valence quarks and the gluon field. Yet, we cannot divide the proton into the three isolated quarks. The proton is reducible to quarks, but not divisible into quarks.

The fundamental property of a quantum particle, as introduced here, is not indivisibility: it's irreducibility. It is the state that cannot be further partitioned, not the material itself.

Even more subtle is the difference between divisibility and \textit{decomposability}, the idea that the object has parts. That is, a system is decomposable if there is a law $+ : \mathcal{S} \times \mathcal{S} \rightarrow \mathcal{S}$ to combine two states to get a third. While irreducibility says nothing about whether a quantum particle can be divided, it implicitly assumes that it can be decomposed. Irreducibility describes the relationship between the state of the whole and the state of the parts, therefore the existence of parts is implicit in the definition. But it also tells us that we are limited in what description we can give to these parts: we cannot isolate them as we want and act on them independently because if we were able to do that we could assign them an independent state.

Therefore, while we must be able to talk about parts of a quantum particle, not all questions will make sense. Can we ask what fraction of a quantum particle is in a particular region of space? Yes, since the particle can be spatially decomposed some amount of material may be inside the volume. Can we ask where the fragments in that same fraction will be at a later time? No, since the particle is irreducible we cannot describe the evolution of a part.

Note that if the material is infinitesimally reducible (i.e. the classical case), reducibility/decomposability/divisibility are all possible. This is why one may not dwell on their difference and use the word ``part" interchangeably in the three cases. In the quantum context, instead, we assume irreducibility, so it is better to use more precise language to avoid confusion. Therefore we will call \emph{fragments} the infinitesimal parts of the material that cannot be described independently. We will call \emph{components} the parts that the material can be decomposed into. A component will have a state associated with it, and therefore will always be a collection of fragments.

The proper way to think of a quantized material is therefore the following: a continuum of infinitesimal fragments clumped together to form quantum particles, the smallest amount of material to which we can assign a state. With this picture in mind, we can see, even before developing the math, how some features typical of quantum behaviors emerge.

The first feature is the importance of statistics. Consider a fragment of a quantum particle in an initial state. As we cannot track its motion, it can correspond to any fragment of the particle in its final state. That is, the motion of a fragment of a particle is a random walk within the evolution of a particle.

The second feature is non-locality. Suppose we have a quantum particle distributed in space. Suppose we exert a force only on a subregion of its distribution. As the particle is irreducible, we cannot interact with only part of it: we must also affect the part where no force is exerted. We cannot simply interact with the part at a specific region, as this would allow us to assign a state to that part.

Wave-particle duality is also accounted for. The quantum particle behaves like a wave in the sense that its parts form a distribution in space. Yet, it behaves as a single unit, as the state is irreducible and cannot be split.

The most striking feature, though, is that we can predict the functional form for particle decay over time. Suppose a quantum particle spontaneously decays into two or more particles. Such a process cannot be deterministic as the number of d.o.f. increases. As the decay is spontaneous, it is not triggered by the environment (i.e. external state). It must be triggered by the internal dynamics of the fragments. As we assumed the internal dynamics is unknown, we cannot learn anything about it by looking at the product of the decay. In particular, all states have to have equal chance of decay or we could tell that the internal dynamics is more stable for some states. The probability of decay is constant for all states at all times. The probability of survival follows the exponential distribution.

As we see qualitatively how the features typical of quantum mechanics naturally emerge from this picture, we are ready to make this more quantitatively and mathematically precise.

\subsection{Irreducible material and complex vector spaces}

It is time to turn these concepts into a precise mathematical formulation as we did for classical mechanics. We want to formalize the state space for an \emph{irreducible material} as a quantum particle will be a unit amount of such material.

As for the classical material, the irreducible material is thought to be decomposable: each amount can be seen as the sum of different components (e.g. spatially distributed). Therefore \ref{prop:abelian_group} applies and the state space $\Psi$ of an irreducible material is an abelian group.

Yet, the description of each component cannot be more accurate than the description of the whole. To capture this idea precisely we need to characterize the aspects of the unstated part that influence the state definition. How do we do that? We can look at the classical derivation for guidance. In that case, we looked at the set of transformations that increased/decreased the amount of material. The idea is that we can think of these transformations as characterizing what the state can and cannot say about the infinitesimal parts.

Suppose we have a state $\mathcal{c}_1$ for an amount $a_m$ of classical material. Suppose we assigned a numeric identifier to each particle of the material. This identifier would be just a dummy label we assign, not a physical quantity. As the classical material is homogeneous, permutations between classical particles within the same state have no effect. Therefore the ordering given by the identifier does not matter. Identifying a classical particle within a composite state is equivalent to picking a number from a random variable $A$ uniformly distributed over an interval of size $a_m$.

Suppose we have another state $\mathcal{c}_2 = \tau(a) \mathcal{c}_1$ where the only difference is an increase/decrease of the amount of material by a factor $a$. The random variable $B$ that identifies a particle of $\mathcal{c}_2$ is now a uniform distribution $B=aA$ where the range is the new amount of material: there are now proportionally more/less labels to assign. This is also the only possible transformation that can link the two uniform random variables $A$ and $B$. Therefore increase/decrease of material is the only operation that is captured by the state among those that affect the labeling of the fragments. In other words: the classical state does not care about the permutation of the fragments, but it cares how many fragments are there.

For the irreducible material we follow the same strategy and study the transformations that affect the labeling of what the irreducible state cannot tell us: the configuration of each fragment. Similarly, the irreducible state will not care about permutations of fragment configurations, but it will care how many fragment configurations are there.

Suppose we have a state $\psi_1$ for an amount $a_m$ of irreducible material characterized by one d.o.f. The configurations of the infinitesimal fragments will be identified by two values, as a d.o.f. is two-dimensional. Each fragment configuration is identified by a pair of independent random variables $(A, B)$, which are, as before, just dummy variables we use to name the fragment configurations. As the material is homogeneous, the distribution is uniform or we could tell something about the fragment configurations. We choose A and B to be uniform distributions of interval $\sqrt{a_m}$: identifying a fragment configuration is like picking a point at random over a square of area $a_m$, the amount of material.

Suppose we have another state $\psi_2$, where the labeling for the fragment configurations has changed. We can label the fragment configurations of $\psi_2$ in the same way, with two independent uniform distributions $(C, D)$. But since the labeling is the only thing that has changed, we must be able to relate the new distribution over $(C,D)$ to the old distribution over $(A,B)$. The relationship between the two pairs of variables must be linear since all distributions are uniform. $C=aA+bB$. To preserve the square distribution we must then have $D= -bA + aB$. We can write $\psi_2 = \tau(a,b) \psi_1$ as $a$ and $b$ are the only parameters of the transformation. Or more compactly, $\psi_2 = \tau(c) \psi_1$ where $c=a+\imath b$.

The transformation $\tau(c)$ increases/decreases the area by the modulus $|c|^2=a^2+b^2$, which physically corresponds to the increase/decrease of amount of material. It also rotates the square by the phase $\theta_c \equiv arg(c)$, which corresponds to a change in correlation: if $\theta_c = 0$, $C = |c| A$ are clearly correlated while if $\theta_c = \pi$, $C = - |c| A$ are clearly anti-correlated.

This line of reasoning is fairly abstract and it may seem to have no bearing with quantum states. Therefore, jumping a bit ahead, let's see if we can already find quantitative consequences of this model and see whether they are consistent with our knowledge of quantum mechanics.

If the amount of material is connected to the area of two uniform distributions, it should combine like the variance of a random variable. And in fact it does. Suppose $X$ and $Y$ are two random variables,
\begin{equation}\label{eqn:correlation}
\sigma^2_{X+Y} = \sigma^2_{X} + \sigma^2_{Y} + 2 \, \sigma_{X} \sigma_{Y} \rho_{X,Y}
\end{equation}
where $\sigma^2$ is the variance, $\sigma$ the standard deviation, $\rho_{X,Y}= COV(X,Y)/\sigma_{X} \sigma_{Y}$ is the Pearson correlation coefficient~\cite{Grimmet}. As $-1\leq\rho_{X,Y}\leq+1$, the total variance can be greater or smaller than the sum of the variance of the components. Now suppose $\psi$ is the state for a unit amount of irreducible material (i.e. a quantum particle). We can write $c \psi = (c_1 + c_2) \psi = c_1 \psi + c_2 \psi$. We have:
\begin{equation}\label{eqn:correlation_state}
|c_1+c_2|^2=|c_1|^2 + |c_2|^2 + 2 |c_1||c_2|\cos(\theta_{c_2} - \theta_{c_1})
\end{equation}
The two expressions are formally identical and it shouldn't come as a surprise. The more material we have, the more labels for the fragment configurations to choose from, the bigger the spread of the random variables that assign the labels. $-1\leq\cos(\theta_{c_2} - \theta_{c_1})\leq+1$ takes the place of the Pearson correlation coefficient. Quantum interference can then be understood in terms of the combination of correlated/anti-correlated distributions.

If the phase determines the correlation between two states, then it should only be meaningful when comparing two states. And in fact only differences in phases are physically distinguishable in quantum mechanics. A more meaningful name for the phase, then, would be \emph{correlation angle}.

If quantum states are uniform distributions over fragment configurations, we should see a hint of that when computing distributions over observables. And in fact we do. Suppose we have $Q\psi = q \psi$, an eigenstate of some observable. Then it is also an eigenstate of the infinitesimal transformation of the conjugate variable $1+\frac{Qdp}{\imath\hbar}$. Since it has a symmetry over $p$, the distribution has to be equal at each value of $p$: the distribution is uniform.

If quantum particle states are really just unit amounts of material uniformly distributed, we should be able to calculate the Shannon entropy associated with them. Using our dummy variables, $I(\psi) = - \int \rho \ln (\rho) dA dB = - \int 1 ln (1) dA dB = 0$. For a quantum mixed state $\tilde{\rho} = \sum \tilde{\rho}_i |\psi_i\rangle \langle \psi_i |$, we have $I(\tilde{\rho}) = - \sum \tilde{\rho}_i \ln \tilde{\rho}_i - \tilde{\rho}_i I(\psi_i) = - \sum \tilde{\rho}_i \ln \tilde{\rho}_i = -tr(\tilde{\rho} \ln \tilde{\rho})$. And in fact this is the Von Neumann entropy~\cite{Von Neumann} used in quantum information, that implicitly assumes all states provide the same amount of entropy (i.e. zero).

If two quantum particle states $\psi_1$ and $\psi_2$ are really just two states at equilibrium with the same entropy, we should always be able to find a deterministic and reversible evolution to connect the two. And in fact we can. Take the plane identified by $\psi_1$ and $\psi_2$. Find the infinitesimal rotation $1 + \frac{Od\alpha}{\imath\hbar}$ along that plane. $\psi_2 = e^{\frac{O\alpha}{\imath\hbar}} \psi_1$ is the unitary evolution of the other. Note that this is \emph{not} always possible in classical mechanics, as two states $\mathcal{c}_1$ and $\mathcal{c}_2$ may have different entropy.

In other words, our simple model is not only consistent with quantum mechanics but it has predictive power: it provides intuition that is well supported by these findings. It is not ``just a mathematical trick."

If we increase the number of independent d.o.f. the group of transformations remains the same. While we have more fragment configurations to label, each d.o.f. is labeled independently. Therefore we have $n$ uniform distributions, each over a square of the same amount of material. This means that, as we combine components, the labels have to combine in the same way along the different d.o.f.: the increase/decrease of material and the change in correlation has to be the same across all d.o.f.

Mathematically, the state space $\Psi$ for an irreducible material is a complex vector space. It is so because it is the state space for a decomposable material (i.e. an abelian group) where we can increase/decrease the amount and change the correlation of the internal description (i.e. a set of transformations conveniently parametrized by a complex number).

\begin{prop}\label{prop:complex_vector_space}
	The state space $\Psi$ for an irreducible material is a vector space over $\mathbb{C}$.
\end{prop}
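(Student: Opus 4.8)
The plan is to follow closely the structure of the justification given for \ref{prop:real_vector_space}, replacing the real scaling field by a complex one. First I would establish that $\Psi$ is an additive abelian group. Since an irreducible material is decomposable, it carries a commutative and associative law of composition with an empty state acting as identity, and extending the space to include state differences supplies inverses; hence \ref{prop:abelian_group} applies directly and gives $\Psi$ the structure of an additive abelian group.

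Second, I would show that the set of transformations $T = \{\tau(c) \mid c \in \mathbb{C}\}$, where each $\tau(c)$ rescales the amount of material by $|c|^2$ and shifts the correlation angle by $\arg(c)$, is a field isomorphic to $\mathbb{C}$. Mirroring the classical argument, I would equip $T$ with an addition and a multiplication through
\begin{align*}
\tau(c_1) + \tau(c_2) &= \tau(c_1 + c_2), \\
\tau(c_1) * \tau(c_2) &= \tau(c_1 \, c_2),
\end{align*}
so that composing two such transformations corresponds to multiplying their complex parameters, and the combined rescaling-plus-rotation of two components matches the product rule already exhibited in \eqref{eqn:correlation_state}. The map $c \mapsto \tau(c)$ is then a bijection intertwining these operations with the field operations of $\mathbb{C}$, making $\tau$ a field isomorphism $T \cong \mathbb{C}$.

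Third, I would verify the module axioms. Each $\tau(c)$ is a map $\tau : \Psi \to \Psi$, and the two distributive laws---$(\tau_1 + \tau_2)\psi = \tau_1 \psi + \tau_2 \psi$ and $\tau(\psi_1 + \psi_2) = \tau \psi_1 + \tau \psi_2$---follow from the physical facts that compounding two independent rescalings/rotations equals the single combined one, and that rescaling the whole equals rescaling each decomposed component. Together with compatibility of the multiplication, this makes $\Psi$ a module over the field $T \cong \mathbb{C}$, i.e.\ a complex vector space.

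The main obstacle will be the second step: justifying that $\mathbb{C}$, rather than some larger or smaller structure, is exactly the right scalar field. This rests on the earlier claim that the only relabelings of fragment configurations preserving the homogeneous (uniform, square) distribution over a single d.o.f.\ are the linear maps $C = aA + bB$, $D = -bA + aB$, whose parameters combine precisely as complex multiplication. I would need to argue that this exhausts the admissible transformations---that requiring invariance of the uniform distribution forces exactly the rotation-plus-scaling form and hence a two-real-parameter family organized as $\mathbb{C}$---and that these operations genuinely satisfy the field axioms (in particular the existence of a commutative multiplicative inverse for every $c \neq 0$), so that $\Psi$ is a vector space over a field and not merely a module over a ring.
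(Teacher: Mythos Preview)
Your proposal is correct and matches the paper's justification almost step for step: abelian group via \ref{prop:abelian_group}, identification of $T\cong\mathbb{C}$ through the relabeling argument with $C=aA+bB$, $D=-bA+aB$, then the module axioms exactly as in \ref{prop:real_vector_space}. The only piece you have not written out is the paper's short closing argument that for multiple independent d.o.f.\ the same complex parameter $c$ must act on every d.o.f.\ (since the material amount ratio $|c_i|$ and, via additivity, $\arg(c_i)$ must coincide across d.o.f.), so $T$ remains $\mathbb{C}$ rather than $\mathbb{C}^n$; you should add that.
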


\begin{justification}
	We claim $\Psi$ is an abelian group. $\Psi$ is the state space for a decomposable homogeneous material and is therefore an abelian group  by \ref{prop:abelian_group}.
	
	We claim the set of physically distinguishable transformations $T$ that relabel the fragment configurations is isomorphic to $\mathbb{C}$. Consider an amount $a_m \in \mathbb{R}$ of irreducible material characterized by one d.o.f. The configuration of a fragment of the material is characterized by two random variables $A, B: \Omega \rightarrow \mathbb{R}$ where $\Omega$ is the sample space for the configuration of the fragments. As the material is homogeneous, $A$ and $B$ can be chosen such that the distribution $\rho$ of material over such variables is uniform over a range of $\sqrt{a_m}$ for both variables. We have $a_m=\int \rho dA \wedge dB = \int dA \wedge dB$. Consider a second amount $\hat{a}_m$ of irreducible material where we changed the labeling of the fragment configurations. Its fragment configurations can also be characterized by a uniform distribution over two variables $C, D: \Omega \rightarrow \mathbb{R}$ over a range of $\sqrt{\hat{a}_m}$. As the two amounts of material differ just by the labeling of fragment configuration, $(C,D)$ is a function of $(A,B)$. We have $\hat{a}_m = \int dC \wedge dD = \int |J| dA \wedge dB$ where $|J|$ is the Jacobian determinant. As both distributions are uniform, $|J|$ has to be constant. The transformation has to be linear: $C=aA + bB$ and $D=-bA + aB$. As the change in amount of material, the only physically distinguishable quantity associated with the fragment configurations, is fully specified by $a$ and $b$, no other transformation is physically relevant for the state. Let $\tau: \mathbb{C} \rightarrow T$ represent such a transformation parametrized by $c=a+\imath b$. $|c|^2 = |J| = (a^2 + b^2) = \hat{a}_m / a_m$ is the ratio of the two amounts of materials and is physically distinguishable. Define on $T$ an addition $+: T \times T \rightarrow T$ and a multiplication $*: T \times T \rightarrow T$ such that $\tau(z_1) + \tau(z_2) = \tau(z_1+z_2)$ and $\tau(z_1) * \tau(z_2) = \tau(z_1*z_2)$, $z_1,z_2 \in \mathbb{C}$, the sum and product of the transformation is equal to the sum and product of their respective factors. The phase $arg(c)$ under addition is physically distinguishable as it can affect the modulus of the result. $\tau$ is an isomorphism between $T$ and $\mathbb{C}$ as fields. Assume now that the material is characterized by multiple independent d.o.f. The transformation on each d.o.f. is defined independently by $c_i \in \mathbb{C}$. As $\hat{a}_m / a_m$ is the same under all d.o.f. $|c_i|=|c_j|$. The relationship has to always hold under addition as well, therefore $arg(c_i) = arg(c_j)$. $T\cong \mathbb{C}$ also for multiple d.o.f.

	We claim $\Psi$ is a vector space over $\mathbb{C}$. The abelian group $\Psi$ can be extended with the operations defined by $T$ as was done in \ref{prop:real_vector_space} for the infinitesimally reducible case.	
\end{justification}

\subsection{Complex valued distributions}

In the previous sections we saw how a finite amount of classical material was described by a distribution over the state variables of infinitesimal parts. For an irreducible material, we cannot assign states to the fragments, therefore we cannot properly talk about state variables. Therefore we call \emph{fragment variables} the quantities that we use to label the fragment configurations, and it will be over these variables that we will define the distribution of irreducible material.

Naturally, we cannot give a complete description for the configuration of a fragment, therefore we cannot give a joint distribution for all fragment variables at the same time.\footnote{This recovers the notion of complementarity introduced by Bohr~\cite{Bohr}.} We can, though, find a maximal set of fragment variables: one that can be used to provide the best description possible. While there isn't a unique maximal set, we can show that one such set is provided by the fragment variables $q^i : \mathcal{Q} \rightarrow \mathbb{R}$ that define the units required to describe the system.

In fact, we cannot add a new d.o.f. (as this would mean changing $\mathcal{Q}$) and we cannot add a new fragment variable in an existing d.o.f. (as this would fully specify said d.o.f.). Therefore the set of fragment variables $q^i$ already provides a maximal set of fragment variables. This means that a state $\psi$ must tell us for each point in $\mathcal{Q}$ the amount and correlation angle of the material. That is: the state will be identified by $\psi : \mathcal{Q} \rightarrow \mathbb{C}$, the wave function over $\mathcal{Q}$.

The square modulus at each point will give us the density of the material. As in the classical case, such density must be integrable to give us the total amount. Therefore the function $\psi(q) \in L^2(\mathcal{Q}, \mu)$ is square integrable. The norm $|\psi|^2 = \int_{\mathcal{Q}} |\psi(q)|^2 dq$ will give us the amount of material associated with the state.

As the material is homogeneous, all its quantum particles are made of the same amount. We can set, by convention, the amount of material for a particle to be unitary, therefore quantum particle states are the subset of $\Psi$ such that $|\psi|^2 = 1$. As we have seen, changing the phase by a constant only affects the unstated part. Therefore normalized distributions that differ by a total phase are not physically distinguishable and represent the same quantum particle state.

\begin{prop}\label{prop:wavefuntion}
	The state space $\Psi$ for an irreducible material is isomorphic to a subspace of the space of Lebesgue square integrable continuous functions. That is $\Psi \cong \Psi(\mathcal{Q}, dq^i) \subseteq C(\mathcal{Q}, \mathbb{C}) \cap L^2(\mathcal{Q}, \mu, \mathbb{C})$ under the isomorphism $\Upsilon_q : \Psi \leftrightarrow \Psi(\mathcal{Q}, dq^i)$ where $\Psi(\mathcal{Q}, dq^i)$ is the space of wave functions over $q^i$. The state space for a quantum particle is isomorphic to the projective space $\mathsf{P}(\Psi)$.
\end{prop}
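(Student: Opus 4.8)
The plan is to follow the architecture already used for the classical distribution in \ref{prop:differentiable_manifold} and \ref{prop:integration}, but starting from the complex vector space structure of $\Psi$ established in \ref{prop:complex_vector_space} instead of the real one. First I would make the map $\Upsilon_q$ precise: to each state $\psi \in \Psi$ it assigns the function $\mathcal{Q} \to \mathbb{C}$ that returns, at every point $q \in \mathcal{Q}$, the amount of material together with its correlation angle. The discussion preceding the statement already argues that the fragment variables $q^i$ form a maximal set, so this assignment is well defined and exhausts the describable content of the state, justifying that $\Psi(\mathcal{Q}, dq^i)$ is the correct target.

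Next I would verify that $\Upsilon_q$ is a homomorphism of complex vector spaces. Composition of states corresponds to pointwise addition of the associated functions, exactly as in \ref{prop:real_vector_space}, while the relabeling transformations $\tau(c)$ of \ref{prop:complex_vector_space} act as pointwise multiplication by $c \in \mathbb{C}$. Injectivity would then follow from physical distinguishability, reusing the argument in \ref{prop:real_vector_space}: two distinct states must disagree on some physical outcome, hence produce distinct wave functions. Since $\Psi(\mathcal{Q}, dq^i)$ is defined as the space of wave functions over $q^i$, surjectivity onto it is immediate, and $\Upsilon_q$ is the claimed isomorphism.

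I would then locate the image inside $C(\mathcal{Q}, \mathbb{C}) \cap L^2(\mathcal{Q}, \mu, \mathbb{C})$. Continuity is immediate from \ref{prop:continuity}, since $\psi$ links two sets of physically distinguishable objects. For integrability I would invoke the identification, made in the preceding discussion, of $|\psi(q)|^2$ as the density of material; requiring the total amount $\int_{\mathcal{Q}} |\psi|^2 \, d\mu$ to be finite places $\psi$ in $L^2$ rather than $L^1$. This is the step I expect to be the main obstacle: unlike the classical case, the physically measured density is the \emph{square} modulus of the amplitude, so one must argue carefully why the natural norm is the $L^2$ norm and why this is consistent with the combination rule \ref{eqn:correlation_state} for amounts of material under composition.

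Finally, for the quantum particle state space I would impose homogeneity: all particles carry the same, conventionally unit, amount of material, giving the normalization $|\psi|^2 = 1$. Because a global correlation angle is meaningful only relative to another state, states differing by an overall phase are physically indistinguishable and must be identified. The set of normalized wave functions modulo global phase is precisely the projective space $\mathsf{P}(\Psi)$, which completes the claim.
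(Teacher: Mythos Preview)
Your proposal is correct and follows essentially the same route as the paper's justification: maximality of the $q^i$, definition of $\Upsilon_q$, homomorphism and injectivity via \ref{prop:real_vector_space}-style arguments, continuity from \ref{prop:continuity}, square integrability from finiteness of the total amount, and the projective identification from normalization plus global phase. The only point worth flagging is that the step you single out as the main obstacle is dispatched directly in the paper: since $\rho_\psi(q^i)=|\psi(q^i)|^2$ is the physical density, the arguments of \ref{prop:integration} place $\rho_\psi$ in $L^1(\mathcal{Q},\mu,\mathbb{R})$, which is exactly the statement $\psi\in L^2(\mathcal{Q},\mu,\mathbb{C})$; no separate compatibility check with \eqref{eqn:correlation_state} is needed.
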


\begin{justification}
	We claim the manifold $\mathcal{Q}$ that defines the units to describe a fragment of an irreducible material also provides a maximal set of fragment variables: no other fragment variable can be added. Let $\mathcal{Q}$ be the manifold that defines the units. Then for each state $\psi \in \Psi$ there exists a distribution $\rho_\psi(q^i)$ that represents the quantity of material at each point. Suppose this distribution could be further refined. Then another independent variable $k$ would exist such that for each state $\psi \in \Psi$ we would have a distribution $\rho_\psi(q^i, k)$. $k$ cannot be part of a new d.o.f. as it would use a unit defined independently from $\mathcal{Q}$. This is a contradiction as $\mathcal{Q}$ defines all units. $k$ cannot lie in the same d.o.f. as of one of the $q^i$ or it would provide a complete description for the fragment. This is a contradiction as the material is irreducible. $\mathcal{Q}$ provides a maximal set of fragment variables.
	
	We claim $\Psi \cong \Psi(\mathcal{Q}, dp^i) \subseteq C(\mathcal{Q}, \mathbb{C}) \cap L^2(\mathcal{Q}, \mu, \mathbb{C})$ as a vector space. The maximal physical description of a fragment configuration is given by a point in $\mathcal{Q}$ parametrized by $q^i$ and by a transformation $\tau \in T$ parametrized by $c \in \mathbb{C}$. For each $\psi \in \Psi$ there must exist a wave function $\psi(q^i) : \mathcal{Q} \rightarrow \mathbb{C}$ that provides the maximal physical description for the state. $\rho_\psi(q^i) = |\psi(q^i)|^2$ as the square modulus of $\psi(q^i)$ is the increase/decrease from the unit amount of material. $\rho_\psi(q^i) \in C(\mathcal{Q}, \mathbb{R})$ by \ref{prop:continuity}. $\rho_\psi(q^i) \in C^1(\mathcal{Q}, \mathbb{R})$ by the same arguments in the justification for \ref{prop:differentiable_manifold}. Therefore $\psi(q^i) \in C(\mathcal{Q}, \mathbb{C})$. $\rho_\psi(q^i) \in L^1(\mathcal{Q}, \mu, \mathbb{R})$ by the same arguments in the justification for \ref{prop:integration}. Therefore $\psi(q^i) \in L^2(\mathcal{Q}, \mu, \mathbb{C})$. Let $\Upsilon_{q^i} : \Psi \rightarrow C(\mathcal{Q}, \mathbb{C}) \cap L^2(\mathcal{Q}, \mu, \mathbb{C})$ be the function that given a state returns its wave function. As the wave function captures the maximal description of the system, two different states must represent different distributions. $\forall \psi_1, \psi_2 \in \Psi, \psi_1 \neq \psi_2 \implies \Upsilon_{q^i}(\psi_1) \neq \Upsilon_{q^i}(\psi_2)$. $\Upsilon_{q^i}$ is injective. $\Upsilon_{q^i}$ is a bijection between $\Psi$ and $\Psi(\mathcal{Q}, dq^i)\equiv\Upsilon_{q^i}(\Psi)$. Let $\psi=\psi_1+\psi_2$, then $\Upsilon_{q^i}(\psi)=\Upsilon_{q^i}(\psi_1)+\Upsilon_{q^i}(\psi_2)$ as combining the distributions means combining the statistical description of the fragment configurations. Let $\psi_1=\tau(c)\psi_2$, then $\Upsilon_{q^i}(\psi_1)=c \Upsilon_{q^i}(\psi_2)$ as $\tau(c)$ transforms the fragment configuration description by the factor $c$. $\Upsilon_{q^i}$ is a homomorphism between vector spaces. $\Psi$ is isomorphic to $\Psi(\mathcal{Q}, dq^i) \subseteq C(\mathcal{Q}, \mathbb{C}) \cap L^2(\mathcal{Q}, \mu, \mathbb{C})$.
	
	We claim the state space for a quantum particle is isomorphic to the projective space $\mathsf{P}(\Psi)$. Let $\psi \in \Psi$ be the state of a quantum particle. $\tau(e^{\imath \theta}) \psi$ with $\theta \in \mathbb{R}$ is not physically distinguishable from $\psi$ as phases are only physically distinguishable through addition. $\tau(\sqrt{\rho}) \psi$ with $\sqrt{\rho} \in \mathbb{R}^+ \backslash \{1\}$ cannot represent a quantum particle as it has a different amount of material. There exists only one distinguishable quantum particle state in the equivalence class $\psi \sim \tau(c) \psi$ with $c \in \mathbb{C}$. The state with no amount of material does not correspond to any particle state. The quotient space $\mathsf{P}(\Psi) \equiv (\Psi \backslash \{0\}) / \sim$ is isomorphic to the state space for quantum particles.
\end{justification}

\subsection{Inner product and operators}

In both the reducible and irreducible case, the material is distributed over $\mathcal{Q}$. The $q^i$ do not label the states of a finite amount of material, but the configurations for the infinitesimal parts. For a classical material, the infinitesimal parts can be given a state, have their own topological space and a set of local state variables $(q^i, p_i)$ to identify them. We can use the richness of differential geometry. For an irreducible material, the infinitesimal parts cannot be given a state, cannot be fully identified by a set of variables. We have to use other mathematical tools, typically vector spaces. As we will often use linear combinations of states, we will write, for brevity, $c\psi$ instead of $\tau(c)\psi$. That is: each complex number still represents a state transformation.

For the tool set to be on par with classical mechanics, such that we can write the equations of motion, we need to address three requirements. We need to be able to compare two states and ask what fraction of them share the same configuration: this will give us an inner product. We need to be able to express distributions and expectations for different quantities: this will give us self-adjoint linear operators. We then need to identify the self-adjoint linear operators for the quantities $k_i$ and $p_i$ and to be able to express the distribution of material in terms of those. Let's start with the first.

The idea is that, given two states $\psi_1$ and $\psi_2$, we should be able to tell how much of $\psi_1$ is prepared in the same way as $\psi_2$. That is $\psi_1 = \alpha \psi_2 + \psi_3$ is the combination of the component that is prepared like $\psi_2$ (e.g. same distributions in position and momentum) and the component $\psi_3$ that is prepared completely differently. Let $\mathsf{P}_{\psi_2} : \Psi \rightarrow \Psi$ be the operator that gives us the component $\mathsf{P}_{\psi_2} (\psi_1) \equiv \alpha \psi_2$ prepared like $\psi_2$. This operator is linear in $\psi_1$: the component of the combined system that is prepared like $\psi_2$ is the sum of the subcomponents that are prepared like $\psi_2$. It also gives us the same result if applied once or twice: once we find the component that is prepared like $\psi_2$, it remains the same. $\mathsf{P}_{\psi_2}$ is therefore a projection: a linear operator such that $\mathsf{P}_{\psi_2}(\mathsf{P}_{\psi_2} (\psi_1))= \mathsf{P}_{\psi_2} (\psi_1)$.

If $\psi_1$ is already prepared like $\psi_2$, then its projection on $\psi_2$ will be equal to $\psi_1$. Conversely, if the projection of $\psi_1$ is equal to itself, then it means that the state is already prepared like $\psi_2$. That is: $\mathsf{P}_{\psi_2} (\psi_1) = \psi_1 \Leftrightarrow \psi_1 = \alpha \psi_2$. Because of this, we can define a non-degenerate complex product $\langle \cdot |\cdot \rangle : \Psi \times \Psi \rightarrow \mathbb{C}$ such that $\mathsf{P}_{\psi_2} (\psi_1) \equiv \frac{\langle \psi_2|\psi_1 \rangle}{|\psi_2|^2} \psi_2$.

The projections have other useful properties. Because the material is irreducible, if some fraction of $\psi_1$ is prepared like $\psi_2$, then the same fraction of $\psi_2$ is prepared like $\psi_1$. If not, studying the fragments of one state would give us a better (or worse) description for the fragments of the other state. Therefore we have $\frac{|\mathsf{P}_{\psi_2} (\psi_1)|^2}{|\psi_1|^2}=\frac{|\mathsf{P}_{\psi_1} (\psi_2)|^2}{|\psi_2|^2}$. This tells us that $|\langle \psi_2|\psi_1 \rangle|^2 = |\langle \psi_1|\psi_2 \rangle|^2$.

Lastly, if we take the projection of $\psi_1$ along $\psi_2$ and we project it again on $\psi_1$, we end up with a component of $\psi_1$. As this is a part of $\psi_1$ prepared as $\psi_1$, its state is just $\psi_1$ rescaled, with no change in correlation. That is: $\mathsf{P}_{\psi_1} (\mathsf{P}_{\psi_2} (\psi_1)) = a \psi_1$ with $a \in \mathbb{R}^+$. This tells us that $\langle \psi_2|\psi_1 \rangle = \langle \psi_1|\psi_2 \rangle^\dagger$.

Because of the properties of the norm and the projections, the product is positive definite, linear and conjugate symmetric. In other words, the state space $\Psi$ for an irreducible material is an inner product space derived by the norm and projections.
 
\begin{prop}\label{prop:inner_product}
	The state space $\Psi$ for an irreducible material is an inner product with $\langle \cdot |\cdot \rangle : \Psi \times \Psi \rightarrow \mathbb{C}$ such that $\langle \psi_1 | \psi_2 \rangle = \int_\mathcal{Q} \psi_1^\dagger (q^i) \psi_2(q^i) d^nq$.
\end{prop}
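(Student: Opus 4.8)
The plan is to verify that the product $\langle\cdot|\cdot\rangle$ implicitly defined through the projection operators $\mathsf{P}_{\psi_2}$ satisfies the three defining axioms of a Hermitian inner product, and then to pin down its explicit integral form by transporting everything to the wave-function representation of \ref{prop:wavefuntion} and invoking the fact that a Hermitian sesquilinear form is uniquely determined by its diagonal. Since \ref{prop:complex_vector_space} already equips $\Psi$ with the structure of a complex vector space, the remaining work is just to check the product axioms and then evaluate the form explicitly.

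First I would establish the three axioms from the properties of the projections collected above. Positive-definiteness is immediate: projecting a state onto itself returns the state, $\mathsf{P}_{\psi}(\psi)=\psi$, so by the defining relation $\mathsf{P}_{\psi_2}(\psi_1)=\frac{\langle\psi_2|\psi_1\rangle}{|\psi_2|^2}\psi_2$ one gets $\langle\psi|\psi\rangle=|\psi|^2$, the amount of material, which is nonnegative and vanishes only for the empty state $0$. Linearity in the ket slot, $\langle\psi|\,\alpha\psi_1+\beta\psi_2\rangle=\alpha\langle\psi|\psi_1\rangle+\beta\langle\psi|\psi_2\rangle$, follows directly from the linearity of $\mathsf{P}_{\psi}$ in its argument. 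Conjugate symmetry, $\langle\psi_2|\psi_1\rangle=\langle\psi_1|\psi_2\rangle^\dagger$, comes from combining the two reciprocity properties: the fragment-symmetry argument gives $|\langle\psi_2|\psi_1\rangle|^2=|\langle\psi_1|\psi_2\rangle|^2$, while the re-projection property $\mathsf{P}_{\psi_1}(\mathsf{P}_{\psi_2}(\psi_1))=a\psi_1$ with $a\in\mathbb{R}^+$ forces the product $\langle\psi_2|\psi_1\rangle\langle\psi_1|\psi_2\rangle$ to be a nonnegative real number; together with the equal moduli this pins the two numbers as genuine conjugates rather than merely equal in magnitude. Conjugate-linearity in the bra slot then follows by combining ket-linearity with conjugate symmetry, completing sesquilinearity.

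Having established that $\langle\cdot|\cdot\rangle$ is a Hermitian sesquilinear form, I would fix its value by computing its diagonal in the wave-function representation. Under the isomorphism $\Upsilon_{q^i}$ of \ref{prop:wavefuntion} a state becomes a function $\psi(q^i)\in L^2(\mathcal{Q},\mu,\mathbb{C})$, and the norm recorded there is $|\psi|^2=\int_{\mathcal{Q}}|\psi(q^i)|^2\,d^nq$; hence $\langle\psi|\psi\rangle=\int_{\mathcal{Q}}|\psi|^2\,d^nq$ for every state. The candidate form $B(\psi_1,\psi_2)\equiv\int_{\mathcal{Q}}\psi_1^\dagger(q^i)\,\psi_2(q^i)\,d^nq$ is manifestly Hermitian sesquilinear in the same convention (conjugate-linear in the bra, linear in the ket) and has exactly this same diagonal, $B(\psi,\psi)=\int_{\mathcal{Q}}|\psi|^2\,d^nq$. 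Because a Hermitian sesquilinear form on a complex vector space is recovered from its associated quadratic form through the polarization identity, two such forms with the same diagonal must coincide, giving $\langle\psi_1|\psi_2\rangle=\int_{\mathcal{Q}}\psi_1^\dagger\psi_2\,d^nq$.

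The step I expect to be the main obstacle is the rigorous justification of conjugate symmetry, that is, showing that the physical reciprocity between two states fixes the \emph{phase} of the product and not merely its modulus. Equal moduli alone leave a phase ambiguity; closing it relies entirely on the positivity statement $a\in\mathbb{R}^+$ in the re-projection relation, which in turn rests on the physical reading that the part of $\psi_1$ prepared like $\psi_2$ and then re-read against $\psi_1$ is a genuine, non-negatively scaled and correlation-preserving sub-amount of $\psi_1$. Care is also needed that the measure $\mu$ inherited from the classical construction of \ref{prop:integration} is precisely the one making $|\psi|^2=\int_{\mathcal{Q}}|\psi|^2\,d^nq$ hold exactly, so that the diagonal fed into the polarization step is literally the $L^2$ quadratic form and the final identification of $\langle\cdot|\cdot\rangle$ with the stated integral is an equality rather than a mere proportionality.
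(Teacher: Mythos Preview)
Your proposal is correct and follows essentially the same route as the paper. The only structural difference is that the paper factors the abstract verification of the inner-product axioms into a separate theorem (Theorem~\ref{thrm:inner_product}) and then cites it, whereas you carry out that verification inline; likewise, the paper's final identification of $\langle\cdot|\cdot\rangle$ with the $L^2$ form is stated tersely (``the inner product associated with the $L^2$ norm is\ldots'') rather than invoking the polarization identity by name, but the underlying argument is the same diagonal-determines-the-form step you spell out.
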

\begin{justification}
	We claim there exists a map $| \cdot | ^2 : \Psi \rightarrow \mathbb{R}$ such that $|\psi|^2 \ge 0 \; \forall \psi \in \Psi$ and $|\psi|^2 = 0 \Leftrightarrow \psi = 0$. Let $| \cdot | ^2 : \Psi \rightarrow \mathbb{R}$ be the map that returns the amount of material for a given state. Let $\psi \in \Psi$ be a state. $|\psi|^2 = \int \rho(q^i) d^nq = \int \psi^\dagger(q^i) \psi(q^i) d^nq \ge 0$. $|\psi|^2 = 0 \Leftrightarrow \psi = 0$ as the only state with no material is the empty state.
	
	We claim $\forall \psi \in \Psi$ there exists a projection $\mathsf{P}_\psi : \Psi \rightarrow \Psi$ with $\psi \in \Psi$. Let $\psi \in \Psi$ be a state. Let $\mathsf{P}_\psi : \Psi \rightarrow \Psi$ be an operator that returns the component of a given state that is prepared as $\psi$. $\mathsf{P}_{\psi}(c_1 \phi_1 + c_2 \phi_2) = c_1 \mathsf{P}_{\psi}(\phi_1) + c_2 \mathsf{P}_{\psi}(\phi_2)$ is linear as the component of the composition prepared like $\psi$ is the composition of the components so prepared. $\mathsf{P}_\psi \circ \mathsf{P}_\psi = \mathsf{P}_\psi$ is a projection because the result of $\mathsf{P}_\psi$  is prepared as $\psi$. 
	
	We claim $\mathsf{P}_{\psi} (\phi) = \phi \Leftrightarrow \phi = c \psi$ for some $c \in \mathbb{C}$. Let $\psi, \phi \in \Psi$ such that $\mathsf{P}_{\psi} (\phi) = \phi$. The whole state $\phi$ is prepared as $\psi$. We can obtain $\phi$ from $\psi$ by relabeling the fragment configurations. $\phi = c \psi$ for some $c \in \mathbb{C}$. Let $\psi, \phi \in \Psi$ such that $\phi = c \psi$ for some $c \in \mathbb{C}$. Then all of $\phi$ is prepared as $\psi$. $\mathsf{P}_{\psi} (\phi) = \phi$. Therefore $\mathsf{P}_{\psi} (\phi) = \phi \Leftrightarrow \phi = c \psi$.
	
	We claim $|\psi|^2|\mathsf{P}_{\psi} (\phi)|^2=|\phi|^2|\mathsf{P}_{\phi} (\psi)|^2$. Let $\psi, \phi \in \Psi$. As the material is irreducible, $\psi$ and $\phi$ cannot be used to describe smaller parts. The fraction of $\phi$ that is prepared as $\psi$ must be equal to the fraction of $\psi$ that is prepared as $\phi$. $\frac{|\mathsf{P}_{\psi} (\phi)|^2}{|\phi|^2}=\frac{|\mathsf{P}_{\phi} (\psi)|^2}{|\psi|^2}$. Multiplying by $|\psi|^2|\phi|^2$ we have $|\psi|^2|\mathsf{P}_{\psi} (\phi)|^2=|\phi|^2|\mathsf{P}_{\phi} (\psi)|^2$. This holds for the empty state as well. Let $\phi$ be the empty state. $|\phi|^2=0$ because it has no material and $|\mathsf{P}_{\psi} (\phi)|^2 = 0$ as no component of the empty state is prepared like $\psi$.
	
	We claim $\mathsf{P}_{\psi} (\mathsf{P}_{\phi} (\psi)) = r \psi$ for some $r \in \mathbb{R}^+$. Let $\psi, \phi \in \Psi$. Consider $\mathsf{P}_{\psi} (\mathsf{P}_{\phi} (\psi))$: we first take the component of $\psi$ prepared like $\phi$ and then take the component of that prepared like $\psi$. The overall operation is simply returning a part of $\psi$. This must correspond to the same exact configuration for the fragments. The only thing that can change is the amount of material. Therefore $\mathsf{P}_{\psi} (\mathsf{P}_{\phi} (\psi)) = r \psi$ for some $r \in \mathbb{R}^+$.
	
	We claim that the state space $\Psi$ for an irreducible material has an inner product $\langle \cdot |\cdot \rangle : \Psi \times \Psi \rightarrow \mathbb{C}$. $| \cdot | ^2$ and $\mathsf{P}_{\phi}$ satisfy the requirements in \ref{thrm:inner_product}. $\Psi$ is an inner product space with $\langle \cdot |\cdot \rangle : \Psi \times \Psi \rightarrow \mathbb{C}$ such that $\langle \psi |\phi \rangle \psi = | \psi | ^2\mathsf{P}_{\psi}(\phi)$.
	
	We claim that the inner product can be expressed in terms of the wave functions as $\langle \psi_1 | \psi_2 \rangle = \int_\mathcal{Q} \psi_1^\dagger (q^i) \psi_2(q^i) d^nq$. The amount of material associated with a state $\psi \in \Psi$ is $\langle \psi | \psi \rangle = | \psi |^2 = \int_\mathcal{Q} \psi^\dagger (q^i) \psi(q^i) d^nq$. The inner product associated with the $L^2$ norm is $\int_\mathcal{Q} \psi_1^\dagger (q^i) \psi_2(q^i) d^nq$ with $\psi_1, \psi_2 \in \Psi$. $\langle \psi_1 | \psi_2 \rangle = \int_\mathcal{Q} \psi_1^\dagger (q^i) \psi_2(q^i) d^nq$.
\end{justification}

Note that the inner product only has an indirect physical definition, based on the norm and the projections. These latter two concepts therefore provide more physical intuition since the first does not have a straightforward physical meaning.

Also note that the observations made after \ref{prop:differentiable_manifold} for the  classical case are still valid. $\Psi$ cannot be a complete metric space for the norm induced by the inner product as it does not include discontinuous functions. Therefore it is only a pre-Hilbert space. For mathematical convenience, one can take its completion with respect to the norm induced by the inner product and obtain a Hilbert space. Yet, such an object no longer represents a physically meaningful set of states.

Therefore we need to make sure that the rest of our definitions and justifications do not require a complete metric space to be valid or they would not be physically meaningful. In particular, we don't want to use the usual link $\psi(q_0) = \langle q_0 | \psi \rangle$ between the wave function and the inner product as the material cannot be prepared at a specific value $q_0$ when $q$ is a continuous variable. Similarly, we cannot show that to each state variable is associated a self-adjoint operator $Q = \sum q |q\rangle \langle q |$ simply by expressing it in terms of eigenstates as they may not represent physical states.

We proceed in the following way. Suppose we have a fragment variable $f : \mathcal{Q} \rightarrow \mathbb{R}$. Consider the product $f (q) \psi (q)$: it returns the wave function weighted by the value of $f$. We can also think of $f$ as an operator that takes a complex function of $\mathcal{Q}$ and returns another complex function of $\mathcal{Q}$ that represents the distribution of the expectation over the fragments. The operator is linear since $f (q) (a\psi_1(q) + b\psi_2(q)) = a f(q) \psi_1(q) + b f(q)  \psi_2(q)$. The operator is self-adjoint since $f^\dagger = f$. Consider $\int_\mathcal{Q} \psi^\dagger (q) f(q) \psi(q) dq = \int f(q) | \psi(q)|^2 dq$. It integrates the value of $f$ weighted by the amount of the material: it is the expectation of the fragment variable $f$ over the distribution. 

While $\mathcal{Q}$ is enough to define the system of units, it is not enough to define all the fragment variables over which we can express the distributions of irreducible material. The conjugate variables $p_i=\hbar k_i$ cannot be expressed as a function of $q^i$, which reiterates the idea that we cannot specify all variables for each fragment. Suppose there is another maximal set of fragment variables. Then those variables will form a topological space $\hat{\mathcal{Q}}$ such that each $\psi$ can be expressed as a wave function $\psi(\hat{q}) :\hat{\mathcal{Q}} \rightarrow \mathbb{C}$.

As each state $\psi$ can be identified by a wave function on either space, there must be an operator $\Upsilon^q_{\hat{q}}(\psi(q)) = \psi(\hat{q})$ that converts the wave function on one space to the other.\footnote{Note that a neighborhood of $q$ is not necessarily a neighborhood of $\hat{q}$. This relation holds only if $q$ and $\hat{q}$ can be expressed as a bijection of each other $\hat{q}^i(q^i)$, i.e.~if they are different coordinates of the same manifold.} As a linear combination of states corresponds to a linear combination of wave functions, $\Upsilon^q_{\hat{q}}$ is a linear operator. As converting the space again must give us the original wave function, $\Upsilon^{\hat{q}}_{q} \circ \Upsilon^q_{\hat{q}}= I$ is unitary. We can now define a linear operator $F_{[\hat{q}]} \equiv f(\hat{q})$ corresponding to fragment variable $f : \hat{\mathcal{Q}} \rightarrow \mathbb{R}$. Using $\Upsilon^{q}_{\hat{q}}$ we can express $F_{[q]} = \Upsilon^{\hat{q}}_{q} \circ F_{[\hat{q}]} \circ  \Upsilon^{q}_{\hat{q}}$ as an operator over complex functions of $\mathcal{Q}$. It will still be linear and self-adjoint.

That is: for each fragment variable $f$ we have a corresponding linear self-adjoint operator $F_{[q^i]}$ that acts on wave functions over $q^i$, which can be expressed as function multiplication in a suitable choice of $\hat{\mathcal{Q}}$. The (normalized) expectation of $f$ is given by $\frac{1}{|\psi|^2}\int \psi^\dagger(q^i) F_{[q^i]} (\psi(q^i)) d^nq$.

\begin{prop}\label{prop:self_adjoint_operators}
	For each fragment variable $f$ there exists a topological space $\hat{\mathcal{Q}}$ such that $f : \hat{\mathcal{Q}} \rightarrow \mathbb{R}$ and a vector space isomorphism $\Upsilon_{\hat{\mathcal{q}}^i} : \Psi \leftrightarrow \Psi(\hat{\mathcal{Q}}, d\hat{q}^i)$ that takes a state $\psi \in \Psi$ and returns a wave function $\psi(\hat{q}^i)$ from $\Psi(\hat{\mathcal{Q}}, d\hat{q}^i)$, the space of wave functions over $\hat{\mathcal{Q}}$. There also exists a corresponding self-adjoint linear operator $F_{[q^i]} : \Psi(\mathcal{Q}, dq^i) \rightarrow C(\mathcal{Q}, \mathbb{C})$ such that $\int \psi^\dagger(q^i) F_{[q^i]}( \psi(q^i))d^nq $ is the expectation of $f$ multiplied by the amount of material. 
\end{prop}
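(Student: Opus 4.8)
The plan is to formalize, claim by claim, the construction sketched in the discussion preceding the statement: treat the given fragment variable $f$ first as plain function multiplication in a representation in which it is itself one of the labeling variables, and then transport that operator back to the $q^i$ representation by a change-of-representation map. First I would show that any fragment variable $f$ can be completed to a maximal set of fragment variables, reusing the counting argument of \ref{prop:wavefuntion}: one cannot refine a single degree of freedom beyond one variable without contradicting irreducibility, and one cannot add a degree of freedom without enlarging $\mathcal{Q}$, so a maximal set containing $f$ has exactly one variable per degree of freedom and spans a space $\hat{\mathcal{Q}}$ with $f : \hat{\mathcal{Q}} \to \mathbb{R}$. Applying \ref{prop:wavefuntion} to this set rather than to the unit-defining $\mathcal{Q}$ then yields the vector space isomorphism $\Upsilon_{\hat{q}^i} : \Psi \leftrightarrow \Psi(\hat{\mathcal{Q}}, d\hat{q}^i)$, since that proposition's argument uses only that one has a maximal set of fragment variables, not that they are the ones defining the units.

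Next I would build the change-of-representation operator as the composite $\Upsilon^q_{\hat{q}} \equiv \Upsilon_{\hat{q}^i} \circ \Upsilon_{q^i}^{-1}$. Linearity is inherited from the fact that both isomorphisms carry the additive structure and the relabeling transformations $\tau(c)$ of $\Psi$ to pointwise operations on wave functions, as established in \ref{prop:complex_vector_space} and \ref{prop:wavefuntion}. Unitarity I would argue physically rather than by mere invertibility: the amount of material $|\psi|^2$ is a property of the state alone and cannot depend on which maximal set of fragment variables is chosen to label it, so $\int_{\hat{\mathcal{Q}}} |\psi(\hat{q})|^2 d^n\hat{q} = |\psi|^2 = \int_{\mathcal{Q}} |\psi(q)|^2 d^nq$, which with \ref{prop:inner_product} gives $\langle \Upsilon^q_{\hat{q}} \psi_1 | \Upsilon^q_{\hat{q}} \psi_2 \rangle = \langle \psi_1 | \psi_2 \rangle$. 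With $\Upsilon^q_{\hat{q}}$ in hand I would set $F_{[\hat{q}]}$ to be multiplication by the real function $f(\hat{q})$ on $\Psi(\hat{\mathcal{Q}}, d\hat{q}^i)$ — manifestly linear, and self-adjoint because $f$ is real-valued — and define $F_{[q^i]} \equiv \Upsilon^{\hat{q}}_q \circ F_{[\hat{q}]} \circ \Upsilon^q_{\hat{q}}$. Self-adjointness of $F_{[q^i]}$ then follows because conjugation of a self-adjoint operator by a unitary preserves self-adjointness.

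Finally I would verify the expectation formula. Since $\Upsilon^q_{\hat{q}}$ is unitary and $F_{[q^i]}$ is its conjugate of $F_{[\hat{q}]}$, the bilinear expression is representation-independent, so $\int_{\mathcal{Q}} \psi^\dagger(q^i) F_{[q^i]}(\psi(q^i)) d^nq = \int_{\hat{\mathcal{Q}}} \psi^\dagger(\hat{q}) f(\hat{q}) \psi(\hat{q}) d^n\hat{q} = \int_{\hat{\mathcal{Q}}} f(\hat{q}) |\psi(\hat{q})|^2 d^n\hat{q}$, which is exactly the value of $f$ weighted by the density of material, i.e.\ the expectation of $f$ multiplied by the total amount. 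I expect the main obstacle to be the opening step together with unitarity: asserting that an \emph{arbitrary} fragment variable — in particular a conjugate one such as $p_i$, which by construction is not a function of the $q^i$ — genuinely sits inside some maximal labeling set that is carried to $\Psi$ by a norm-preserving map. This is precisely where an honest account would lean on a spectral-theorem-type statement, which the paper is deliberately avoiding so as to stay off completed Hilbert spaces; the footnote's own caveat, that $\Upsilon^q_{\hat{q}}$ behaves as a clean relabeling only when $q$ and $\hat{q}$ are coordinate bijections of the same manifold and otherwise acts as a Fourier-like transform, marks exactly the crack where the physical justification is carrying the weight of a functional-analytic construction.
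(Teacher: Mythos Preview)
Your proposal is correct and follows essentially the same route as the paper's justification: complete $f$ to a maximal set $\hat{q}^i$, invoke \ref{prop:wavefuntion} to get the isomorphism $\Upsilon_{\hat{q}^i}$, define $F_{[\hat{q}^i]}$ as multiplication by the real function $f(\hat{q}^i)$, build the change-of-representation operator $\Upsilon^{q^i}_{\hat{q}^i}=\Upsilon_{\hat{q}^i}\circ\Upsilon_{q^i}^{-1}$, and conjugate back. One point where you actually do better than the paper: the paper argues unitarity of $\Upsilon^{q^i}_{\hat{q}^i}$ from the bare fact that it has an inverse, which on its own only gives a bijection; your argument---that $|\psi|^2$ is the amount of material and hence a property of the state independent of which maximal labeling set is used---is the physically correct way to get norm preservation and, via \ref{prop:inner_product}, inner-product preservation. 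Your closing caveat about conjugate variables and the spectral theorem is well taken but is a critique of the paper's framework rather than a defect of your argument within it.
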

\begin{justification}
	We claim for each fragment variable $f$ there exists a topological space $\hat{\mathcal{Q}}$ such that  $\hat{q}^i : \hat{\mathcal{Q}} \rightarrow \mathbb{R}^n$ provides a maximal description and $f : \hat{\mathcal{Q}} \rightarrow \mathbb{R}$. Let $f$ be a fragment variable. Let $\hat{q}^i$ be a set of fragment variables such that $f=f(\hat{q}^i)$. If $\hat{q}^i$ does not provide a maximal set of fragment variables then there exist other fragment variables that refine the description. Extend $\hat{q}^i$ with such variables. If still not maximal, continue extending. As the number of possibilities charted by $\hat{q}^i$ cannot exceed the number of possibilities charted by $q^i$, at some point we'll reach a maximal set. We call $\hat{\mathcal{Q}}$ the topological space charted by the fragment variables $\hat{q}^i$.
	
	We claim there exists a vector space isomorphism $\Upsilon_{\hat{\mathcal{q}}^i} : \Psi \leftrightarrow \Psi(\hat{\mathcal{Q}}, d\hat{q}^i)$. Let $\psi \in \Psi$ be a state. $\hat{q}^i$ provides a maximal set of fragment variables. There must exist a wave function $\psi(\hat{q}^i) : \mathbb{R}^n \rightarrow \mathbb{C}$ associated with $\psi$ such that it gives the maximal description for the state as seen in \ref{prop:wavefuntion}. Let $\Upsilon_{\hat{\mathcal{q}}^i} : \Psi \rightarrow \Psi(\hat{\mathcal{Q}}, d\hat{q}^i)$ be the map that associates a state with its wave function over $\hat{\mathcal{Q}}$. $\Upsilon_{\hat{\mathcal{q}}^i}$ is a vector space isomorphism for the same reason discussed in \ref{prop:wavefuntion}
	
	We claim that for each fragment variable $f : \hat{\mathcal{Q}} \rightarrow \mathbb{R}$ there exists a corresponding self-adjoint linear operator $F_{[\hat{q}^i]} : \Psi(\hat{\mathcal{Q}}, d\hat{q}^i) \rightarrow C(\hat{\mathcal{Q}}, \mathbb{C})$ such that $\int \psi^\dagger(\hat{q}^i) F_{[\hat{q}^i]}( \psi(\hat{q}^i)) d^n \hat{q}$ is, if it exists, the expectation of $f$ multiplied by the amount of material. The expectation value of $f$ multiplied by the amount of material is given by $\int f(\hat{q}^i) \rho(\hat{q}^i) d^n\hat{q}$ where $\hat{q}^i$ provide a maximal set of fragment variables such that $f=f(\hat{q}^i)$. We have $\int f(\hat{q}^i) \rho(\hat{q}^i) d^n\hat{q} = \int \psi^\dagger(\hat{q}^i) f(\hat{q}^i) \psi(\hat{q}^i) d^{n}\hat{q}$. Let $F_{[\hat{q}^i]} : \Psi(\hat{\mathcal{Q}}, d\hat{q}^i) \rightarrow C(\hat{\mathcal{Q}}, \mathbb{C})$ such that $F_{[\hat{q}^i]} (\psi(\hat{q}^i)) = f(\hat{q}^i) \psi(\hat{q}^i)$. $F_{[\hat{q}^i]}$ is a linear operator. $F_{[\hat{q}^i]}$ is also self-adjoint since\journal{\break} $\int (F_{[\hat{q}^i]}(\psi(\hat{q}^i)))^\dagger \psi(\hat{q}^i) d^{n}\hat{q} = \int (f(\hat{q}^i) \psi(\hat{q}^i))^\dagger \psi(\hat{q}^i) d^{n}\hat{q} = \int \psi^\dagger(\hat{q}^i) f(\hat{q}^i) \psi(\hat{q}^i) d^{n}\hat{q} = \int \psi^\dagger(\hat{q}^i) F_{[\hat{q}^i]} (\psi(\hat{q}^i)) d^{n}\hat{q}$.
		
	We claim that for any fragment variable $f$ there exists a corresponding self-adjoint linear operator $F_{[q^i]} : \Psi(\mathcal{Q}, dq^i) \rightarrow C(\mathcal{Q}, \mathbb{C})$ such that the integral $\int \psi^\dagger(q^i) F_{[q^i]}( \psi(q^i)) d^nq$ is, if it exists, the expectation of $f$ multiplied by the amount of material. Let $f$ be a fragment variable. There exists a corresponding self-adjoint linear operator $F_{[\hat{q}^i]} : \Psi(\hat{\mathcal{Q}}, d\hat{q}^i) \rightarrow C(\hat{\mathcal{Q}}, \mathbb{C})$ where $\hat{\mathcal{Q}}$ provides a maximal set of fragment variables. As both $\hat{\mathcal{Q}}$ and $\mathcal{Q}$ provide a maximal set, we can define $\Upsilon^{q^i}_{\hat{q}^i} \equiv \Upsilon_{\hat{q}^i} \circ \Upsilon_{q^i}^{-1}$ which transforms a wave function over $\mathcal{Q}$ into a wave function over $\hat{\mathcal{Q}}$. $\Upsilon^{q^i}_{\hat{q}^i}$ is an isomorphism as $\Upsilon_{\hat{q}^i}$ and $\Upsilon_{q^i}$ are isomorphisms. $\Upsilon^{q^i}_{\hat{q}^i}$ is unitary as $\Upsilon^{\hat{q}^i}_{q^i} \circ \Upsilon^{q^i}_{\hat{q}^i} = (\Upsilon^{q^i}_{\hat{q}^i})^{-1} \circ \Upsilon^{q^i}_{\hat{q}^i} = I$. We can extend $\Upsilon^{q^i}_{\hat{q}^i}$ on all continuous functions and define $F_{[q^i]} \equiv \Upsilon^{\hat{q}^i}_{q^i} \circ F_{[\hat{q}^i]} \circ \Upsilon^{q^i}_{\hat{q}^i}$. $F_{[q^i]}$ is a self-adjoint linear operator. $\int \psi^\dagger(q^i) F_{[q^i]}( \psi(q^i)) d^nq = \int \psi^\dagger(\hat{q}^i) F_{[\hat{q}^i]}( \psi(\hat{q}^i)) d^n \hat{q}$ is, if it exists, the expectation of $f$ multiplied by the amount of material.
\end{justification}

The last thing to do is characterize the conjugate variables $k_i$ and $p_i$. To be precise, we want to find $K_{i [q^i]}$, the operators that act on the wave functions over $q^i$ corresponding to fragment variables $k_i$. We know that the operators have to be covariant: under a change of units $\hat{q}^j = \hat{q}^j(q^i)$ we must have $\hat{K}_{j[q^i]} = \partial_{\hat{q}^j} q^i K_{i [q^i]}$. In other words: we are looking for the set of linear operators that obey the chain rule. This turns out to be the space of derivations, therefore $K_{i [q^i]}$ will be the derivatives in the direction of the corresponding $q^i$. They have to be self-adjoint, so they will be of the form $K_{i [q^i]} = a \imath \partial_{i}$ where $a$ is a real number. We choose $a$ to be negative by convention and to be unitary so that $k_i$ is expressed in units of inverse $q^i$ as in the classical case. Therefore we have $K_{i [q^i]} = -\imath \partial_i$ and $P_{i [q^i]} = -\imath \hbar \partial_i$.

If we assume $\mathcal{Q}$ to be $\mathbb{R}^n$, the functional $\Upsilon^{q^i}_{p_i} : \Psi(\mathcal{Q}, dq^i) \leftrightarrow \Psi(\mathcal{P}, dp_i)$ that converts a wave function in position to a wave function in conjugate momentum is the Fourier transform. As the fragment variables for any power of $q^i$ and $p_i$ must exist, the wave functions in both expressions are infinitely differentiable: the space of the wave functions $\Psi(\mathcal{Q}, dq^i) = S(\mathcal{Q}, q^i)$ is the Schwartz space of rapidly decreasing smooth functions.

\begin{prop}\label{prop:momentum_operator}
	Let $\mathcal{Q}=\mathbb{R}^n$. The operator $P_{i [q^i]} : \Psi(\mathcal{Q}, dq^i) \rightarrow C(\mathcal{Q}, \mathbb{C})$ associated with the conjugate quantity of $q^i$ is $P_{i [q^i]} = - \imath \hbar \partial_{i}$. The isomorphism $\Upsilon^{q^i}_{p_i} : \Psi(\mathcal{Q}, dq^i) \leftrightarrow \Psi(\mathcal{P}, dp_i)$ that changes the variable of the wave function from $q^i$ to $p_i$ is $\Upsilon^{q^i}_{p_i} (\psi(q^i)) = \frac{1}{(\sqrt{2\pi})^n} \int_{\mathcal{Q}} e^{\frac{q^i p_i }{\imath \hbar}} \psi(q^i) d^n q $. The space of wave functions $\Psi(\mathcal{Q}, dq^i) = S(\mathcal{Q}, q^i)$ where $S(\mathcal{Q}, q^i)$ is the Schwartz space of rapidly decreasing smooth functions. 
\end{prop}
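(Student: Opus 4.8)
The plan is to establish the three claims in turn, relying on the fragment-variable/operator correspondence of \ref{prop:self_adjoint_operators} and on the covariant transformation law for the conjugate quantity inherited from the classical construction of \ref{prop:symplectic_manifold}. First I would pin down the operator $K_{i[q^i]}$ associated with $k_i$. From \ref{prop:symplectic_manifold} the conjugate quantity transforms as the component of a covector, $\hat{k}_j = \partial_{\hat{q}^j}q^i\,k_i$, so its operator must obey the matching law $\hat{K}_{j[q^i]} = \partial_{\hat{q}^j}q^i\,K_{i[q^i]}$ under every change of units. The crucial step is to argue that a linear operator transforming this way for all coordinate changes is a derivation on the functions over $\mathcal{Q}$---equivalently, that $k_i$ generates the one-parameter group of translations dual to $q^i$---which excludes any zeroth-order multiplication part and, by the identification of derivations with vector fields, forces $K_{i[q^i]}$ to be first order; the single covariant index then fixes it up to a universal constant, $K_{i[q^i]} = a\,\partial_i$. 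Imposing self-adjointness through the pairing of \ref{prop:self_adjoint_operators}, integration by parts on $\mathcal{Q}=\mathbb{R}^n$ gives $\partial_i^\dagger = -\partial_i$ (boundary terms vanishing for rapidly decreasing functions), so $\imath\partial_i$ is self-adjoint and the constant must read $a\imath$ with $a$ real; the stated convention $a=-1$ yields $K_{i[q^i]} = -\imath\partial_i$ and hence $P_{i[q^i]} = \hbar K_{i[q^i]} = -\imath\hbar\partial_i$.

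Next I would identify $\Upsilon^{q^i}_{p_i}$ as the intertwiner that, by \ref{prop:self_adjoint_operators}, carries $P_{i[q^i]}$ into multiplication by $p_i$, that is $\Upsilon^{q^i}_{p_i}\circ P_{i[q^i]} = p_i\,\Upsilon^{q^i}_{p_i}$. Applying $-\imath\hbar\partial_{q^i}$ to $\psi$ inside the integral and integrating by parts transfers the derivative onto the proposed kernel $e^{q^i p_i/(\imath\hbar)}$; since $\partial_{q^i}e^{q^i p_i/(\imath\hbar)} = (p_i/(\imath\hbar))\,e^{q^i p_i/(\imath\hbar)}$ and the boundary terms vanish on $S$, the sign flip from the by-parts step combines with this factor to produce exactly $p_i$, establishing the intertwining relation and identifying the map as the Fourier transform. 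Written through $k_i = p_i/\hbar$ the kernel is $e^{-\imath q^i k_i}$, so the normalization $1/(\sqrt{2\pi})^n$ is the standard Fourier constant for the $(q,k)$ pair, here pinned down by requiring $\Upsilon^{q^i}_{p_i}$ to be unitary---to preserve the amount of material $|\psi|^2$, which is Parseval's identity---and hence to satisfy $\Upsilon^{p_i}_{q^i}\circ\Upsilon^{q^i}_{p_i}=I$ as demanded in \ref{prop:self_adjoint_operators}.

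For the function space I would argue from moments. Since \ref{prop:self_adjoint_operators} provides a self-adjoint operator with finite expectation for every fragment variable, in particular for every power $(q^i)^m$ and, through $P_{i[q^i]}$, for every power $(-\imath\hbar\partial_i)^m$, all the integrals $\int\psi^\dagger(q^i)^m\psi\,d^nq$ and $\int\psi^\dagger(-\imath\hbar\partial_i)^m\psi\,d^nq$ must be finite. Finiteness of all position moments forces $\psi$ to decay faster than any polynomial, while finiteness of all derivative moments forces $\psi$ and each of its derivatives to be smooth and square integrable; taking the intersection over all orders leaves precisely the Schwartz space $S(\mathcal{Q},q^i)$, which is moreover mapped onto itself by the Fourier transform, keeping the position and conjugate-momentum representations symmetric.

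The main obstacle is the justification of the derivation step together with the attendant domain questions. The covariance law alone does not exclude an additive multiplication term---a term $\phi_i(q)$ transforms covariantly just as $\partial_i$ does---so the real content is the argument that $K_{i[q^i]}$ is a genuine derivation (a pure translation generator), and this must be made precise rather than asserted. A second, more technical difficulty is that $P_{i[q^i]}$ is unbounded: the self-adjointness computation and the vanishing of the integration-by-parts boundary terms already presuppose the rapid decay that the third claim is meant to establish, so the three parts are logically interlocked. The cleanest resolution is to fix the common domain $S(\mathcal{Q},q^i)$ at the outset, derive the operator and transform there, and only then read off that this domain is the space forced by the moment requirements, verifying mutual consistency.
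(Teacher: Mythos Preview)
Your approach matches the paper's almost step for step: covariance to pin down $K_{i[q^i]}$ as a multiple of $\partial_i$, self-adjointness to force the constant to be purely imaginary, the Fourier transform as the unique unitary intertwiner taking $-\imath\hbar\partial_i$ to multiplication by $p_i$, and then a moment/regularity argument for the Schwartz space. The obstacle you flag is exactly the one the paper isolates: the step from the covariance law to the derivation property is the real content, and the paper handles it with a dedicated lemma (\ref{thrm:covariant_operator}) showing that if $K_{\hat q}=\partial_{\hat q}q\,K_q$ holds for \emph{every} invertible differentiable change $\hat q(q)$, then $K_q(f)=\partial_q f\,K_q(q)$ and hence $K_q$ is a derivation; your worry about an additive $\phi_i(q)$ is ruled out precisely because such a multiplication term fails this functorial identity, not merely the tensorial transformation rule.

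Two small points where you diverge. First, for $\Psi\subseteq S$ the paper does not argue via finiteness of moments but via \emph{definability}: all powers $(P_i)^m\psi$ must exist, hence $\psi\in C^\infty(\mathcal{Q})$, and symmetrically $\psi(p)\in C^\infty(\mathcal{P})$; Fourier duality then converts smoothness on one side into rapid decay on the other. Your moment argument is morally equivalent but would need Sobolev-type input to upgrade $L^2$ control to the sup-norm seminorms defining $S$. Second, you omit the reverse inclusion $S\subseteq\Psi$: the proposition asserts equality, and the paper closes it by observing that any two distinct elements of $S$ are separated by some seminorm $\lVert\cdot\rVert_{j,a,k,b}=\sup|(q^j)^a\partial_k^b\cdot|$, which corresponds to a physically distinguishable fragment-variable expectation, so every Schwartz function is a genuine state.
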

\begin{justification}
	We claim the operator associated with $k_i$ is $K_{i [q^i]} = - \imath \partial_{i}$. Let $K_{i [q^i]} : \Psi(\mathcal{Q}, dq^i) \rightarrow C(\mathcal{Q}, dq^i)$ be the linear operator associated with $k_i$ as expressed on the space of wave functions over $q^i$. Such an operator must be covariant under the transformation $\hat{q}^j = \hat{q}^j(q^i)$. That is: $\hat{K}_{j [q^i]} = \partial_{\hat{q}^j}q^iK_{i [q^i]}$. In particular, it needs to be covariant when changing only one variable: $\hat{q}^1 = \hat{q}^1(q^1)$, $\hat{q}^j = q^j$ where $j=2...n$. By \ref{thrm:covariant_operator} $K_{i [q^i]}=c\partial_i$ for some $c \in \mathbb{C}$. $K_{i [q^i]}$ is self-adjoint: $K_{i  [q^i]}=\imath r\partial_i$ for some $r \in \mathbb{R}$ by \ref{thrm:antihermitian_derivative}. $k_i$ is expressed in units of inverse $q^i$ therefore $|r|=1$. By convention, we choose $r$ to be negative. $K_{i [q^i]} = - \imath \partial_{i}$.
	
	We claim the operator associated with $p_i$ is $P_{i [q^i]} = - \imath \hbar \partial_{i}$. As $p_i=\hbar k_i$, their expectation is proportional by a factor of $\hbar$. $P_{i [q^i]}=\hbar K_{i [q^i]}= -\imath \hbar \partial_i$.
	
	We claim the operator that changes the variable of the wave function from $q^i$ to $p_i$ is $\Upsilon^{q^i}_{p_i} (\psi(q^i)) = \frac{1}{(\sqrt{2\pi})^n} \int_{\mathcal{Q}} e^{\frac{q^i p_i }{\imath \hbar}} \psi(q^i) d^n q $. Let $\Upsilon^{q^i}_{p_i} : \Psi(\mathcal{Q}, dq^i) \rightarrow \Psi(\mathcal{P}, dp_i)$ be the operator that maps a wave function over $\mathcal{Q}$ into a wave function over $\mathcal{P}$, the space charted by $p_i$. $\Upsilon^{q^i}_{p_i}$ must exist and be unique (up to a total phase) as $\Upsilon^{q^i}_{p_i} = \Upsilon_{p_i}\circ\Upsilon^{q^i} = \Upsilon_{p_i}(\Upsilon_{q^i})^{-1}$ is an isomorphism. $\Upsilon^{q^i}_{p_i}$ must be a linear transform that takes the operator $P_{i [q^i]}= - \imath \hbar \partial_i$ as expressed over $q^i$ to $P_{i [p_i]} = p_i$ as expressed over $p_i$. Such transformation is the Fourier transform. $\Upsilon^{q^i}_{p_i} (\psi(q^i)) = \frac{1}{(\sqrt{2\pi})^n} \int_{\mathcal{Q}} e^{\frac{q^i p_i }{\imath \hbar}} \psi(q^i) d^n q $.
	
	We claim $\Psi(\mathcal{Q}, dq^i) \subseteq S(\mathcal{Q}, q^i)$. Let $\psi(q^i) \in \Psi(\mathcal{Q}, dq^i)$. $(P_{i [q^i]})^m \psi(q^i) = (-\imath \hbar \partial_i)^m \psi(q^i)$ represents the $m^{th}$ power of the conjugate momentum operator applied to the given state. As the quantity is physically meaningful, such an operation must be defined for any $m$ therefore $\psi(q^i) \in C^\infty(\mathcal{Q})$. Let $\psi(p_i) = \Upsilon^{q^i}_{p_i} (\psi(q^i))$ be the wave function over momentum space. $(Q^i_{[p_i]})^m \psi(p_i) =\journal{\break} (+\imath \hbar \partial_{p_i})^m \psi(p_i)$ represents the $m^{th}$ power of the $q^i$ operator applied to the given state. As the quantity is physically meaningful, such an operation must be defined for any $m$ therefore $\psi(p_i) \in C^\infty(\mathcal{P})$. As $\psi(q^i)$ is the Fourier transform of an infinitely smooth function, it decreases faster than any inverse power of $q^i$. That is: $\forall m \in \mathbb{Z}^+ \exists c \in \mathbb{C}$ such that  $|\psi(q^i)|<\frac{|c|}{|q^i|^m} \; \forall |q^i|> 1$. $\Psi(\mathcal{Q}, dq^i) \subseteq S(\mathcal{Q}, q^i)$.
	
	We claim $\Psi(\mathcal{Q}, dq^i) = S(\mathcal{Q}, q^i)$. Let $\psi(q^i) \in \Psi(\mathcal{Q}, dq^i)$ and $\phi(q^i) \in S(\mathcal{Q}, q^i)$ such that $\psi(q^i) \neq \phi(q^i)$. There exists a choice of $a, b, j, k \in \mathbb{Z}^+$, $j,k \leq n$ such that $\lVert \psi(q^i) - \phi(q^i) \rVert_{j,a,k,b} \neq 0$  where $\lVert \psi(q^i) \rVert_{j,a,k,b} = sup|(q^j)^a (\partial_k) ^b \psi(q^i)|$ as $S(\mathcal{Q}, q^i)$ is a Frechet space induced by that family of seminorms. The operator $(Q^j_{[q^i]})^a(P_{k [q^i]})^b$ applied to $\psi(q^i)$ and $\phi(q^i)$ will provide different distributions for the associated fragment variables. $\phi(q^i)$ is physically distinguishable and is associated with a physical state. $\phi(q^i) \in \Psi(\mathcal{Q}, dq^i)$.
\end{justification}

We now have all the basic features of the state space of a quantum particle. And we obtained them by re-deriving them from first principles instead of using some formal analogy. The similarities with the classical framework are simply due to the similarities of the starting points.

It should be evident, though, how the correspondence between mathematical and physical objects is not as satisfying as in the classical case. For instance, while any differentiable function of $T^*\mathcal{Q}$ can be a state variable for a classical particle, not all self-adjoint linear operators can be associated with a fragment variable. Consider $f: \mathbb{R} \rightarrow \mathbb{R}$ for which $f(r) = -r$ if $r$ is an integer and $f(r) = r$ otherwise. Clearly $f(q)$ does not preserve the topology and cannot be used as a fragment variable. Yet, we can still create a self-adjoint linear operator, with the same spectra as the one associated with $q$, just with the eigenvalues switched. The space of physically meaningful operators, then, should be further constrained. A reasonable requirement would be that they preserve the differentiable/integrable structure of the Schwartz space $S(\mathcal{Q}, q^i)$ as classical state variables preserved the differentiable structure of $T^*\mathcal{Q}$. Unfortunately, this is not as easy to justify.

The topology of $\mathcal{P}$, the space charted by conjugate momentum, depends globally on the topology of $\mathcal{Q}$. For example, if $\mathcal{Q}$ is an n-dimensional torus, $\mathcal{P}$ has a discrete topology. That is why, in the end, we restricted ourselves to $\mathbb{R}^n$. Moreover, we characterized the space of functions not because we required them to fall off at infinity, but because we required the ability to express polynomials of $q^i$ and $p_i$. It is not immediately clear how this generalizes over a manifold: further work is needed.

While the link between math and physics may not be as elegant and as general as one would like, we still managed to meet the goal: we identified the states of an irreducible material and saw how these are related to the states in quantum mechanics.



\subsection{Irreducibility}

Now that we have fully characterized what we mean by an irreducible material, we can stipulate the following:

\begin{assump}[Irreducibility]\label{ass:irreducibility}
	The system under study is composed of an irreducible homogeneous material and as a whole undergoes deterministic and reversible evolution.
\end{assump}

\begin{rationale}
	The idea is that each amount of material has a state, the physically distinguishable configuration of the distribution, and an unstated part, the physically indistinguishable configuration of the fragments of the material. The system is deterministic and reversible in the sense outlined in assumption \ref{ass:determinism}: given the state of the system at one time (i.e. the whole distribution) we can predict/reconstruct the future/past states. But we cannot describe the evolution of the unstated part (i.e. the motion of the fragments).
	
	As we saw, the state space recovered under this assumption is the one used in quantum mechanics. In the following sections we will derive two types of state evolution: the deterministic and reversible one, corresponding to unitary evolution (i.e.  Schroedinger's equation), and a non-deterministic one useful to describe measurement interactions, corresponding to the projection postulate. Therefore, as the infinitesimal reducibility assumption leads to classical mechanics, the irreducibility assumption leads to quantum mechanics.
	
	As we saw in section \ref{subsec:infinitesimal_reducibility} during the rationale for assumption \ref{ass:infinitesimal_reducibility}, infinitesimal reducibility had its problems. How about the irreducibility assumption? How do they compare?
	
	The methodological problem takes on a different character. In order for the assumption to hold, we need to show that there are no processes at our disposal that can probe the fragments of a quantum particle. For example: an electron as a whole interacts with a photon as a whole. Note that this does not require showing that no such process exists: conceptually, specific parts of the photon may be interacting with specific parts of the electron. But if those processes are not available to us, say because we can only manipulate whole photons and they always interact with a whole electron, they cannot be used to create a more detailed picture. So we simply have to enumerate the processes available to us, which is, at least in principle, feasible. Yet, we can never rule out that someone in the future may discover a new process.

	The issues arising from a lack of perfect isolation are addressed. Each degree of freedom comes with its own unstated part (i.e. the motion of the fragments within that d.o.f.); therefore there is no pretense of a perfect description under perfect isolation. Since we cannot describe the nature of the chaotic motion, we cannot say whether it is internally or externally driven.
	
	The problem of physically meaningless interactions is also considerably lessened. Even if we have multiple particles each with their state evolving deterministically, we can still imagine their unstated parts as interacting with each other. Therefore each quantum particle would not essentially reside in its own separate universe.
	
	In these ways, the irreducibility assumption fares much better than the infinitesimal reducibility assumption. Yet, it is still plagued with some of the same problems.
	
	The problem of incompatibility of measurements with deterministic evolution at all levels still remains. The unstated part of multiple irreducible particles may be interacting, but it cannot be used to pass information from one subsystem to another without making itself distinguishable.
	
	The inability to define time using a deterministic and reversible evolution remains a problem. As the system is homogeneous and irreducible, the unstated part cannot give us a measurable quantity that changes in time as that would provide a description of the unstated part.
	
	There is one additional problem, though, not present in the classical case. Under the irreducibility assumption states are necessarily at equilibrium: they all provide the same level of description for the fragments, they carry the same information entropy. In fact if the internal dynamics were out of equilibrium, the material would not be homogeneous which would contradict the assumption. Deterministic and reversible evolution (which we'll see corresponds to unitary evolution) is necessarily quasi-static: at each moment a state must be well defined, at each moment the system is in equilibrium, the system changes slowly. Therefore quantum states, as formulated here, cannot and should not be expected to provide a valid description during non-deterministic evolution.
	
	In light of this, consider a muon decay: the initial state of the muon and the final states of the electron and neutrinos can certainly be well described by quantum particle states.\footnote{In this context, a state is in equilibrium regardless of its being in a local minimum, easily overcome by fluctuations, or a global minimum, that persists indefinitely.} We have no reason to expect, though, that while the muon is decaying into the resulting particles the state is always in equilibrium and therefore well described mathematically by a vector in a complex inner product space. After all, the state of a single spin-one-half particle cannot determine the future state of three spin-one-half particles.
	
	This notion may seem in contradiction to the current practice. For example, the S-matrix, one of the main devices for calculating scattering amplitudes and cross sections, is derived by assuming unitary evolution~\cite{Weinberg}. Doesn't it imply we are assuming that deterministic and reversible evolution is happening at every instant?
	
	If we look more closely, we realize that the process that the S-matrix captures is a limit where initial states are at $t \rightarrow - \infty$ and final states are at $t \rightarrow + \infty$, a process that takes an infinite amount of time.  That is, to calculate predictions for a fast out-of-equilibrium process we approximate it using slow quasi-static evolution. This is actually not so uncommon: it is a standard first approximation in thermodynamics. The question is why should it work here? As we'll see in more detail when talking about the projection postulate, it's because of the irreducibility assumption itself.
	
	Suppose we are trying to calculate a quantity that does not depend on the particular motion of the unstated part. Then, by its very nature, its value will be the same under an equivalent quasi-static evolution, because this too does not depend on the particular motion of the fragments. The probability distribution of the out-states during scattering is one such quantity. As the final statistical distribution is determined only by the initial state, it is not dependent on the particulars of the unstated part. In these cases, the use of deterministic and reversible evolution as a ``stand-in" to calculate final statistical distributions is justified. However, it should not be taken literally as a physical model of what actually happens. There is no justification for assuming this process to be quasi-static deterministic and reversible evolution of quantum particle states.
	
	In other words, the assumption of irreducibility already contains the seeds of its own demise. On one hand, it tells us that there is a component of chaotic motion within the system. On the other hand, it can only characterize the equilibrium dynamics, which clearly can't be the full range of dynamics. Yet it also offers a way out: since the motion of the fragments is not accessible, we can predict statistical distributions that are independent of it.
	
	In light of what we discussed, the irreducibility assumption is less flawed than infinitesimal reducibility but it still cannot be taken as fundamental, in the sense that we cannot take it to strictly apply to all systems. We should not expect to solve open problems, such as the arrow of time and the measurement problem, under the assumption as formulated in this work.
\end{rationale}

\subsection{Quantum mechanics}

We are now ready to write the equations of motion for an irreducible material. Deterministic and reversible evolution for the whole distribution necessarily means quasi-static evolution: there is a well defined state at each moment in time. Moreover the components must evolve independently as we cannot learn more about one component by observing its evolution in different combinations: the evolution of the composition is the composition of the evolution. In other words: the evolution must preserve the inner product.

Note that formally there is a strong analogy with the reducible material case. In classical mechanics the state space structure was captured by a distribution over a symplectic manifold. Deterministic and reversible evolution was a transformation that preserved that structure, a symplectomorphism. In the quantum case the state space structure is captured by an inner product space. Deterministic and reversible evolution is a transformation that preserves that structure, a unitary transformation. That is: once we have defined what mathematical structure captures the definition of a state, deterministic and reversible evolution must necessarily preserve such structure.

\begin{prop}\label{prop:unitary_evolution}
	A deterministic and reversible evolution map $\mathcal{T}_{\Delta t}: \Psi \rightarrow \Psi$ for an irreducible material is a unitary operator. That is: $\langle \mathcal{T}_{\Delta t} \psi_1, \mathcal{T}_{\Delta t} \psi_2 \rangle = \langle \psi_1 , \psi_2 \rangle$ where $\psi_1, \psi_2 \in \Psi$.
\end{prop}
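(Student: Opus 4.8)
The plan is to mirror the classical argument of Proposition \ref{prop:symplectomorphism}: once the inner product has been identified (Proposition \ref{prop:inner_product}) as the structure that captures what a state \emph{is}, deterministic and reversible evolution must preserve it. Concretely, I would reduce the claim to three facts---that $\mathcal{T}_{\Delta t}$ is $\mathbb{C}$-linear, that it preserves the norm $|\cdot|^2$, and that it intertwines with the projections $\mathsf{P}_\psi$---and then read off unitarity from the defining relation $\langle \psi_1 | \psi_2 \rangle\, \psi_1 = |\psi_1|^2\,\mathsf{P}_{\psi_1}(\psi_2)$.

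First I would establish linearity. By Proposition \ref{prop:homeomorphism}, $\mathcal{T}_{\Delta t}$ is a self-homeomorphism, hence a bijection. Additivity $\mathcal{T}_{\Delta t}(\psi_1 + \psi_2) = \mathcal{T}_{\Delta t}\psi_1 + \mathcal{T}_{\Delta t}\psi_2$ follows from the principle, stated above the proposition, that the evolution of a composition is the composition of the evolutions. Homogeneity $\mathcal{T}_{\Delta t}(\tau(c)\psi) = \tau(c)\,\mathcal{T}_{\Delta t}\psi$ follows because relabeling the fragment configurations by $c$ (rescaling the amount of material and shifting the correlation angle) is a physical operation that does not interact with the law of evolution, which acts identically irrespective of the total amount and global correlation.

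Next I would argue the two preservation statements. The norm is the amount of material, which a deterministic and reversible process conserves exactly as in the classical case; hence $|\mathcal{T}_{\Delta t}\psi|^2 = |\psi|^2$. For the projections, the ratio $|\mathsf{P}_{\psi_1}(\psi_2)|^2/|\psi_2|^2$ is the physically distinguishable fraction of $\psi_2$ prepared like $\psi_1$, and reversible evolution can neither create nor destroy this overlap without gaining or losing a description of the fragments; combined with linearity this gives $\mathsf{P}_{\mathcal{T}_{\Delta t}\psi_1}(\mathcal{T}_{\Delta t}\psi_2) = \mathcal{T}_{\Delta t}\,\mathsf{P}_{\psi_1}(\psi_2)$. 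Applying $\mathcal{T}_{\Delta t}$ to the defining relation and substituting both facts yields $\langle \mathcal{T}_{\Delta t}\psi_1 | \mathcal{T}_{\Delta t}\psi_2 \rangle\, \mathcal{T}_{\Delta t}\psi_1 = \langle \psi_1 | \psi_2 \rangle\, \mathcal{T}_{\Delta t}\psi_1$, and cancelling $\mathcal{T}_{\Delta t}\psi_1 \neq 0$ gives the result.

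The main obstacle is the homogeneity step, and with it the implicit exclusion of an antiunitary evolution. Preservation of the norm and of the projection moduli alone would only force $|\langle \mathcal{T}_{\Delta t}\psi_1 | \mathcal{T}_{\Delta t}\psi_2 \rangle| = |\langle \psi_1 | \psi_2 \rangle|$, which is the content of Wigner's theorem and admits a complex-conjugating solution. The claim that $\mathcal{T}_{\Delta t}$ commutes with $\tau(c)$ for every $c \in \mathbb{C}$---rather than with $\tau(\bar{c})$---is precisely what pins down that the correlation angle is carried forward with the correct sign, so that the map is genuinely unitary and not merely antiunitary. I expect the cleanest justification of this to lean on continuity in $\Delta t$ connecting $\mathcal{T}_{\Delta t}$ to the identity, to which no antiunitary map can be joined.
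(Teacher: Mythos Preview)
Your proposal is correct and follows essentially the same route as the paper: assert $\mathbb{C}$-linearity from ``the evolution of the composition is the composition of the evolutions,'' argue that projections intertwine with the evolution, and then read off preservation of the inner product from the defining relation $\langle \psi_1|\psi_2\rangle\,\psi_1 = |\psi_1|^2\,\mathsf{P}_{\psi_1}(\psi_2)$. Your explicit inclusion of norm preservation and your discussion of the antiunitary alternative via Wigner's theorem are more careful than the paper, which simply asserts full $\mathbb{C}$-linearity without addressing that point.
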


\begin{justification}
	We claim $\mathcal{T}_{\Delta t}: \Psi \rightarrow \Psi$ is a unitary operator. Let $\psi_1, \psi_2 \in \Psi$ be two states. $\mathcal{T}_{\Delta t}(a\psi_1+b\psi_2) = a \mathcal{T}_{\Delta t} \psi_1 + b \mathcal{T}_{\Delta t} \psi_2$ with $a,b \in \mathbb{C}$ as the composition of the evolution is the evolution of the composition: $\mathcal{T}_{\Delta t}$ is a linear operator. $\mathcal{T}_{\Delta t} \mathsf{P}_{\psi_1} (\psi_2) = \mathsf{P}_{\mathcal{T}_{\Delta t}\psi_1} (\mathcal{T}_{\Delta t}\psi_2)$ as the evolved part of $\psi_2$ that was prepared like $\psi_1$ is the part of the evolved $\psi_2$ that ends prepared like the evolved $\psi_1$. We have $\mathcal{T}_{\Delta t} \mathsf{P}_{\psi_1} (\psi_2) = \mathcal{T}_{\Delta t} \langle \psi_1 , \psi_2 \rangle \psi_1 = \langle \psi_1 , \psi_2 \rangle \mathcal{T}_{\Delta t} \psi_1 = \mathsf{P}_{\mathcal{T}_{\Delta t}\psi_1} (\mathcal{T}_{\Delta t}\psi_2) = \langle \mathcal{T}_{\Delta t}\psi_1 , \mathcal{T}_{\Delta t}\psi_2 \rangle \mathcal{T}_{\Delta t} \psi_1$. $\langle \psi_1 , \psi_2 \rangle = \langle \mathcal{T}_{\Delta t}\psi_1 , \mathcal{T}_{\Delta t}\psi_2 \rangle$. $\mathcal{T}_{\Delta t}$ is unitary.
\end{justification}

\begin{prop}\label{prop:schroedinger_equation}
	A continuous deterministic and reversible evolution for an irreducible material admits a Hamiltonian operator $H : \Psi \rightarrow \Psi$ that allows us to write the laws of evolution as
	\begin{align*}
	\imath \hbar \partial_t \psi = H \psi
	\end{align*}
\end{prop}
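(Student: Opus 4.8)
The plan is to mirror the classical derivation of Hamilton's equations in \ref{prop:hamiltons_equations}, replacing the symplectic structure with the inner product and the symplectomorphism with the unitary operator of \ref{prop:unitary_evolution}. The starting observation is that the family $\{\mathcal{T}_{\Delta t}\}$ forms a one-parameter group: composing evolution over $\Delta t_1$ with evolution over $\Delta t_2$ must equal evolution over $\Delta t_1 + \Delta t_2$, and $\mathcal{T}_0 = I$, exactly as the definition of determinism and reversibility in \ref{ass:determinism} requires. Each member is unitary by \ref{prop:unitary_evolution}, and continuity of the evolution in $\Delta t$ makes this a strongly continuous one-parameter group of unitaries.

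First I would extract the infinitesimal generator. By the analogue of Stone's theorem for one-parameter unitary groups there exists an operator $A$ with $\mathcal{T}_{\Delta t} = e^{A \Delta t}$ and $A \psi = \partial_{\Delta t} \mathcal{T}_{\Delta t} \psi |_{\Delta t = 0}$, and unitarity forces $A$ to be anti-Hermitian, $A^\dagger = -A$. This is the quantum counterpart of the step in \ref{prop:hamiltons_equations} where the infinitesimal symplectomorphism is shown to admit a generating potential. Next I would rewrite the anti-Hermitian generator in terms of a self-adjoint operator: since $A^\dagger = -A$, it can be written $A = -\frac{\imath}{\hbar} H$ with $H = H^\dagger$ self-adjoint, in the same way the conjugate-momentum operator was shown to be $\imath$ times a real multiple of a derivation in \ref{prop:momentum_operator} via \ref{thrm:antihermitian_derivative}. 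The constant $\hbar$ enters not as a new assumption but as the unit conversion between the time variable and its conjugate quantity $E$, consistent with $E = \hbar \bar{\omega}$; $H$ is then the energy operator, the generator of time translation.

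Differentiating a trajectory $\psi(t) = \mathcal{T}_t \psi_0$ then gives $\partial_t \psi = A\psi = -\frac{\imath}{\hbar} H \psi$, i.e.
\begin{align*}
\imath \hbar \, \partial_t \psi = H \psi,
\end{align*}
which is the claimed law of evolution. Conversely, as in the remark following \ref{prop:hamiltons_equations}, any self-adjoint $H$ generates a unitary group through exponentiation, so the equation is not merely necessary but characterizes the deterministic and reversible evolution.

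The hard part will be the functional-analytic content hidden inside the invocation of Stone's theorem: the generator $H$ is generically unbounded and only densely defined, so one must check that the relevant states lie in its domain and that strong continuity genuinely holds rather than being assumed. The restriction to Schwartz space $S(\mathcal{Q}, q^i)$ established in \ref{prop:momentum_operator} is what makes this tractable, since polynomials in the conjugate operators act there, but rigorously matching the physical hypothesis of ``continuous evolution'' to strong continuity of the group, and fixing the sign convention in $A = -\frac{\imath}{\hbar} H$ so that $H$ is bounded below for physical systems, is the delicate point rather than the algebra itself.
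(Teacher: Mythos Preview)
Your proposal is correct and follows essentially the same route as the paper: use \ref{prop:unitary_evolution} to obtain a unitary evolution, extract an anti-Hermitian infinitesimal generator, and rewrite it as $-\frac{\imath}{\hbar}H$ with $H$ self-adjoint to obtain $\imath\hbar\,\partial_t\psi = H\psi$. The paper's justification is actually much terser than yours---it simply writes the infinitesimal unitary as $\mathcal{T}_{dt} = I + \frac{H\,dt}{\imath\hbar}$ and rearranges, without invoking Stone's theorem by name, without discussing the one-parameter group property, and without addressing any of the domain/strong-continuity issues you flag as the hard part; in that sense you are being more careful than the paper itself.
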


\begin{justification}
	We claim the state evolves according to Schroedinger's equation. Let  $\mathcal{T}_{dt}: \Psi \rightarrow \Psi$ be the evolution for an infinitesimal time interval $dt$. As it is unitary, it can be expressed as $\mathcal{T}_{dt} = I + \frac{H dt}{\imath \hbar}$ where $H : \psi \rightarrow \psi$ is a self-adjoint operator. Let $\psi_t \in \Psi$ be a state and $\psi_{t+dt} \in \Psi$ its evolution after an infinitesimal time interval. We have $\mathcal{T}_{dt} \psi_t = \psi_{t+dt} = \psi_t + \frac{H dt}{\imath \hbar} \psi_t$. $\imath \hbar \frac{\psi_{t+dt} - \psi_t}{dt} = \imath \hbar \partial_t \psi_t = H\psi_t$.
\end{justification}

We recognize the Schroedinger equation, the time evolution equation for a quantum particle state.

Note, though, that nothing tells us that the system must be in an eigenfunction of $H$ or that the ground state is somehow preferred. In fact, the eigenfunctions of $H$ may not be part of $S(\mathcal{Q}, q^i)$ and therefore may not be physical. In other words: we have no basis for the time independent Schroedinger equation. This is actually consistent with our assumptions. For example, for a system to reach its ground state it has to be able to radiate energy: it cannot be isolated. Therefore this dynamics cannot be described under the assumption of deterministic and reversible evolution. This already hints that deterministic and reversible evolution is not the only dynamics possible. Other considerations, outside of the scope of this work, can be used to recover this behavior (e.g. thermodynamics, decoherence, ...)~\cite{Weiss}.

Also note that $H$ operates on $\Psi$ which, in the case of $\mathcal{Q}=\mathbb{R}^n$, corresponds to the space $S(\mathcal{Q}, q^i)$ of Schwartz functions. $H$ therefore preserves infinite smoothness and integrability. This, again, reminds us that the space of all self-adjoint operators is far too vast, and does not map well to physically meaningful quantities.

\subsection{Time dependence and kinematic assumption}

The next step would be to integrate the quantum description with time dependent evolution (i.e. time dependent Hamiltonians and relativistic mechanics) and the kinematic assumption (i.e. Lagrangian mechanics). We will only provide a sketch, though, without going through the details. Given that we are limited to single particles (i.e. no fields) without spin (i.e. minimal material only described by position), our results are limited to finding the Klein-Gordon equation modified with electromagnetic interactions. While this is still of note, the equation is of limited physical use and therefore may not warrant the space necessary for a more rigorous derivation.

As in the classical case, the first thing to do is to extend $\mathcal{Q}$ to include time. This gives us the same $\mathcal{M}$ as before. A quantum state is a complex valued distribution $\psi : \mathcal{M}_{t=t_0} \rightarrow \mathbb{C}$ over a hypersurface at constant time. An evolution $\widetilde{\psi} : \mathcal{M} \rightarrow \mathbb{C}$ is a complex valued function defined on the whole $\mathcal{M}$. The operator for the quantity conjugate to time is $E=\imath\hbar\partial_t$.

Next we need to extend integration. In the quantum case, this means to make sure the inner product $\langle \psi | \phi \rangle$ remains unchanged under all variable changes, including time. This allows us to describe the amounts of material and their distribution over fragments in a way that is independent of our description.

As in the classical case, we use $s$ as the evolution parameter. Deterministic and reversible evolution $\mathcal{T}_{ds}$ will be a unitary transformation defined over this extended inner product, as the structure of the state space is conserved. Continuous evolution will admit an invariant Hamiltonian $\mathcal{H}$ such that $\imath \hbar \partial_s \psi = \mathcal{H} \psi$. As for the composite state evolution in the classical case, $\widetilde{\psi}$ remains unchanged by the evolution as it transports the distribution along $s$. Therefore $\mathcal{H} \widetilde{\psi} = \mathcal{h} \widetilde{\psi}$ is an eigenfunction of $\mathcal{H}$. As in the classical case, we use $\mathcal{H}=0$ to identify the state space: the temporal d.o.f. is not independent and therefore there is a constraint among the variables. Therefore we have $\mathcal{H} \psi = 0$. This is the time dependent evolution equation. It means that the wave function does not change along $s$, the parameter of the infinitesimal transformation generated by $\mathcal{H}$.

Note that states are indeed eigenfunctions of the invariant Hamiltonian $\mathcal{H}$: here we do have a justification, as opposed to the standard Hamiltonian $H$ case. For example, a massive particle may have fragments with different energy, but they all must carry the same fraction of mass. Also note that $\mathcal{H} \psi = 0$ is basically the form of all equations (i.e. Klein-Gordon, Dirac, Maxwell, ...) in quantum field theory. The respective invariant Hamiltonian $\mathcal{H}$ is the operator associated with each equation.

Having handled the time dependent case, we can turn our attention to the kinematic equivalence assumption. While the trajectory of a fragment is not well defined in either phase space or physical space, what we can describe remains the same under the assumption. Therefore the distributions still need to be transported from state space to kinematic variables up to a constant factor. This means we still have a linear relationship between velocity and conjugate momentum. In terms of the quantum operators: $m\, d_s Q^\alpha = g^{\alpha\beta} P_{\beta} - A^\alpha = m\frac{[Q^\alpha, \mathcal{H}]}{\imath \hbar}$.

This leads to the same form for the invariant Hamiltonian $\mathcal{H}=\frac{1}{2m}((P_\alpha-A_\alpha)g^{\alpha\beta}(P_\beta-A_\beta) + m^2 c^2) = \frac{1}{2m}((-\imath\hbar\partial_\alpha-A_\alpha)g^{\alpha\beta}(-\imath\hbar\partial_\beta-A_\beta) + m^2 c^2)$. In the free particle case, we have $\frac{1}{2m}(-\hbar^2 \partial^\alpha \partial_\alpha + m^2 c^2) \widetilde{\psi} = 0$ which is the Klein-Gordon equation. We can also write the operator for kinetic momentum as $mU^\alpha = P^\alpha - A^\alpha = -i\hbar D^\alpha$ where $D^\alpha = \partial^\alpha + \frac{A^\alpha}{\imath \hbar}$ is the gauge covariant derivative. We have $(\frac{1}{2}mU^\alpha U_\alpha + \frac{1}{2} m c^2) \widetilde{\psi} = \frac{1}{2m}(-\hbar^2 D^\alpha D_\alpha + m^2 c^2) \widetilde{\psi} = 0$ which is the Klein-Gordon equation with electromagnetic interaction.

Note how the gauge covariant derivative is, up to a constant, kinetic momentum. This motivates why it is such an important operator. As for the Klein-Gordon equation, the only thing it does is to impose that the norm of the four-velocity is the speed of light, a condition much more trivial than one may have anticipated.

It should be clear, even without the details, that all the assumptions come together even in the quantum case. They give more insight to all the pieces in a comprehensive way and help tear down the artificial walls between the different physical theories. This allows us to form a picture of fundamental physics that is more unified, that draws upon a set of common concepts.

\subsection{Non-deterministic evolution and the projection postulate}

As we made it clear from the start, deterministic and reversible evolution is an assumption. And as we have seen in the rationale for assumptions \ref{ass:infinitesimal_reducibility} and \ref{ass:irreducibility} we cannot have determinism at all levels of decomposition while transferring information from one system to another.

In this section we will see how non-deterministic processes become necessary when describing measurements for irreducible systems. But we'll see that, under the same assumption of irreducibility, we are able to make a connection between the output of deterministic and non-deterministic processes.

Suppose we have a quantum particle state $\psi$ we want to identify. Suppose that we have a detector that is able to capture the particle. If it's captured, we will know that the particle was in the region of space enclosed by the detector.

Suppose that we create an array of such detectors. This will allow us to detect the position of the particle with a resolution given by the aperture of each detector.

Suppose now, though, that the quantum particle state is spread over a region significantly greater than a single detector but smaller than the array. What will happen?

It is clear that it has to be absorbed as it is within the aperture of the whole array. It is also clear that it must be absorbed as a whole system: as it is an irreducible system, the detector cannot partially interact with it, it cannot half absorb it. So a single detector will capture it. It is also clear that it cannot always be the same detector: that corresponds to states where the particle is spread within the aperture of that detector. In other words: as the particle hits the array, one detector triggers non-deterministically based on the distribution of the particle.

The process is non-deterministic in the sense that the state of the particle is not enough to predict which detector will trigger. Maybe it depends on the state or unstated part of the detector, maybe it depends on the unstated part of the particle or maybe on the environment. The point is that a single measurement does not tell us much about the state of the incoming particle.

We then repeat the process, making sure the incoming particle is prepared in the same way. Given a single ``pure" incoming particle state, we get a ``mixed" statistical distribution describing the output of the array.\footnote{Note that this is different from the model used often to describe the quantum measurement problem, where the incoming system becomes entangled with the measuring device. We do not make any claim on what the final state of the particle should be: in many cases, like in a photomultiplier tube, the particle is gone and there is no state. We simply have one incoming state for the particle, prepared over and over exactly the same, and the outgoing state for the measurement device, which is a classical statistical distribution over the different measurement takes.} The question is, how is the statistical distribution coming out of the detector related to the material distribution coming in with the particle?

If the detector does not introduce biases, then the final distribution is only a function of the incoming particle: all unbiased detectors will give the same result. The unstated part also cannot affect the final distribution of outcomes: it is not physically distinguishable. The final statistical distribution of the detector must match the incoming material distribution of the particle. That is if $N$ is the total number of takes and $N_A$ is the number of times the detector with aperture in the $A$ region captured the particle, the ratio $\frac{N_A}{N}$ will be approximately the amount of material within the aperture $\int_A \psi^\dagger(q^i) \psi(q^i) d^nq$. If we assume an infinite number of takes, we expect them to coincide: $\int_A \psi^\dagger(q^i) \psi(q^i) d^nq = \lim\limits_{N\rightarrow\infty} \frac{N_A}{N}$. But this is also the frequentist definition of probability. Therefore we have $\int_A \psi^\dagger(q^i) \psi(q^i) d^nq = P(\textrm{measuring particle in } A)$ thus recovering the probabilistic interpretation of the wave function.

We can conceptually extend this picture to the generic case. Suppose we know the outcomes of our measurement are parametrized by a fragment variable $\hat{q}$. We will have a linear functional $\Lambda^{\frac{1}{2}}_{\hat{q}=\hat{q}_0} : \Psi \rightarrow \mathbb{C}$ such that $|\Lambda^{\frac{1}{2}}_{\hat{q}=\hat{q}_0}|^2$ returns the amount of material prepared with the particular value $\hat{q}_0$. We can re-express this quantity in terms of the dual vector $\hat{q}_0$ corresponding to the linear functional $\Lambda^{\frac{1}{2}}_{\hat{q}=\hat{q}_0}$ and the inner product extended to all Lebesgue integrable functions: $|\Lambda^{\frac{1}{2}}_{\hat{q}=\hat{q}_0}|^2 = |\langle \hat{q}_0 | \psi \rangle|^2 $. This is also the probability that the outcome associated with $\hat{q}_0$ is selected by the measuring device, thus recovering the statistical nature of the quantum predictions. 

Note that while this process is of a non-deterministic nature, it is of a specific non-deterministic nature. It requires that no other state can influence the final distribution because the match between the physical material distribution and the statistical measurement distribution was recovered only under the assumption that the detector does not introduce any bias. That is: the non-deterministic nature cannot be understood as an averaging over some unknown external state.

Also note that while we have a way to predict the distribution over final outcomes from the initial state, we have no way of describing what happens in between. When we assume that only the initial state matters, we are saying that we can disregard the particulars of the non-deterministic evolution. So much so that we can substitute it with a deterministic one to compute the probability distribution. In particular, given only the initial state, we have no way of predicting which set of final outcomes the system will project to, nor when that projection actually happens, as that depends on the actual choice of detector setup.

While this may seem like an incomplete description, it is as complete as the irreducibility assumption allows. The state can be known and determines any final statistical distribution. The unstated part cannot be known, and is what gives the statistical distribution in the first place. If we were able to say something more about the evolution of the unstated part, we would be able to further describe it, which we cannot do without violating the assumption. If we cannot learn anything about the unstated part from the final distribution, then we are consistent with our assumption and we are able to use deterministic evolution and projection to determine the final statistical distribution. In other words: the inability to describe the unstated part not only constrains the state space and deterministic evolution, but also the set of non-deterministic processes by excluding those for which the final statistical distribution would depend on the unstated part itself.

We therefore have arrived at all the major underpinnings of quantum mechanics and see how they stem from assumption \ref{ass:irreducibility} of irreducibility. This gives us a more satisfactory account of what quantum mechanics describes, although it does not claim to solve many of the open problems. It does, however, give good reasons as to why we have those problems and how the desire to know ``what is really going on" is hindered by the very assumption that the quantum description is based on.

\section{Discussion and further work}
\label{sec:discussion}

As we went through many details in the derivation, it is time to step back and conclude with a few general remarks.

The first remarkable aspect of this work is that something along these lines can be done. Regardless of whether all the minor details are addressed in the best possible way, the idea that so much can be derived from so little was not something we expected. In hindsight, the idea that physics should be founded on physical principles is nothing new: it just hasn't been thoroughly applied to basic mechanics. If this work had been done incrementally from the start, these concepts would have been developed gradually and the results would feel more mundane.

The second striking aspect is a shift in perspective that this work seems to suggest. The usual reductionist view is that you start with the rules of the small parts and then build up the composite whole. This work flips that perspective: it's not what the parts can tell you about the whole, it's what the whole can tell you about the parts.

The state space for classical particles is the one that allows us to express invariant distributions: that is a requirement that comes from the whole. The fragments of quantum particles do not even have a well defined description: the whole tells us nothing about their configuration. In retrospect, this should not be surprising. Nobody is going to hand us down the laws of infinitesimal parts: we induce  what we can by manipulating finite systems. Therefore what the whole allows us to infer about its parts is physically fundamental.

The third notable aspect is that the assumptions have more to do with our ability to create a reliable description rather than essential features of the system itself. A system is not deterministic per se: it is a particular choice for its description (i.e. state) that may undergo deterministic evolution in some cases. In the same vein, a material is not homogeneous per se. Or infinitesimally reducible. In this context, the state, its properties and the laws of evolution are the description of the system under our idealization, not some intrinsic feature of the system.

The fourth intriguing aspect is that, when framing classical and quantum mechanics in this way, there is a progression. First we assume a homogeneous infinitesimally reducible material. Then we assume a homogeneous irreducible material. Extrapolating from this progression, the next assumption would be an inhomogeneous irreducible material: where the material is not in constant equilibrium and its particles do not all look the same.

This different starting point may have potential connections to some of the problems we mentioned in the rationales for the assumptions and to other interesting physics questions in general. In the foreseeable future, however, we will work on the more mundane task of extending this work to field theories, for which a more concrete path is available.

\section{Conclusion}

We have seen that much of fundamental physics can be derived from few simple physical assumptions. In doing so we obtained a more unified picture of the different branches of fundamental physics centered around the idea of deterministic and reversible evolution of distributions of homogeneous materials. Though the work is extensive we can summarize a few important conclusions with the following points:

\begin{itemize}

\item A state space is not just a collection of states for the system under study but it must also capture physical properties such as physical distinguishability (represented by a topology), the ability to count states (represented by a measure, symplectic form or metric tensor) or decomposability (represented by a vector space structure).

\item Deterministic and reversible evolution is a one-to-one map that preserves the nature of the system (e.g. physical distinguishability, state count and decomposability). Therefore it is an isomorphism on the mathematical structure defined by the state space (e.g. homeomorphism, symplectomorphism, unitary transformation), not just a state-to-state bijection (i.e. isomorphism of sets).

\item Classical Hamiltonian mechanics describes the evolution of a homogeneous material whose state is equivalent to the states of all its infinitesimal parts, the classical particles. The state space for classical particles is a symplectic manifold $(T^*\mathcal{Q}, \omega)$ as this allows us to define state-variable-invariant density distributions; deterministic and reversible evolution for such a state space is a symplectomorphism (i.e. a canonical transformation).

\item Lagrangian mechanics describes the case where the trajectory in space fully characterizes the dynamics of the system. If we want distributions in phase space to be expressed in terms of kinematic variables (i.e. position and velocity), we find that $(\mathcal{Q}, g)$ is a Riemannian manifold and the motion is that of a massive particle under scalar/vector potential forces.

\item Relativistic motion arises by properly handling time dependent deterministic and reversible evolution for the material (i.e. distributions over phase space) without further assumptions. That is, we derive the geometrical structure of space-time from the geometrical structure of phase space, inverting the typical approach.

\item Quantum (Hamiltonian) mechanics describes the evolution of a homogeneous material for which the state can only be reduced up to finite parts, the quantum particles, whose internal motion is undetermined. The state space for quantum particles is a complex inner product space $(\Psi, \langle \cdot , \cdot \rangle)$; deterministic and reversible evolution for such a state space is a unitary transformation.
\end{itemize}

While we are sure that the details of some justifications can be improved, we feel that the overall ``conceptual castle" is solid. The starting physical assumptions are simple, the translation into math is often straightforward and the intuitive ideas that emerge help unveil connections within and across different branches of math and physics. It is the tight fitting of these different pieces into a broader picture that validates the merit of the approach.

While the scope of this work is limited to refounding and reunderstanding physical theories already experimentally established, it would not be unexpected for a better grasp of old theories to lead to insights and ideas for new ones.


\begin{acknowledgements}
We'd like to thank Erik Curiel, Stephen DiIorio, Jiahua Gu, Josh Hunt, Isaac Mooney and Pavel Okun for providing feedback on drafts of this manuscript, Mark Greenfield for further insights in the topological aspects and Alan Phillips for contributing figures. We additionally thank Paul Stankus for interest and support in the earlier stages of this project. Funding for this work was provided in part by the MCubed program of the University of Michigan.
\end{acknowledgements}

\appendix
\section*{Appendix A. Mathematical proofs}
\stepcounter{section}

In this section we include mathematical demonstrations in support of the physical justifications of the paper.

\begin{thrm}[Extended Riesz theorem]\label{extended_riesz_theorem}
	Let $(S, \tau)$ be a locally compact Hausdorff space. Let $\Lambda = \{\Lambda_U : C(S) \rightarrow \mathbb{R}\}_{U \subseteq S}$ a family of positive linear functionals such that $\forall U \subseteq S \; \Lambda_U = \Lambda_{\mathrm{int}(U)}$ and $\Lambda_{U_1} + \Lambda_{U_2} = \Lambda_{U_1 \cup U_2} + \Lambda_{U_1 \cap U_2} \; \forall U_1, U_2 \subseteq S$. Then there exists a unique Borel measure $\mu$ such that $\Lambda_U (\rho) = \int_{U} \rho d\mu \; \forall \rho \in C(S)$.
\end{thrm}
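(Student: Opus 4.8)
The plan is to bootstrap from the classical Riesz--Markov--Kakutani representation theorem, which already handles a single positive linear functional, and then use the two structural hypotheses (the interior condition and modularity) to propagate the representation across the whole family $\{\Lambda_U\}$. First I would apply the classical theorem to the single functional $\Lambda_S$ on $C_c(S)$: since $S$ is locally compact Hausdorff, there is a unique Radon Borel measure $\mu$ with $\Lambda_S(\rho) = \int_S \rho\, d\mu$. This simultaneously produces the candidate measure and, because the classical representation is unique, settles the uniqueness claim, since any measure representing the whole family in particular represents $\Lambda_S$ and hence equals $\mu$. The remaining task is to show $\Lambda_U(\rho) = \int_U \rho\, d\mu$ for every $U$.

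The engine for the propagation is a \emph{complement identity}. Taking $U_1 = U$ (open) and $U_2 = S\setminus U$ in the modularity hypothesis, and using $\Lambda_{S\setminus U} = \Lambda_{\mathrm{int}(S\setminus U)} = \Lambda_{S\setminus\overline U}$ together with the normalization $\Lambda_\varnothing = 0$, one obtains $\Lambda_U + \Lambda_{S\setminus\overline U} = \Lambda_S$. Positivity of $\Lambda_{S\setminus\overline U}$ then yields the monotonicity $\Lambda_U \le \Lambda_S$, so by the classical theorem each $\Lambda_U$ (for $U$ open) is itself represented by a Radon measure $\mu_U \le \mu$. What must then be proved is that $\mu_U$ is exactly the restriction $\mu\llcorner U$. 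I would establish the key \emph{locality} lemma: if $\mathrm{supp}\,\rho\subseteq U$ then $\Lambda_U(\rho) = \Lambda_S(\rho)$. By the complement identity this reduces to showing $\Lambda_W(\rho)=0$ whenever $\rho$ vanishes on an open set containing $\overline W$. Granting this, for a general open $U$ I would choose, via Urysohn's lemma and partitions of unity on a locally compact Hausdorff space, bump functions $\eta_n$ with $\mathrm{supp}\,\eta_n \Subset U$ and $\eta_n \uparrow \mathbf 1_U$ pointwise; then $\Lambda_U(\rho\eta_n) = \Lambda_S(\rho\eta_n) = \int_S \rho\eta_n\, d\mu \to \int_U \rho\, d\mu$ by dominated convergence, while $\Lambda_U(\rho(1-\eta_n)) \to 0$ is controlled through $\mu_U \le \mu$ and positivity. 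Finally the hypothesis $\Lambda_U = \Lambda_{\mathrm{int}(U)}$ extends the identity from open sets to arbitrary $U$.

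The hard part will be the locality lemma. Modularity is \emph{trivial} on nested sets, so it alone yields no monotonicity between $\Lambda_U$ and $\Lambda_{U'}$ for $U\subseteq U'$; the support/localization property genuinely requires combining positivity, the complement identity, and inner regularity of the Radon measures, rather than any purely algebraic manipulation. Two further points need care and would be flagged explicitly. First, the normalization $\Lambda_\varnothing = 0$ is not a formal consequence of the stated axioms: the constant family $\Lambda_U \equiv \Lambda_S$ satisfies modularity and the interior condition yet is not of the asserted form, so $\Lambda_\varnothing = 0$ must be imposed as part of the hypotheses. Second, the identity $\Lambda_U = \Lambda_{\mathrm{int}(U)}$ forces $\mu$ to assign no mass to the boundary part $U\setminus\mathrm{int}(U)$, which is only consistent for open and suitably regular $U$ (a closed nowhere-dense set of positive measure would violate it), so the representation should be read as an assertion about this class of sets.
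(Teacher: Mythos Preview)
Your approach is sound but genuinely different from the paper's. The paper proceeds \emph{local-to-global}: at each point it takes a compact neighborhood $U$, applies the classical Riesz theorem to $\Lambda_U$ on that compact Hausdorff space to obtain a local measure $\mu_U$, then uses modularity in the disjoint form $\Lambda_V = \Lambda_U + \Lambda_{V\setminus U}$ for nested compact $U\subseteq V$ to show the local measures agree on overlaps, and finally patches them over a compact cover via inclusion--exclusion. You instead go \emph{global-to-local}: apply Riesz once to $\Lambda_S$ to produce $\mu$ immediately, and then argue that each $\Lambda_U$ is represented by $\mu\llcorner U$ through a complement identity and a support/locality lemma proved with Urysohn bumps. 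Your route delivers uniqueness for free and is conceptually cleaner; the paper's route avoids having to control $\Lambda_S$ on non-compactly-supported functions and sidesteps your locality lemma entirely, though its patching step is correspondingly informal. Your two closing caveats are well taken and apply equally to the paper's argument: the identity $\Lambda_V = \Lambda_U + \Lambda_{V\setminus U}$ used there tacitly assumes $\Lambda_\varnothing = 0$, and the boundary-mass obstruction you flag is not addressed in the paper either.
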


\begin{proof}
	We claim that at each $s \in S$ there exists a compact neighborhood $U \subseteq S$ endowed with a Borel measure $\mu_U$ such that $\Lambda_U (\rho) = \int_U \rho d \mu_U \; \forall \rho \in C(S)$. $S$ is locally compact. $\forall s \in S \; \exists U \subseteq S$ such that $U$ is compact. $U$ with the subspace topology is a compact Hausdorff topological space on which is defined a positive linear functional $\Lambda_U : C(S) \rightarrow \mathbb{R}$. For the Riesz representation theorem for linear functionals, there exists a unique Borel measure $\mu_U$ such that $\Lambda_U (\mathcal{c}) = \int_U \rho d \mu_U \; \forall \rho \in C(S)$.
	
	We claim that $S$ is endowed with a unique Borel measure $\mu$ such that $\Lambda_U (\rho) = \int_U \rho d \mu$. Let $U \subseteq V \subseteq \mathcal{S}$ two compact subsets. Let $\rho \in C(S)$. $\Lambda_V (\rho) = \int_V \rho d \mu_V = \int_U \rho d \mu_V + \int_{V\backslash U} \rho d \mu_V$. Also $\Lambda_V (\rho) = \Lambda_U (\rho) + \Lambda_{V\backslash U} (\rho) = \int_U \rho d \mu_U + \int_{V \backslash U} \rho d \mu_{\overline{V \backslash U}} $. Therefore\journal{\break} $\int_{U} \rho_{\mathcal{c}} d \mu_U = \int_{U} \rho_{\mathcal{c}} d \mu_V$. $d\mu$ is unique and does not depend on the choice of neighborhood. $S$ is locally compact. It admits a cover $\{U_\alpha\}_{\alpha \in A}$ where each $U_\alpha$ is compact.
	\begin{align*}
	\Lambda_{\mathcal{S}}(\mathcal{c}) &= \sum \limits_{\alpha \in A} \Lambda_{U_\alpha}(\mathcal{c}) - \sum \limits_{\alpha \in A} \sum \limits_{\beta \in A}^{\beta \neq\alpha} \Lambda_{U_\alpha \cap U_\beta}(\mathcal{c}) \\
	&= \sum \limits_{\alpha \in A} \int_{U_\alpha} \rho_{\mathcal{c}} d\mu - \sum \limits_{\alpha \in A} \sum \limits_{\beta \in A}^{\beta \neq\alpha} \int_{U_\alpha \cap U_\beta}(\mathcal{c}) \rho_{\mathcal{c}} d\mu \\
	&= \int_{\mathcal{S}} \rho_{\mathcal{c}} d\mu
	\end{align*}
	The measure is unique on the whole space.
\end{proof}

\begin{thrm}\label{everywhere_integrable_is_lebesgue_integrable}
	Let $\mathcal{L}(S,\mu) \equiv \{ \rho : S \rightarrow \mathbb{R} \; | \;\; |\int_{U} \rho d\mu| < \infty \; \forall U \subseteq S\}$. $\mathcal{L}(S,\mu)=L^1(S,\mu)$ where $L^1(S,\mu) = \{ \rho : S \rightarrow \mathbb{R} \; | \;\; \int_{S} |\rho| d\mu < \infty \}$
\end{thrm}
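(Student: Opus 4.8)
The plan is to prove the two inclusions separately. The inclusion $L^1(S,\mu) \subseteq \mathcal{L}(S,\mu)$ is the easy one and I would dispatch it first: if $\int_S |\rho| d\mu < \infty$, then for $\rho \in L^1$ both $\int_U \rho^+ d\mu$ and $\int_U \rho^- d\mu$ are bounded by $\int_S |\rho| d\mu$, so the signed integral is well defined, and monotonicity gives $|\int_U \rho d\mu| \le \int_U |\rho| d\mu \le \int_S |\rho| d\mu < \infty$ for every $U \subseteq S$. Hence $\rho \in \mathcal{L}(S,\mu)$.

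The substance is in the reverse inclusion $\mathcal{L}(S,\mu) \subseteq L^1(S,\mu)$. Here the key device is to split the domain according to the sign of $\rho$. Since $\rho$ is (implicitly) a measurable function, the sets $S^+ = \{ s \in S \mid \rho(s) \ge 0 \}$ and $S^- = \{ s \in S \mid \rho(s) < 0 \}$ are measurable subsets of $S$, with $S = S^+ \cup S^-$ and $S^+ \cap S^- = \varnothing$. The hypothesis $\rho \in \mathcal{L}(S,\mu)$ holds for \emph{every} subset, in particular for $U = S^+$ and $U = S^-$, giving $|\int_{S^+} \rho d\mu| < \infty$ and $|\int_{S^-} \rho d\mu| < \infty$.

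The final step exploits that $\rho$ has constant sign on each piece. On $S^+$ we have $\rho = |\rho|$, so $\int_{S^+} |\rho| d\mu = \int_{S^+} \rho d\mu < \infty$; on $S^-$ we have $\rho = -|\rho|$, so $\int_{S^-} |\rho| d\mu = -\int_{S^-} \rho d\mu = |\int_{S^-} \rho d\mu| < \infty$. Adding over the two disjoint pieces yields $\int_S |\rho| d\mu = \int_{S^+} |\rho| d\mu + \int_{S^-} |\rho| d\mu < \infty$, that is, $\rho \in L^1(S,\mu)$, which closes the loop and establishes the equality of the two sets.

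The only genuine subtlety, and what I would flag as the main point requiring care, is the implicit measurability of $\rho$ (needed so that $S^+$ and $S^-$ are measurable and the sign decomposition is legitimate) together with the reading of $|\int_U \rho d\mu| < \infty$ as the assertion that the signed integral is well defined and finite on every measurable $U$. It is precisely this reading, applied to the two sign-definite pieces, that forces each of $\int_{S^\pm} |\rho| d\mu$ to be finite. Once that interpretation is fixed no further estimate is needed; the statement is really just the observation that controlling the signed integral simultaneously on the positive and negative parts is equivalent to controlling the integral of the absolute value.
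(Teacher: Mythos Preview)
Your proof is correct and follows essentially the same approach as the paper: both directions use the positive/negative decomposition, and the key step for $\mathcal{L}\subseteq L^1$ is to apply the hypothesis to the sign-definite sets $S^+$ and $S^-$. Your explicit flagging of the measurability assumption is a nice addition that the paper leaves implicit.
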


\begin{proof}
	We claim $\mathcal{L}(S,\mu) \supseteq L^1(S,\mu)$. Let $\rho \in L^1(S,\mu)$. Let $U \subseteq S$. $\lVert \rho \rVert_U \equiv \int_{U} |\rho| d\mu < \int_{\mathcal{S}} |\rho| d\mu < \infty$. $\lVert \rho \rVert_U = \lVert \rho^+ \rVert_U + \lVert \rho^- \rVert_U$. $\lVert \rho^+ \rVert_U < \infty$ and $\lVert \rho^- \rVert_U < \infty$. $|\int_{U} \rho d\mu| = |\int_{U} \rho^+ d\mu - \int_{U} \rho^- d\mu| = |\int_{U} \rho^+ d\mu| - |\int_{U} \rho^- d\mu| = \lVert \rho^+ \rVert_U - \lVert \rho^- \rVert_U < \infty$. $\rho \in\mathcal{L}(S,\mu)$.
	
	We claim $L^1(S,\mu) \supseteq \mathcal{L}(S,\mu)$.  Let $\rho \in \mathcal{L}(S,\mu)$. Let $S^+ \equiv  \{ s \in S \; | \; \rho(s) > 0\}$. Let $S^- \equiv  \{ s \in S \; | \; \rho(s) < 0\}$. $|\int_{S^+} \rho d\mu| = |\int_{S} \rho^+ d\mu| = \lVert \rho^+ \rVert < \infty$ and $|\int_{S^-} \rho d\mu| = |\int_{S} \rho^- d\mu| = \lVert \rho^- \rVert < \infty$. $\lVert \rho \rVert = \lVert \rho^+ \rVert + \lVert \rho^- \rVert < \infty$. $\rho \in L^1(S,\mu)$.
	
	We claim $L^1(S,\mu) = \mathcal{L}(S,\mu)$. $\mathcal{L}(S,\mu) \supseteq L^1(S,\mu)$ and $L^1(S,\mu) \supseteq \mathcal{L}(S,\mu)$.	
\end{proof}

\begin{thrm}\label{symplectomorphism_generator}
	Let $(M, \omega)$ a symplectic manifold. Let $f: (M, \omega) \rightarrow (M, \omega)$ an infinitesimal self-symplectomorphism. Let $S \in TM \; | \; f(\xi^a(m)) = \xi^a(m) + S^a(m)dt \; \forall m \in M$ be the infinitesimal displacement. There exists a function $H: M \rightarrow \mathbb{R}$ such that $S^{a} \omega_{ab} = \partial_{b}H$ and $H \in C^2(M, \mathbb{R})$.
\end{thrm}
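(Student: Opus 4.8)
The plan is to recognize this as the standard fact that the vector field generating an infinitesimal symplectomorphism is (at least locally) Hamiltonian, and to prove it via Cartan's magic formula together with the Poincar\'e lemma.

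First I would translate the hypothesis that $f$ is an infinitesimal self-symplectomorphism into a statement about the Lie derivative. Writing $f(\xi^a) = \xi^a + S^a dt$, the condition $f^*\omega = \omega$ (the defining property of a symplectomorphism, as in \ref{prop:symplectomorphism}) holds to first order in $dt$ precisely when the first-order term vanishes, i.e. $\mathcal{L}_S \omega = 0$. This is the infinitesimal version of preserving the symplectic form, and extracting it carefully from the definition of the displacement $S$ is the first technical point.

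Next I would apply Cartan's magic formula, $\mathcal{L}_S \omega = d(\iota_S \omega) + \iota_S(d\omega)$. Since $\omega$ is a symplectic form it is closed, $d\omega = 0$, so the second term drops and the condition becomes $d(\iota_S \omega) = 0$: the one-form $\alpha \equiv \iota_S \omega$, whose components are $\alpha_b = S^a \omega_{ab}$, is closed. By the Poincar\'e lemma there exists locally a function $H$ with $dH = \alpha$, that is $\partial_b H = S^a \omega_{ab}$, which is exactly the claimed relation. For the regularity, since $\omega$ has constant components in canonical coordinates and $S$ is $C^1$ (the displacement of a differentiable evolution), the form $\alpha$ is $C^1$; as $dH = \alpha$, the potential $H$ picks up one extra degree of differentiability, giving $H \in C^2(M,\mathbb{R})$.

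The main obstacle is the passage from closed to exact. The Poincar\'e lemma only yields $H$ locally, or globally when the relevant first de Rham cohomology vanishes; in general a symplectic vector field need only be \emph{locally} Hamiltonian, its global integrability being obstructed by the class $[\iota_S\omega] \in H^1_{dR}(M)$. For the physical application this is harmless, since the state space is locally modeled on $\mathbb{R}^{2n}$ and Hamilton's equations are a local statement, so I would either restrict to a contractible chart or note that the obstructing cohomology class vanishes in the case at hand. The remaining steps---reading off the component identity and checking smoothness---are routine once the closedness of $\iota_S\omega$ is established.
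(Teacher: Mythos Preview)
Your proposal is correct and is the coordinate-free version of exactly what the paper does. The paper works directly in coordinates: it writes the pushforward of two tangent vectors under $f$, expands the preservation condition $v^a\omega_{ab}w^b = v'^a\omega_{ab}w'^b$ to first order in $dt$, and reads off $\partial_a S_b - \partial_b S_a = 0$ for $S_b \equiv S^a\omega_{ab}$; it then says ``curl-free implies a potential $H$ exists'' and notes $H\in C^2$ because $S$ is differentiable. Your route via $\mathcal{L}_S\omega = 0$, Cartan's formula, and the Poincar\'e lemma is the invariant restatement of the same computation: the paper's curl condition is $d(\iota_S\omega)=0$, and ``curl-free admits a potential'' is the Poincar\'e lemma.

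One point worth noting: you are more careful than the paper about the local-vs-global issue. The paper simply asserts a potential exists from curl-freeness without remarking on topology; you correctly flag that this is a local statement unless $H^1_{dR}(M)=0$, and you resolve it the right way for the application (the equations of motion are local, and in the paper's setting phase space is a cotangent bundle modeled on $\mathbb{R}^{2n}$). Conversely, the paper's bare-hands computation has the advantage of not invoking $d\omega=0$ explicitly---it effectively works in canonical coordinates where $\omega_{ab}$ is constant, so the closedness of $\omega$ is baked in rather than cited.
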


\begin{proof}
	We claim the vector field $S \in T\mathcal{Q}$ admits a potential $H$ such that $S^{a} \omega_{ab} = \partial_{b}H$. Let $v, w \in T_m M$ be two vectors defined at a point $m \in M$. Let $v^a, w^b$ be their components. Let $v'\equiv f v, w' \in T_{f(m)}M$ be the pushforward of $v, w$ by $f$. Since $f$ is a symplectomorphism we have $v^{a} \omega_{ab} w^{b} = v'^{a} \omega_{ab} w'^{b}$. The vector components change according to $v'^a = \partial_b \xi(t+dt)^a v^b = (\delta^a_b + \partial_b S^a dt) v^b$. We have:
	\begin{align*}
	v^{a} \omega_{ab} w^{b} &= v'^{a} \omega_{ab} w'^{b}  \\
	&= (v^{a} + \partial_{c} S^{a} v^{c} dt) \omega_{ab} ( w^{b} + \partial_{d} S^{b} w^{d} dt) \\
	&= v^{a} \omega_{ab} w^{b} + (\partial_{c} S^{a} v^{c} \omega_{ab} w^{b} + v^{a} \omega_{ab} \partial_{d} S^{b} w^{d}) dt \\ &+ O(dt^2)
	\end{align*}
	$v^{c} w^{b} \partial_{c} S_{b} - v^{a} w^{d} \partial_{d} S_{a} = 0$ where $S_{b} \equiv S^{a} \omega_{ab}$. The relationship must be true for and pair of vector, therefore $\partial_{a} S_{b} - \partial_{b} S_{a} = curl(S_{a}) = 0$. $S$ is a curl free vector field and admits a potential $H$ such that $S_{a} = \partial_{a}H$.
	
	We claim $H \in C^2(M, \mathbb{R})$. $S$ is differentiable as it is the displacement of a symplectomorphism. The derivatives of $H$ are differentiable as $S_{a} = \partial_{a}H$. $H$ is twice differentiable.
\end{proof}

\begin{thrm}\label{thrm:inner_product}
	Let $V$ be a complex vector space. Let $| \cdot | ^2 : V \rightarrow \mathbb{R}$ such that $|x|^2 \ge 0 \; \forall x \in V$ and $|x|^2 = 0 \Leftrightarrow x = 0$. Let $x \in V$ and $\mathsf{P}_x : V \rightarrow V$ be a family of projections (i.e. $\mathsf{P}_x$ linear and $\mathsf{P}_x \circ \mathsf{P}_x = \mathsf{P}_x$) such that: $\mathsf{P}_x (y) = y \Leftrightarrow y = a x$ for some $a \in \mathbb{C}$; $|x|^2|\mathsf{P}_x (y)|^2=|y|^2|\mathsf{P}_y (x)|^2$; $\mathsf{P}_x (\mathsf{P}_y (x)) = r x$ for some $r \in \mathbb{R}$.
	Then $V$ is an inner product space with the product $\langle \cdot , \cdot \rangle : V \times V \rightarrow \mathbb{C}$ defined by $\langle x , y \rangle x = |x|^2 \mathsf{P}_x (y)$.
\end{thrm}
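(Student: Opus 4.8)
The plan is to take the defining relation $\langle x, y\rangle\, x = |x|^2\,\mathsf{P}_x(y)$ literally as a definition of a scalar and then verify the three inner-product axioms (positive-definiteness, linearity in the second slot, conjugate symmetry) one at a time. First I would check that $\langle x,y\rangle$ is well defined for $x\neq 0$: since $\mathsf{P}_x$ is idempotent, every $\mathsf{P}_x(y)$ is a fixed point of $\mathsf{P}_x$, and by the hypothesis $\mathsf{P}_x(z)=z\Leftrightarrow z=ax$ the fixed-point set is exactly the line $\mathbb{C}x$; hence $\mathsf{P}_x(y)=a\,x$ for a unique $a\in\mathbb{C}$ and $\langle x,y\rangle:=|x|^2 a$ is unambiguous (and we set $\langle 0,y\rangle:=0$).

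The two easy axioms come next. Taking $y=x$ gives $\mathsf{P}_x(x)=x$ (the case $a=1$ of the first property), so $\langle x,x\rangle=|x|^2$, which is $\ge 0$ and vanishes only at $x=0$ by the hypothesis on $|\cdot|^2$; this is positive-definiteness. Linearity in the second argument is immediate from the linearity of the operator $\mathsf{P}_x$: if $\mathsf{P}_x(y)=a x$ and $\mathsf{P}_x(z)=a'x$ then $\mathsf{P}_x(\alpha y+\beta z)=(\alpha a+\beta a')x$, whence $\langle x,\alpha y+\beta z\rangle=\alpha\langle x,y\rangle+\beta\langle x,z\rangle$. Once conjugate symmetry is in hand, conjugate-linearity in the first slot follows formally, so $\langle\cdot,\cdot\rangle$ is sesquilinear.

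The heart of the proof is conjugate symmetry, $\langle x,y\rangle=\overline{\langle y,x\rangle}$, and I would split it into a modulus statement and a phase statement. Writing $\mathsf{P}_x(y)=ax$ and $\mathsf{P}_y(x)=by$, the second property reads $|x|^2|ax|^2=|y|^2|by|^2$; using $|ax|^2=|a|^2|x|^2$ this collapses to $|\langle x,y\rangle|=|\langle y,x\rangle|$. For the phase, linearity of $\mathsf{P}_x$ gives $\mathsf{P}_x(\mathsf{P}_y(x))=\mathsf{P}_x(by)=ab\,x$, so the third property forces $ab=r$ to be real (indeed nonnegative, matching the $\mathbb{R}^+$ of the physical justification). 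Then $\langle x,y\rangle\langle y,x\rangle=|x|^2|y|^2\,ab\ge 0$, so the two numbers have opposite arguments; combined with equal moduli this yields $\langle y,x\rangle=\overline{\langle x,y\rangle}$. The degenerate case is clean: if $\langle x,y\rangle=0$ then $\mathsf{P}_x(y)=0$, and the second property forces $|\mathsf{P}_y(x)|^2=0$, hence $\mathsf{P}_y(x)=0$ and $\langle y,x\rangle=0$ as well.

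The step I expect to be the main obstacle is the modulus identity, because it secretly requires the homogeneity $|cx|^2=|c|^2|x|^2$ of the map $|\cdot|^2$ in order to rewrite $|\mathsf{P}_x(y)|^2$ as $|\langle x,y\rangle|^2/|x|^2$. This homogeneity is automatic when $|\cdot|^2$ is the genuine squared $L^2$-norm of the application, but it does not seem to follow from the bare hypotheses (collinear choices in the second and third properties are identically satisfied and yield no information), so I would either carry it as an implicit part of the ``$|\cdot|^2$ is a squared norm'' assumption or add it explicitly. With homogeneity and the positivity $r\ge 0$ secured, the remaining verifications are routine and the defining relation indeed produces a bona fide inner product.
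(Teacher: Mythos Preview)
Your proof follows essentially the same route as the paper's: well-definedness from idempotency and the fixed-point characterization, positive-definiteness and second-slot linearity directly from the definitions, and conjugate symmetry by separately matching modulus (via the second projection property) and phase (via the reality of $\mathsf{P}_x(\mathsf{P}_y(x))$). Your flag that the modulus step tacitly uses $|cx|^2=|c|^2|x|^2$ is a sharp observation---the paper invokes exactly this when writing $||x|^2\mathsf{P}_x(y)|^2=|x|^4|\mathsf{P}_x(y)|^2$ without comment; the only other minor difference is that the paper rules out the phase-sum $\pi$ by specializing to $y=x$ rather than by appealing to $r\ge 0$.
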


\begin{proof}
	We claim $\langle \cdot , \cdot \rangle$ is well defined. Let $x, y \in V$. $\mathsf{P}_x (\mathsf{P}_x (y)) =\mathsf{P}_x(y)$ as $\mathsf{P}_x$ is a projection. $\mathsf{P}_x(y) = a x $ for some $a \in \mathbb{C}$ by the first property of $\mathsf{P}_x$. $\langle x , y \rangle = |x|^2 a$ exists and is unique. 
	
	We claim $\langle \cdot , \cdot \rangle$ is positive definite. Let $x \in V$. $\langle x , x \rangle x = |x|^2 \mathsf{P}_x(x) = |x|^2 x $. Therefore $\langle x , x \rangle = |x|^2 \ge 0$ and $\langle x , x \rangle = |x|^2 = 0 \Leftrightarrow x = 0$.
	
	We claim $\langle \cdot , \cdot \rangle$ is linear in the second argument. Let $x,y,z \in V$ and $a,b \in \mathbb{C}$. $\langle x , a y + b z \rangle x = |x|^2 \mathsf{P}_x(a y + b z)=a  |x|^2 \mathsf{P}_x(y) + b  |x|^2 \mathsf{P}_x(z) = a \langle x , y \rangle x + b \langle x, z \rangle x$ by linearity of $\mathsf{P}_x$. Therefore $\langle x , a y + b z \rangle = a \langle x , y \rangle + b \langle x, z \rangle$.
	
	We claim $\langle \cdot , \cdot \rangle$ is conjugate symmetric. Let $x,y \in V$. $ | |x|^2\mathsf{P}_x(y)|^2 = |x|^4|\mathsf{P}_x(y)|^2 = |\langle x , y \rangle|^2 |x|^2$. Using the second property of the projection we have $|\langle x , y \rangle|^2 = |x|^2|\mathsf{P}_x(y)|^2 =  |y|^2|\mathsf{P}_y(x)|^2 = |\langle y , x \rangle|^2$. The modulus of the product is symmetric as a function of the arguments. Now consider $|x|^2|y|^2\mathsf{P}_x(\mathsf{P}_y(x)) = |x|^2 \langle y , x \rangle \mathsf{P}_x(y) = \langle y , x \rangle \langle x , y \rangle x$. The product $\langle x , y \rangle \langle y , x \rangle \in \mathbb{R}$ by the third property of the projection. $\arg(\langle x , y \rangle) + \arg(\langle y , x \rangle)$ is $0$ or $\pi$. $\pi$ is excluded since $\langle x , x \rangle = |x|^2 \ge 0$. $\langle y , x \rangle$ and $\langle x , y \rangle$ have equal modulus and opposite phase. $\langle x , y \rangle = \langle y , x \rangle^\dagger$
\end{proof}

\begin{thrm}\label{thrm:covariant_operator}
	Let $C^1(\mathcal{Q}, \mathbb{C})$ be the space of differentiable complex functions defined over $\mathcal{Q}$. Let $q : \mathcal{Q} \rightarrow \mathbb{R}$ a differentiable function. Let $K_q : C^1(\mathcal{Q}, \mathbb{C}) \rightarrow C^1(\mathcal{Q}, \mathbb{C})$ be a covariant linear operator, that is given an invertible differentiable function $\hat{q} : \mathbb{R} \rightarrow \mathbb{R}$ $K_{\hat{q}} = \partial_{\hat{q}} q K_{q}$. Then $K_q = a \partial_q$ for some $a \in \mathbb{C}$.
\end{thrm}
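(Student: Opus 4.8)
The plan is to first observe that the ordinary coordinate derivative $\partial_q$ already satisfies the stated covariance: the chain rule gives $\partial_{\hat q} = (\partial_{\hat q} q)\,\partial_q$, which is exactly $K_{\hat q} = (\partial_{\hat q} q)\,K_q$ with $K_q = \partial_q$. So existence is immediate, and the real content of the theorem is \emph{uniqueness up to a multiplicative constant}. Accordingly, I would phrase the whole argument as pinning down the general form of a covariant family $\{K_q\}$ and showing it cannot be anything but a constant times $\partial_q$.

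Next I would argue that the covariance law is precisely the transformation rule of a derivation and that it leaves no room for a zeroth-order (multiplicative) part. Testing the relation on simple reparametrizations is the quickest route: under a pure translation $\hat q = q + c$ one has $\partial_{\hat q} q = 1$, so $K$ is unchanged by shifts of the coordinate, while under a dilation $\hat q = \lambda q$ one has $\partial_{\hat q} q = 1/\lambda$; comparing the two excludes any surviving multiplication term, since a genuine zeroth-order coefficient would be forced to be simultaneously translation invariant and to scale like $1/\lambda$. Having removed the zeroth-order part, the covariance condition identifies $K_q$ with an element of the space of derivations of $C^1(\mathcal Q, \mathbb C)$, which on the one-dimensional manifold $\mathcal Q$ are exactly the continuous vector fields, giving $K_q = v\,\partial_q$ for some coefficient function $v$.

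Finally I would substitute $K_q = v\,\partial_q$ back into the covariance relation for an arbitrary reparametrization $\hat q = \hat q(q)$. Re-expressing $\partial_q = (\partial_q \hat q)\,\partial_{\hat q}$ and using $(\partial_{\hat q} q)(\partial_q \hat q) = 1$, the Jacobian factor carried by the covariance law cancels exactly against the one produced when the derivative is rewritten in the new chart, so the relation collapses to $v(\hat q(q)) = v(q)$. Because reparametrizations act transitively on coordinate values, this equality holding at every point for every $\hat q$ can only occur if $v$ is constant; writing that constant as $a \in \mathbb C$ yields $K_q = a\,\partial_q$, as claimed.

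The step I expect to be the main obstacle is the middle one: showing that covariance \emph{alone} forces $K_q$ to be local and first order, i.e. a genuine derivation rather than some nonlocal linear map. A priori a $\mathbb C$-linear operator on $C^1$ could be an integral operator with no pointwise character, and nothing in the bare algebra of linear maps excludes this. The decisive structural input is the pointwise Jacobian weight $\partial_{\hat q} q$ appearing in the covariance law: it ties the value of $K_q f$ at a point to the local reparametrization data at that same point, and it is exactly this pointwise coupling---rather than a bald appeal to ``operators obeying the chain rule are derivations''---that must be made airtight to rule out higher-order and nonlocal contributions before the clean cancellation argument of the last paragraph can be invoked.
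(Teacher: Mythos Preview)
Your final step---substituting $K_q = v\,\partial_q$ into the covariance relation and using transitivity of reparametrizations to force $v$ constant---is correct and in fact sharper than what the paper writes at that stage. The problem is the middle step, and you are right to flag it: testing only on translations and dilations excludes a zeroth-order multiplicative piece but does nothing to rule out second-order or nonlocal terms, and the line ``the covariance condition identifies $K_q$ with an element of the space of derivations'' is precisely the statement to be proved, not a step in its proof. As written, the reduction to $K_q = v\,\partial_q$ is asserted rather than established.

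The paper closes this gap with a device you did not use: it takes the test function itself as the reparametrization. For invertible $f$, set $\hat q = f(q)$; covariance gives $K_q = (\partial_q f)\,K_{\hat q}$, and since $f(q)$ is simply the new coordinate $\hat q$, one obtains $K_q(f) = (\partial_q f)\,K_{\hat q}(\hat q)$. A second application of covariance shows $K_{\hat q}(\hat q) = K_q(q)$ is chart-independent, so $K_q(f) = (\partial_q f)\cdot K_q(q)$ for every invertible $f$; linearity (write an arbitrary $f$ as a sum of two monotone pieces) extends this to all differentiable $f$. This single identity already says $K_q$ acts as $K_q(q)\cdot\partial_q$, i.e.\ is first order with coefficient $v = K_q(q)$---exactly the locality you could not extract from affine reparametrizations alone. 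The paper then verifies the Leibniz rule via $K_q(\ln(fg))$ to confirm $K_q$ is a derivation; your third paragraph finishes the constancy of $v$.
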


\begin{proof}
	We claim $K_q(f(q))= \partial_q f K_q(q)$. Let $I : \mathbb{R} \rightarrow \mathbb{R}$ be the identify function, that is $I \circ f = f \; \forall f : \mathbb{R} \rightarrow \mathbb{R}$. $K_q(q) = K_q(I(q)) = \partial_q I(q) K_{I}(I) = K_{I}(I)$ by covariance.
	Let $f : \mathbb{R} \rightarrow \mathbb{R}$ be a differentiable invertible function. $K_q(f(q)) =K_q(I(f(q))) = \partial_q f K_{f}(I(f)) = \partial_q f \partial_f I(f) K_{I}(I) = \partial_q f K_q(q)$. Now let $f : \mathbb{R} \rightarrow \mathbb{R}$ be a differentiable function non necessarily invertible. It can be written as $f = f_1 + f_2$ with $f_1, f_2$ differentiable and invertible. By linearity $K_q(f)=K_q(f_1) + K_q(f_2) = \partial_q f_1 K_q(q) + \partial_q f_2 K_q(q)  = \partial_q f K_q(q)$.
	
	We claim $K_q$ is a derivation. Let $f,g,h$ be differentiable functions such that $h(q)=f(q)g(q)$. $K_q(ln(h(q)))=\partial_q ln(h(q)) K_q(q) = \frac{1}{h} \partial_q h K_q(q) = \frac{1}{h} K_q(h(q)) = K_q(ln(f(q)g(q)))\journal{\break} = K_q(ln(f(q))) + K_q(ln(g(q))) = \frac{1}{f} K_q(f(q)) + \frac{1}{g} K_q(g(q))$. Multiplying by $h=fg$ we have $K_q(h(q)) = g K_q(f(q)) + f K_q(g(q))$. $K_q$ is a linear operator that satisfies the product rule. $K_q$ is a derivation by definition.
	
	We claim $K_q = a \partial_q$ for some $a \in \mathbb{C}$. $K_q$ is a derivation. As $K_q(f(q))= \partial_q f K_q(q)$ it is a derivation along $q$, which can be expressed as $K_q(q)=a \partial_q$ for some $a \in \mathbb{C}$.
\end{proof}

\begin{thrm}\label{thrm:antihermitian_derivative}
	Let $G \equiv C^1(\mathcal{Q}, \mathbb{C}) \cap L^1(\mathcal{Q}, \mathbb{C})$ be the space of differentiable and integrable complex functions defined over $\mathcal{Q}$. Let $K$ be a linear adjoint operator of the form $a \partial_q$ with $a \in \mathbb{C}$. Let $K$ be self-adjoint with respect to the inner product $\langle f, g \rangle = 
	\int_{\mathcal{Q}} f^\dagger(q) g(q) dq$. Then $K= \imath r \partial_q$ with $r \in \mathbb{R}$.
\end{thrm}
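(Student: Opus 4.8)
The plan is to unpack the self-adjointness condition $\langle Kf, g\rangle = \langle f, Kg\rangle$ and apply integration by parts, which exposes the skew-adjoint character of $\partial_q$ and forces the scalar $a$ to be purely imaginary. The whole argument is essentially the observation that $\partial_q$ is anti-self-adjoint with respect to this inner product, so multiplying it by $a$ can only be self-adjoint when $a$ sits on the imaginary axis.

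First I would write both sides of the self-adjointness relation explicitly. On one side, $\langle Kf, g\rangle = \int_{\mathcal{Q}} (a\partial_q f)^\dagger g\, dq = \bar{a}\int_{\mathcal{Q}} (\partial_q f)^\dagger g\, dq$, using that conjugation sends the scalar $a$ to $\bar{a}$. On the other side, $\langle f, Kg\rangle = a\int_{\mathcal{Q}} f^\dagger \partial_q g\, dq$. Integrating by parts rewrites this last integral as $-a\int_{\mathcal{Q}} (\partial_q f)^\dagger g\, dq$ together with a boundary term. Next I would argue that the boundary term drops out: since $f, g \in G = C^1(\mathcal{Q},\mathbb{C}) \cap L^1(\mathcal{Q},\mathbb{C})$, the product $f^\dagger g$ is continuously differentiable and integrable, so its values at the edges of $\mathcal{Q}$ (at infinity when $\mathcal{Q}\cong\mathbb{R}^n$) contribute nothing.

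Equating the two expressions then gives $\bar{a}\int_{\mathcal{Q}} (\partial_q f)^\dagger g\, dq = -a\int_{\mathcal{Q}} (\partial_q f)^\dagger g\, dq$ for all $f, g \in G$. Because one can choose $f, g$ making this integral nonzero, I can cancel the common factor and conclude $\bar{a} = -a$, i.e. $\mathrm{Re}(a) = 0$. Writing $a = \imath r$ with $r \in \mathbb{R}$ yields $K = \imath r\partial_q$, which is the claim.

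The main obstacle is the rigorous justification that the boundary term vanishes: membership in $L^1 \cap C^1$ alone does not guarantee that $f^\dagger g \to 0$ at the boundary of $\mathcal{Q}$, since an integrable continuous function need not decay at infinity. To close this gap I would appeal to the decay established in the surrounding development — the wave functions are ultimately Schwartz functions by \ref{prop:momentum_operator}, so $f^\dagger g$ and all its relevant derivatives vanish rapidly at infinity — or equivalently note that it suffices to impose the self-adjointness relation on a dense class of compactly supported (or rapidly decreasing) test functions, for which the boundary term is identically zero and the cancellation step is unambiguous.
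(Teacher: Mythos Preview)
Your proposal is correct and follows essentially the same route as the paper: integrate $\langle f, Kg\rangle$ by parts, drop the boundary term by integrability, and compare with $\langle Kf, g\rangle$ to force $a^\dagger = -a$. If anything, you are more scrupulous than the paper about the boundary term---the paper simply invokes ``integrability of $f$ and $g$'' without addressing the gap you flag, whereas your appeal to the Schwartz-class restriction from \ref{prop:momentum_operator} (or to compactly supported test functions) is the honest way to close it.
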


\begin{proof}
	We claim $K= \imath r \partial_q$ with $r \in \mathbb{R}$. Let $f,g \in G$. We have $\langle f, K g \rangle = \int_{\mathcal{Q}}f^\dagger a \partial_q g dq = f^\dagger g |_{\partial \mathcal{Q}}  - \int_{\mathcal{Q}} a \partial_qf^\dagger  g dq= \int_{\mathcal{Q}} (- a\partial_q f^\dagger) g dq = \int_{\mathcal{Q}} (- a^\dagger\partial_q f)^\dagger g dq = \langle - a^\dagger\partial_q f, g \rangle$ using integration by part and integrability of $f$ and $g$. $\langle - a^\dagger\partial_q f, g \rangle =\langle K f, g \rangle$ as $K$ is self-adjoint. $- a^\dagger\partial_q = a \partial_q$. $a = - a^\dagger$. $a = \imath r$ with $r \in \mathbb{R}$.
\end{proof}

\end{document}